\documentclass[11pt]{article}
\usepackage[perpage]{footmisc}

\usepackage[utf8]{inputenc}
\usepackage{amsthm,amsmath,amssymb}
\numberwithin{equation}{section}
\usepackage{amsfonts,mathrsfs}
\usepackage{braket}
\usepackage{caption,subcaption}
\usepackage{threeparttable}
\usepackage{booktabs}
\usepackage{multirow}
\usepackage{longtable}
\usepackage[dvipsnames]{xcolor}
\usepackage[linesnumbered,ruled,vlined]{algorithm2e}
\usepackage[letterpaper, margin=1in]{geometry}
\usepackage{xspace,makecell,tikz}
\usepackage{siunitx}
\usepackage{float}
\usepackage[sort,numbers]{natbib}
\usepackage{aliascnt}
\usepackage[pagebackref]{hyperref}
\usepackage[capitalise]{cleveref}
\usepackage{bbm}
\usepackage{enumerate}
\usepackage{tikz}
\usepackage[shortlabels]{enumitem}
\usepackage{tocloft}
\usepackage{comment}
\usetikzlibrary{arrows.meta, positioning,calc, shapes.geometric}
\hypersetup{
linktocpage=true,
colorlinks=true,
linkcolor=red,
filecolor=magenta,
urlcolor=cyan,
citecolor=blue,
}
\tikzset{  
      thickline/.style={line width=2pt},
}  
\usetikzlibrary{decorations.pathreplacing}
\usepackage[toc,page]{appendix}

\newcommand{\bool}[1]{\ifmmode\text{\textsc{#1}}\else\textsc{#1}\fi}

\newcommand{\poly}{\mathsf{poly}}

\newcommand{\OPT}{\mathsf{OPT}}

\newcommand{\eat}[1]{}

\newtheorem{theorem}{Theorem}[section]
\newaliascnt{lemma}{theorem}
\newtheorem{lemma}[lemma]{Lemma}
\aliascntresetthe{lemma}

\newaliascnt{claim}{theorem}

\aliascntresetthe{claim}
\newaliascnt{proposition}{theorem}

\aliascntresetthe{proposition}
\newaliascnt{corollary}{theorem}
\newtheorem{corollary}[corollary]{Corollary}
\aliascntresetthe{corollary}

\newaliascnt{remark}{theorem}

\aliascntresetthe{remark}
\newaliascnt{example}{theorem}

\aliascntresetthe{example}

\newaliascnt{conjecture}{theorem}
\newtheorem{conjecture}[conjecture]{Conjecture}
\aliascntresetthe{conjecture}
\newaliascnt{fact}{theorem}

\aliascntresetthe{fact}
\newaliascnt{property}{theorem}

\aliascntresetthe{property}
\crefname{lemma}{Lemma}{Lemmas}
\Crefname{lemma}{Lemma}{Lemmas}
\crefname{claim}{Claim}{Claims}
\Crefname{claim}{Claim}{Claims}
\crefname{proposition}{Proposition}{Propositions}
\Crefname{proposition}{Proposition}{Propositions}
\crefname{corollary}{Corollary}{Corollaries}
\Crefname{corollary}{Corollary}{Corollaries}
\crefname{remark}{Remark}{Remarks}
\Crefname{remark}{Remark}{Remarks}
\crefname{example}{Example}{Examples}
\Crefname{example}{Example}{Examples}
\crefname{conjecture}{Conjecture}{Conjectures}
\Crefname{conjecture}{Conjecture}{Conjectures}
\crefname{fact}{Fact}{Facts}
\Crefname{fact}{Fact}{Facts}
\crefname{property}{Property}{Properties}
\Crefname{property}{Property}{Properties}

\theoremstyle{definition}
\newtheorem{definition}{Definition}[section]



\SetKwComment{Comment}{\textcolor{Magenta}{$\triangleright$\ }}{}
\SetKw{KwMaintain}{\textbf{Maintain:}}
\SetKw{KwFind}{\textbf{find}}
\SetKw{KwDefine}{\textbf{define}}
\tikzset{
      czlink/.style={line width=0.45pt},
      czdot/.style={fill, circle, inner sep=2pt, radius=2pt},
      cczlink/.style={line width=0.45pt, draw=black},
      cczdot/.style={fill, circle, inner sep=2pt, radius=2pt} 
    }

\newcommand{\class}[1]{\ensuremath{\mathsf{#1}}\xspace}
\newcommand{\gap}{\operatorname{gap}}
\newcommand{\F}{\mathbb{F}}
\newcommand{\NP}{\class{NP}}
\newcommand{\BQP}{\class{BQP}}
\newcommand{\PP}{\class{PP}}
\newcommand{\SharpP}{\#\class{P}}
\newcommand{\SAT}{\textnormal{$\textsc{3-SAT}$}\xspace}
\newcommand{\problem}[1]{\textnormal{$\textsc{CircuitWidth}{#1}$}}
\newcommand{\approxproblem}[1]{\textnormal{$\textsc{CircuitWidth}$-\,#1\,-\textnormal{Approx}}}

\newcommand{\supp}{\mathsf{supp}}
\newcommand{\lit}[1]{\textcolor{blue}{\mathbf{#1}}}
\newcommand{\cl}[1]{\textcolor{red}{\mathtt{#1}}}

\newcommand{\authorfootnotesize}{\fontsize{8.3pt}{9.3pt}\selectfont}

\title{On the Complexity of the Circuit Width Problem}

\author{
      Zhengfeng Ji\thanks{{\authorfootnotesize Department of Computer Science and Technology, Tsinghua University. Email: \texttt{jizhengfeng@tsinghua.edu.cn}}}
      \and
      Yinchen Liu\thanks{{\authorfootnotesize Institute for Interdisciplinary Information Sciences, Tsinghua University. Email: \texttt{liuyinch23@mails.tsinghua.edu.cn}}}
      \and
      Zhe'ou Zhou\thanks{{\authorfootnotesize Institute for Interdisciplinary Information Sciences, Tsinghua University. Email: \texttt{zzo23@mails.tsinghua.edu.cn}}}
}
\date{\today}

\begin{document}
\maketitle

\begin{abstract}
  Polynomial representations of quantum circuits provide a clean algebraic
  framework that connects quantum computation with classical counting
  complexity~\cite{dawson,mont,Aaronson2011Permanent}.
  In this representation, amplitudes of quantum circuits can be expressed as
  normalized gaps of low-degree polynomials over $\F_2$ where the normalization
  is determined by a parameter $w(f)$ called the \emph{circuit width} of
  polynomial $f$ first introduced by Montanaro~\cite{mont}.
  In other words, the width of a polynomial directly governs the precision with
  which a quantum computer can approximate the gap, and thus quantifies the
  minimal quantum resources measured in terms of the number of qubits needed to
  carry out such approximations.
  Understanding the complexity of computing or approximating circuit width is
  therefore essential to assessing whether this counting-based approach can
  yield a meaningful approximate-counting-based combinatorial characterization
  of quantum computation and \BQP.
  We establish five complexity and algorithmic results on determining the
  circuit width, resolving the conjecture posed in~\cite{mont}:
  \begin{itemize}
    \item \textbf{Exact complexity.}
          We prove that the decision problem \problem{(f,k)}, which asks whether
          $w(f) \leq k$ on input $(f,k)$, is \NP-complete.
          The reduction encodes a \textsc{3-SAT} instance $\Phi$ to a polynomial
          $f_\Phi$ whose width satisfies $w(f_{\Phi})\leq k$ if and only if $\Phi$ is
          satisfiable.

    \item \textbf{Hardness of approximation.}
          We prove that \problem{(f,k)} remains \NP-hard to approximate within
          any multiplicative factor $(\tfrac{49}{48}-\epsilon)$ for any constant
          $0<\epsilon<1/48$, ruling out quantum-efficient additive
          $O(\log n)$-approximations for the width function unless
          $\NP \subseteq \mathsf{BQP}$.

    \item \textbf{Hardness for degree-2 polynomials.}
          We show that the exact and approximate hardness results also extend to
          degree-2 polynomials, demonstrating that the hardness result holds for
          more general gate sets and for circuit width defined on graphs rather
          than polynomials.

    \item \textbf{Nondeterministic search algorithm.}
          We give a nondeterministic polynomial-time algorithm for the search
          version of \problem{(f,k)} that uses only $O(\log_2\binom{n}{k}) = O(k \log(en/k))$ bits
          of witness.

    \item \textbf{Fixed-parameter tractability.}
          We show that \problem{(f,k)} is fixed-parameter tractable with runtime
          $k^{O(k)}\cdot n$ when parameterized with the bound $k$ on the circuit
          width.
  \end{itemize}
  Taken together, our results indicate that there is a fundamental barrier to
  the approach of Montanaro for obtaining a combinatorial characterization of
  \BQP from the approximate counting perspective.
  Our results place circuit width firmly within classical complexity theory and
  indicate that it behaves as a structural parameter---analogous to graph width
  measures---rather than as a direct measure of intrinsic quantum computational
  power.
\end{abstract}

\newpage
\tableofcontents
\newpage

\newcommand{\GH}{\mathcal{G}_H}
\newcommand{\localendpoint}[1]{#1}

\section{Introduction}\label{sec:intro}
Quantum computation is widely believed to outperform classical computation on
important tasks such as simulating quantum systems~\cite{Feynman1982} and
factoring large integers~\cite{Shor1997}.
Despite more than four decades of research, understanding the power and
limitations of quantum computers remains a central challenge in theoretical
computer science.
A primary approach to addressing this challenge involves studying combinatorial
and algorithmic problems related to quantum circuits, which have become the
central formalism for quantum computation ever since Yao's proof of the
equivalence theorem between the quantum circuit model and the quantum Turing
machine model in 1993~\cite{Yao1993}.

One fruitful area of research in this direction is to relate quantum circuits to
classical counting problems.
It has long been known that quantum circuits can be simulated using counting
oracles, yielding the inclusion
$\BQP \subseteq \PP \subseteq \class{P}^{\SharpP}$~\cite{Adleman1997}.
Later refinements showed that quantum algorithms can be captured even more
tightly within counting-based complexity classes such as
$\textsf{AWPP}$~\cite{FortnowRogers1999}.
Similar counting-based ideas are utilized in Aaronson's proof of the hardness of
the permanent via linear optics~\cite{Aaronson2011Permanent}, Kuperberg's
results on approximating the Jones polynomial~\cite{Kuperberg2009}, and
Fujii-Morimae's work on Ising partition functions~\cite{FujiiMorimae2017}.

This counting perspective becomes more explicit through an algebraic
representation of quantum circuits, motivated by a discrete analogue of the
Feynman path-integral formulation applied to circuits.
Dawson et al.~\cite{dawson} showed that amplitudes of quantum circuits composed
of Toffoli and Hadamard gates can be expressed as normalized counts of solutions
to systems of polynomial equations over $\F_{2}$.
Montanaro~\cite{mont} refined this framework by adopting the gate set
$\mathcal{Z} = \{H,Z,\mathrm{CZ},\mathrm{CCZ}\}$, showing that any quantum
circuit $\mathcal{C}$ of $w$ qubits can be associated with a single degree-3
polynomial $f_{\mathcal{C}}$ over $\F_{2}$ without a constant term.
In this representation, the amplitude $\braket{0^w | \mathcal{C} | 0^w }$ is
proportional to the gap denoted by $\gap(f_{\mathcal{C}})$, which is the
difference between the number of solutions and non-solutions of the polynomial
$f_{\mathcal{C}}$.
This connection between an entry of a unitary circuit $\mathcal{C}$, which is
\BQP-complete, and the gap of $f_{\mathcal{C}}$, which reflects massive
cancellation effects analogous to quantum interference, is therefore believed to
underlie both the power of quantum computation and the classical hardness of
simulating quantum systems.
Closely related ideas also underpin hardness results for quantum sampling
problems on instantaneous quantum polynomial-time (IQP) circuits or
boson-sampling circuits~\cite{Bremner2011,Aaronson2011}.
In~\cite{Dalzell2020}, fine-grained complexity assumptions regarding the
counting problem of the gap of degree-3 polynomials are employed to estimate the
number of qubits needed for quantum sampling supremacy for IQP and boson
sampling.

Since quantum computers can estimate the value
$\braket{0^w | \mathcal{C} | 0^w }$, we can also approximate the gap of a polynomial $f$ to a
certain precision.
Given a circuit $\mathcal{C}$ over the gate set $\mathcal{Z}$ and its associated
degree-3 polynomial $f_{\mathcal{C}}$, a first natural question is: what precision can
a quantum computer achieve in the approximate counting of the gap?
To address this question, Montanaro~\cite{mont} defined the width $w(f)$ of a
polynomial $f$ as the minimum number of qubits required by any quantum circuit
whose associated polynomial equals $f$.
The smaller the width, the higher the precision that a quantum computer can
achieve.
Montanaro showed that, given a circuit of width $w(f)$, a quantum computer can
approximate $\gap(f)$ to absolute error $2^{(n + w(f))/2} / 3$ where $n$ is the
number of variables of $f$.
This observation shows that the complexity of width optimization is directly
tied to the complexity of approximate counting problems that characterize
quantum computation.
The next natural question is then how hard it is to compute or approximate the
width of a polynomial.
If width could be efficiently minimized, even approximately, then one could
combine width optimization with standard amplitude-estimation procedures to
obtain a new and conceptually simple combinatorial characterization of $\BQP$
based entirely on approximate counting with known precision~\cite{mont}.

In this paper, we undertake a systematic study of the quantum circuit width
problem, formally denoted as \problem{(f,k)}, which asks whether the width
$w(f)$ is less than or equal to $k$.
We show that \problem{(f,k)} is \NP-complete, and that even approximating the
circuit width within a small constant factor remains \NP-hard.
Furthermore, we extend our hardness results to degree-2 polynomials,
demonstrating that both the circuit width problem and its approximate version
remain \NP-hard.
This extension has broad implications because many quantum gate sets yield
degree-2 polynomials: for instance, circuits using $\{H, \mathrm{CS}\}$ or Clifford$+T$ gates produce polynomials with at most quadratic
terms modulo $4$ or $8$ depending on the gate set~\cite{WCYJ25}.
Thus, our approximation hardness result establishes that the same computational
intractability is universal across all these diverse gate sets.
Moreover, since degree-2 polynomials without linear terms correspond to graphs,
circuit width in this setting can be interpreted as a graph parameter related to
treewidth or pathwidth, but carrying a quantum information-theoretic
interpretation concerning the arrangement of Hadamard gates and diagonal
two-qubit gates.
Consequently, our results rule out efficient width optimization across all these
practically relevant quantum computing scenarios on classical or quantum
computers unless unlikely complexity-theoretic collapses occur.
We further complement these results with algorithmic upper bounds, including a
nondeterministic search algorithm with succinct witnesses and an efficient fixed-parameter
tractable algorithm parameterized by the circuit width.

The implications of our results are threefold.
First, we answer the open question posed in Montanaro's paper concerning the
complexity of the circuit width problem~\cite{mont}.
Our \NP-hardness results show that there is a fundamental barrier to obtaining a
combinatorial characterization of $\BQP$ via this approximate counting approach.
Second, circuit width provides a natural framework for minimizing qubit usage in
quantum computation, with potential relevance to practical implementations and
compilation of quantum algorithms.
Our hardness results indicate that, in the worst case, even quantum computers
cannot solve this problem efficiently or even approximately.
Any further progress in this direction must therefore rely on heuristic methods
and exploit the structure of circuits for specific algorithms.
Third, our hardness results, combined with the fixed-parameter tractability
algorithm, suggest that circuit width behaves more like a structural
parameter---such as treewidth---than a parameter that captures intrinsic quantum
computational power.
Every given polynomial $f$ can be realized by an IQP circuit of $n$ qubits,
implementing all quadratic terms in $f$ using $\mathrm{CZ}$ gates; and if there
is a circuit of width $w$ for $f$ using $h$ internal Hadamard gates for the
quadratic terms, we have $n = w + h$ and the total number of Hadamard gates is
$2w + h = n + w$, where the $2w$ term arises from the Hadamard gates at the
beginning and end of $\mathcal{C}$ as required in the definition of
$f_{\mathcal{C}}$.
Thus, minimizing the circuit width is equivalent to minimizing the use of the
interference-creating gate $H$ when creating the amplitude
$\braket{0^{w} | \mathcal{C} | 0^{w}}$.
To put it in perspective, treewidth governs the complexity of classical
simulation methods based on tensor networks~\cite{MarkovShi}, while linear
rank-width characterizes the complexity of simulation based on decision
diagrams~\cite{CWD+25}.
The circuit width considered in this work naturally captures the quantum
resources, as measured by the number of qubits required for creating certain
amplitudes.
Our hardness results, however, suggest that circuit width minimization problem
primarily reflects combinatorial structure because of its \NP-hardness rather
than the quantum computational power typically characterized by \BQP.

\subsection{Our Results}\label{sec:results}
In this section, we summarize our main results on the circuit width problem
\problem{(f,k)}.
The results are twofold: hardness results and algorithmic results.
\subsubsection{Hardness Results}
For the hardness of circuit width, we rule out the possibility of efficiently
computing the width of a degree-3 polynomial $f$ with no constant term by
proving that the problem of deciding $w(f)\le k$ (denoted by \problem{(f,k)}) is
\NP-complete.

\begin{theorem}[See also \cref{cor:np-completeness}]
    \textnormal{\problem{}} is \NP-complete.
\end{theorem}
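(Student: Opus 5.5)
The proof has two parts: membership in \NP and \NP-hardness by a reduction from \SAT. Both rest on a combinatorial reformulation of Montanaro's width, which I would establish first. A circuit over $\mathcal{Z}=\{H,Z,\mathrm{CZ},\mathrm{CCZ}\}$ on $w$ qubits, with Hadamards at the start and end of every wire, gives a polynomial as follows.
\begin{itemize}
\item Every Hadamard except the final ones introduces a fresh variable.
\item Each internal Hadamard on a wire contributes the quadratic "chain" monomial $x_{\mathrm{old}}x_{\mathrm{new}}$.
\item Each diagonal gate contributes a monomial in the variables currently "live" on the wires it touches.
\end{itemize}
Hence $w(f)\le k$ holds iff there is a \emph{wire decomposition} of $f$. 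This consists of a partition of the $n$ variables into $k$ sequences, where every consecutive pair $x_ix_{i+1}$ is a monomial of $f$. Each variable then has a live interval on its wire, and we need a global time order of the Hadamards. Under that order, every monomial of $f$ not used as a chain link must have its variables simultaneously live on pairwise distinct wires. Cancellations over $\F_2$ need care: a chain link could also be cancelled or duplicated by a $\mathrm{CZ}$. I would show by a normalization argument that one may assume each monomial of $f$ arises from exactly one gate, and that the number of gates is bounded by the number of monomials.

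\textbf{Membership in \NP.} The witness is the wire decomposition together with the time order, which has size polynomial in the size of $f$. Given the witness, we build the circuit and recompute $f_{\mathcal{C}}$ symbolically in polynomial time, then compare it with $f$.

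\textbf{Hardness.} I would map a \SAT formula $\Phi$ with variables $v_1,\dots,v_m$ and clauses $C_1,\dots,C_\ell$ to a polynomial $f_\Phi$ and a width bound $k$. The construction uses three gadgets.
\begin{itemize}
\item \emph{Backbone.} A set of long "forced" chains, i.e.\ paths of quadratic monomials in degree-2 variables that admit no alternative decomposition. These fix $k-1$ wires and leave essentially one free wire, so $k$ is tight.
\item \emph{Variable gadgets.} For each $v_i$, a small cycle-like structure of quadratic monomials that can be cut into wire segments in exactly two ways. The two cuts encode $v_i=\mathrm{true}$ and $v_i=\mathrm{false}$, and each places a specific literal variable live during a specific time window.
\item \emph{Clause gadgets.} For each clause, cubic $\mathrm{CCZ}$-type monomials $y_C\, a\, b$. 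Here $a,b$ are backbone variables live during the clause's window, and $y_C$ must be simultaneously live with one of the three literal variables of the clause.
\end{itemize}
The live-interval constraint then says that a satisfied literal makes the clause monomial schedulable within $k$ wires. An unsatisfied clause would force an extra concurrently live variable, hence an extra wire. Completeness is a direct construction of the circuit from a satisfying assignment.

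\textbf{Main obstacle.} The hard step is soundness: showing that \emph{every} width-$k$ decomposition of $f_\Phi$ has the intended canonical form. An arbitrary decomposition could reroute chains through unexpected quadratic monomials or interleave gadgets in time. I would first try to make rerouting impossible. Give distinct gadgets disjoint variable sets except at designated ports, and pad the backbone with many parallel forced chains so that any deviation raises the peak number of live variables above $k$. Then a counting argument on the number of simultaneously live variables at each clause window should force the canonical time order. Once that order is forced, reading off an assignment from the variable gadgets and checking every clause is routine.
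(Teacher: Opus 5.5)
Your \NP-membership argument is fine and is essentially the paper's: guess the wire contents and gate list, then recompute the polynomial. The hardness half, however, is a plan rather than a proof. The soundness step you flag as the main obstacle is where the whole reduction lives, and both mechanisms you propose for it fail as stated.

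First, the counting invariant is wrong. In a $k$-wire circuit exactly one segment per wire is live at every time, so the number of simultaneously live variables is always exactly $k$. It cannot be "raised above $k$", and an unsatisfied clause cannot "force an extra concurrently live variable." The invariant that actually controls width is $n=k+h$, where $h$ is the number of internal Hadamards. Combine it with $h\le q$, the number of quadratic monomials of $f$. This holds because each internal $H$ yields a distinct quadratic monomial that no diagonal gate can produce, since adjacent segments have disjoint supports.

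The paper uses exactly this. It builds $f_\Phi$ with $n=18m+1$ symbols and $q=12m$ quadratic monomials forming vertex-disjoint $3$-vertex paths, and sets $k=1+6m=n-q$. Then $w(f_\Phi)\le k$ forces every quadratic monomial to be a Hadamard link. So each wire is exactly one such path, and $b$, which has no quadratic terms, sits alone on an $H$-free wire. The only combinatorial freedom left is the orientation of each wire. All overlap requirements are imposed by cubic terms $b\,u\,v$ with the always-live $b$, precisely so they cannot be absorbed as chain links the way quadratic terms could.

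Second, you have no mechanism for disjunction. Cubic monomials only let you \emph{require} overlaps: each forces its three variables to be live together. Adding monomials $y_C\ell$ would therefore demand overlap with \emph{all} three literals, while $y_C\,a\,b$ with backbone variables $a,b$ says nothing about the literals at all. A disjunction must be manufactured from the layout freedom, and this is the missing idea.

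The paper's device is a wire $\overleftarrow{q}-q-\overrightarrow{q}$ with terms $b\,\overleftarrow{q}\,u+b\,q\,u$, which force the junction of $\overleftarrow{q}$ and $q$ into $\supp(u)$. Likewise $b\,q\,v+b\,\overrightarrow{q}\,v$ force the other junction into $\supp(v)$. Hence $\supp(q)$ lies in the interval spanned by $\supp(u)\cup\supp(v)$. Because the truth-marker segment $t_c$ sits at an end of its wire, $q$ overlapping $t_c$ forces $u$ or $v$ to overlap $t_c$. Two such gadgets plus $b\,t_c\,o_c$ encode the clause.

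Your variable gadget, a "cycle that can be cut in exactly two ways," is also unsupported. A cycle of quadratic monomials cannot lie on one wire, since some edge would join non-adjacent segments of that wire. If split over two wires, the two CZ-realized edges could a priori be any pair of its edges, not just two fixed choices. The paper instead uses a rigid three-segment wire $x_c-\hat{x}_c-\bar{x}_c$ whose two orientations are the two truth values. Terms $b\,\hat{t}_c\,(\cdot)$ force exactly one of $x_c,\bar{x}_c$ to overlap $t_c$. Terms $b\,x_{c_1}x_{c_2}$, $b\,\hat{x}_{c_1}\hat{x}_{c_2}$, $b\,\bar{x}_{c_1}\bar{x}_{c_2}$ synchronize orientations across occurrences. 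Without concrete gadgets and a rigidity argument of this kind, the \NP-hardness direction is not established.
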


In fact, we are able to prove an inapproximability result: \problem{} is even
\NP-hard to approximate within a multiplicative factor of about $1.02$ (even
restricted to instances with $k = \Theta(n)$), and hence it is impossible to
approximate the width efficiently (in $ \mathsf{BQP}$ power) within additive
$O(\log n)$ error unless $\mathsf{NP} \subseteq \mathsf{BQP} $.

\begin{theorem}[Informal, see also \cref{thm:approxnphard}]
  For any constant $0<\epsilon<1/48$, it is \NP-hard to distinguish degree-3
  polynomials $f$ with no constant term and \(w(f)\leq k\) from those with
  \(w(f)\geq (\tfrac{49}{48}-\epsilon)k\).
\end{theorem}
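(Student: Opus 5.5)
The plan is a gap-preserving version of the reduction behind \cref{cor:np-completeness}, applied to a gap version of \SAT rather than to \SAT itself. I would start from an \NP-hard gap problem: a \textsc{Max-3-SAT} variant with bounded variable occurrence (of Berman--Karpinski--Scott type, or a bounded-occurrence version of Håstad's $7/8$ gap). In this problem, satisfiable instances must be distinguished from instances in which every assignment violates at least a constant fraction $\delta m$ of the $m$ clauses. Bounded occurrence is important. It makes the number of variables of $f_\Phi$ and the threshold $k$ both $\Theta(m)$, which is what turns an additive loss of $\Omega(m)$ qubits into a multiplicative factor.

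\textbf{Combinatorial view of width.} I would first recast $w(f)$ combinatorially, using the structure of Montanaro's representation. Each Hadamard in the middle of a wire contributes a quadratic term $x_{\mathrm{in}}x_{\mathrm{out}}$. So a circuit of width $w$ for $f$ is the same thing as
\begin{itemize}
  \item a partition of (a subset of) the quadratic terms of $f$ into $w$ vertex-disjoint ``wire paths'' covering all variables, with each wire carrying its variables in consecutive time intervals, and
  \item a linear time order in which every remaining monomial (implemented by $Z$, $\mathrm{CZ}$, $\mathrm{CCZ}$) has all of its variables simultaneously alive.
\end{itemize}
With this view, width behaves like a path-cover and interval-scheduling parameter, and I can reason about $f_\Phi$ gadget by gadget.

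\textbf{Completeness.} This is inherited from the \NP-completeness proof. If $\Phi$ is satisfiable, the circuit built from a satisfying assignment has width at most $k$, where $k = \alpha m$ for an explicit constant $\alpha$. The constant comes from one wire per variable-gadget slot plus a constant number of auxiliary wires per clause.

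\textbf{Soundness, the main obstacle.} I must strengthen the ``if $w(f_\Phi)\le k$ then $\Phi$ is satisfiable'' direction into a robust statement: from any circuit of width $w$ one can decode an assignment violating at most $c\,(w-k)$ clauses, for an explicit constant $c$. The steps I would take, in order:
\begin{enumerate}
  \item Show that in any circuit for $f_\Phi$, each variable gadget is in one of a few canonical configurations. Each canonical configuration reads off a truth value, and any non-canonical configuration costs at least one extra wire that is charged to that gadget.
  \item Show that each clause gadget whose three literals are all false under the decoded values forces at least one additional simultaneously alive qubit. Such a clause's $\mathrm{CCZ}$/$\mathrm{CZ}$ monomials cannot be scheduled within the wires reserved for it.
  \item Make the extra wires of step 2 additive over clauses. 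This is where I expect the difficulty to lie. A single extra wire can be reused over time by many clauses, so a naive counting argument fails. I would first try to force clause gadgets to be processed in a fixed interleaved order, for example by chaining them with quadratic ``ordering'' terms. Then the alive sets at the critical times are essentially nested, and an extra wire cannot be shared between distinct violated clauses.
\end{enumerate}

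\textbf{Combining.} In the NO case every assignment violates at least $\delta m$ clauses. Soundness then gives
\[
w(f_\Phi) \ge k + \delta m / c = \bigl(1 + \delta/(c\alpha)\bigr)k .
\]
I would tune the gadget constants and the gap source so that $1+\delta/(c\alpha) = 49/48$. Losing an arbitrary $\epsilon$ absorbs the slack in the underlying gap theorem, which yields the stated factor $\tfrac{49}{48}-\epsilon$ with $k=\Theta(n)$. Finally, an additive $O(\log n)$ approximation would separate $k$ from $(1+\Omega(1))k$ when $k=\Theta(n)$. So a \BQP algorithm for it would put \NP in \BQP.
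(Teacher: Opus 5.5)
Your outline (gap source, completeness inherited from the exact reduction, robust soundness charging one extra wire per violated clause, then a ratio computation) matches the paper's. However, the step you single out as the main obstacle is left unresolved, and the remedy you propose would break the argument.

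\textbf{Additivity over clauses.} Additivity does not come from sequencing gadgets in time; it comes from \emph{confinement}. Two consecutive segments on a wire always contribute their product through the internal Hadamard between them. No CZ can cancel that product, because the two supports are disjoint. So consecutive symbols on any wire must form a quadratic monomial of $f_\Phi$. Since $f_\Phi$ has no quadratic monomials involving $b$ or joining symbols of different clauses, $b$ sits alone on an $H$-free wire, and every other wire lies entirely inside a single $\mathcal{S}_c$ (\cref{lem:qubit-iso}). Hence $w=1+\sum_c w_c$ exactly, and no wire can be ``reused'' by several clauses.

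Within a clause, your own path-cover view finishes the job. The $12$ quadratic monomials inside $\mathcal{S}_c$ form six disjoint three-vertex paths, so $w_c\ge 6$, with equality only for the canonical layout (\cref{lem:good-cl}). Chaining clause gadgets with quadratic ``ordering'' terms would do the opposite of what you want. It would create cross-gadget quadratic monomials that Hadamard links can realize, letting wires run through several gadgets and destroying the per-clause decomposition. Likewise, ``one additional simultaneously alive qubit'' is the wrong currency, since all $w$ wires are alive at all times. The real cost of a non-good clause is that one of its local quadratic terms must be realized by a CZ instead of a Hadamard, i.e.\ $w_c\ge 7$.

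\textbf{The constant $\tfrac{49}{48}$.} This factor cannot be reached by ``tuning''; it equals $1+\tfrac{1/8}{6}$. It needs the full Håstad gap ($\delta\to 1/8$); a Berman--Karpinski--Scott bounded-occurrence gap is far smaller and would give a much weaker factor. It also needs $k=1+6m$ exactly, which means:
\begin{itemize}
\item six wires per satisfied clause;
\item exactly one more wire per unsatisfied clause (\cref{lem:extra-wire}: move $t_c$ onto a fresh $H$-free wire);
\item no wires attributed to variables outside the clause gadgets.
\end{itemize}

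The paper achieves this by giving each literal \emph{occurrence} a private wire inside $\mathcal{S}_c$ and enforcing consistency only through cubic $b$-terms. The assignment is then decoded solely from the good (six-wire) clauses. These are mutually synchronized and all satisfied, via $b\,t_c\,o_c$ and OR-gadget soundness. If variable gadgets were shared among clauses, as your ``one wire per variable-gadget slot'' and step~1 suggest, one misconfigured gadget could cost a single wire while leaving the truth value undefined for many clauses. Your charging bound $c(w-k)$ would then fail.

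With private occurrences one gets the exact identity $w(f_\Phi)=1+7m-\OPT(\Phi)$ (\cref{thm:maxsat-reduction}). No bounded-occurrence hypothesis is needed, since $|\mathcal S|=18m+1$ regardless. Finally,
\[
\bigl(1+6m+(\tfrac18-\epsilon)m\bigr)-(\tfrac{49}{48}-\epsilon)(1+6m)=\epsilon(5m+1)-\tfrac{1}{48},
\]
which is positive for large $m$.
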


The previous hardness results crucially rely on cubic terms (CCZ gates) to
enforce complex geometric constraints.
However, many quantum computational models naturally correspond to degree-2
polynomials.
For instance, circuits composed of Hadamard and diagonal gates such as
$\{H, \mathrm{CS}\}$ (controlled-S gate), or even Clifford$+T$ gates
$\{H, \mathrm{CZ}, T\}$ yield polynomials with at most quadratic terms modulo
$4$, or $8$ depending on the gate set.
In these cases, it is also possible that many different circuits may correspond
to the same polynomial as the $\mathrm{CZ}$ and Hadamard gate $H$ contribute to
quadratic terms of the same form (see e.g.~\cite{WCYJ25}).
A natural question thus arises: does the approximation hardness persist when we
restrict to \emph{only} degree-2 polynomials?
We answer this affirmatively by showing that approximation hardness persists
with the same factor:
\begin{theorem}[Informal, see \cref{thm:degree2-hard}]
  The approximation hardness of \textnormal{\problem{}} holds for degree-2
  polynomials: for any constant $0<\epsilon<1/48$, it is \NP-hard to distinguish
  quadratic $f$ with \(w(f)\leq k\) from those with
  \(w(f)\geq (\tfrac{49}{48}-\epsilon)k\).
\end{theorem}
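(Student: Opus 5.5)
The plan is to reduce from the degree-3 gap instances of \cref{thm:approxnphard}, replacing each cubic term by a quadratic gadget whose effect on the width is controlled up to lower-order terms. First I will recall how width is read off a polynomial in Montanaro's framework. A circuit over $\mathcal{Z}$ of width $w$ corresponds to an ordering of the $n$ variables into $w$ wires. Each variable is the value on a wire segment between consecutive Hadamards. A term $x_ix_j$ (respectively $x_ix_jx_l$) must be realized either by a CZ (respectively CCZ) acting on segments alive at a common time, or, for quadratic terms, by a Hadamard joining consecutive segments on one wire. So $w(f)$ is the minimum, over such layouts, of the number of wires needed. In the degree-3 hard instances $f_\Phi$, the geometric constraints come from cubic terms forcing three segments to overlap in time.

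\textbf{Gadget.} I will replace each cubic term $x_ix_jx_l$ by quadratic terms among $x_i,x_j,x_l$ together with a bounded number of fresh auxiliary variables. The aim is that in any layout these terms force the segments of $x_i,x_j,x_l$ to be simultaneously alive, or at least pairwise alive in an arrangement that plays the same role. I will also use pairwise terms $x_ix_j,x_jx_l,x_ix_l$ directly. For intervals, pairwise intersection implies a common point (the Helly property in dimension one), so the three CZ's already force a common time at which all three segments are alive. This observation is the crux: cubic CCZ constraints become pairwise CZ constraints with no loss in the overlap structure.

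\textbf{The complication.} A quadratic term can alternatively be realized by a Hadamard, i.e.\ $x_i$ and $x_j$ adjacent on the same wire. This is a new freedom that could lower the width. To block it, I will attach to each original variable a padding that makes Hadamard adjacency between original variables useless or costly. One option is to replace each variable by a short chain of fresh variables linked through quadratic terms, so that the chain's own structure uses up the Hadamard slots. Another is to show that any layout using an original term as a Hadamard can be locally transformed, without increasing the width, into one using a CZ. I expect to prove the following exchange lemma: if $x_ix_j$ is realized by a Hadamard, then splitting the wire at that point and adding one wire for a bounded time changes the width by at most $O(1)$ per offending time step. Then I will argue that the number of wires simultaneously disturbed is bounded.

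\textbf{Gap preservation.} Completeness is immediate: the degree-3 layout achieving $w(f_\Phi)\le k$ realizes every cubic term at a time when all three segments are alive, so the corresponding pairwise CZ's can be placed there, and any auxiliary variables cost only $o(k)$ additional wires. For soundness, from a quadratic layout of width $w'$ I will extract, via Helly plus the exchange lemma, a degree-3 layout of width $w'+o(k)$. Since $k=\Theta(n)$, the additive error is absorbed into $\epsilon$, giving the factor $\tfrac{49}{48}-\epsilon$.

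The main obstacle is the exchange lemma: controlling layouts that exploit Hadamard adjacency among original variables, since this is the one genuinely new degree of freedom in the quadratic setting.
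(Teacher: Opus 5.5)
\textbf{The gap is in the soundness step.} You claim that a quadratic layout of width $w'$ can be converted into a degree-3 layout for $f_\Phi$ of width $w'+o(k)$. This is false for the substitution you describe.

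Undoing a Hadamard that realizes a cubic-derived pair means splitting its wire, and each split costs a full wire. Every wire spans the whole time axis, so there is no ``extra wire for a bounded time.'' What matters is the total number of such Hadamards, not how many are simultaneous, and nothing in your plan bounds that number by $o(m)$. In fact it can be a constant per clause.

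Concretely, take $\Phi=\bigwedge_c(x_c\lor y_c\lor z_c)$ with fresh variables in every clause and the all-true assignment. Replace each $b\,u\,v$ by the pairs $bu,bv,uv$, taken as a support; over $\mathbb{F}_2$, pairs such as $b\hat t_c$ recur in many cubics and would cancel. Give every clause the wires
$o_c-t_c-\hat t_c-\tilde t_c$, $\overrightarrow{o_c}-z_c-\hat z_c-\bar z_c$, $\overleftarrow{o_c}-p_c-\overleftarrow{p_c}$, $\overrightarrow{p_c}-y_c-\hat y_c-\bar y_c$ and $x_c-\hat x_c-\bar x_c$, with internal Hadamards at times $\{0.2,0.3,0.8\}$, $\{0.1,0.4,0.6\}$, $\{0.1,0.5\}$, $\{0.15,0.45,0.6\}$ and $\{0.55,0.7\}$ respectively, identical across clauses. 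One checks two things:
\begin{itemize}
\item Every Hadamard joins a pair that is now a quadratic term; for example $t_co_c$ comes from $b\,t_c\,o_c$, and $z_c\overrightarrow{o_c}$ from $b\,\overrightarrow{o_c}\,z_c$.
\item Every remaining pair has overlapping supports on distinct wires and can be a CZ. This includes the displaced original terms $\overleftarrow{o_c}o_c$, $\overrightarrow{o_c}o_c$ and $\overrightarrow{p_c}p_c$.
\end{itemize}
So the substituted polynomial has width at most $1+5m$, whereas $w(f_\Phi)=1+6m$ by \cref{lem:qubit-iso,lem:good-cl}. Any extraction must therefore lose at least $m$, which already exceeds the entire gap $(\tfrac18-\delta)m$.

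Padding chains do not repair this. Nothing forces chain links to be realized by Hadamards: a layout may realize them by CZs and spend the Hadamard slots elsewhere, exactly as above. A counting argument can therefore only pin down the intended structure up to an additive $\Theta(m)$ error.

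\textbf{The missing idea is amplification}, so that this $\Theta(m)$ damage becomes lower order relative to the gap. The paper's twin construction works as follows:
\begin{itemize}
\item Replace each symbol $s\neq b$ by $M$ copies and $b$ by $Nm$ copies, and put a clique on the copies of each symbol.
\item Replace each original quadratic $uv$ by the matching $\sum_i u^{(i)}v^{(i)}$.
\item Replace each cubic $b\,u\,v$ by the complete tripartite graph on all copies.
\end{itemize}
The key lemma is clique survival (\cref{lem:twin-survival}): in any clique at most two members share a wire with another member, so at most one of the clique's own pairs is realized by a Hadamard. Every non-$b$ symbol lies in $O(1)$ such cliques, so only $O(1)$ of its copies are spoiled. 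Each clause therefore keeps $M-O(1)$ index groups whose copies sit on wires disjoint from all copies of their cubic partners.

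For those groups every tripartite edge must be a CZ, and your Helly observation turns the three pairwise overlaps back into the CCZ constraint. The degree-3 counting (six wires if and only if the clause copy is good, otherwise at least seven) then applies copy by copy. The total loss stays $O(m)$ while the gap scales to $(\tfrac18-\delta)Mm$, so $M=\Omega(1/\epsilon)$ absorbs it. Your Helly step is the right local ingredient, but without such a duplication the soundness direction does not go through.
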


Finally, from the viewpoint of conditional lower bounds, our reduction from
\SAT{} also yields an exponential lower bound under the standard Exponential
Time Hypothesis (ETH)~\cite{ETH0}:

\begin{theorem}[See also \cref{thm:eth}]\label{thm:eth-intro}
  Assuming the Exponential Time Hypothesis \textnormal{(ETH)}, no algorithm can
  solve \problem{(f,k)} in time $2^{o(n)}$.
\end{theorem}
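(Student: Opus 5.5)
The plan is to compose the \NP-hardness reduction behind \cref{cor:np-completeness} with the Sparsification Lemma, and to check that the reduction blows up the instance only linearly. Under ETH there is a constant $\delta>0$ such that \SAT{} on $N$ variables cannot be solved in time $2^{\delta N}$. By the Sparsification Lemma of Impagliazzo, Paturi and Zane, this remains true for \SAT{} instances with $M=O(N)$ clauses: no $2^{o(N+M)}$-time algorithm exists. So it suffices to exhibit a polynomial-time reduction sending a \SAT{} instance $\Phi$ with $N$ variables and $M$ clauses to an instance $(f_\Phi,k)$ of \problem{(f,k)} such that
\begin{itemize}
\item $f_\Phi$ has $n=O(N+M)$ variables, and
\item $w(f_\Phi)\le k$ if and only if $\Phi$ is satisfiable.
\end{itemize}
Given such a reduction, a $2^{o(n)}$-time algorithm for \problem{(f,k)} would decide sparse \SAT{} in time $2^{o(N+M)}+\poly(N+M)=2^{o(N)}$, contradicting ETH.

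The correctness half, $w(f_\Phi)\le k\iff\Phi$ satisfiable, is exactly what the \NP-completeness reduction already proves, so I will take it as given. The real work is auditing the size of $f_\Phi$. The reduction builds the polynomial out of gadgets:
\begin{itemize}
\item one variable gadget for each \SAT{} variable, which encodes the choice of truth value through the relative placement of Hadamard gates or qubit reuse;
\item one clause gadget for each clause, built from a bounded number of cubic terms (CCZ gates) that tie the clause to its three literals;
\item possibly some global padding or ``frame'' variables that fix the budget $k$.
\end{itemize}
I will check three things: every gadget uses $O(1)$ variables; each literal occurrence adds only $O(1)$ connecting variables; and the global structure adds $O(N+M)$ variables. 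That last point is where the argument could break, for instance if the reduction enforces an ordering with a chain of length proportional to $N\cdot M$, or with pairwise interactions between all clauses. Summing these counts gives $n=O(N+M)=O(N)$ after sparsification.

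The main obstacle is precisely this linear-size audit. If the existing reduction turns out to use quadratically many variables (for example, a grid of variable--clause crossings), the ETH bound would degrade to $2^{o(\sqrt n)}$. In that case I would try to replace the grid by a bounded-degree routing: after sparsification each variable appears in $O(1)$ clauses, which should allow each variable gadget to be chained only to its own clause gadgets, using $O(1)$ extra variables per occurrence.
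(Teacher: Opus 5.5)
Your strategy is the paper's: sparsify so that $|C|\le D|X|$, then use the fact that the reduction of Section~3 produces only linearly many symbols. A $2^{o(n)}$ algorithm would then yield a $2^{o(|X|)}$ algorithm for sparse \textsc{3-SAT}. However, your proposal never establishes the one fact the argument rests on. It postpones the ``linear-size audit'' and plans a fallback. That step follows directly from the construction, and your description of the construction is partly wrong.

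The symbol set is exactly $\mathcal{S}=\{b\}\cup\bigcup_{c\in C}\mathcal{S}_c$ with $|\mathcal{S}_c|=18$:
\begin{itemize}
\item three truth-marker symbols;
\item three symbols $x_c,\hat{x}_c,\bar{x}_c$ for each of the three literal \emph{occurrences};
\item three symbols each for the $p_c$ and $o_c$ OR wires.
\end{itemize}
Hence $n=18|C|+1$ and $k=1+6|C|$. There is no per-variable gadget and there are no padding symbols. Consistency across occurrences and across clauses is enforced only by cubic monomials such as $b\,t_{c_1}t_{c_2}$ and $b\,x_{c_1}x_{c_2}$, which reuse the single shared symbol $b$.

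The ``pairwise interactions between all clauses'' you were worried about really are present. The synchronization sums $G_{\mathrm{TM}}$ and $G_x$ contribute $\Theta(|C|^2)$ monomials. They add no symbols, though, so they are harmless for a bound stated in terms of $n$. They only affect the input length, which is why the theorem is phrased in the number of symbols: measured in the number $s$ of monomials, this reduction would only rule out $2^{o(\sqrt{s})}$ algorithms.

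With $n=18|C|+1\le(18D+1)|X|+1$, the contradiction in your first paragraph goes through verbatim, and the bounded-degree rerouting contingency is unnecessary.
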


The conditional exponential lower bound will match the runtime of one of our
algorithms (up to constant factors in the exponent) when $k=\Theta(n)$, see
\cref{cor:deterministic-intro} below.

\subsubsection{Algorithmic results}
On the positive side, we provide a nondeterministic polynomial-time algorithm
for the search version of \problem{(f,k)} with improved witness complexity:

\begin{theorem}[Informal, see \cref{thm:nondeterministic}]
  There exists a nondeterministic polynomial-time algorithm that, given a
  witness of size $2 \log_2 \binom{n}{k} =  O(k \log(en/k))$ bits, either
  constructs a quantum circuit on at most $k$ qubits with associated polynomial
  $f$, or correctly concludes that no such circuit exists.
\end{theorem}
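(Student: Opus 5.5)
The plan is to use the path-sum structure of circuits over $\mathcal{Z}$ to reduce the search to guessing only the $k$ \emph{initial} and the $k$ \emph{final} wire-segment variables. Recall how Montanaro's correspondence works. Each qubit wire is cut by its internal Hadamard gates into segments, and each segment carries one variable of $f$. An internal $H$ between consecutive segments $x,y$ of a wire contributes the term $xy$. Each diagonal gate ($Z$, $\mathrm{CZ}$, $\mathrm{CCZ}$) contributes a monomial in the variables of the segments that are live at that moment. A circuit of width $k$ for $f$ is therefore the same as a schedule with three properties:
\begin{itemize}
\item exactly $k$ variables are live at every time;
\item a step either applies a diagonal monomial on live variables or \emph{retires} a live variable $x$ and introduces a fresh $y$, paying for the term $xy$;
\item the multiset of produced monomials sums to $f$ over $\F_2$.
\end{itemize}
Because the live set always has size exactly $k$, the width bound is automatic. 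The only question is whether a feasible schedule exists.

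The witness is a pair $(S,T)$ of $k$-subsets of the $n$ variables: $S$ is the initial live set and $T$ is the final live set. This costs $2\log_2\binom{n}{k}$ bits. Given $(S,T)$, the verifier runs a deterministic greedy simulation that alternates two phases.
\begin{enumerate}
\item \emph{Gate phase.} Apply and delete every remaining monomial of $f$ (degree $1$, $2$ or $3$) whose variables are all currently live.
\item \emph{Hadamard phase.} Look for a live $x\notin T$ whose only remaining monomial is a single quadratic $xy$ with $y$ not yet introduced. Retire $x$ and make $y$ live.
\end{enumerate}
The verifier accepts, and outputs the recorded circuit, if every variable is eventually introduced, all monomials are consumed, and the final live set equals $T$. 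If no candidate exists in the Hadamard phase before this happens, the branch rejects. Every step is polynomial time.

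Soundness is immediate, since an accepted run literally is a circuit of width $k$ whose polynomial is $f$. For completeness, I would take an optimal circuit $\mathcal{C}$ and set $S,T$ to its first and last segment variables. I would then prove by an exchange argument that the greedy run from $(S,T)$ never gets stuck. The argument has two parts.
\begin{itemize}
\item Applying diagonal monomials as early as possible is harmless, because monomials commute.
\item Retiring an $x$ that satisfies the candidacy condition never destroys feasibility. The variable $x$ has no other pending obligations, and its only pending term $xy$ forces $y$ to coexist with $x$. So in some feasible completion we may assume that $y$ is created from $x$ by a Hadamard. Any other creation of $y$, with $xy$ supplied by a $\mathrm{CZ}$, can be rerouted by swapping which wire carries $y$. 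This rerouting keeps the live count at $k$ and preserves every coexistence constraint.
\end{itemize}
Having no candidate while unfinished contradicts the existence of $\mathcal{C}$ restricted to the remaining schedule. The reason is that $\mathcal{C}$'s own next internal Hadamard, after the rerouting normalization, always supplies a candidate.

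The main obstacle is this last normalization. Mod-$2$ cancellations are the problem: a term $xy$ could be produced twice, once by a Hadamard and once by a $\mathrm{CZ}$, and thus vanish from $f$. In that case the successor of $x$ is not visible from $f$ at all. I would first try to show that such a double use can always be eliminated in an optimal circuit. The idea is to replace the pair by a Hadamard-free rearrangement that does not increase width, so that one may assume every Hadamard term appears in $f$. If that fails, I would enlarge the candidacy rule to allow a retirement whose pending set is empty, introducing any fresh $y$ all of whose monomials lie inside the new live set. I would then redo the exchange argument for that rule, possibly using $T$ to break ties among choices.
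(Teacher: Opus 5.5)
Your plan follows the paper's in its main lines. You use the same witness: the head set $S$ and tail set $T$, costing $2\log_2\binom{n}{k}$ bits. Beneath the exchange language, your completeness idea is also the paper's from \cref{lem:degree1}: the live segment of $\mathcal{C}$ that ends earliest can always be extended.

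Interleaving the diagonal gates into the greedy run is a genuine and worthwhile difference. An accepting run is literally a circuit, so you never need the constraint graph $\mathcal{G}_H$ of \cref{lem:structure}, its topological sort, or the preprocessing that removes $uv$ when some $uvw$ is present. The paper instead first fixes the wire lists by a degree-one rule in $G_f$ and only afterwards checks schedulability.

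As written, however, your completeness proof is not closed. It depends on a rerouting step (``swap which wire carries $y$'') that you do not justify. It also depends on a cancellation obstacle that you leave open, with two fallback plans.

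The missing fact is that a Hadamard term never cancels. If $x$ and $s$ are consecutive segments of one wire, their open supports are disjoint. So no $Z$, CZ or CCZ gate acts on both, and $xs$ appears in $f$; the paper uses exactly this in \cref{lem:qubit-iso}. With this fact no exchange argument is needed, because the run cannot deviate from $\mathcal{C}$:
\begin{itemize}
\item Keep the invariant that the live set consists of the current segments of the wires of $\mathcal{C}$, so each wire has been introduced up to a prefix. Your candidacy rule already ensures that a retired variable had every monomial other than its outgoing Hadamard term consumed.
\item Suppose a live $x\notin T$ is a candidate. Its $\mathcal{C}$-successor $s$ is not yet introduced, so $xs$ is a remaining monomial. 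The unique remaining monomial $xy$ therefore forces $y=s$, and the ``$xy$ supplied by a CZ'' scenario you try to reroute cannot occur.
\item A candidate always exists: take the live $x\notin T$ whose segment ends earliest in $\mathcal{C}$. Apart from $px$ (paid when $x$ was introduced) and $xs$, every monomial of $f$ containing $x$ comes from a diagonal gate applied while $x$ is live, so its other variables overlap $\supp(x)$.
\item An unintroduced variable on another wire starts no earlier than the end of that wire's current segment, hence no earlier than $r_x$, so it cannot overlap $\supp(x)$. Those other variables are therefore live, and the monomial was consumed in the gate phase, or retired, and it was consumed before their retirement.
\item Finally, every monomial is consumed, since it either contains a retired variable or lies entirely inside $T$.
\end{itemize}
Your enlarged candidacy rule is therefore unnecessary. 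The fix is this one non-cancellation fact together with tracking $\mathcal{C}$ directly.
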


As a corollary, we can use the nondeterministic algorithm to derive a
deterministic algorithm with runtime
$O\left(\binom{n}{k}^2 \cdot \textnormal{poly}(n)\right)$ by enumerating all
possible witnesses.
In the case where $k = \Theta(n)$, the running time is $2^{O(n)}$, while the
trivial algorithm that enumerates the order of all $\Theta(n)$ Hadamard gates
reads $O(n\log n)$ bits of witness or runs in time $n^{\Theta(n)}$
deterministically.
Moreover, this matches the ETH lower bound \cref{thm:eth-intro} up to constant
factors in the exponent:
\begin{corollary}[see also \cref{cor:deterministic}]\label{cor:deterministic-intro}
  There exists a deterministic algorithm that solves the search version of
  \textnormal{\problem{(f,k)}} in time
  $\binom{n}{k}^2 \cdot \textnormal{poly}(n)$, which is tight up to constant
  factors in the exponent under the Exponential Time Hypothesis
  \textnormal{(ETH)} when $k = \Theta(n)$.
\end{corollary}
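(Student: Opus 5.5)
The deterministic algorithm is obtained by derandomizing the nondeterministic search algorithm of the previous theorem (see \cref{thm:nondeterministic}) through brute-force enumeration of its witnesses. The tightness claim then follows from the ETH lower bound of \cref{thm:eth-intro}.

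\textbf{Algorithm.} That procedure reads a witness of $2\log_2\binom{n}{k}$ bits. Judging from the stated bound, the witness consists of two $k$-subsets of the $n$ variables, for instance the variables attached to the initial and final Hadamard layers, or the sets of wires live at two designated cuts. Given such a witness, the procedure runs in $\poly(n)$ time and either outputs a circuit on at most $k$ qubits with polynomial $f$, or rejects.

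I would enumerate all $\binom{n}{k}^2$ pairs of $k$-subsets, which can be listed in time $\poly(n)$ per element. On each pair I run the verifier and return the first circuit produced. If none is produced, I output that $w(f)>k$.

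Correctness relies on the two guarantees of the nondeterministic algorithm. \emph{Soundness}: any circuit it outputs is a valid circuit on at most $k$ qubits with associated polynomial $f$. \emph{Completeness}: if $w(f)\le k$, some witness leads to such a circuit. Hence the enumeration answers the search version correctly, and the total running time is $\binom{n}{k}^2\cdot\poly(n)$. To cover all widths up to $k$, I would also enumerate over $k'\le k$, or pad with idle qubits. This only contributes an extra factor of at most $k+1$, which is absorbed into $\poly(n)$.

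\textbf{Tightness.} For $k=\Theta(n)$ we have $\binom{n}{k}^2\le 2^{2n}$, so the runtime is $2^{O(n)}$. By \cref{thm:eth-intro}, under ETH no $2^{o(n)}$ algorithm exists. The instances there come from the \SAT{} reduction, which I expect to have $n=\Theta(\#\text{vars}+\#\text{clauses})$ and $k=\Theta(n)$, consistent with the approximation result holding for $k=\Theta(n)$. So the lower bound applies within the regime $k=\Theta(n)$, and the exponents match up to constant factors.

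\textbf{Main obstacle.} The step needing care is confirming that the ETH-hard instances lie in the $k=\Theta(n)$ regime with linear blow-up. Sparsification gives $m=O(N)$ clauses for $N$ \SAT{} variables, and the reduction must produce $n=O(N)$ polynomial variables. Any superlinear blow-up would weaken the lower bound to $2^{o(n^{c})}$, and the exponents would then not match.
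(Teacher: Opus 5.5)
Your argument is the paper's: enumerate all $\binom{n}{k}^2$ head/tail witness pairs, run the greedy extension and Check on each, and get tightness from \cref{thm:eth}. The paper's witness is exactly the first and last symbols on each wire, as you guessed. Its reduction has $n=18|C|+1$ and $k=1+6|C|$ with $|C|=O(|X|)$ after sparsification, so the linear blow-up you worried about does hold.

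One side remark needs correcting: handling widths below $k$. Enumerating all $k'\le k$ costs only a factor $k+1$ when $k\le n/2$. For $k>n/2$, however, the $k'=\lfloor n/2\rfloor$ term alone is $\Theta(4^n/n)$ and can dwarf $\binom{n}{k}^2$, so this option would not give the stated bound. Padding with idle wires also fails, because each idle wire carries a fresh symbol outside $\mathcal{S}$, which your enumeration over $k$-subsets of $\mathcal{S}$ cannot find.

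The paper avoids both problems with a splitting observation. Suppose $w(f)<k\le n$. Then some wire carries at least two symbols. Remove the Hadamard between its first two symbols, move the first symbol onto a new wire, and add a CZ to restore their quadratic term. This keeps $\mathcal{S}$ fixed and adds one wire, so searching for exactly $k$ wires suffices. The case $k\ge n$ is trivial.
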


Taking $k$ as a constant, this yields an XP algorithm with runtime
$n^{2k+O(1)}$.
Moreover, from the perspective of parameterized complexity~\cite{Downey2013,
  Cygan2015}, the nondeterministic algorithm can be easily transformed into a
placement of \problem{} in the parameterized complexity class $\mathsf{W}[1]$.
A natural question then arises: is the problem $\mathsf{W}[1]$-complete, or is it
fixed-parameter tractable ($\mathsf{FPT}$) with respect to parameter $k$?
We answer the latter affirmatively:

\begin{theorem}[Informal, see also \cref{thm:main_algorithm}]
  \problem{(f,k)} is fixed-parameter tractable with respect to parameter $k$.
  There exists an algorithm with runtime $k^{O(k)} \cdot n$ that solves the
  search version of \problem{(f,k)}.
\end{theorem}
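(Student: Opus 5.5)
The plan is to recast a width-$k$ circuit as a combinatorial \emph{layout} of the variables of $f$, and then search for a layout with a bounded-branching sweep. In Montanaro's correspondence, each variable of $f$ is one maximal Hadamard-free segment of a qubit wire. An internal Hadamard joining segments $x$ and $y$ on the same wire contributes the term $xy$. Each $Z$, $\mathrm{CZ}$ or $\mathrm{CCZ}$ gate contributes a monomial whose variables must all be \emph{alive} (their segments overlapping in time) at one moment.

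So $w(f)\le k$ should be equivalent to the following layout problem. Partition the variables into at most $k$ chains (wires). Assign each variable a lifetime interval so that consecutive variables on a chain have abutting intervals. Every monomial of $f$, after adding or cancelling the forced Hadamard terms $xy$ mod $2$, must be supported on a set of variables that is simultaneously alive. Under this view, the Hadamard edges of each chain, viewed as edges of the primal graph of $f$ (for degree $\ge3$ monomials, take all variable pairs occurring in a common monomial), form a collection of paths. Moreover, the sequence of alive sets gives a path decomposition of that primal graph of width at most $2k-1$: at most $k$ segments are alive at once, plus one extra at each Hadamard.

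First, I would preprocess. I would check that every monomial has at most $k$ variables and every connected component is handled separately. I would also observe that width $\le k$ forces a linear-size, bounded-degree-like sparsity in the sense that the number of monomials is $O(k^3 n)$; if this fails, reject immediately. This bounds the input size so that the $n$ factor in the final runtime is achievable.

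Second, I would restrict to canonical circuits via exchange arguments. All diagonal gates are applied as early as possible, namely as soon as all their variables are alive. A Hadamard on a wire carrying $x$ is applied only when every monomial containing $x$, other than the one to the successor, has already been applied. In a canonical circuit the state at any moment is determined by the ordered tuple $(x_1,\dots,x_k)$ of alive variables. The set of already-retired variables is then forced: it is the union of those components of the primal graph minus $\{x_1,\dots,x_k\}$ that have been swept. I would prove this using the fact that a retired variable can never share a monomial with a future variable.

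Third, I would run a depth-first search over canonical states with at most $k$ choices of which wire to Hadamard next and a bounded number of choices for the successor variable $y$. The main obstacle is bounding the successor choices, since the retiring variable $x$ may have arbitrarily many neighbours overall. I would argue as follows. The new segment $y$ must be adjacent to $x$ through the term $xy$, possibly after cancellation by a $\mathrm{CZ}$. Every monomial containing $y$ must have its remaining variables either currently alive or not yet started. Among the not-yet-started neighbours of the current alive set, at most $k$ can be opened before some alive variable must retire, so a counting/charging argument should bound the number of candidates to $O(k^2)$ per step. The fallback, if this local argument is too weak, is to replace the search by a Bodlaender--Kloks-style typical-sequence dynamic program over the width-$(2k-1)$ path decomposition. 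That program needs $k^{O(k)}$ states per bag, tracking for each bag vertex which chain it lies on and the relative order of interval endpoints.

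Finally, I would combine the pieces: the number of canonical states reachable along one branch is $O(n)$, the branching per step is $k^{O(k)}$, and memoizing on the ordered alive tuple together with the component signature gives total time $k^{O(k)}\cdot n$. The accepted branch yields an explicit circuit on at most $k$ qubits whose associated polynomial is $f$.
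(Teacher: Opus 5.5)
Your main route, the depth-first sweep over canonical states, does not give an FPT bound, and the real obstacle is not the one you flag.

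The successor step is easy. Consecutive segments on a wire have disjoint supports, so no CZ can cancel the Hadamard term $xy$; it must occur in $f$ literally. If $x$ is the alive segment that ends earliest and is not last on its wire, its successor is its \emph{unique} not-yet-started neighbour in $G_f$. This is exactly the paper's greedy-extension lemma, so there is no branching on $y$ at all.

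The difficulty lies elsewhere:
\begin{itemize}
\item The starting state consists of the $k$ first segments, all alive at time $0$. Nothing in your argument narrows this down from $\binom{n}{k}$ candidates.
\item Memoizing on ordered alive tuples bounds the state space only by $n^{O(k)}$, and interleavings really do realize that many tuples. For a connected example, take one wire carrying a single segment CZ-linked to every other symbol, plus $k-1$ long Hadamard chains; this gives roughly $(n/k)^{k-1}$ reachable tuples.
\item The assertion that the retired set is ``forced'' by the tuple is circular. Which components of $G_f-\{x_1,\dots,x_k\}$ ``have been swept'' is additional information.
\end{itemize}
So this route reproduces at best the paper's $\binom{n}{k}^2\mathrm{poly}(n)$ XP algorithm, which is precisely this sweep with heads and tails guessed.

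Your fallback is the right idea and is essentially what the paper does. As stated, though, it is a pointer rather than a proof, and three ingredients are missing.
\begin{itemize}
\item \textbf{Computing the decomposition.} You cannot read off the sweep path decomposition without already knowing a circuit, so you must compute a decomposition within the time budget. The paper uses Korhonen's $2^{O(k)}n$-time $2$-approximation for treewidth. This is valid because $\mathrm{tw}(G_f)\le\mathrm{pw}(G_f)\le k$; the width is $k$, not $2k-1$, since each bag is the $k$ alive segments plus the one being opened. The paper then runs the DP on a nice tree decomposition of width at most $2k+1$. Exact or near-exact pathwidth computation is not known to fit in $k^{O(k)}n$. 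Typical sequences are not needed, since the states are already $k^{O(k)}$ per bag.
\item \textbf{A richer state.} Chain membership plus endpoint order is not enough. For consecutive bag symbols on a wire you must also record one of three things: they are directly Hadamard-linked (with $uv\in f$), they are linked through already-forgotten symbols, or they are not yet linked. Without this you cannot certify at the root that each wire is a contiguous chain, nor stop a join node from filling the same gap from both subtrees.
\item \textbf{A gluing argument.} You must show that boundary endpoint orders agreeing between two subtrees combine into a global acyclic time order. The paper gets this from the separation property of tree decompositions, together with the lemma that two DAGs sharing only boundary vertices and admitting a common consistent boundary order have an acyclic union.
\end{itemize}
With these pieces, $k^{O(k)}$ states per node over $O(kn)$ nodes gives the claimed $k^{O(k)}\cdot n$ bound.
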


The concrete runtime is $k^{6k + o(k)} \cdot n$, which is faster than the XP
algorithm when $k \leq n^{1/4-o(1)}$, making it more efficient and practical for
polynomially small $k$.

\subsection{Organization}

The remainder of this paper is organized as follows. In \cref{subsec:tech-overview}, we provide a detailed technical overview of our proofs. In \cref{sec:prelim}, we present the necessary background and preliminaries, including the circuit-to-polynomial correspondence (\cref{subsec:extraction}) and algorithm preliminaries (\cref{subsec:alg_prelim}).

In \cref{sec:hardness}, we prove the \NP-hardness of \problem{} by presenting a polynomial-time reduction from \textsc{3-SAT} (\cref{sec:reduction}). We then establish the structural lemmas and correctness proofs for this reduction (\cref{subsec:lemmas}), culminating in the main \NP-completeness theorem (\cref{subsec:main-theorems}).

In \cref{sec:approx-hardness}, we extend our hardness results in two directions. First, we prove the inapproximability of \problem{} within a constant factor by reducing from \textsc{Max-3-SAT}, establishing \cref{thm:approxnphard}. Second, we show that this approximation hardness persists even when restricted to degree-2 polynomials via a twin duplication strategy (\cref{def:twin-poly,thm:degree2-hard}), thereby establishing hardness for a broad class of gate sets admitting associated polynomials. 

In \cref{sec:alg}, we present our algorithmic results. We first develop a nondeterministic polynomial-time algorithm with succinct witnesses (\cref{thm:nondeterministic}), yielding a deterministic algorithm tight in the exponent (\cref{cor:deterministic}). We then prove that \problem{} is fixed-parameter tractable by presenting an $\mathsf{FPT}$ algorithm with runtime $k^{O(k)} \cdot n$ based on dynamic programming over tree decompositions (\cref{sec:fpt_algorithm}, \cref{thm:main_algorithm}).

\subsection{Technical Overview}\label{subsec:tech-overview}
We briefly recall the circuit--polynomial correspondence~\cite{mont} (see \cref{subsec:extraction} for details). For circuits over $\{H,Z,\mathrm{CZ},\mathrm{CCZ}\}$, we need to add $H$ gates at the beginning and end of each qubit wire. Each wire is then split by its $H$ gates into segments (time intervals). We label the segments using \emph{symbols}, and we denote $\mathcal{S}$ as the set of all symbols. The polynomial has the form
\[
f_{\mathcal{C}}(x)=\sum_{u\in \mathcal{S}} \alpha_u u \;+\!\!\!\!\sum_{\{u,v\}\subseteq \mathcal{S}} \beta_{uv} uv \;+\!\!\!\!\sum_{\{u,v,w\}\subseteq \mathcal{S}} \gamma_{uvw} uvw,
\]
where coefficients are in $\F_2$ and the constant term is zero. A $Z$ gate on symbol $u$ sets $\alpha_u=1$, a $\mathrm{CZ}$ gate between overlapping symbols $u,v$ and an internal $H$-gate link between adjacent symbols $u,v$ both set $\beta_{uv}=1$, and a $\mathrm{CCZ}$ gate on three mutually overlapping symbols sets $\gamma_{uvw}=1$. An example of a quantum circuit and its polynomial is shown in Figure~\ref{fig:example-circuit}.

Our main focus is to investigate the complexity of the following problem, \problem{(f,k)}:
Given a degree-3 polynomial $f$ with no constant term, the task is to decide whether one can realize $f$ using a quantum circuit with at most $k$ qubits, i.e., whether $w(f)\le k$.

We also study the search version of \problem{(f,k)}: given $f$ and $k$, either find a quantum circuit on at most $k$ qubits whose polynomial equals $f$, or conclude that no such circuit exists.

  \begin{figure}[H]
    \centering
\begin{tikzpicture}[xscale=1.5, yscale=1.5, baseline=(current bounding box.center)]
  \foreach \y in {2,1,0} {
    \draw (0,\y) -- (8,\y);
  }

  \foreach \y in {2,1,0} {
    \node[draw,fill=blue!20,inner sep=5pt] at (0,\y) {$H$};
    \node[draw,fill=blue!20,inner sep=5pt] at (8,\y) {$H$};
  }

  \foreach \pos/\y in {1/2, 6/2, 3/1, 5/0} {
    \node[draw,fill=blue!20,inner sep=5pt] at (\pos,\y) {$H$};
  }

  \foreach \pos/\y in {6.5/1} {
    \node[draw,fill=blue!20,inner sep=5pt] at (\pos,\y) {$Z$};
  }

  \draw[cczlink] (4,2) -- (4,0);
  \foreach \y in {2,1,0} {
    \fill[cczdot] (4,\y) circle (2pt);
  }

  \draw[czlink] (2,2) -- (2,1);
  \foreach \y in {2,1} {
    \fill[czdot] (2,\y) circle (2pt);
  }
  \draw[czlink] (5,2) -- (5,1);
  \foreach \y in {2,1} {
    \fill[czdot] (5,\y) circle (2pt);
  }
  \node[above] at (0.5,  2) {$x_1$};
  \node[above] at (3.5,  2) {$x_2$};
  \node[above] at (7.0,  2) {$x_3$};

  \node[above] at (1.5,  1) {$y_1$};
  \node[above] at (5.5,  1) {$y_2$};

  \node[above] at (2.5,  0) {$z_1$};
  \node[above] at (6.5,  0) {$z_2$};
\end{tikzpicture}
\caption{A quantum circuit with $3$ qubits and $4$ internal Hadamard gates. The polynomial extracted from this circuit is
\(f_\mathcal{C} = y_2 + x_1x_2 + x_2x_3 + y_1y_2 + z_1z_2 + x_2y_1 + x_2y_2 + x_2y_2z_1 \).}
  \label{fig:example-circuit}
  \end{figure}

\subsubsection{$\mathsf{NP}$-completeness}\label{subsubsec:intro-nphard}
We reduce \textsc{3-SAT} to \problem{} by mapping a formula $\Phi$ with $m$
clauses to a degree-3 polynomial $f_\Phi$ and a width budget $k=1+6m$ such that
$\Phi$ is satisfiable if and only if $w(f_\Phi)\le k$. The reduction encodes
Boolean information in the \emph{geometry} of a minimum-width realization: truth
values are represented by which segment overlaps which, and the width bound
forces a rigid layout that can only occur when all clauses are satisfied.

\paragraph{Circuit-to-polynomial intuition.}
By the circuit--polynomial correspondence, each segment between consecutive $H$ gates becomes
a symbol, and monomials arise precisely from overlaps induced by $Z$, $H$, CZ, and
CCZ gates. Hence, by prescribing a polynomial we can constrain which segments
must overlap in any realizing circuit. Specifically, a cubic term $u v w$ forces the three segments $u,v,w$ to overlap at some time, while a quadratic term $u v$ may either arise from a CZ gate between overlapping segments $u,v$ or from an $H$ gate link between adjacent segments $u,v$ on the same wire. A direct computation shows that a quantum circuit $\mathcal{C}$ on $k$ wires with $h$ internal $H$ gates has exactly $k + h$ segments, and hence the number of symbols in the polynomial is $|\mathcal{S}| = k + h$. Therefore, intuitively, to minimize the number of wires $k$ for a given polynomial $f_{\Phi}$, one should try to maximize the number of internal $H$ gates $h$ in the realizing circuit. Note that maximizing $h$ is equivalent to minimizing the number of CZ gates, since their sum is fixed by the number of quadratic terms in $f_{\Phi}$.

Our construction uses this intuition by constructing $f_{\Phi}$ using $1 + 18m$ symbols and $12m$ quadratic terms (recall $m$ is the number of clauses in $\Phi$). Therefore, by the above argument, any realizing circuit must have at most $h\leq 12m$ internal $H$ gates, so that if the circuit realizes $f_{\Phi}$, the number of wires $k \geq |\mathcal{S}| - h \geq 1+ 6m$, giving a lower bound on the width. Furthermore, a circuit realizes $f_{\Phi}$ with $k=1+6m$ wires if and only if the above lower bound achieves equality, i.e., if and only if the circuit uses no CZ gates, constraining which symbols are on the same wires. Additional cubic terms in $f_{\Phi}$ then enforce that the only way to achieve wire-optimality (and hence no CZ gates) is to have a geometric layout corresponding to a satisfying assignment of $\Phi$. The construction is detailed below.

\paragraph{Notation: variables vs.\ literals.}
We write $x,y,z$ for Boolean variables and $\lit{x},\lit{y},\lit{z}$ for the
corresponding literals, each of which may be positive or negated. For an
occurrence whose underlying variable is $x$, the local symbols are always
$x_{\cl{c}},\hat{x}_{\cl{c}},\bar{x}_{\cl{c}}$; the literal $\lit{x}$ is
represented by $x_{\cl{c}}$ if it is positive and by $\bar{x}_{\cl{c}}$ if it is
negated. The same convention applies to $y$ and $z$.

\paragraph{Gadgets and roles.}
Each qubit wire is cut by at most two $H$ gates into three segments. The gadgets
are organized as follows:
\begin{enumerate}[(i)]
  \item A \emph{base wire} $b$ contains no $H$ gates and appears only in selected
        CCZ terms. It acts as a global synchronizer that enforces the intended
        relative placement of segments across different wires.
  \item For each clause $\cl{c}$, a \emph{truth-marker wire} is split into
        $t_{\cl{c}},\hat{t}_{\cl{c}},\tilde{t}_{\cl{c}}$ (from left to right). This can be achieved by adding quadratic terms $t_{\cl{c}}\hat{t}_{\cl{c}} + \hat{t}_{\cl{c}}\tilde{t}_{\cl{c}} $ to force this sequential structure in wire-optimal (no CZ gates permitted) circuits and using additional CCZ terms to force its order to be the same across clauses; hence, we can assume the order is left to right. In the following, we will interpret a variable or clause segment as \bool{true} exactly when it overlaps $t_{\cl{c}}$; \bool{false} when it does not overlap $t_{\cl{c}}$.
  \item For each literal occurrence, a \emph{variable wire} carries
        $x_{\cl{c}},\hat{x}_{\cl{c}},\bar{x}_{\cl{c}}$ (sequential structure is the same as before, but order is not required). The CCZ terms
        $b\,\hat{t}_{\cl{c}}\,x_{\cl{c}}$, $b\,\hat{t}_{\cl{c}}\,\hat{x}_{\cl{c}}$, and
        $b\,\hat{t}_{\cl{c}}\,\bar{x}_{\cl{c}}$ force $\hat{t}_{\cl{c}}$ to overlap all three
        segments, which in turn ensures that exactly one of $x_{\cl{c}}$ or
        $\bar{x}_{\cl{c}}$ overlaps $t_{\cl{c}}$ in any size-optimal realization. Thus, we will interpret $x_{\cl{c}}$ overlapping $t_{\cl{c}}$ as the corresponding variable being set to \bool{true}, and $\bar{x}_{\cl{c}}$ overlapping $t_{\cl{c}}$ as \bool{false}. Additional CCZ terms will ensure that all occurrences of the same variable across different clauses are consistent.
  \item Each clause $\cl{c}=(\lit{x}\lor\lit{y}\lor\lit{z})$ is evaluated by two
        nested OR gadgets on two additional wires with segments
        $\overleftarrow{p_{\cl{c}}},p_{\cl{c}},\overrightarrow{p_{\cl{c}}}$ and $\overleftarrow{o_{\cl{c}}},o_{\cl{c}},\overrightarrow{o_{\cl{c}}}$.
        Here $p_{\cl{c}}$ is the output segment for the first disjunction
        $\lit{x}\lor\lit{y}$, while $o_{\cl{c}}$ is the final output segment for
        the whole clause, namely $p_{\cl{c}}\lor\lit{z}$.
        The constructed CCZ terms have the following soundness/completeness behavior: if $p_{\cl{c}}$ (resp.\ $o_{\cl{c}}$) overlaps $t_{\cl{c}}$, then at least one input segment is \bool{true}; conversely, whenever at least one input segment is \bool{true}, the gadget can be placed so that the output overlaps $t_{\cl{c}}$. A final CCZ term $b\,t_{\cl{c}}\,o_{\cl{c}}$ pins $o_{\cl{c}}$ to overlap $t_{\cl{c}}$, thereby enforcing clause satisfaction. An illustration is in Figure~\ref{fig:clause-gadget-overview}.
\end{enumerate}
\begin{figure}[!htbp]
  \centering
  \begin{tikzpicture}[xscale=1.15, yscale=1.0, baseline=(current bounding box.center)]
    \tikzset{hgate/.style={draw,fill=blue!20,inner sep=3pt,font=\scriptsize}}

    \draw (0,0) -- (8,0); 
    \draw (0,1) -- (8,1); 
    \draw (0,2) -- (8,2); 
    \draw (0,3) -- (8,3); 
    \draw (0.8,4) -- (4.8,4); 
    \draw (0,5) -- (8.0,5); 
    \draw (0,6) -- (8,6); 

    \foreach \t in {0,8} { \node[hgate] at (\t,0) {$H$}; }
    \foreach \t in {0,2.8,6.4,8} { \node[hgate] at (\t,1) {$H$}; }
    \foreach \t in {0.8,4.8} { \node[hgate] at (\t,4) {$H$}; }
    \foreach \t in {0,3.6,5.2,8.0} { \node[hgate] at (\t,5) {$H$}; }
    \foreach \t in {0,1.6,6.4,8} { \node[hgate] at (\t,6) {$H$}; }

    \node at (4,-0.35) {$b$};
    \node at (0.8,1.25) {$t_{\cl{c}}$};
    \node at (4.2,1.25) {$\hat{t}_{\cl{c}}$};
    \node at (7.2,1.25) {$\tilde{t}_{\cl{c}}$};
    \node at (7.8,2.25) {$x_{\cl{c}}\ \text{qubit}$};
    \node at (7.8,3.25) {$y_{\cl{c}}\ \text{qubit}$};
    \node at (2.9,4.25) {$p_{\cl{c}}$};
    \node at (0.8,5.25) {$\bar{z}_{\cl{c}}$};
    \node at (4.2,5.25) {$\hat{z}_{\cl{c}}$};
    \node at (7.2,5.25) {$z_{\cl{c}}$};
    \node at (0.8,6.25) {$\overleftarrow{o_{\cl{c}}}$};
    \node at (4.2,6.25) {$o_{\cl{c}}$};
    \node at (7.2,6.25) {$\overrightarrow{o_{\cl{c}}}$};

    \draw[cczlink, blue] (1.2,0) -- (1.2,6);
    \foreach \y in {0,4,6} { \fill[cczdot, blue] (1.2,\y) circle; }
    \draw[cczlink, blue] (2.0,0) -- (2.0,6);
    \foreach \y in {0,4,6} { \fill[cczdot, blue] (2.0,\y) circle; }
    \draw[cczlink, blue] (6.0,0) -- (6.0,6);
    \foreach \y in {0,5,6} { \fill[cczdot, blue] (6.0,\y) circle; }
    \draw[cczlink, blue] (6.8,0) -- (6.8,6);
    \foreach \y in {0,5,6} { \fill[cczdot, blue] (6.8,\y) circle; }

    \draw[cczlink, red] (3.2,0) -- (3.2,5);
    \foreach \y in {0,1,5} { \fill[cczdot, red] (3.2,\y) circle; }
    \draw[cczlink, red] (4.4,0) -- (4.4,5);
    \foreach \y in {0,1,5} { \fill[cczdot, red] (4.4,\y) circle; }
    \draw[cczlink, red] (5.6,0) -- (5.6,5);
    \foreach \y in {0,1,5} { \fill[cczdot, red] (5.6,\y) circle; }

    \draw[cczlink] (2.4,0) -- (2.4,6);
    \foreach \y in {0,1,6} { \fill[cczdot] (2.4,\y) circle; }
  \end{tikzpicture}
  \caption{Clause gadget for the illustrative clause $\cl{c}=(\lit{x} \lor \lit{y}\lor z)$, where $\lit{x}$ and $\lit{y}$ may be either positive or negated literals but the third literal is the positive literal $z$. The $b$ and $t_{\cl{c}}$ wires are explicit; the $\lit{x}$ and $\lit{y}$ variable wires are suppressed; the $\lit{x}\lor \lit{y}$ wire is compressed to its middle segment $p_{\cl{c}}$. The encoding convention is: a symbol is \bool{true} if and only if its segment overlaps $t_{\cl{c}}$. In the drawing, $p_{\cl{c}}$ overlaps $t_{\cl{c}}$ (so $p_{\cl{c}}$ is \bool{true}, meaning at least one of $\lit{x}, \lit{y}$ is \bool{true}), while the $z$ wire is drawn with $\bar{z}_{\cl{c}}$ overlapping $t_{\cl{c}}$ (so $z$ is \bool{false}). The CCZ columns include the gadget terms on the $z$ wire and the $(p_{\cl{c}}\lor z)$ gadget, plus the pinning term $b\,t_{\cl{c}}\,o_{\cl{c}}$.}
  \label{fig:clause-gadget-overview}
\end{figure}
\paragraph{Wire optimality.}
As mentioned earlier, a circuit realizes $f_{\Phi}$ with $k$ wires if and only if it uses no CZ gates. Therefore, quadratic terms force each triple of symbols to lie on the required wires (including \emph{truth-marker wires}, \emph{variable wires} and \emph{clause wires}) with two
Hadamard gates. The only freedom is the order of variable segments on their wires, which
corresponds to truth assignments measured by their overlap with truth-marker segments. Then, the cubic terms and clause gadgets in $f_{\Phi}$ enforce that each clause is satisfied under this assignment. Completeness and soundness follow from the construction and the above arguments, thereby proving $\Phi$ is satisfiable if and only if $w(f_\Phi)\le k$.

\paragraph{Exponential Time Hypothesis lower bound.}
The polynomial-time reduction from \SAT{} also yields a conditional lower bound under the Exponential Time Hypothesis (ETH). Since our reduction transforms a \SAT{} instance with $m$ clauses into a \problem{} instance with $n=\Theta(m)$ symbols and parameter $k=\Theta(m)$, any $2^{o(n)}$-time algorithm for \problem{} would imply a $2^{o(m)}$-time algorithm for \SAT{}. When we restrict to linear-size \SAT{} (i.e., the number of clauses is bounded by the number of variables), the hypothetical algorithm would contradict a sparsification variant of ETH~\cite{ETH}. This establishes that \problem{} requires $2^{\Omega(n)}$ time (specifically when restricted to $k=\Theta(n)$).

\subsubsection{Approximation Hardness}\label{subsubsec:approx-intro}
Beyond exact computation, we show that \problem{} is \NP-hard to approximate within any factor better than $\tfrac{49}{48}-\epsilon$ for every fixed $0<\epsilon<1/48$. We first observe that the previous reduction in \cref{subsubsec:intro-nphard} is ``almost'' completely local: each clause in $\Phi$ contributes a constant number of symbols to $f_{\Phi}$, denoted by $\mathcal{S}_{\cl{c}}$ (with some quadratic and cubic terms inside them):
\[
    \underbrace{\{t_{\cl{c}}, \hat{t}_{\cl{c}}, \tilde{t}_{\cl{c}}\}}_{\text{Truth marker}} \cup 
    \underbrace{\{x_{\cl{c}}, \hat{x}_{\cl{c}}, \bar{x}_{\cl{c}}\}}_{\text{Var } x} \cup 
    \underbrace{\{y_{\cl{c}}, \hat{y}_{\cl{c}}, \bar{y}_{\cl{c}}\}}_{\text{Var } y} \cup 
    \underbrace{\{z_{\cl{c}}, \hat{z}_{\cl{c}}, \bar{z}_{\cl{c}}\}}_{\text{Var } z} \cup 
    \underbrace{\{\overleftarrow{p_{\cl{c}}}, p_{\cl{c}}, \overrightarrow{p_{\cl{c}}}\}}_{\text{OR } p_{\cl{c}}} \cup 
    \underbrace{\{\overleftarrow{o_{\cl{c}}}, o_{\cl{c}}, \overrightarrow{o_{\cl{c}}}\}}_{\text{OR } o_{\cl{c}}},
    \]
    plus some additional \emph{cubic} terms to force consistency of positions of clauses and variable assignments across clauses. Specifically, no quadratic term exists between $b$ and symbols from $\mathcal{S}_{\cl{c}}$ for different clauses $\cl{c}\neq \cl{c'}$. Thus, any realizing circuit must place $b$ and $\mathcal{S}_{\cl{c}}$ on separate qubit wires.
    
    Moreover, when a clause $\cl{c}$ is placed on exactly 6 wires (called a \emph{good clause}), the variable segments \emph{encode} a truth assignment: whether it overlaps the truth marker segment $t_{\cl{c}}$ (encoding \bool{true}), or does not overlap (encoding \bool{false}), the same as in the exact case. The \emph{OR gadgets} formed by cubic terms on $\mathcal{S}_{\cl{c}}$ then \emph{enforce} that at least one of the three literals in $\cl{c}$ is \bool{true} under this assignment. The \emph{variable gadgets} across clauses ensure that all occurrences of the same variable across different clauses are consistent.

    Therefore, when we restrict to good clauses, the construction behaves like in the exact case and gives a variable assignment (partial at first, but we can arbitrarily assign remaining variables). If a clause $\cl{c}$ is unsatisfied, then the corresponding clause gadget in $f_{\Phi}$ forces $\mathcal{S}_{\cl{c}}$ to use at least one additional wire in any realizing circuit (and in fact, one additional wire suffices). As a result, we can create a gap between the ``Yes'' and ``No'' instances by leveraging the hardness of approximating \textsc{Max-3-SAT}, whose task is to determine the maximum number of clauses that can be simultaneously satisfied by any assignment:
    \begin{theorem}[Reduction from \textsc{Max-3-SAT}, see also \cref{thm:maxsat-reduction}]\label{thm:maxsat-reduction-intro}
  Let $\Phi$ be a \SAT{} instance with $m$ clauses and let $t$ be an integer.
  Construct $f_\Phi$ in polynomial time as in \cref{subsubsec:intro-nphard}.
  Then $w(f_\Phi)\le 1 + 7m - t$ if and only if $\Phi$ has an assignment satisfying at least $t$ clauses.
\end{theorem}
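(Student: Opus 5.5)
We prove the two directions separately. Both rest on the counting identity from the overview: a circuit on $k$ wires with $h$ internal $H$ gates has exactly $k+h$ segments. Since $f_\Phi$ has $1+18m$ symbols and $12m$ quadratic terms, every realizing circuit satisfies $k = 1+18m-h$. Moreover, $h$ equals $12m$ minus the number of quadratic terms realized by CZ gates rather than by $H$-links. So bounding the width amounts to counting how many quadratic terms must be "paid for" by CZ gates, or equivalently how many extra wires are needed.

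\textbf{Completeness ($\Leftarrow$).} Fix an assignment $\sigma$ satisfying a set $T$ of at least $t$ clauses. We build a circuit with one base wire $b$ and a block of wires for each clause.
\begin{enumerate}[(a)]
  \item For each $\cl{c}\in T$, use the six-wire layout from the exact reduction, with variable segments ordered according to $\sigma$. The OR gadgets are placed so that $p_{\cl{c}}$ and $o_{\cl{c}}$ overlap $t_{\cl{c}}$.
  \item For each $\cl{c}\notin T$, use seven wires. The extra wire lets us break one quadratic link, say put $o_{\cl{c}}$ on its own wire segment, realize the broken link by a CZ gate, and place $o_{\cl{c}}$ to overlap $t_{\cl{c}}$ and $b$ freely, ignoring the OR constraint.
  \item Check that every cubic term still has a common overlap time. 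This includes the cross-clause consistency terms, which hold because the variable orientations in every clause follow the single global assignment $\sigma$. The truth-marker orientation is the same in all clauses, and the pinning term $b\,t_{\cl{c}}\,o_{\cl{c}}$ is satisfied in both cases.
\end{enumerate}
The total number of wires is $1+6|T|+7(m-|T|)\le 1+7m-t$. The one detail to verify is that breaking a single link in an unsatisfied clause really frees enough geometry while leaving the cross-clause terms intact. I expect this to be a routine check against the exact construction.

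\textbf{Soundness ($\Rightarrow$).} Take a realizing circuit with at most $1+7m-t$ wires.
\begin{enumerate}[(a)]
  \item Localize wire counts. There are no quadratic terms between $b$ and any $\mathcal{S}_{\cl{c}}$, nor between $\mathcal{S}_{\cl{c}}$ and $\mathcal{S}_{\cl{c'}}$ for $\cl{c}\ne\cl{c'}$. Hence no wire can carry segments from two different groups, since adjacent segments on a wire are linked by an $H$ gate and thus produce a quadratic term. The wires therefore partition into $b$ (at least one wire) and wire sets $W_{\cl{c}}$.
  \item Bound each block from below: $|W_{\cl{c}}|\ge 6$, because $\mathcal{S}_{\cl{c}}$ has 18 symbols and only 12 internal links are available.
  \item Count good clauses. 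Call $\cl{c}$ good if $|W_{\cl{c}}|=6$. If $g$ is the number of good clauses, then $1+6g+7(m-g)\le 1+7m-t$, which gives $g\ge t$.
  \item Read off an assignment. A good clause uses no CZ gates, so its layout is rigid exactly as in the lemmas for the exact reduction. Its variable segments encode truth values through overlap with $t_{\cl{c}}$, and its OR gadgets together with the pinning term force the clause to be satisfied. The cross-clause consistency terms make the encoded values agree across good clauses; variables appearing in no good clause are assigned arbitrarily. Every good clause is then satisfied, so at least $t$ clauses are satisfied.
\end{enumerate}

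\textbf{Main obstacle.} The hardest step is (d): showing that the consistency and synchronization arguments of the exact reduction still work when only a subset of clauses is rigid. Two things could go wrong.
\begin{itemize}
  \item The cross-clause cubic terms might tie a good clause to a bad clause, whose geometry is now flexible.
  \item The truth-marker orientation of good clauses has to be shown uniform using only $b$ and other good clauses.
\end{itemize}
My first attempt is to check that every consistency argument in the structural lemmas invokes only the terms involving $b$ together with the two clauses being compared. In that case each argument is pairwise between good clauses and carries over verbatim. A secondary point is that the clause-local rigidity lemmas were stated for globally wire-optimal circuits; they must be re-derived for a block that is locally CZ-free, which I expect to be immediate from the same proofs.
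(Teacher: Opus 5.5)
\textbf{Soundness.} Your soundness direction is the paper's argument: isolate $b$ and the clause blocks (\cref{lem:qubit-iso}), show each block needs at least $6$ wires (your $18-12$ count is a cleaner form of \cref{lem:good-cl}), use pigeonhole to get $g\ge t$, and read an assignment off the good clauses. The obstacle you anticipate in (d) does not arise. $G_{\mathrm{TM}}$ and $G_x$ contain the synchronizing cubics for \emph{every} unordered pair of clauses and of occurrences, so orientations are compared directly between any two good clauses, and bad clauses never enter. Also, \cref{lem:encode-wd,lem:or-rlz,lem:cl-rlz} are already stated for good clauses in an arbitrary realizing circuit, using only $\supp(b)=I$.

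\textbf{Completeness for unsatisfied clauses.} This is where the gap is. Putting $o_{\cl{c}}$ on its own fresh wire breaks \emph{both} links $\overleftarrow{o_{\cl{c}}}\,o_{\cl{c}}$ and $\overrightarrow{o_{\cl{c}}}\,o_{\cl{c}}$. Then $\overleftarrow{o_{\cl{c}}}$ and $\overrightarrow{o_{\cl{c}}}$ cannot stay adjacent on the old wire: $\overleftarrow{o_{\cl{c}}}\,\overrightarrow{o_{\cl{c}}}$ is not a monomial of $f_\Phi$, and it cannot be cancelled by a CZ because adjacent segments have disjoint supports. That clause therefore costs $8$ wires, and the bound $1+7m-t$ is lost.

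You also cannot ``ignore the OR constraint'': $b\,o_{\cl{c}}\,p_{\cl{c}}$, $b\,o_{\cl{c}}\,\lit{z}_{\cl{c}}$, $b\,\overleftarrow{o_{\cl{c}}}\,p_{\cl{c}}$ and $b\,\overrightarrow{o_{\cl{c}}}\,\lit{z}_{\cl{c}}$ all remain in $f_\Phi$.

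Breaking exactly one link does work, but then $o_{\cl{c}}$ is not free, and you need an argument such as the following:
\begin{enumerate}
\item Move only $\overleftarrow{o_{\cl{c}}}$ to a fresh wire with support $I$. Then $\overleftarrow{o_{\cl{c}}}\,o_{\cl{c}}$ is realized by a CZ, and $b\,\overleftarrow{o_{\cl{c}}}\,p_{\cl{c}}$ is automatic.
\item Set $\supp(o_{\cl{c}})=(0,j)$ and $\supp(\overrightarrow{o_{\cl{c}}})=(j,1)$, with $j\in\supp(\lit{z}_{\cl{c}})$ and $j>l_{p_{\cl{c}}}$.
\item Such a $j$ exists because, in the honest layout, a false literal segment is the right end segment of its wire, so $r_{\lit{z}_{\cl{c}}}=1$.
\end{enumerate}

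The paper instead splits off $t_{\cl{c}}$ with full support (\cref{lem:extra-wire}). This makes $b\,t_{\cl{c}}\,o_{\cl{c}}$ automatic and leaves both OR gadgets as in the unconditional \cref{lem:or-rlz}(b). The price is re-checking the truth-marker synchronizing terms with a widened $\hat{t}_{\cl{c}}$. Your corrected $o$-wire variant keeps the truth-marker wire canonical, so those terms are trivially fine.

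\textbf{Satisfied clauses.} A smaller slip in your completeness step (a): when only $\lit{z}$ is true, $p_{\cl{c}}$ cannot overlap $t_{\cl{c}}$, by \cref{lem:or-rlz}(a). Only $o_{\cl{c}}$ needs to overlap $t_{\cl{c}}$, as in \cref{lem:cl-rlz}(b).
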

It is known from Håstad's $7/8$ theorem~\cite{Hastard} that \textsc{Max-3-SAT} is \NP-hard to distinguish satisfiable instances from those in which at most a $7/8+\delta$ fraction of clauses can be satisfied, for any constant $\delta>0$, see \cref{thm:hastad}. Now, the reduction works as follows: if $\Phi$ is satisfiable, the constructed polynomial $f_\Phi$ has width $w(f_\Phi)\le 1+6m$; if every assignment satisfies at most $(7/8+\delta)m$ clauses, then the reduction guarantees $w(f_\Phi)\ge 1+6m+(1/8-\delta)m$. Choosing $\delta=\epsilon$, the additive $1$ is negligible for large $m$, and the gap is enough to distinguish $w(f)\le k$ from $w(f)\ge(\tfrac{49}{48}-\epsilon)k$ for every fixed $0<\epsilon<1/48$, matching the formal statement in \cref{thm:approxnphard}.

\subsubsection{Hardness for Degree-2 Polynomials}\label{subsubsec:degree2}
We further extend our inapproximability result to degree-2 polynomials, showing that \problem{} remains \NP-hard to approximate within any factor better than $\tfrac{49}{48}-\epsilon$ even when the input polynomial has degree 2, for every fixed $0<\epsilon<1/48$, see \cref{thm:degree2-hard}. 

Due to technical losses in proving lower bounds on the number of wires needed, we cannot obtain a ``tight'' reduction from \textsc{Max-3-SAT} as in \cref{thm:maxsat-reduction-intro}. Instead, we directly use the inapproximability result from Håstad's $7/8$ theorem~\cite{Hastard}. Given a \SAT{} instance $\Phi$ with $m$ clauses, we first construct the degree-3 polynomial $f_\Phi$ as in \cref{subsubsec:intro-nphard}, and then transform it into a degree-2 polynomial $f'_\Phi$ using a \emph{twin duplication} strategy (\cref{def:twin-poly}).
\paragraph{The twin construction.}
The core technique replaces each symbol $s$ in the original polynomial $f_\Phi$ with many copies $s^{(1)}, s^{(2)}, \ldots, s^{(M)}$, connected by a complete graph of CZ gates (a ``\emph{clique constraint}'') implemented via quadratic terms:
\[
\sum_{1 \le i < j \le M} s^{(i)} s^{(j)}.
\]
The base symbol $b$ receives even more copies ($Nm$ copies), where $N$ and $M$ are two large parameters to be set with $N \ll M$. Each quadratic term $u v$ is replaced by a ``\emph{matching}'' connecting only copies with the same index. Each cubic term $b\,u\,v$ in $f_\Phi$ is then replaced by a ``\emph{tripartite complete graph}'' connecting all pairs of copies across the three symbols. The resulting ``\emph{twin polynomial}'' $f'_\Phi$ clearly has degree 2. It is easy to see that, if $\Phi$ is satisfiable, then by repeating the original circuit (with $1+6m$ wires) $M$ times (and using additional $Nm-M$ wires for copies of $b$), we can realize $f'_\Phi$ using a circuit of width at most $(6M + N)m$.

For soundness, with Håstad gap parameter $\delta>0$, we need to show that if at most $t = (7 / 8+\delta)m$ clauses of $\Phi$ can be satisfied, then any circuit realizing $f'_\Phi$ must have width at least $(7M+N)m-(7/8+\delta)Mm-750m$ (recall $N \ll M$ and $M$ is chosen large enough compared with the constant loss). 
The key geometric insight is the following clique survival property:

\begin{lemma}[see \cref{lem:twin-survival}]\label{lem:clique-survival-intro}
Let $\{s_1, \ldots, s_r\}$ be a set of symbol copies with a clique constraint $\sum_{1 \le i < j \le r} s_i s_j$ appearing in the twin polynomial $f'_\Phi$. Then in any circuit realizing $f'_\Phi$, \textbf{at most two copies from this set can share a wire} with another copy from the same set.
\end{lemma}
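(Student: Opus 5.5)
The plan is to reduce the statement to two facts about how a single quadratic monomial $s_is_j$ can arise in $f_{\mathcal{C}}$. The first is an exclusion principle on a single wire. The second is an interval-ordering argument across two wires. Throughout, fix a circuit $\mathcal{C}$ realizing $f'_\Phi$, and view each symbol as a time interval on its wire, bounded by consecutive $H$ gates.

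\emph{Step 1 (two copies on one wire are adjacent).} Let $s_i,s_j$ be two copies from the clique set that lie on the same wire. By the circuit--polynomial correspondence, the coefficient $\beta_{s_is_j}$ is the parity of two kinds of contributions: CZ gates acting simultaneously on $s_i$ and $s_j$, and an internal $H$-link between them. A CZ gate acts on two distinct wires at one time. Two segments of the same wire are disjoint in time. So no CZ gate can contribute to $\beta_{s_is_j}$, and the coefficient is $1$ exactly when $s_i,s_j$ are consecutive segments separated by a single internal $H$. The clique constraint forces $\beta_{s_is_j}=1$, so any two copies on a common wire must be adjacent. Three copies on one wire are therefore impossible, since the first and last of them would not be adjacent. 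Hence each wire carries at most two copies, and if it carries two they are consecutive.

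\emph{Step 2 (at most one wire carries a pair).} Suppose, for contradiction, that at least three copies share a wire with another copy from the set. By Step 1 there are two distinct wires $W_1,W_2$ carrying disjoint pairs.
\begin{itemize}
\item On $W_1$, $s_a$ is immediately followed by $s_b$, with the separating $H$ at time $\tau$.
\item On $W_2$, $s_c$ is immediately followed by $s_d$, with the separating $H$ at time $\sigma$.
\end{itemize}
The pairs $(s_a,s_d)$ and $(s_c,s_b)$ lie on different wires, so no $H$-link can join them. Their clique monomials must therefore come from CZ gates, which requires the segments to overlap in time. Now $s_a$ ends at $\tau$ and $s_d$ starts at $\sigma$, so their overlap forces $\sigma<\tau$. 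Likewise $s_c$ ends at $\sigma$ and $s_b$ starts at $\tau$, which forces $\tau<\sigma$. This is a contradiction. So at most one wire carries two copies, and the number of copies sharing a wire with another copy is at most two, which is the statement.

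\emph{Main obstacle.} The delicate point is making the timing conventions precise. We should fix that gates occur at distinct times, or at least that an $H$ gate and a CZ gate on the same wire are strictly ordered. Then ``overlap'' means a common open time interval in which a gate can act on both segments, so the inequalities above are strict. The parity issue also needs care: a monomial that must have coefficient $1$ cannot be produced by cancelling contributions. Step 1 handles this by showing that on a single wire only the $H$-link contributes, and across wires only CZ gates contribute. With those conventions from the definition of $f_{\mathcal{C}}$ in \cref{subsec:extraction} in place, the rest is the short interval argument above.
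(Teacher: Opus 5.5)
Your proof is correct and follows the paper's argument. Three copies on one wire are ruled out because the outer two are non-adjacent and cannot be joined by a CZ gate on the same wire, and two disjoint adjacent pairs on two different wires are ruled out by the time-ordering of their separating Hadamard gates. The differences are cosmetic: you derive the contradiction symmetrically from both cross pairs $(s_a,s_d)$ and $(s_c,s_b)$ rather than assuming without loss of generality which Hadamard comes first, and you spell out the parity bookkeeping (same-wire monomials arise only from the $H$-link, cross-wire ones only from CZ gates), which the paper leaves implicit.
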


The proof of the lemma is clear: no three symbols can lie on the same wire, and no two groups of two symbols can share two wires respectively, establishing that the only possibilities are $r$ symbols on distinct wires, or possibly one pair sharing a wire.

This constraint ensures that the vast majority of copies ``survive'' as isolated symbols. Specifically, using \cref{lem:clique-survival-intro} and a careful analysis for cubic terms in original $f_{\Phi}$, we can show that: in any circuit realizing $f'_\Phi$, for any symbol $s\neq b$ in $f_\Phi$, at least $M - O(1)$ of its copies do not share wires with \textbf{any other copy} of $s$ and \textbf{any other copy} of symbol $u$ where $su$ appear in a \textbf{cubic} term in $f_{\Phi}$. We will call these copies ``\emph{survive globally}''.  Similarly, at least $Nm - O(m)$ copies of the base symbol $b$ survive globally.

\paragraph{Enforcing the original structure.}
Another key insight is that these globally surviving copies effectively recreate the geometric constraints of the original cubic terms. Consider a cubic term $b\,u\,v$ in $f_{\Phi}$. If three copies $b^{(i)}, u^{(j)}, v^{(k)}$ all survive globally, then by definition they lie on distinct wires from any other copy of $b$, $u$, or $v$. The tripartite complete graph connecting all pairs of these copies (arising from the twin construction) then enforces that their supports must pairwise intersect, the same as required by the original CCZ gate in $f_{\Phi}$ using 1-dimensional Helly's theorem (\cref{prelim:helly}). 

As a result, we can use the same proof strategy as in \cref{subsubsec:approx-intro} by restricting attention to globally surviving groups of symbols in $\mathcal{S}_{\cl{c}}^{(j)}$ (the groups $\mathcal{S}_{\cl{c}}^{(j)}, j\in \{1, 2, \dots, M\}$ so that all symbol copies in it are globally surviving). A union bound argument shows that for each clause $\cl{c}$, at least $M - O(1)$ such globally surviving groups exist. 

We can call such a group a \emph{good clause copy} if it uses exactly 6 wires. Then, by the same arguments as before, each good clause copy encodes a variable assignment (consistency ensured by global survival) and enforces clause satisfaction. For each clause $\cl{c}$, there are at least $M - O(1)$ \emph{globally surviving clause copies}---each copy uses 6 wires if the clause is satisfied under the encoded assignment, and at least 7 wires otherwise. The lower bound argument then proceeds identically to the degree-3 case, up to the constant additive loss from the survival union bound: $w(f'_\Phi) \ge (7M+N)m-(7/8+\delta)Mm-750m$ when at most a $(7/8+\delta)$ fraction of clauses is satisfiable. Taking $\delta=\epsilon$ yields the same $\tfrac{49}{48}$ hardness factor when $N \ll M = \Omega(1/\epsilon)$.

\subsubsection{Nondeterministic Algorithm} 

To complement the hardness results, we present a nondeterministic polynomial-time algorithm that improves the witness length. The algorithm builds on a structural characterization of valid circuits\footnote{Some additional clean-up procedures for $f$ are needed for the formal statement; see the discussion before \cref{lem:structure} for details.}:

\begin{lemma}[Informal, see \cref{lem:structure}]\label{lem:structure-intro}
A valid $k$-qubit circuit with polynomial $f$ exists if and only if the symbols can be partitioned into $k$ ordered lists $L_i$ (representing wires, with local source/sink sentinels) such that the corresponding quadratic monomial for each adjacent real-symbol pair in the lists appears in $f$, and a directed constraint graph $\GH$ (whose vertices represent Hadamard gates and edges enforce temporal ordering from the polynomial structure) is acyclic.
\end{lemma}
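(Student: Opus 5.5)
We prove the two directions of \cref{lem:structure-intro} separately, working throughout with the cleaned-up polynomial: each symbol's linear term is attributed to a $Z$ gate, and each cubic term to a CCZ gate that will need a common time slot.

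\textbf{Necessity.} Suppose a valid $k$-qubit circuit $\mathcal{C}$ realizes $f$.
\begin{itemize}
\item Reading each wire from left to right lists its segments. This gives $k$ ordered lists $L_1,\dots,L_k$ that partition $\mathcal{S}$; we pad each list with a source sentinel and a sink sentinel, standing for the initial and final $H$ gates.
\item Consecutive real symbols $u,v$ on a wire are separated by an internal $H$ gate, which contributes $\beta_{uv}=1$. So every adjacent real pair corresponds to a quadratic monomial of $f$.
\item Every $H$ gate in $\mathcal{C}$ has a time stamp. By construction, every edge of $\GH$ points from an earlier $H$ gate to a later one: edges come from wire order, and from requirements that two or three segments overlap, as forced by the CZ and CCZ monomials of $f$ not explained by $H$-links. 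Any cycle would therefore give $t<t$, so $\GH$ is acyclic.
\end{itemize}

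\textbf{Sufficiency.} Suppose the lists and an acyclic $\GH$ are given. Take a topological order of $\GH$ and place the $H$ gates at distinct times $1,2,\dots$ in that order. Each segment $u$ then becomes an open interval $I_u$ on its wire, running from its left $H$ gate to its right $H$ gate.

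The key definitional point is to build $\GH$ so that each required overlap turns into an edge:
\begin{itemize}
\item For a required pair overlap $u\cap v\neq\emptyset$, add edges $\mathrm{left}(u)\to\mathrm{right}(v)$ and $\mathrm{left}(v)\to\mathrm{right}(u)$.
\item For a cubic term $uvw$, require pairwise overlap. By 1-dimensional Helly (\cref{prelim:helly}), the three intervals then share a common point, where we place the CCZ gate.
\end{itemize}

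Next we insert the gates:
\begin{itemize}
\item a $Z$ gate inside $I_u$ for each linear term $u$;
\item a CZ gate at a common point of $I_u\cap I_v$ for each quadratic term $uv$ that is not an adjacent pair in some $L_i$. This requires $u,v$ on different wires, or an edge forcing overlap if we treat same-wire non-adjacent pairs as forbidden.
\end{itemize}
Coefficients then match monomial by monomial over $\F_2$, because every monomial is produced exactly once. This gives a $k$-qubit circuit with polynomial $f$.

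\textbf{Main obstacle.} The main obstacle is calibrating which quadratic terms must be realized by CZ gates rather than by $H$-links. Adjacency in the lists consumes exactly those monomials; every remaining monomial must force an overlap edge. We must also handle the degenerate case in which a remaining monomial $uv$ has $u,v$ on the same wire: these intervals are disjoint, so such a partition must be declared invalid (or $\GH$ made cyclic for it).

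A second subtlety is simultaneity. Strict topological times make all intervals open and all endpoints distinct, so "overlap" in the circuit sense coincides exactly with the strict inequalities encoded by the edges.
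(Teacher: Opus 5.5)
Your proposal is correct and is essentially the paper's proof: the $H$-gate time stamps give a topological order of $\GH$ for necessity, and for sufficiency a topological order fixes the $H$ gates, after which the overlap edges make the required intervals pairwise intersect, so the $Z$, CZ, and (via interval Helly) CCZ gates can be inserted. Your ``main obstacle'' needs no extra rule: if $u$ precedes $v$ non-adjacently on one wire, the wire path $\mathrm{right}(u)\to\cdots\to\mathrm{left}(v)$ together with the edge $\mathrm{left}(v)\to\mathrm{right}(u)$ is already a cycle (equivalently, forced overlap puts $u,v$ on different wires); likewise, because you add overlap edges only for non-adjacent quadratic pairs and do not discard self-loops, an adjacent pair inside a cubic monomial yields a self-loop at $\mathrm{right}(u)=\mathrm{left}(v)$, and this replaces the paper's clean-up (deleting $uv$ whenever some $uvw$ occurs in $f$, needed there because its $\GH$ ignores self-loops), so you should state that convention explicitly instead of appealing to a ``cleaned-up polynomial''.
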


Our algorithm uses a witness specifying the \emph{head} $S_{\textnormal{head}}$ and \emph{tail} $S_{\textnormal{tail}}$ symbol sets---the leftmost and rightmost segments on each qubit wire---requiring $2\log_2\binom{n}{k}=O(k\log(en/k))$ bits. Given this witness, the algorithm deterministically constructs the circuit via a \textbf{greedy extension} procedure. It maintains a set $\mathsf{CurrentHeads}$ of current head symbols, initially equal to $S_{\textnormal{head}}$, representing the rightmost symbol on each unfinished partially-constructed wire. In each step, it identifies a head symbol $v \in \mathsf{CurrentHeads}$ that has \emph{exactly} one neighbor among the untouched symbols in the connection graph $G_f$. Here, $G_f$ is defined as follows: vertices are symbols in $\mathcal{S}$; for each quadratic term $uv$ in $f$, add an edge $\{u,v\}$; for each cubic term $uvw$ in $f$, add three edges $\{u,v\}, \{u,w\}, \{v,w\}$ (equivalently, $G_f$ is the skeleton of the hypergraph where quadratic terms are edges and cubic terms are $3$-uniform hyperedges). This \emph{unique neighbor} property is crucial: in the underlying valid circuit, the only interaction of $v$ with untouched symbols is with its immediate successor $u$ on the same wire. The algorithm then appends $u$ to $v$'s wire and updates $\mathsf{CurrentHeads}$ by replacing $v$ with $u$. If a head symbol is in $S_{\textnormal{tail}}$, that wire is marked complete and removed from $\mathsf{CurrentHeads}$.  An illustration is in \cref{fig:greedy}.

\begin{figure}[t]
\centering
\begin{tikzpicture}[xscale=1.5, yscale=1.5, baseline=(current bounding box.center)]
  \definecolor{headcolor}{RGB}{255,182,193}
  \definecolor{assignedcolor}{RGB}{144,238,144}
  \definecolor{currentcolor}{RGB}{255,255,0}
  \definecolor{tailcolor}{RGB}{173,216,230}
  
  \foreach \y in {2,1,0} {
    \draw[thick] (0,\y) -- (8,\y);
  }

  \foreach \y in {2,1,0} {
    \node[draw,fill=blue!20,inner sep=5pt,font=\small] at (0,\y) {$H$};
    \node[draw,fill=blue!20,inner sep=5pt,font=\small] at (8,\y) {$H$};
  }

  \foreach \pos/\y in {1/2, 6/2, 3/1, 5/0} {
    \node[draw,fill=blue!20,inner sep=5pt,font=\small] at (\pos,\y) {$H$};
  }

  \draw[thick] (4,2) -- (4,0);
  \foreach \y in {2,1,0} {
    \fill[black] (4,\y) circle (2pt);
  }

  \draw[thick] (2,2) -- (2,1);
  \foreach \y in {2,1} {
    \fill[black] (2,\y) circle (2pt);
  }

    \draw[thick] (5,2) -- (5,1);
  \foreach \y in {2,1} {
    \fill[black] (5,\y) circle (2pt);
  }

  \node[above,fill=headcolor,rounded corners=2pt,inner sep=2pt] at (0.5,2.15) {$x_1$};
  \node[above,fill=headcolor,rounded corners=2pt,inner sep=2pt] at (1.5,1.15) {$y_1$};
  \node[above,fill=headcolor,rounded corners=2pt,inner sep=2pt] at (2.5,0.15) {$z_1$};
  
  \node[above,fill=assignedcolor,rounded corners=2pt,inner sep=2pt] at (3.5,2.15) {$x_2$};
  
  \node[above,fill=tailcolor,rounded corners=2pt,inner sep=2pt] at (5.5,1.15) {$y_2$};
  
  \node[above,fill=tailcolor,rounded corners=2pt,inner sep=2pt] at (7.0,2.15) {$x_3$};
  \node[above,fill=tailcolor,rounded corners=2pt,inner sep=2pt] at (6.5,0.15) {$z_2$};

  \draw[->,thick,red] (0.8,2.3) -- (3.2,2.3);
\draw[->,thick,red] (1.8,1.3) -- (5.2,1.3);
  \draw[->,thick,red] (2.8,0.3) -- (6.2,0.3);
\draw[->,thick,red] (3.8,2.3) -- (6.7,2.3);
\node[below,font=\footnotesize] at (5.25,2.65) {\textbf{Step 4:} $x_2 \rightarrow x_3$};

  \node[below,font=\footnotesize] at (2,2.65) {\textbf{Step 1:} $x_1 \rightarrow x_2$};
  \node[below,font=\footnotesize] at (3.5,1.65) {\textbf{Step 2:} $y_1\rightarrow \ y_2$};
  \node[below,font=\footnotesize] at (4.5,0.65) {\textbf{Step 3:} $z_1 \rightarrow z_2$};

\end{tikzpicture}
\caption{\small Illustration of the nondeterministic algorithm's progress for $f = x_1x_2 + x_2x_3 + y_1y_2 + z_1z_2 + x_2y_1 + x_2y_2 + x_2y_2z_1 $. Given the witness $S_{\text{head}} = \{x_1, y_1, z_1\}$ (pink) and $S_{\text{tail}} = \{x_3, y_2, z_2\}$ (blue), the algorithm deterministically finds the circuit. At each step, it finds a current head symbol (green) with degree $1$ in the graph $G_f$ and extends the wire. The first step detects (for example) $x_1$ and replaces $x_1$ by $x_2$ in the current head, then $y_1$ by $y_2$. After replacement, the algorithm detects that the wire should be complete. Then, it detects $z_1$ and replaces $z_1$ by $z_2$, marks the wire complete, and finally replaces $x_2$ by $x_3$.
}
\label{fig:greedy}
\end{figure}

The unique neighbor property and correctness of \textbf{greedy extension} are abstracted as follows:

\begin{lemma}[Informal, see \cref{lem:degree1}]\label{lem:unique neighbor}
If the witness ($S_{\textnormal{head}}$ and $S_{\textnormal{tail}}$) are exactly the symbols at the start and end of each wire in some valid $k$-qubit circuit realizing $f$, then at every step of the greedy algorithm, there exists at least one current head symbol $v \in \mathsf{CurrentHeads}$ with degree exactly $1$ to untouched symbols in $G_f$.
\end{lemma}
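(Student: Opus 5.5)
Fix a valid $k$-qubit circuit $\mathcal{C}$ realizing $f$ whose wire heads and tails are exactly $S_{\textnormal{head}}$ and $S_{\textnormal{tail}}$. The proof is an induction on the steps of the greedy procedure, with the invariant that the partial wires built so far are \emph{prefixes} of the wires of $\mathcal{C}$. Concretely, each current head $v$ is the rightmost already-placed segment on some wire of $\mathcal{C}$. The set of touched symbols is then a union of left-closed prefixes, one per wire, and every untouched symbol lies strictly to the right of the current head on its wire. Initially this holds, since the heads are the first segments. If at some step the algorithm appends the unique untouched neighbor $u$ of $v$, the invariant is preserved provided $u$ is in fact the successor of $v$ in $\mathcal{C}$. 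We show this in the second paragraph; here we note that the successor of a non-tail head is untouched and adjacent to $v$ in $G_f$, because the internal $H$ gate produces the monomial $vu$. So if $v$ has exactly one untouched neighbor, that neighbor is its successor.

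Existence of a degree-one head is the main claim. Each unfinished wire's current head $v$ has its successor $\mathrm{succ}(v)$ as an untouched neighbor, so its degree is at least $1$. Any further untouched neighbor $w$ of $v$ comes from a CZ or CCZ gate involving $v$ and $w$, or from an $H$ link. The $H$ link is impossible, since $w$'s wire neighbors are on its own wire and $v$'s are $\mathrm{succ}(v)$ and its predecessor, which is touched. So the segments of $v$ and $w$ must overlap in time. Let $r(v)$ be the time of the right endpoint of $v$ (the $H$ gate ending $v$), and choose the current head $v^\star$ minimizing $r(v)$ among unfinished wires.

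Suppose $w\neq\mathrm{succ}(v^\star)$ is an untouched neighbor of $v^\star$, lying on a different wire with current head $v'$. Since $w$ is untouched, it lies strictly after $v'$, so its left endpoint is at least $r(v')\ge r(v^\star)$. Then $w$ begins no earlier than $v^\star$ ends, so the two segments cannot overlap at any gate time. This contradicts the fact that a CZ or CCZ gate acts on both. Finished wires contribute no untouched symbols. Hence $v^\star$ has degree exactly $1$.

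The main obstacle is handling ties and the precise overlap convention. Several $H$ gates may sit at the same time, and we must make sure a gate located exactly at a boundary time is not counted as acting on both $v^\star$ and $w$. I would resolve this by first perturbing $\mathcal{C}$ so that all $H$ gates occupy distinct time steps and each multi-qubit gate sits strictly inside the open intervals of the segments it touches. Commuting diagonal gates past each other and past $H$ gates on other wires achieves this without changing the polynomial. Overlap then means open intervals intersect, and the strict-inequality argument above goes through.

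A second subtlety arises when a monomial of $f$ cancels mod 2. An edge of $G_f$ predicted by the circuit could then be absent, and in particular $v$ might not be adjacent to its successor. The paper's clean-up preceding \cref{lem:structure} should guarantee that every adjacent pair on a wire appears as a quadratic monomial of $f$. This is what makes the degree exactly $1$ rather than $0$, and it should be invoked explicitly.
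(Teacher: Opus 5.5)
Your proof is correct and follows the paper's argument: you maintain the prefix invariant, pick the current head whose segment ends earliest, and observe that any untouched symbol on another wire starts no earlier than that head ends, so the wire successor is its only untouched neighbor; you are also slightly more careful than the paper in noting that whichever degree-one head the algorithm actually picks, its unique untouched neighbor must be its successor, and this is what preserves the invariant. Your two closing worries are harmless: with open-interval supports the tie case needs no perturbation, and the monomial $v\,\mathrm{succ}(v)$ cannot cancel because no diagonal gate acts on two segments of the same wire (this, rather than the cubic clean-up, is the right justification).
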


The proof uses a time-ordering argument: among all symbols in $\mathsf{CurrentHeads}$ (each occupying a time interval in the circuit), consider the one whose interval \emph{ends earliest}. This symbol cannot have edges to untouched symbols on different wires (as those symbols start later and would not overlap), and thus has degree 1---its unique neighbor being the next symbol on the same wire.

Once all symbols are partitioned into $k$ ordered lists, a subsequent Check subroutine verifies that Hadamard gates can be consistently ordered across wires by building a directed acyclic graph of temporal constraints and computing a topological sort. 

\subsubsection{Efficient Fixed-Parameter Tractable Algorithm}\label{sec:fpt}

In this section, we present an $\mathsf{FPT}$ algorithm with runtime $k^{O(k)} \cdot n$, see \cref{thm:main_algorithm}.

The key insight is that if a valid $k$-qubit circuit exists, then the connection graph $G_f$ (whose vertices are symbols in $\mathcal{S}$ and edges correspond to pairs appearing in monomials of $f$) has small treewidth---specifically, pathwidth at most $k$. This follows from the correctness of the \textbf{greedy extension} procedure in \cref{lem:unique neighbor}: simulating the procedure, while retaining completed tails in the auxiliary head sets, yields a path decomposition of $G_f$ whose bags are the initial head set and the consecutive auxiliary head-set unions $S_{t-1}\cup S_t$, each of size at most $k+1$ (see \cref{lem:pathdecomposition}).

With the pathwidth (and hence treewidth) upper bound at hand, establishing that \problem{} is \textsf{FPT} becomes a direct consequence of Courcelle's Theorem~\cite{Courcelle1990}. The existence of a valid partition into $k$ wires can be expressed in Monadic Second-Order logic ($\textnormal{MSO}_2$) over the natural labelled incidence structure of the polynomial: we check for $k$ disjoint ordered paths whose direct adjacencies correspond to quadratic monomials of $f$, such that the constraint graph $\GH$ induced by all quadratic and cubic terms (as defined in \cref{lem:structure}) remains acyclic. Courcelle's Theorem then implies this property can be checked in linear time with a constant depending on $k$ (see \cref{sec:fpt_algorithm} for details). This shows that \problem{} is \textsf{FPT}.

However, since the form of $f(k)$ in Courcelle's Theorem can be multi-exponential, we provide a more concrete dynamic programming algorithm with runtime $k^{O(k)} \cdot n$ for practical purposes.

The algorithm employs dynamic programming on a tree decomposition of $G_f$. To obtain such a decomposition, we use a recent 2-approximation algorithm by Korhonen~\cite{Korhonen2021} to obtain a tree decomposition of $G_f$ with width $w \leq 2k+1 = O(k)$ in time $2^{O(k)} n$. If the algorithm determines that the treewidth exceeds $k$, we can immediately conclude no solution exists. We then convert this to a nice tree decomposition with four node types (leaf, introduce, forget, join) and $O(kn)$ nodes.

The dynamic programming follows the standard tree decomposition paradigm. For each node $x$ in the tree (processed bottom-up from leaves to root), we maintain a table $\mathcal{D}_x$ of \emph{realizable states}. Each state in $\mathcal{D}_x$ encodes a partial wire assignment and topological ordering for the symbols on the boundary bag $B_x$ of node $x$. Let $V_x$ denote all symbols appearing in the subtree rooted at $x$. Then, a state is \emph{realizable} and included in $\mathcal{D}_x$ if there exists a \textbf{compatible extension} to a ``valid circuit configuration'' using all symbols in $V_x$. Intuitively, the \emph{valid circuit configuration} for $V_x$ consists of:
\begin{itemize}
    \item \textbf{Wire assignments:} A partition of symbols in $V_x$ into $k$ ordered lists $L_1, \ldots, L_k$;
    \item \textbf{Edge type:} An edge type assignment $\rho$ that specifies, for each pair of consecutive symbols on the same list, their connectivity type (adjacent via a Hadamard gate on the wire, or disconnected);
\end{itemize}
such that they satisfy the acyclicity constraint of graph $\GH[V_x(\rho)]$ (the directed constraint graph in \cref{lem:structure-intro} which only reflects temporal ordering of Hadamard gates incident to symbols in $V_x$ under edge type $\rho$).

This definition captures exactly the structure needed to realize a circuit for the symbols in $V_x$: the lists determining wire assignments, the edge types encoding local connectivity, and the acyclicity of $\GH[V_x(\rho)]$ ensuring that Hadamard gates can be consistently ordered in time. At the root node $r$ where $V_r = \mathcal{S}$ (all symbols), a valid circuit configuration directly corresponds to a complete quantum circuit on $k$ qubits with polynomial $f$.

Now, a state in $\mathcal{D}_x$ has two components $(\mathcal{W}_x, \mathcal{O}_x)$ which encode partial information about boundary symbols in $B_x$:
\begin{enumerate}
        \item $\mathcal{W}_x = (W_1, \ldots, W_k, \tau)$: A partition of boundary symbols in $B_x$ into $k$ ordered lists $W_i$, each with endpoints $\localendpoint{s},\localendpoint{e}$, along with an edge type $\tau$ on consecutive entries of each $W_i$:
        \begin{itemize}
            \item $\tau(u,v) = 1$: real symbols $u,v$ are adjacent on the wire (connected by a Hadamard gate);
            \item $\tau(u,v) = 2$: the two entries are connected through forgotten real symbols in $V_x \backslash B_x$ (each consecutive real-symbol pair on the path is connected by a Hadamard gate);
            \item $\tau(u,v) = 0$: the two entries are not connected inside the current subtree (no Hadamard gate or forgotten symbols between them).
        \end{itemize}
        
        \item $\mathcal{O}_x$: A topological ordering on the boundary virtual vertices $\{S,E\}\cup\{L(u),R(u)\mid u\in B_x\}$, after the identifications induced by direct adjacencies, with $S$ first and $E$ last.
\end{enumerate}

As mentioned, a state $(\mathcal{W}_x, \mathcal{O}_x)$ is hence \emph{realizable}---and included in $\mathcal{D}_x$---if it can be \textbf{compatibly extended} to a \emph{valid circuit configuration} for $V_x$: there exist lists $L_1, \ldots, L_k$ partitioning $V_x$ and edge types $\rho\in\{0,1\}$ on consecutive entries in each $L_i$ such that (i) $W_i \subseteq L_i$ with order preserved, (ii) $\tau$ extends to $\rho$ (type 0 or 1 edges remain unchanged; type 2 edges become paths of type 1 edges via forgotten real symbols), (iii) the partial constraint graph $\GH[V_x(\rho)]$ is acyclic and $\mathcal{O}_x$ is the restriction of one of its topological orderings to the boundary virtual vertices (including $S,E$).

For each of the four node types in the nice tree decomposition, we provide explicit transition rules to compute $\mathcal{D}_x$ from the tables of its children. The crux of the correctness proof is showing that these local transitions preserve realizability: a state is in $\mathcal{D}_x$ if and only if it can be compatibly extended to a valid circuit configuration for $V_x$.

Finally, we extract the solution from the root. At the root node $r$ (which has an empty bag $B_r = \emptyset$), we check whether $\mathcal{D}_r$ is non-empty. If so, any state in $\mathcal{D}_r$ corresponds to a complete valid circuit on $k$ qubits; we trace back through the DP table to extract the circuit explicitly. If $\mathcal{D}_r = \emptyset$, no valid circuit exists.

The algorithm runs in time $k^{O(k)} \cdot n$, dominated by the DP step: the number of possible states per node is bounded by $k^{O(k)}$ (from counting ordered boundary wire structures, edge types, and boundary topological orderings), each node transition contributes only a $2^{O(k)}\poly(k)$ factor, and there are $O(kn)$ nodes in the nice tree decomposition.

The obtained \textsf{FPT} algorithm in fact has a time bound of the form $k^{6k + o(k)} \cdot n$, see \cref{thm:main_algorithm}. Therefore, its running time is faster than the XP algorithm $O(\binom{n}{k}^2 \textnormal{poly}(n)) \le O((en/k)^{2k} \textnormal{poly}(n))$ when $k \leq n^{1/4 - o(1)}$, making it more efficient and practical for polynomially small $k$.

\section{Preliminaries}
\label{sec:prelim}

We briefly recall the polynomial representation of quantum circuits as
introduced by Dawson \emph{et~al.}~\cite{dawson} and further extended by Montanaro~\cite{mont}. After the terminal normalization gates are fixed, a quantum circuit~$\mathcal{C}$ on
$w$ qubits with $h$ internal Hadamard gates and diagonal gates drawn from
\{\textsc{Z}, \textsc{CZ}, \textsc{CCZ}\} defines a degree-3 polynomial
$f_\mathcal{C} \in \mathbb{F}_2[\mathcal{S}]$, $|\mathcal{S}| = w + h$ such that
\begin{equation*}
  \bigl\langle 0^w \bigl| \mathcal{C}\bigr| 0^w \bigr\rangle
  = \frac{\gap(f_\mathcal{C})}{2^{w+h/2}}, \quad \text{where} \quad
  \gap(f_\mathcal{C}) = |\{x\in \mathbb{F}_2^{\mathcal{S}}: f_\mathcal{C}(x) = 0 \}| - |\{x\in \mathbb{F}_2^{\mathcal{S}}: f_\mathcal{C}(x) = 1 \}|.
\end{equation*}
All algebra takes place over $\F_2$. In the following section, we briefly recall the transformation from quantum circuits to associated polynomials following~\cite{mont}, and then we define the decision problem \problem{(f,k)} and introduce the main results of this manuscript.

\subsection{From circuits to associated polynomials}\label{subsec:extraction}
The main steps in the transformation from quantum circuits to associated polynomials are as follows:
\paragraph{Normalization step.}
Given any quantum circuit $\mathcal{C}$ whose gate set is
\(\{H,\textsc{Z},\textsc{CZ},\textsc{CCZ}\}\),
\emph{prepend and append} an $H$ gate to \emph{every} qubit wire.  

\paragraph{Symbol assignment.}
On every wire place a fresh Boolean
symbol at \emph{each open segment} between consecutive $H$ gates.
Thus the $i$-th qubit with
$h_i := \ell_i + 1$ Hadamard gates contributes $\ell_i$ symbols
\[ x_{i,1},x_{i,2},\dots,x_{i,\ell_i}\in\mathcal S .\]

\paragraph{Gate-to-monomial map.}
Let \(f_\mathcal{C}\in\F_2[\mathcal S]\) be the associated polynomial obtained by assigning one monomial to each non-Hadamard gate \emph{and} to each internal Hadamard gate.  More precisely:
\[
  f_\mathcal{C} \;=\;
  \sum_{\substack{\text{each } \textsc{Z},\,\textsc{CZ},\,\textsc{CCZ} \\ \text{gate in }\mathcal{C}}}
    \bigl(\text{product of symbols on which the gate acts}\bigr)
  \;+\;
  \sum_{\substack{\text{each internal }H\text{ gate}\\\text{between symbols }x_i\text{ and }x_{i+1}}}
    \bigl(x_i\,x_{i+1}\bigr).
\]
In other words:
\[
\begin{array}{cl}
  \textsc{Z}\text{ acting on segment }x 
    & \;\Longrightarrow\; x,\\[6pt]
  \textsc{CZ}\text{ acting on segments }(x,y) 
    & \;\Longrightarrow\; x\,y,\\[6pt]
  \textsc{CCZ}\text{ acting on segments }(x,y,z) 
    & \;\Longrightarrow\; x\,y\,z,\\[6pt]
  \text{interior }H\text{ gate between symbols }x_i\text{ and }x_{i+1}
    & \;\Longrightarrow\; x_i\,x_{i+1}.
\end{array}
\]
By construction, each non-terminal \(H\) gate contributes a quadratic term equal to the product of the two adjacent symbols it separates, and each diagonal gate ($Z$, CZ, CCZ) contributes exactly one monomial of degree \(1\), \(2\), or \(3\). Moreover, no constant terms are introduced in the associated polynomial. See Figure~\ref{fig:example-circuit} in \cref{subsec:tech-overview} for an example of a quantum circuit and its corresponding associated polynomial.

\subsection{Problem Definitions and Notation}
In the previous section, we saw that any quantum circuit can be represented by a degree-3 polynomial (with zero constant term) over the set of symbols \(\mathcal{S}\). In fact, Montanaro~\cite{mont} showed that the converse is also true: any degree-3 $\mathbb{F}_2$ polynomial with zero constant term is the associated polynomial of a quantum circuit. Therefore, for any such polynomial $f$, we can define the width of $f$ as in the following definition.
\begin{definition}[Width of a polynomial]
  \label{def:width}
  The \emph{width} of a degree-3 polynomial \(f\in \mathbb{F}_2[\mathcal{S}]\) with no constant term is the minimum number of qubits required for a quantum circuit to realize \(f\) as its associated polynomial. We denote the width of \(f\) by \(w(f)\).
\end{definition}
Now, we can define the decision version of computing $w(f)$, denoted by \problem{(f,k)}.
\begin{definition}[Decision version of \problem{}]
  \label{def:problem}
  Given a degree-3 polynomial $f \in\mathbb{F}_2[\mathcal{S}]$ without a constant term and an integer $k$, \problem{(f,k)} asks whether $w(f)\leq k$.
\end{definition}

In the algorithmic part, we will also consider the search version of \problem{}: 
\begin{definition}[Search version of \problem{}]\label{def:search-problem}
    Given a degree-3 polynomial $f \in \mathbb{F}_2[\mathcal{S}]$ without a constant term and an integer $k$, the search version of \problem{(f,k)} asks to either find a quantum circuit on at most $k$ qubits whose associated polynomial is equal to the given $f$, or conclude that no such circuit exists.
\end{definition}

It has been proven in~\cite{mont} that any degree-3 polynomial $f$ on $n$ symbols with no constant term can be realized by an IQP circuit (``Instantaneous Quantum Polynomial-time'')~\cite{Shepherd2009} on at most $n$ qubits, so $w(f) \leq n$. Therefore, both the decision and the search version of \problem{(f,k)} are trivial when $k \ge n$.

Moreover, if there exists a circuit using $k' < n$ qubits, then we can increase the number of qubits by splitting a qubit that carries more than one symbol: remove the second Hadamard gate on that qubit, create a new qubit initialized with the original leftmost symbol, add a CZ gate between the two segments to preserve the quadratic product, and keep the rest of the circuit unchanged. Thus, it suffices to consider instances with $k \leq n$ and to search for a circuit on exactly $k$ qubits.

Finally, we introduce some technical notation for quantum circuits that will be used in the analysis of the reduction in \cref{sec:hardness}. 
\begin{definition}[Support]\label{def:supp}
  Let \(\mathcal{C}\) be a quantum circuit on \(n\) qubits consisting of \(h\) Hadamard gates and \(\{Z,\textsc{CZ},\textsc{CCZ}\}\).  In the standard procedure for extracting the degree-3 associated polynomial from \(\mathcal{C}\), each qubit wire is partitioned by its Hadamard gates into a sequence of segments, and on each segment we record a symbol \(x\in \mathcal{S}\).  We then equip each wire with a normalized “time” axis, identifying the input of the wire with \(0\) and the output with \(1\).  Denote the entire wire‐time interval by \(I = (0,1)\).

  For any symbol \(x\) appearing on a given qubit wire, its \emph{support} is the open interval
  \[
    \supp(x) \;=\; (\,l_x, r_x\,)\;\subseteq\; I,
  \]
  where
  \begin{itemize}[nosep]
    \item \(l_x\) is the time coordinate of the left endpoint of the segment on which \(x\) appears, and
    \item \(r_x\) is the time coordinate of the right endpoint of that segment.
  \end{itemize}
  Thus \(\supp(x)\) precisely captures the contiguous portion of the qubit's time axis during which the symbol \(x\) is active.
\end{definition}

\begin{definition}[Overlap]\label{def:overlap}
  Let \(x\) and \(y\) be two symbols appearing on (not necessarily the same) qubit wires in the circuit \(\mathcal{C}\), with supports
  \[
    \supp(x) = (l_x, r_x),
    \quad
    \supp(y) = (l_y, r_y)
    \;\subseteq\; I = (0,1).
  \]
  We say that \(x\) and \(y\) \emph{overlap} if and only if their support intervals have non‐empty intersection:
  \[
    \supp(x)\,\cap\,\supp(y)\;\neq\;\emptyset,
    \quad\text{equivalently}\quad
    l_x < r_y \quad\text{and}\quad l_y < r_x.
  \]
\end{definition}

\begin{figure}[h]
  \centering
  \begin{tikzpicture}[xscale=1.5, yscale=1.5]
    
    \draw (0,1) -- (8,1);
    \draw (0,0) -- (8,0);

    \foreach \t in {0, 2.6664, 8} {
      \node[draw,fill=blue!20,inner sep=5pt] at (\t,1) {$H$};
    }
    
    \node at (1.3336, 1.2) {\(X_1\)};
    \node at (5.3336, 1.2) {\(X_2\)};

    \foreach \t in {0, 5.3336, 8} {
      \node[draw,fill=blue!20,inner sep=5pt] at (\t,0) {$H$};
    }
    
    \node at (2.6666, 0.2) {\(Y_1\)};
    \node at (6.6666, 0.2) {\(Y_2\)};
\end{tikzpicture}
  \caption{
    Segments labeled $X_1,X_2$ on the first qubit and $Y_1,Y_2$ on the second qubit with supports
    $\supp(X_1)=(0,\tfrac13)$, $\supp(X_2)=(\tfrac13,1)$,
    $\supp(Y_1)=(0,\tfrac23)$, $\supp(Y_2)=(\tfrac23,1)$.
    Overlapping pairs are $(X_1,Y_1)$, $(X_2,Y_1)$, and $(X_2,Y_2)$, but $\supp(X_1)\cap\supp(Y_2)=\varnothing$.
  }
\end{figure}

To simplify arguments concerning intersections of supports of multiple symbols (which are open intervals on the time axis), we recall a classical fact as follows.

\begin{theorem}[Helly's theorem for intervals~\cite{Helly1923}]\label{prelim:helly}
Let $I_1,I_2,I_3\subseteq \mathbb{R}$ be open intervals. Then
\[ I_1\cap I_2\cap I_3 \neq \varnothing \quad\Longleftrightarrow\quad I_i\cap I_j \neq \varnothing\ \text{for all }1\le i<j\le 3. \]
More generally, any family of intervals on the real line with the property that every two members intersect has non-empty total intersection (the Helly number for intervals is 2).
\end{theorem}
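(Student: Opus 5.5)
The statement is Helly's theorem for intervals. I will prove the general version directly; the three-interval equivalence is then the special case of a family of size three.

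Forward direction. If the total intersection of a family is nonempty, then every pairwise intersection contains it and is nonempty. Nothing more is needed.

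Reverse direction, finite families. Let $\{(a_i,b_i)\}_{i\in F}$ be a finite family of open intervals whose members pairwise intersect. Set
\[ a^\ast=\max_i a_i, \qquad b^\ast=\min_i b_i, \]
attained at indices $p$ and $q$. Since $(a_p,b_p)\cap(a_q,b_q)\neq\varnothing$, the characterization from \cref{def:overlap} gives $a_p<b_q$, that is, $a^\ast<b^\ast$. (If $p=q$ this holds because the interval is nonempty.)

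Every point $t\in(a^\ast,b^\ast)$ satisfies $a_i\le a^\ast<t<b^\ast\le b_i$ for every $i$. So the nonempty interval $(a^\ast,b^\ast)$ lies in every member, and the total intersection is nonempty. With $|F|=3$ this is exactly the displayed equivalence.

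Reverse direction, infinite families. This is the only step that needs care, because the supremum and infimum need not be attained. Replace them with $a^\ast=\sup_i a_i$ and $b^\ast=\inf_i b_i$, where $a^\ast$ may be $-\infty$ and $b^\ast$ may be $+\infty$. For any indices $i,j$, pairwise intersection gives $a_i<b_j$, hence $a^\ast\le b^\ast$.

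For open intervals an infinite family can have strict equality $a^\ast=b^\ast$, as with $(0,1/n)$, so the conclusion can fail. I would therefore state the general claim for finite families, or for closed or compact intervals. In the compact case, $a^\ast\le b^\ast$ already puts $a^\ast$ in every member.

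The paper only applies this to triples of supports, so the finite version suffices. I would note that explicitly rather than try to rescue the infinite open case.
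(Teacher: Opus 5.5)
The paper gives no proof of this statement; it only cites Helly. Your extremal-endpoint argument is the standard proof: take $a^\ast=\max_i a_i$ and $b^\ast=\min_i b_i$, and obtain $a^\ast<b^\ast$ from the overlap of the two extremal members. It is correct for every finite family of open intervals. In particular it gives the displayed three-interval equivalence, which is the only form the paper ever uses: three open supports, in \cref{lem:structure} and in \cref{lem:twin-sound}.

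You are also right that the ``more generally'' sentence is false as stated. The family $\{(0,1/n)\}_{n\ge 1}$ is pairwise intersecting with empty intersection, so a finiteness hypothesis is needed. Two small corrections to your side remarks:
\begin{itemize}
\item Closedness alone does not rescue infinite families if closed rays count as intervals. The family $\{[n,\infty)\}_{n\ge 1}$ pairwise intersects but has empty intersection (your $a^\ast=b^\ast=+\infty$). The right hypothesis is compactness, or closed intervals at least one of which is bounded.
\item For finite families that are not all open, $a^\ast<b^\ast$ can fail even though the intersection is nonempty (e.g.\ $[0,1]$ and $[1,2]$). So the window $(a^\ast,b^\ast)$ does not by itself cover ``any family of intervals.''
\end{itemize}
A uniform fix for the second point is the median argument. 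Choose $x_{ij}\in I_i\cap I_j$ (indices unordered). The median of $x_{12},x_{13},x_{23}$, say $x_{ij}$, lies between $x_{ik}$ and $x_{jk}$, which both lie in $I_k$. Hence $x_{ij}\in I_k$ by convexity, and the three intervals share a point. Induction on the family size, replacing two members by their intersection, then gives the finite case for arbitrary intervals. Neither point affects anything the paper needs.
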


\subsection{Complexity Preliminaries}
\subsubsection{Classical Complexity}
In this paper, we will prove $\mathsf{NP}$-completeness of \problem{} by a reduction from the decision version of \SAT{}, which we list as follows.

\begin{definition}[3-SAT]
  An instance of \SAT{} consists of a Boolean formula
  $\Phi = c_1 \land \dotsb \land c_m$ in conjunctive normal form, where
  each clause contains exactly three literals.  The task is to decide
  whether $\Phi$ is satisfiable.
\end{definition}

To establish the approximation hardness of \problem{}, we will use the following celebrated result of Håstad~\cite[Theorem 6.5]{Hastard}.
\begin{theorem}[Håstad's 7/8 Theorem~\cite{Hastard}]
  \label{thm:hastad}
  For any \(\epsilon > 0\), it is $\mathsf{NP}$-hard to distinguish between the following two cases, given a \textnormal{3-SAT} formula \(\Phi\) as input:
  \begin{itemize}[nosep]
    \item \(\Phi\) is satisfiable, and
    \item At most \((7/8 + \epsilon)\) fraction of the clauses in \(\Phi\) can be satisfied.
  \end{itemize}
\end{theorem}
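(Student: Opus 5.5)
Since \cref{thm:hastad} is quoted from~\cite[Theorem 6.5]{Hastard} and is used here only as a black box, I would not reprove it in the paper. If a self-contained argument were wanted, the plan would follow Håstad's PCP route: a gap \textsc{Label Cover} instance, long-code encoding, and a 3-query test whose accepting predicate is a 3-clause.

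\textbf{Step 1: outer verifier.} Starting from the PCP theorem (gap \textsc{Max-3-SAT}), apply Raz's parallel repetition. This yields a bipartite projection \textsc{Label Cover} instance with alphabets $[L]$ and $[R]$ and projections $\pi:[L]\to[R]$. If the original formula is satisfiable, the instance has value $1$; otherwise every labeling satisfies at most a $\delta$ fraction of constraints, and $\delta$ can be made arbitrarily small at the cost of a constant blow-up.

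\textbf{Step 2: inner verifier.} Each vertex's label is encoded by a long code, folded so that $A(-f)=-A(f)$. The verifier chooses a random edge $(v,u)$ with projection $\pi$, picks uniform $f$ on $\{\pm1\}^{R}$ and uniform $g$ on $\{\pm1\}^{L}$, and sets $h$ as follows. Where $f(\pi(y))=1$, put $h(y)=-g(y)$. Where $f(\pi(y))=-1$, put $h(y)=-g(y)$ with probability $\mu$ and $h(y)=g(y)$ otherwise. It accepts unless $A(f)=B(g)=B(h)=1$ (reading $1$ as false). Correct long codes never yield three \bool{false} values, so completeness is exactly $1$. This perfect completeness is the reason the test is not the plain 3-LIN test. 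Each test is then written as a 3-clause over the table bits, and folding plus duplication give an unweighted instance with exactly three literals per clause.

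\textbf{Step 3: soundness.} Write the acceptance probability as
\[
\tfrac78-\tfrac18\,\mathbb{E}\bigl[A(f)+B(g)+B(h)-A(f)B(g)-A(f)B(h)-B(g)B(h)+A(f)B(g)B(h)\bigr].
\]
Folding kills the linear terms, and the terms involving $A(f)$ with a single $B$ vanish by independence of $f$. Two terms remain, $\mathbb{E}[B(g)B(h)]$ and $\mathbb{E}[A(f)B(g)B(h)]$. I would expand both in Fourier. The triple term should be bounded by a sum over pairs $(\alpha,\beta)$ with $\pi_2(\beta)\supseteq\alpha$ of $\hat A_\alpha\hat B_\beta^2(1-\mu)^{|\beta|}$. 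Sets $\beta$ of large size are damped by the noise. Small $\beta$ carrying significant mass give a randomized labeling (pick $\beta$ with probability $\hat B_\beta^2$, $\alpha$ with probability $\hat A_\alpha^2$, then a random element) that satisfies more than $\delta$ of the constraints, contradicting Step 1.

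\textbf{Main obstacle.} The term $\mathbb{E}[B(g)B(h)]$ does not vanish under folding. Controlling it requires Håstad's trick of choosing $\mu$ at random from a short geometric sequence $\mu_1\gg\mu_2\gg\cdots$ and averaging, so that a single table $B$ cannot correlate with its noisy copy at all scales at once. Making this averaging interact cleanly with the $\delta$ of parallel repetition is the delicate bookkeeping. Once it is done, the soundness is at most $7/8+\epsilon$, which is the theorem.
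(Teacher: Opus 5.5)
Citing H{\aa}stad's Theorem~6.5 as a black box is exactly what the paper does; it gives no proof, so that part of your plan is fine. The self-contained sketch, however, has a genuine gap in Step~3.

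In this perfect-completeness test the noise acts only on fibres $\pi^{-1}(x)$ with $f(x)=-1$. On fibres with $f(x)=1$ you set $h=-g$ exactly, and $f(x)$ is shared by the whole fibre. Writing $s_x=|\beta\cap\pi^{-1}(x)|$,
\[
\mathbb{E}\bigl[\chi_\beta(g)\chi_\beta(h)\bigr]=\prod_{x\in\pi(\beta)}\frac{(-1)^{s_x}+(1-2\mu)^{s_x}}{2}.
\]
So a large $\beta$ lying in few fibres is not damped. For $\beta\subseteq\pi^{-1}(x)$ with $|\beta|$ odd, this tends to $-\tfrac12$ as $\mu|\beta|\to\infty$. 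Hence ``large $\beta$ are damped by the noise'' is false. Your triple-term bound with weights $(1-\mu)^{|\beta|}$ is the one from the 3-LIN test, where noise hits every point independently; it does not hold here.

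Over the abstract projection Label Cover of your Step~1 this is fatal, and averaging over $\mu$ does not help. Append to every large-side label a coordinate that all projections ignore; the value is unchanged. Take $B_w=\chi_\beta$, with $\beta$ an odd set of labels differing only in that coordinate and $|\beta|\gg1/\mu$ for every $\mu$ in your sequence. Take each $A_u$ folded with no weight on singletons. Then $\mathbb{E}[A(f)B(g)B(h)]=0$ and $\mathbb{E}[B(g)B(h)]\approx-\tfrac12$, so the test accepts with probability about $\tfrac78+\tfrac1{16}$ whatever the Label Cover value is.

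The missing ingredient is the smoothness of H{\aa}stad's specific outer verifier. There $W$ is a set of $u$ clauses and $U$ picks one variable from each at random, and every set $\beta$ of satisfying assignments to $W$ satisfies $\mathbb{E}_U\bigl[|\pi_U(\beta)|^{-1}\bigr]\le|\beta|^{-c}$ for an absolute constant $c>0$. This makes large $\beta$ spread over many fibres on most edges, so the fibre-wise products in both remaining terms are small once $|\beta|\ge\mathrm{poly}(1/\mu)$. The argument then splits three ways:
\begin{itemize}
\item Small $\beta$ are decoded in the triple term. In $\mathbb{E}[B(g)B(h)]$ they are trivially small when $\mu|\beta|\ll1$, since some fibre has odd $s_x$.
\item Large $\beta$ are killed via smoothness.
\item The window $1/\mu\lesssim|\beta|\lesssim\mathrm{poly}(1/\mu)$ matters only for $\mathbb{E}[B(g)B(h)]$, where there is no $A$ to decode against.
\end{itemize}
Only that window is handled by randomizing $\mu$. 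Keeping the windows disjoint forces each $\mu_{i+1}$ to be polynomially, not geometrically, smaller than $\mu_i$. So Step~1 must keep the clause--variable structure rather than abstract to an arbitrary projection game.

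A minor slip: $\Pr[A=B(g)=B(h)=1]=\tfrac18\mathbb{E}[(1+A)(1+B(g))(1+B(h))]$, so all seven terms enter with the same sign. Since folding forces $|\beta|$ odd, the product above is always $\le0$, so $\mathbb{E}[B(g)B(h)]$ always favours the prover.
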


We now state the Exponential Time Hypothesis (ETH)~\cite{ETH0} and a variant via the sparsification lemma~\cite{ETH}, which will be used in our complexity analysis.

\begin{conjecture}[Exponential Time Hypothesis (ETH)~\cite{ETH0}]
  \label{conj:eth}
  There exists a constant \(\epsilon  > 0\) such that no algorithm can decide \SAT{} with \(n\) variables and \(m\) clauses in time \(O(2^{\epsilon n})\).
\end{conjecture}

The ETH implies that solving \SAT{} in subexponential time is unlikely. A useful ingredient of ETH is the sparsification lemma, which states that any \SAT{} instance can be expressed as a disjunction of sparse \SAT{} instances. As a result, \SAT{} instances with only a linear number of clauses in \(n\) remain hard in subexponential time.

\begin{theorem}[Sparsification Lemma~\cite{ETH}]
  \label{thm:sparsification}
  For every \(\epsilon > 0\), there exists a constant \(c_\epsilon > 0\) such that any \SAT{} formula \(\Phi\) with \(n\) variables can be expressed as a disjunction of at most \(2^{\epsilon n}\) \SAT{} formulas efficiently (in $2^{\epsilon n} \cdot \textnormal{poly}(n)$ time), each with at most \(c_\epsilon n\) clauses. Furthermore, there exist constants \(D > 0\) and \(\epsilon > 0\) such that there is no algorithm that solves \SAT{} with \(n\) variables and \(m \leq Dn\) clauses in time \(2^{\epsilon n}\), unless the Exponential Time Hypothesis (ETH) fails.
\end{theorem}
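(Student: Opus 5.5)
The statement has two parts. The second part (the ``furthermore'' clause) is a formal consequence of the first part together with \cref{conj:eth}, so I would prove the decomposition first and derive the second part from it.

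\textbf{Decomposition.} I would follow the sunflower-branching argument of Impagliazzo, Paturi and Zane. View the formula as a family of clauses, i.e.\ sets of literals of size at most $3$. Fix thresholds $\theta_1 \le \theta_2$, depending only on $\epsilon$. While the current formula contains a sunflower of clauses with heart $H$ and enough petals ($\theta_{|H|}$ of them for hearts of size $1$ or $2$), branch two ways:
\begin{itemize}
  \item replace all the petals' clauses by the single clause $H$;
  \item add every petal $P_i$ as a clause, which removes the sunflower's clauses.
\end{itemize}
The disjunction of the two branches is equivalent to the current formula, since $H$ is false exactly when every petal must be true. Afterwards, reduce subsumed clauses. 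At a leaf, no large sunflower exists. By the Erd\H{o}s--Rado sunflower bound, each literal, and each pair of literals, then lies in only $O_\epsilon(1)$ clauses, which gives at most $c_\epsilon n$ clauses per leaf. Each leaf is a 3-CNF on the same $n$ variables, and the tree can be explored in $2^{\epsilon n}\poly(n)$ time once the leaf count is bounded.

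\textbf{Main obstacle.} The hard step is bounding the number of leaves by $2^{\epsilon n}$. Along a root-to-leaf path, the ``heart'' branch strictly decreases the total clause length but adds only one short clause, while the ``petals'' branch adds $\theta$ clauses of smaller size. I would set up a potential argument. Each clause of size $2$ can be created by heart steps only $O(n)$ times in total, and each of size $1$ only $O(n)$ times. Each petal branch consumes many clauses of the next size up. One therefore charges every branching to creations of strictly smaller clauses, so that a path contains at most roughly $2n/\theta$ petal branchings. The number of leaves is then at most $\binom{O(n)}{O(n/\theta)}$. Choosing $\theta_i$ large enough (doubly exponential in $1/\epsilon$ suffices) makes this at most $2^{\epsilon n}$. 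Getting this counting right, with thresholds chosen so that the heart branches do not blow up, is the delicate part.

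\textbf{Consequence.} Let $\delta$ be the constant from \cref{conj:eth}. Apply the decomposition with $\epsilon' = \delta/2$, and put $D := c_{\epsilon'}$ and $\epsilon := \delta/4$. Suppose some algorithm solves 3-SAT instances with $m \le Dn$ in time $2^{\epsilon n}$. Given an arbitrary formula, build its at most $2^{\delta n/2}$ sparse disjuncts in time $2^{\delta n/2}\poly(n)$ and run the algorithm on each one. Each disjunct has at most $n$ variables, and we may pad with unused variables if needed. The total time is $2^{3\delta n/4}\poly(n) = O(2^{\delta n})$, which contradicts \cref{conj:eth}.
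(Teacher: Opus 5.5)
\textbf{There is a genuine gap in the leaf-count step.} The paper does not prove this lemma; it imports it from Impagliazzo--Paturi--Zane, whose proof is the sunflower branching you sketch. Your derivation of the ``furthermore'' part is fine, and so is your leaf-sparsity argument via Erd\H{o}s--Rado. The only extra detail there is padding the $1$- and $2$-literal clauses of the disjuncts to three literals, e.g.\ by repeating a literal. The problem is the bound on the number of leaves, which the threshold scheme you fix cannot deliver.

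You index thresholds by heart size. So a sunflower of $3$-clauses with a one-literal heart (two-literal petals) uses the same threshold $\theta_1$ as a sunflower of $2$-clauses. Its petal branch creates $\theta_1$ new $2$-clauses. Charging therefore bounds the number of such branchings on a path only by (number of $2$-clauses created on the path)$/\theta_1$. That number is not $O(n)$ with a constant independent of $\theta_1$:
\begin{itemize}
\item a $2$-clause disappears only when a unit clause subsumes it;
\item each literal may lie in nearly $\theta_1$ of them, which is exactly what the $2$-clause threshold permits;
\item each of the at most $2n$ unit additions can clear about $\theta_1$ of them and make room for new ones.
\end{itemize}
So charging yields only $O(\theta_1 n)$ created $2$-clauses. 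That allows up to $O(n)$ petal branchings of this type, not roughly $2n/\theta$. With paths of length $O(\theta_1 n)$, the resulting estimate $\sum_{j\le P}\binom{L}{j}$ is of order $2^{n\log\theta_1}$, which grows rather than shrinks as the thresholds increase. The two thresholds that must be far apart are forced to coincide, so no choice of $\theta_1\le\theta_2$ repairs this.

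\textbf{Two ingredients are missing.}
\begin{itemize}
\item \emph{Thresholds by petal size.} The threshold must depend on the petal size $c-|H|$ and grow with it, e.g.\ $\theta(2)\ge\theta(1)^2$. Then branchings that create $2$-clauses number $O(\theta(1)n/\theta(2))=O(n/\theta(1))$.
\item \emph{Shortest clauses first.} You must always branch on a sunflower among the shortest clauses first. This keeps every literal in fewer than $\theta(1)$ $2$-clauses whenever a $3$-clause sunflower is processed. That is what gives the $O(\theta(1)n)$ bound on created $2$-clauses and on path length.
\end{itemize}
Without such an order, your claim that heart steps create only $O(n)$ $2$-clauses is unsupported. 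From a dense initial formula, heart steps on $3$-clause sunflowers with two-literal hearts can create $\Theta(n^2)$ $2$-clauses before any $2$-clause sunflower is touched. The binomial estimate is then not even $2^{O(n)}$. With both ingredients you get $P=O(n/\theta(1))$ and $L=O(\theta(1)n)$. Hence $\sum_{j\le P}\binom{L}{j}\le 2^{O(n\log\theta(1)/\theta(1))}\le 2^{\epsilon n}$ once $\theta(1)$ is of order $\epsilon^{-1}\log(1/\epsilon)$.
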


\subsubsection{Parameterized Complexity}

Beyond classical complexity, we also analyze \problem{} from the perspective of parameterized complexity~\cite{Downey2013, Cygan2015}. In parameterized complexity, we measure the computational complexity of a problem not only in terms of the input size $n$ but also in terms of a \emph{parameter} $k$ that captures structural properties of the instance. For our problem \problem{(f,k)}, the natural parameter is the width $k$.

\begin{definition}[Fixed-Parameter Tractable ($\mathsf{FPT}$)]
  \label{def:fpt}
  A parameterized problem with parameter $k$ is \emph{fixed-parameter tractable} ($\mathsf{FPT}$) if there exists an algorithm that solves it in time $f(k) \cdot \textnormal{poly}(n)$, where $n$ is the input size and $f$ is a computable function depending only on $k$.
\end{definition}

The class $\mathsf{FPT}$ is considered the tractable class in parameterized complexity. Furthermore, (potentially) harder problems are classified into a hierarchy of complexity classes $\mathsf{FPT} = \textsf{W}[0]  \subseteq \textsf{W}[1] \subseteq \textsf{W}[2] \subseteq \cdots$, analogous to the polynomial hierarchy in classical complexity ($\mathsf{P} \subseteq \mathsf{NP} \subseteq \cdots$). A problem in $\textsf{W}[1]$ that is not $\mathsf{FPT}$ is believed to require time $n^{f(k)}$ for some unbounded function $f$. Such algorithms, where the exponent depends on $k$, are called XP algorithms. In this paper, we will not go into the formal definitions of these classes. For formal definitions and the theory of parameterized complexity, we refer to~\cite{Downey2013, Cygan2015}.

\subsection{Algorithm Preliminaries}
\label{subsec:alg_prelim}

In this subsection, we introduce key concepts from graph theory and parameterized complexity that will be used in our algorithmic results.

\subsubsection{Tree Decomposition and Treewidth}

We introduce the concepts of Tree Decomposition and Path Decomposition~\cite{Robertson1983, Bodlaender1994}, which provide a useful theoretical framework for understanding the algorithm's structure.

\begin{definition}[Tree Decomposition and Path Decomposition]
\label{def:tree_decomposition}
A \emph{tree decomposition} of a graph $G = (V, E)$ is a pair $(T, \{B_t\}_{t \in V(T)})$, where $T$ is a tree and each $B_t$ is a subset of $V$, called a \emph{bag}, satisfying the following properties:
\begin{enumerate}
    \item \textbf{Vertex Cover}: Every vertex of $G$ belongs to at least one bag $B_t$. $\bigcup_{t \in V(T)} B_t = V$.
    \item \textbf{Edge Cover}: For every edge $(u, v) \in E$, there exists a bag $B_t$ containing both $u$ and $v$.
    \item \textbf{Running Intersection}: For any vertex $v \in V$, the set of nodes $\{t \in V(T) \mid v \in B_t\}$ forms a connected subtree of $T$. 
\end{enumerate}
The \emph{width} of a tree decomposition is defined as $\max_{t \in V(T)} |B_t| - 1$. The \emph{treewidth} of $G$, denoted $\textnormal{tw}(G)$, is the minimum width over all possible tree decompositions of $G$.

A \emph{path decomposition} is a special case of tree decomposition where the tree $T$ is a simple path (a sequence of nodes). The \emph{pathwidth} of $G$, denoted $\textnormal{pw}(G)$, is the minimum width over all path decompositions of $G$.
\end{definition}

To facilitate dynamic programming, it is convenient to work with a special type of tree decomposition called a \emph{nice tree decomposition}.

\begin{definition}[Nice Tree Decomposition~\cite{Kloks1994}]\label{def:nice_td}
A tree decomposition $(T, \{B_t\}_{t \in V(T)})$ is called \emph{nice} if $T$ is a rooted tree and every node $t \in V(T)$ is one of the following four types:
\begin{itemize}
    \item \textbf{Leaf node:} $t$ is a leaf of $T$ (it has no children) and $|B_t| = 0$.
    \item \textbf{Introduce node:} $t$ has exactly one child $t'$ and $B_t = B_{t'} \cup \{v\}$ for some vertex $v \notin B_{t'}$.
    \item \textbf{Forget node:} $t$ has exactly one child $t'$ and $B_t = B_{t'} \setminus \{v\}$ for some vertex $v \in B_{t'}$.
    \item \textbf{Join node:} $t$ has exactly two children $t_1, t_2$ and $B_t = B_{t_1} = B_{t_2}$.
\end{itemize}
\end{definition}

It is well-known that any tree decomposition can be transformed into a nice one efficiently.

\begin{lemma}[\cite{Kloks1994, Cygan2015}]\label{lem:nice_td}
Given a tree decomposition of a graph $G$ of width $w$ with $N$ nodes, one can compute a nice tree decomposition of $G$ of width $w$ and with $O(wN)$ nodes in time $O(w^2N)$. In particular, for an $n$-vertex graph, a nice tree decomposition of width $w$ with $O(wn)$ nodes can be found in time $O(w^2 n)$.
\end{lemma}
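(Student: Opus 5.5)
The plan is to prove the nice tree decomposition lemma by the standard transformation from \cite{Kloks1994, Cygan2015}. It has three stages: root and normalize the tree, reduce degrees with join nodes, and interpolate edges with introduce/forget chains.

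First, root $T$ at an arbitrary node. Attach to the root a new parent chain that forgets every vertex of the old root bag one at a time, so that the final root bag is empty; the other steps are built so that every leaf also ends with an empty bag.

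Next, handle nodes with many children. For a node $t$ with children $t_1,\dots,t_d$ and $d\ge 2$, replace $t$ by a binary tree of $d-1$ join nodes, each carrying bag $B_t$. Below each of these, give every child its own copy of $t$ with bag $B_t$, so each child hangs from a single-child node. This adds $O(d)$ nodes at $t$ and $O(N)$ nodes overall.

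Now every parent--child edge $(t,t')$ with $t'$ the unique child has arbitrary bags $B_t$ and $B_{t'}$. Replace that edge by a path that first forgets the vertices of $B_{t'}\setminus B_t$ one by one, reaching bag $B_t\cap B_{t'}$. The path then introduces the vertices of $B_t\setminus B_{t'}$ one by one until it reaches $B_t$. Every intermediate bag is a subset of $B_t$ or of $B_{t'}$, so the width stays $w$. The path has at most $2(w+1)$ nodes, and its bags can be listed in $O(w)$ time per node given sorted bags. Likewise, each leaf $\ell$ gets a chain below it that starts at an empty leaf and introduces the vertices of $B_\ell$ one at a time. This adds at most $w+1$ nodes per leaf. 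Empty-difference edges, where $B_t=B_{t'}$, are contracted, except where a join node needs them.

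The correctness checks are short. Every original bag still appears, so vertex cover and edge cover are preserved. Running intersection is preserved because on each interpolating path a vertex $v$ of $B_t\cap B_{t'}$ stays present throughout. A vertex in only one of the two bags appears in a contiguous prefix or suffix of the path, attached to the side where it lives.

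For the counting, the number of nodes is $O(N)+O(w)\cdot O(N)=O(wN)$. Each node's bag has size at most $w+1$ and is computed by a set difference with the neighbouring bag, costing $O(w)$. This gives total time $O(w^2N)$; using sorted bags or a marker array keeps each set difference linear.

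For the "in particular" claim, I first make sure the input decomposition has $O(n)$ nodes. Contract every edge $(t,t')$ with $B_{t'}\subseteq B_t$. After this, every node, viewed from a fixed root, has a vertex not in its parent's bag, and by running intersection that vertex's topmost occurrence is unique to this node. The map from non-root nodes to vertices is therefore injective, so $N\le n+1$. This contraction step, done in $O(wN_0)$ time on the original decomposition, is the only mildly delicate point. After it, $N=O(n)$ and the bounds become $O(wn)$ nodes and $O(w^2n)$ time.
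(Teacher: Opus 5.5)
Your proposal is correct and is the standard transformation from the sources the paper cites without proof \cite{Kloks1994, Cygan2015}: root the tree and add a forget chain above it, split nodes with many children into binary join trees whose leaves are copies of the bag, replace every remaining parent--child edge by a forget-then-introduce path, and hang introduce chains below the leaves. Your preliminary contraction to at most $n+1$ nodes is likewise the usual route to the ``in particular'' clause, and you correctly note its $O(wN_0)$ cost; this cost stays within $O(w^2 n)$ as long as the input decomposition has $O(wn)$ nodes. That is the assumption implicit in the paper's phrasing (compare the bound $O(k^2\max(|V(T)|,|V(G)|))$ in Cygan et al.).
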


We also state the crucial property of the nice tree decomposition that justifies the correctness of the dynamic programming algorithm. This is a special case of~\cite[Lemma 7.3]{Cygan2015}. We include the proof in \cref{app:bag_subtree} for completeness.
\begin{lemma}\label{lem:bag_subtree}
Let $(T, \{ B_t \}_{t\in V(T)})$ be a rooted tree decomposition of a graph $G$ and let $x$ be a node of $T$ with children $y_1, y_2, \dots, y_m$. For each $i$, let
$
    V_{y_i}:=\bigcup_{z\in V(T_{y_i})} B_z,
$
where $T_{y_i}$ is the subtree of $T$ rooted at $y_i$. Then there is no edge of $G$ between $V_{y_i}\setminus B_x$ and $V(G)\setminus V_{y_i}$. Consequently, if $i\neq j$, then there is no edge of $G$ between $V_{y_i}\setminus B_x$ and $V_{y_j}\setminus B_x$.
\end{lemma}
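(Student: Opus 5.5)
We argue directly from the three defining properties of a tree decomposition (\cref{def:tree_decomposition}); the key one is running intersection. Fix a child $y_i$ of $x$ and suppose, for contradiction, that there is an edge $\{u,v\}$ of $G$ with $u\in V_{y_i}\setminus B_x$ and $v\in V(G)\setminus V_{y_i}$.

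First we locate all bags containing $u$. Since $u\in V_{y_i}$, some bag $B_z$ with $z\in T_{y_i}$ contains $u$. By running intersection, the nodes whose bags contain $u$ form a connected subtree $T_u$ of $T$. If $T_u$ contained a node outside $T_{y_i}$, then the path in $T$ from $z$ to that node would pass through the edge $y_i x$, so it would pass through $x$. Connectivity of $T_u$ would then force $u\in B_x$, which contradicts $u\notin B_x$. Hence every bag containing $u$ is indexed by a node of $T_{y_i}$.

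Next we use the edge-cover property. It gives a bag $B_t\ni u,v$. By the previous step $t\in T_{y_i}$, so $v\in B_t\subseteq V_{y_i}$. This contradicts $v\notin V_{y_i}$, which proves the first claim.

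For the consequence, take $i\neq j$, $u\in V_{y_i}\setminus B_x$ and $v\in V_{y_j}\setminus B_x$. It suffices to show $v\notin V_{y_i}$, after which the first claim applies. Suppose instead that $v\in V_{y_i}$. Then $v$ lies in a bag inside $T_{y_i}$ and also in a bag inside $T_{y_j}$. These subtrees are disjoint, and the path between them runs through $x$, so running intersection gives $v\in B_x$, a contradiction. Hence $v\notin V_{y_i}$, and there is no edge between $u$ and $v$.

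The only step needing care is the observation that any tree path leaving $T_{y_i}$ must pass through $x$. This holds because $T_{y_i}$ is attached to the rest of $T$ only through the edge $y_i x$. Everything else is routine.
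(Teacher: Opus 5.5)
Your proof is correct and matches the paper's argument: both combine the edge-cover property with running intersection along a tree path that must pass through $x$. The only difference is that you first confine all bags containing $u$ to $T_{y_i}$ and then invoke edge cover, whereas the paper invokes edge cover first and then runs the path argument; the consequence for $i\neq j$ is obtained exactly as in the paper.
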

\subsubsection{Topological Ordering and Graph Acyclicity}

Throughout the algorithmic sections, we frequently work with directed graphs and their topological orderings. We recall these concepts here.

\begin{definition}[Topological Ordering]
\label{def:topological_ordering}
Let $G = (V, E)$ be a directed graph. A \emph{topological ordering} of $G$ is a total ordering $<$ on $V$ such that for every directed edge $(u, v) \in E$, we have $u < v$. A directed graph admits a topological ordering if and only if it is acyclic (i.e., it is a directed acyclic graph, or DAG)~\cite{Kahn1962}.
\end{definition}

\begin{definition}[Consistent Orderings]
\label{def:consistent_ordering}
Let $G = (V, E)$ be a directed acyclic graph and $V_0 \subseteq V$ be a subset of vertices. An ordering $\mathcal{O}_0$ on $V_0$ is said to be \emph{consistent with $G$} if for any two vertices $u, v \in V_0$ connected by a directed path in $G$, we have $u < v$ in $\mathcal{O}_0$.
\end{definition}

The following fundamental property of topological orderings will be used repeatedly in our dynamic programming algorithm, whose proof is in \cref{app:acyclic_union}.

\begin{lemma}\label{lem:acyclic_union}
 Let $G_1, G_2$ be two directed acyclic graphs on vertices $V_1, V_2$ with intersection $V_0 = V_1 \cap V_2$.  If there is a total order $\mathcal O_0$ on $V_0$ that is consistent with both $G_1$ and $G_2$ in the sense of \cref{def:consistent_ordering}, then $G_1\cup G_2$ is acyclic and admits a topological ordering extending $\mathcal O_0$. Conversely, if $G_1\cup G_2$ is acyclic, then the restriction of any topological ordering of $G_1\cup G_2$ to $V_0$ is consistent with both $G_1$ and $G_2$.
\end{lemma}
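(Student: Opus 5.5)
The plan is a direct argument on cycles, with both directions handled separately.

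\textbf{Forward direction.} Suppose $\mathcal O_0$ is a total order on $V_0$ consistent with both $G_1$ and $G_2$, and assume for contradiction that $G_1\cup G_2$ contains a directed cycle $C$. Every edge of $C$ lies in $G_1$ or in $G_2$. If all edges of $C$ lie in one $G_i$, then $G_i$ has a cycle, which is impossible. Otherwise $C$ alternates between maximal runs of $G_1$-edges and $G_2$-edges. Each switch point is the head of an edge from one graph and the tail of an edge from the other, so it lies in $V_1\cap V_2=V_0$.

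List the switch points in cyclic order as $v_1,\dots,v_r$ with $r\ge 2$. The run from $v_j$ to $v_{j+1}$ is a directed path inside a single $G_i$ with both endpoints in $V_0$, so consistency gives $v_j<v_{j+1}$ in $\mathcal O_0$. Going around the cycle yields $v_1<v_1$, a contradiction. Hence $G_1\cup G_2$ is acyclic.

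\textbf{Extending $\mathcal O_0$.} To get a topological ordering of the union that extends $\mathcal O_0$, add to $G_1\cup G_2$ the chain edges $u\to u'$ for each pair of consecutive elements $u,u'$ of $V_0$ under $\mathcal O_0$. This augmented graph $G'$ is still acyclic. To see this, treat the chain as a third graph $G_3$ on $V_0$ and repeat the cycle argument. Any cycle now decomposes into runs inside $G_1$, $G_2$ or $G_3$, with switch points in $V_0$, since each of these graphs meets the others only inside $V_0$ ($V_3=V_0\subseteq V_1\cap V_2$). Each run between switch points strictly increases in $\mathcal O_0$: for $G_1$ and $G_2$ by consistency, and for $G_3$ by construction. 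A cycle consisting only of $G_3$ edges is impossible because the chain is a path. So $G'$ is acyclic. Any topological order of $G'$, which exists by \cref{def:topological_ordering}, is a topological order of $G_1\cup G_2$ whose restriction to $V_0$ is exactly $\mathcal O_0$.

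\textbf{Converse.} If $G_1\cup G_2$ is acyclic with topological order $<$, then any directed path in $G_i$ between $u,v\in V_0$ is also a path in the union. Therefore $u<v$, so the restriction of $<$ to $V_0$ is consistent with both $G_1$ and $G_2$.

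\textbf{Main obstacle.} The only subtle point is justifying that switch points lie in $V_0$, including the degenerate case where the edge sets of $G_1$ and $G_2$ overlap. Such shared edges can simply be assigned to either graph; after this assignment the run decomposition still applies.
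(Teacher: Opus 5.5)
Your proof is correct and follows essentially the same route as the paper. Like the paper, you decompose a putative cycle into maximal single-graph runs whose endpoints lie in $V_0$ and read off a strictly increasing cyclic chain in $\mathcal O_0$; you then add the chain edges of $\mathcal O_0$ and rerun that argument to get a topological order extending $\mathcal O_0$ (treating the chain as a third graph $G_3$ is only a more uniform phrasing of the paper's step), and your converse matches the paper's.
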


\section{$\mathsf{NP}$-hardness of \problem{}}\label{sec:hardness}
\subsection{Reduction from \SAT{}}
\label{sec:reduction}
In this section, we present a polynomial-time reduction from \SAT{} to \problem{}. Given a \SAT{} instance $\Phi$ with $n$ variables and $m$ clauses, we efficiently
construct a polynomial $f_{\Phi}$ of degree at most~3 with no constant term and set a qubit
budget $k = 1 + 6m$ such that
\begin{equation*}
  \Phi \text{ satisfiable } \Longleftrightarrow  w(f_{\Phi}) \le k \Longleftrightarrow 
  f_{\Phi} \text{ is implementable on at most $k$ qubits.}
\end{equation*}

\subsubsection{Overall Strategy}\label{subsec:overall-strategy}

Our reduction maps a 3-SAT instance \((X,C)\)—with variables
\(X=\{x_1,\dots,x_{|X|}\}\) (we do not include $\bar{x}$ in $X$ for each $x\in X$) and clauses
\(C=\{\cl{c_1},\dots,\cl{c_{|C|}}\}\)—to a degree-3 polynomial
\(f\in\mathbb F_2[\mathcal{S}]\) such that \emph{every} quantum circuit realizing \(f\) with the minimum possible number of qubits must represent a satisfying assignment of the original formula.  

\paragraph{Notation: variables vs.\ literals.}
We write $x\in X$ for Boolean variables of the \SAT{} instance and $\lit{x}$ for a literal (either $x$ or $\bar x$). For an occurrence of a literal whose underlying variable is $x$, the polynomial/circuit uses the three occurrence-symbols $x_{\cl{c}},\hat{x}_{\cl{c}},\bar{x}_{\cl{c}}$; the bar is part of the symbol name. The truth value is encoded geometrically: if the literal is $x$, then the literal is \bool{true} exactly when $x_{\cl{c}}$ overlaps $t_{\cl{c}}$; if the literal is $\bar x$, then it is \bool{true} exactly when $\bar{x}_{\cl{c}}$ overlaps $t_{\cl{c}}$. See \cref{def:encode} for a formal definition.

The polynomial representation of an arbitrary quantum circuit is hard to control. Our idea is to carefully design the polynomial and the qubit budget $k=1+6|C|$ to show that any size-optimal quantum circuit realizing $f_{\Phi}$ must satisfy some additional constraints. Under these conditions, any quantum circuit for $f_{\Phi}$ corresponds to a witness for $\Phi$. Two main principles are listed below.
\begin{enumerate}[(i)]
    \item \textbf{Various qubit wire roles.}
    
    We require various qubit wires in the circuit to assume distinct functional roles:
      \begin{itemize}
      \item \emph{Base qubit wire.}
            A single qubit wire \(b\) contains \emph{no} \(H\) gates and is not in any CZ gate. $b$ is only in a few CCZ gates to ensure that certain symbols are in distinct qubit wires but on the same vertical link. We can always assume that $\supp(b)$ is the whole interval $I$; see \cref{lem:qubit-iso}.
      \item \emph{Truth-marker qubit wires.}
            Exactly two \(H\) gates split the qubit wire into three segments \(t_{\cl{c}}\), \(\hat{t}_{\cl{c}}\), and \(\tilde{t}_{\cl{c}}\) in order for each clause $\cl{c}$.
      \item \emph{Variable qubit wires.}
            For each literal occurrence, a variable wire contains \(x_{\cl{c}}\), \(\hat{x}_{\cl{c}}\), and \(\bar{x}_{\cl{c}}\) (in either order).
      \item \emph{Clause gadgets via nested ORs.}
            Each clause \(\cl{c}=(\lit{x}\lor \lit{y}\lor \lit{z})\) is evaluated with two extra
            qubit wires: the first has middle segment \(p_{\cl{c}}\), representing the
            intermediate Boolean value \(\lit{x}\lor \lit{y}\), and the second has middle
            segment \(o_{\cl{c}}\), representing the final Boolean value
            \(p_{\cl{c}}\lor \lit{z}\) of the whole clause. Both qubit wires bear two \(H\) gates and three
            segments, arranged as \(\{\overleftarrow{q},q,\overrightarrow{q}\}\) or \(\{\overrightarrow{q},q,\overleftarrow{q}\}\), \(q\in\{p_{\cl{c}},o_{\cl{c}}\}\).
            The decorations \(\overrightarrow{\bullet}\) and \(\overleftarrow{\bullet}\) are only notational placeholders to squeeze \(q\) into the correct geometric position.
            We construct clause gadgets (see \cref{subsec:clause-gadget}) with the following soundness/completeness behavior: if \(q\) overlaps with \(t_{\cl{c}}\) in a \(k\)-qubit circuit, then at least one input literal is \bool{true} (i.e., its segment overlaps $t_{\cl{c}}$); conversely, if at least one input literal is \bool{true}, then the gadget can be realized with \(q\) overlapping \(t_{\cl{c}}\).
            Finally, we add a cubic monomial
            \(b\cdot t_{\cl{c}}\cdot o_{\cl{c}}\) for every clause, forcing the segment
            \(o_{\cl{c}}\) to overlap with \(t_{\cl{c}}\) and therefore enforcing global
            satisfiability.
            \end{itemize}
      \item \textbf{Size-optimality guarantee.}
      In the construction, we guarantee that every qubit wire in any realizing quantum circuit for \(f\) has at most two \(H\) gates, and \(b\) is \(H\)-free. Then, the circuit must use at least \(k\) qubits; see also \cref{lem:good-cl}. Moreover, by further exploiting the structure of the polynomial, we are able to prove that size equality is attained \emph{if and only if} each qubit wire has exactly the configuration above. This yields a bijection between witnesses of the \SAT{} instance and size-optimal quantum circuits for the polynomial \(f_{\Phi}\).
\end{enumerate}

After discussing the overview strategy, we introduce the construction of polynomials formally in the following sections.  Throughout the construction, sums over pairs of clauses or literal occurrences are over unordered pairs.  When a clause contains repeated literals, the corresponding occurrence symbols should be understood as carrying suppressed occurrence indices, so that distinct occurrences give distinct algebraic symbols.

\subsubsection{Truth-Marker Qubit Wires}\label{subsec:marker-gadget}

In the global polynomial, the only quadratic terms that mention the
symbols \(t_{\cl{c}}\), \(\hat{t}_{\cl{c}}\), or \(\tilde{t}_{\cl{c}}\) are (TM stands for \emph{truth-marker}):
\[
   G_{\mathrm{TM}_{\cl{c}}} \;=\,\hat{t}_{\cl{c}}\ t_{\cl{c}}\;+\;\hat{t}_{\cl{c}}\,\tilde{t}_{\cl{c}} .
\]
Consequently, no other symbol can link any of
\(t_{\cl{c}},\hat{t}_{\cl{c}},\tilde{t}_{\cl{c}}\) through Hadamard gates.  If these three symbols were placed on two or more
qubit wires, at least one additional qubit wire would be consumed.  Hence, every \(k\)-qubit circuit must host them
on a \emph{single} qubit wire intersected by exactly two Hadamard gates.

To ensure each truth-marker qubit wire encodes the same Boolean value, we include the following terms in our polynomial $f_{\Phi}$:
\[
G_{\mathrm{TM}} = b\sum_{\{\cl{c_1},\cl{c_2}\}\in \binom{C}{2}} \left( t_{\cl{c_1}}t_{\cl{c_2}} + \hat{t}_{\cl{c_1}}\hat{t}_{\cl{c_2}} + \tilde{t}_{\cl{c_1}}\tilde{t}_{\cl{c_2}}\right).
\]

Since we can always reverse the circuit and variable orderings without changing the represented polynomial, without loss of generality we fix the left-to-right temporal order
\[
   t_{\cl{c}} \;\;\hat{t}_{\cl{c}} \;\;\tilde{t}_{\cl{c}} ,
\]
and we refer to these qubit wires as the \emph{truth-marker qubit wires}.  All later
gadgets assume this canonical ordering.

\subsubsection{Variable Qubit Wires}\label{subsec:var-gadget}

For every Boolean variable $x\in X$ and clause $\cl{c}$ that contains $\lit{x}$ (that is, $\cl{c}$ contains the literal $x$ or $\bar{x}$), we construct symbols denoted $\{x_{\cl{c}},\hat{x}_{\cl{c}},\bar{x}_{\cl{c}}\}$ and append to the global
polynomial \(f_{\Phi}\) the \emph{variable gadget}
\[
   G_{x_{\cl{c}}} \;=\;
   b\,\hat{t}_{\cl{c}}\,(x_{\cl{c}}+\hat{x}_{\cl{c}}+\bar{x}_{\cl{c}})\;+\;
   \hat{x}_{\cl{c}}\,(x_{\cl{c}}+\bar{x}_{\cl{c}}).
\]
To ensure each variable qubit wire for the same variable $x$ encodes the same Boolean value, let $\operatorname{Occ}(x)$ be the multiset of literal occurrences whose underlying Boolean variable is $x$, and include the following terms in our polynomial $f_{\Phi}$:
\[
G_x=b\sum_{\{(\ell_1,c_1),(\ell_2,c_2)\}\in\binom{\operatorname{Occ}(x)}{2}} \left( x_{\cl{c_1}}x_{\cl{c_2}} + \hat{x}_{\cl{c_1}}\hat{x}_{\cl{c_2}} + \bar{x}_{\cl{c_1}}\bar{x}_{\cl{c_2}}\right).
\]
The summation is over unordered pairs of distinct \emph{literal occurrences} of $x$ or $\bar{x}$ across all clauses. We do \emph{not} require $c_1\neq c_2$: if a clause contains multiple literals based on $x$, we also include the corresponding intra-clause synchronization terms. We only exclude pairing a literal occurrence with itself; when a clause contains two copies of $x$ (e.g., $x\lor x\lor z$), we treat them as distinct occurrences but keep the same notation $x_{\cl{c}}$ for simplicity.
\paragraph{Intended semantics for variable qubits.}
\begin{enumerate}[label=(\alph*)]
\item  In a \(k\)-qubit circuit, size-optimality forces \(x_{\cl{c}},\hat{x}_{\cl{c}},\bar{x}_{\cl{c}}\) to lie in the \emph{same} qubit wire. Now, the terms \(\hat{x}_{\cl{c}}\,x_{\cl{c}}\) and \(\hat{x}_{\cl{c}}\,\bar{x}_{\cl{c}}\) jointly force \(x_{\cl{c}}\)
      and \(\bar{x}_{\cl{c}}\) to lie on the two edges of the middle segment
      \(\hat{x}_{\cl{c}}\). To summarize, two Hadamard gates cut that qubit wire into the three segments \(\{x_{\cl{c}},\hat{x}_{\cl{c}},\bar{x}_{\cl{c}}\}\) (or the reversed ordering). 
\item The terms \(b\ \hat{t}_{\cl{c}}\ (\cdot)\) enforce that \(\hat{t}_{\cl{c}}\) must overlap \emph{each} of the symbols \(\{x_{\cl{c}},\hat{x}_{\cl{c}},\bar{x}_{\cl{c}}\}\), hence the $t_{\cl{c}}$ segment in the truth-marker qubit wire overlaps exactly one of $x_{\cl{c}}$ and $\bar{x}_{\cl{c}}$ vertically in a \(k\)-qubit circuit.
\end{enumerate}

\begin{figure}[h]
  \centering
  \begin{tikzpicture}[xscale=1.5,yscale=1.5]
    \foreach \y in {0,1,2} {\draw (0,\y) -- (8,\y);}
    \foreach \t in {0,8}{
        \node[draw,fill=blue!20,inner sep=5pt] at (\t,0) {$H$};
    }
    \node at (4,-0.25) {$b$};
    \foreach \t in {0,1.6,6.4,8}{
        \node[draw,fill=blue!20,inner sep=5pt] at (\t,1) {$H$};
    }
    \node at (1,1.25) {$t_{\cl{c}}$};
    \node at (4,1.25) {$\hat{t}_{\cl{c}}$};
    \node at (7,1.25) {$\tilde{t}_{\cl{c}}$};
    \foreach \t in {0,2.8,5.2,8}{
        \node[draw,fill=blue!20,inner sep=5pt] at (\t,2) {$H$};
    }
    \node at (1.2,2.25) {$x_{\cl{c}}$};
    \node at (4,2.25) {$\hat{x}_{\cl{c}}$};
    \node at (6.8,2.25) {$\bar{x}_{\cl{c}}$};
    \draw[cczlink] (2.24,0) -- (2.24,2);
    \foreach \y in {0,1,2}{\fill[cczdot] (2.24,\y) circle;}
    \draw[cczlink] (4.4,0) -- (4.4,2);
    \foreach \y in {0,1,2}{\fill[cczdot] (4.4,\y) circle;}
    \draw[cczlink] (5.76,0) -- (5.76,2);
    \foreach \y in {0,1,2}{\fill[cczdot] (5.76,\y) circle;}
\end{tikzpicture}
  \caption{Variable gadget.  Supports: $b=(0,1)$; $t_{\cl{c}}=(0,0.25)$, $\hat{t}_{\cl{c}}=(0.25,0.75)$, $\tilde{t}_{\cl{c}}=(0.75,1)$; $x_{\cl{c}}=(0,0.3)$, $\hat{x}_{\cl{c}}=(0.3,0.7)$, $\bar{x}_{\cl{c}}=(0.7,1)$.  CCZ columns enforce the cubic terms $b\,\hat{t}_{\cl{c}}\,x_{\cl{c}}$, $b\,\hat{t}_{\cl{c}}\,\hat{x}_{\cl{c}}$, and $b\,\hat{t}_{\cl{c}}\,\bar{x}_{\cl{c}}$.}
\end{figure}

\subsubsection{Clause Gadget}\label{subsec:clause-gadget}

A three-literal clause
\[
    \cl{c} \;=\; (\lit{x}\;\lor\;\lit{y}\;\lor\;\lit{z})
\]
is represented as the \emph{nesting of two binary disjunctions}
\[
   p_{\cl{c}} \;=\; \lit{x}\lor \lit{y},
   \qquad
   o_{\cl{c}} \;=\; p_{\cl{c}}\lor \lit{z} .
\]
Thus \(p_{\cl{c}}\) is the intermediate truth value of the first two literals,
whereas \(o_{\cl{c}}\) is the output truth value of the whole clause.
The construction therefore introduces \emph{two} additional qubits and six new algebraic symbols that appear only in the present gadget:
\[
   \overleftarrow{p_{\cl{c}}},\,p_{\cl{c}},\,\overrightarrow{p_{\cl{c}}}
   \quad\text{and}\quad
   \overleftarrow{o_{\cl{c}}},\,o_{\cl{c}},\,\overrightarrow{o_{\cl{c}}}.
\]

\paragraph{Binary OR Gadgets.}
For any ordered pair \((u,v)\) of input occurrence-symbols (either $p_{\cl{c}}$ or one of the literal-occurrence symbols of the clause), we append the following polynomial to the global polynomial $f_{\Phi}$:
\[
   G_{u\lor v}
   \;=\;
   b\,\overleftarrow{q}\,u
   \;+\;
      b\,q\,u
   \;+\;
   b\,q\,v
   \;+\;
         b\,\overrightarrow{q}\,v
   \;+\;
   \overleftarrow{q}\,q
   \;+\;
   \overrightarrow{q}\,q ,
\]
where \(q\in\{p_{\cl{c}},o_{\cl{c}}\}\). One can check that, after restoring any suppressed occurrence indices, the monomials of $G_{u\lor v}$ are distinct from each other and from previously appended ones.

\paragraph{Intended semantics for Binary OR Gadgets.}
\begin{enumerate}[label=(\alph*)]
\item \emph{One-qubit placement.}  
      The same reason as in \cref{subsec:var-gadget} shows that the quadratic terms \(\overleftarrow{q}\,q\) and \(\overrightarrow{q}\,q\) force
      \(\overleftarrow{q},q,\overrightarrow{q}\) to lie on a single qubit wire intersected by two \(H\) gates in any \(k\)-qubit circuit.
\item \emph{Bounding window via \(b\).}  
      Cubic monomials $b\,\overleftarrow{q}\,u$ and $b\,q\,u$ force the junction point of $\overleftarrow{q}$ and $q$ to lie in $\supp(u)$; recall the definition of $\supp(\cdot)$ in \cref{def:supp}. Similarly, the junction point of $\overrightarrow{q}$ and $q$ must lie in $\supp(v)$. Consequently, $\supp(q)$ must be contained in the
  interval spanned by \(\supp(u)\cup\supp(v)\). This yields the soundness direction: if $\supp(q)$ overlaps $\supp(t_{\cl{c}})$, then at least one of $\supp(u)$ and $\supp(v)$ also overlaps $\supp(t_{\cl{c}})$. Conversely, if at least one of $\supp(u)$ and $\supp(v)$ overlaps $\supp(t_{\cl{c}})$, the junction points can be chosen so that $\supp(q)$ overlaps $\supp(t_{\cl{c}})$.\footnote{A subtlety is that $\supp(q)$ may intersect with $\supp(\tilde{t}_{\cl{c}})$. However, this is not a problem, since we should think of a variable to be \bool{true} if and only if its segment overlaps with $\supp(t_{\cl{c}})$, and \bool{false} otherwise. We do not care about $\supp(\tilde{t}_{\cl{c}})$.} In other words, the gadget can be placed with $q$ \bool{true} whenever at least one of $u,v$ is supposed to be \bool{true} in the assignment, and any placement with $q$ \bool{true} certifies that at least one input is \bool{true}.
\end{enumerate} 

\begin{figure}[!htbp]
  \centering
  \begin{tikzpicture}[xscale=1.5, yscale=1.5]
    \foreach \y in {0,1,2,3,4} {
        \draw (0,\y) -- (8,\y);
    }

    \foreach \t in {0,8} {
        \node[draw,fill=blue!20,inner sep=5pt] at (\t,0) {$H$};
    }
    \node at (4,-0.25) {$b$};

    \foreach \t in {0,2,6,8} {
        \node[draw,fill=blue!20,inner sep=5pt] at (\t,1) {$H$};
    }
    \node at (1,1.25)   {$t_{\cl{c}}$};
    \node at (4,1.25)   {$\hat{t}_{\cl{c}}$};
    \node at (7,1.25)   {$\tilde{t}_{\cl{c}}$};

    \foreach \t in {0.8,4.8} {
        \node[draw,fill=blue!20,inner sep=5pt] at (\t,2) {$H$};
    }
    \node at (2.8,2.25) {$u$};

    \foreach \t in {2.2,7.2} {
        \node[draw,fill=blue!20,inner sep=5pt] at (\t,3) {$H$};
    }
    \node at (4.8,3.25) {$v$};

    \foreach \t in {0,1.6,6.4,8} {
        \node[draw,fill=blue!20,inner sep=5pt] at (\t,4) {$H$};
    }
    \node at (0.8,4.25)   {$\overleftarrow{q}$};
    \node at (4,4.25)    {$q$};
    \node at (7.2,4.25)  {$\overrightarrow{q}$};

    \draw[cczlink] (1.2,0) -- (1.2,4);
    \foreach \y in {0,2,4} {\fill[cczdot] (1.2,\y) circle;}
    \draw[cczlink] (2.64,0) -- (2.64,4);
    \foreach \y in {0,2,4} {\fill[cczdot] (2.64,\y) circle;}
    \draw[cczlink] (3.6,0) -- (3.6,4);
    \foreach \y in {0,3,4} {\fill[cczdot] (3.6,\y) circle;}
    \draw[cczlink] (6.8,0) -- (6.8,4);
    \foreach \y in {0,3,4} {\fill[cczdot] (6.8,\y) circle;}
\end{tikzpicture}
  \caption{OR gadget. Supports: $b=(0,1)$; $t_{\cl{c}}=(0,\tfrac14)$, $\hat{t}_{\cl{c}}=(\tfrac14,\tfrac34)$, $\tilde{t}_{\cl{c}}=(\tfrac34,1)$; $u=(0.1,0.6)$; $v=(0.3,0.9)$; $\protect\overleftarrow{q}=(0,0.2)$, $q=(0.2,0.8)$, $\protect\overrightarrow{q}=(0.8,1)$.  Columns mark CCZ gates on $(b,u,\protect\overleftarrow{q})$, $(b,u,q)$, $(b,v,q)$, and $(b,v,\protect\overrightarrow{q})$.}
\end{figure}

\paragraph{Full clause gadget.}
For the clause \(\cl{c}=(\lit{x}\lor \lit{y}\lor \lit{z})\), we concatenate two OR gadgets
and add one final synchronizing term:
\[
   G_{\cl{c}}
   \;=\;
   G_{\lit{x}_{\cl{c}}\lor \lit{y}_{\cl{c}}}
   \;+\;
   G_{p_{\cl{c}}\lor \lit{z}_{\cl{c}}}
   \;+\;
   b\,t_{\cl{c}}\,o_{\cl{c}} .
\]
The last cubic monomial pins the segment \(o_{\cl{c}}\) to overlap the truth
marker \(t_{\cl{c}}\), thereby \emph{forcing} the clause to evaluate to
\bool{true} in every \(k\)-qubit circuit.

\subsubsection{Global Polynomial}\label{subsubsec:global-poly}
Summarizing the above constructions from \cref{subsec:marker-gadget} to \cref{subsec:clause-gadget}, we have constructed the global polynomial as follows:
\[
\begin{aligned}
f_{\Phi} \;&=G_{\mathrm{TM}} + \sum_{x\in X}G_x + \sum_{\cl{c}=(\lit{x}\lor \lit{y}\lor \lit{z})\in C}\bigl(G_{\mathrm{TM}_{\cl{c}}} + G_{x_{\cl{c}}} + G_{y_{\cl{c}}} + G_{z_{\cl{c}}} + G_{\cl{c}} \bigr)
\\
&\;=\sum_{\{\cl{c_1},\cl{c_2}\}\in \binom{C}{2}} \Bigl(b\,t_{\cl{c_1}}t_{\cl{c_2}} + b\,\hat{t}_{\cl{c_1}}\hat{t}_{\cl{c_2}} + b\,\tilde{t}_{\cl{c_1}}\tilde{t}_{\cl{c_2}}\Bigr)
\\[8pt]
&+ \sum_{x\in X}\!\ \sum_{\{(\ell_1,c_1),(\ell_2,c_2)\}\in\binom{\operatorname{Occ}(x)}{2}}\Bigl(b\,x_{\cl{c_1}}x_{\cl{c_2}} + b\,\hat{x}_{\cl{c_1}}\hat{x}_{\cl{c_2}} + b\,\bar{x}_{\cl{c_1}}\bar{x}_{\cl{c_2}}\Bigr)
\\
&+ \sum_{\cl{c}\in C}\!\ \Bigl(t_{\cl{c}}\hat{t}_{\cl{c}} + \hat{t}_{\cl{c}}\,\tilde{t}_{\cl{c}}\Bigr)
\\
&+ \sum_{\cl{c}\in C}\!\ \Bigl(\sum_{\substack{\text{literal occurrences}\\ \lit{x}\text{ in }\cl{c}}}\!\bigl(
      b\,\hat{t}_{\cl{c}}\,{x_{\cl{c}}} + b\,\hat{t}_{\cl{c}}\,\hat{x}_{\cl{c}}+b\,\hat{t}_{\cl{c}}\,\bar{x}_{\cl{c}}
      + \hat{x}_{\cl{c}}\,x_{\cl{c}}
      + \hat{x}_{\cl{c}}\,\bar{x}_{\cl{c}}
   \bigr)\Bigr)
\\
&+ \sum_{\cl{c}=(\lit{x}\lor \lit{y}\lor \lit{z})\in C}\!\ \Bigl(
      b\,\overleftarrow{p_{\cl{c}}}\,\lit{x}_{\cl{c}}
      + b\,p_{\cl{c}}\,\lit{x}_{\cl{c}}
      + b\,p_{\cl{c}}\,\lit{y}_{\cl{c}}
      + b\,\overrightarrow{p_{\cl{c}}}\,\lit{y}_{\cl{c}}
      + \overleftarrow{p_{\cl{c}}}\,p_{\cl{c}}
      + \overrightarrow{p_{\cl{c}}}\,p_{\cl{c}}
\\
&\hphantom{+\sum_{\cl{c}=(\lit{x}\lor \lit{y}\lor \lit{z})\in C}\!}
      + b\,\overleftarrow{o_{\cl{c}}}\,p_{\cl{c}}
            + b\,o_{\cl{c}}\,p_{\cl{c}}
      + b\,\overrightarrow{o_{\cl{c}}}\,\lit{z}_{\cl{c}}
      + b\,o_{\cl{c}}\,\lit{z}_{\cl{c}}
      + \overleftarrow{o_{\cl{c}}}\,o_{\cl{c}}
      + \overrightarrow{o_{\cl{c}}}\,o_{\cl{c}}
      + b\,t_{\cl{c}}\,o_{\cl{c}}\Bigr).
\end{aligned}
\]
With the unordered-pair and occurrence conventions above, all these monomials appear in $f_{\Phi}$ and never cancel out, and the construction of $f_{\Phi}$ from $\Phi$ is clearly polynomial-time (and also in logspace).

\subsection{Structural Lemmas and Correctness Proofs}\label{subsec:lemmas}

To establish the validity of the reduction, we must prove that the
polynomial $f_\Phi$ constructed above satisfies two complementary properties:

\begin{description}
  \item[Soundness.]
        If the original \SAT{} instance $(X,C)$ is \emph{unsatisfiable},
        then \textbf{every} quantum circuit realizing $f_{\Phi}$ requires
        strictly more than $k=1+6|C|$ qubits.
  \item[Completeness.]
        If $(X,C)$ \emph{is satisfiable}, there exists a quantum circuit with
        exactly $k=1+6|C|$ qubits that realizes $f_{\Phi}$.
\end{description}

Recall that in the global construction, all $18m+1$ symbols can be naturally partitioned into a single $\{b\}$ and the associated $18$ symbols for each clause $\cl{c}=(\lit{x} \lor \lit{y} \lor \lit{z})$, denoted $\mathcal{S}_{\cl{c}}$:
    \[
    \underbrace{\{t_{\cl{c}}, \hat{t}_{\cl{c}}, \tilde{t}_{\cl{c}}\}}_{\text{Truth marker}} \cup 
    \underbrace{\{x_{\cl{c}}, \hat{x}_{\cl{c}}, \bar{x}_{\cl{c}}\}}_{\text{Var } x} \cup 
    \underbrace{\{y_{\cl{c}}, \hat{y}_{\cl{c}}, \bar{y}_{\cl{c}}\}}_{\text{Var } y} \cup 
    \underbrace{\{z_{\cl{c}}, \hat{z}_{\cl{c}}, \bar{z}_{\cl{c}}\}}_{\text{Var } z} \cup 
    \underbrace{\{\overleftarrow{p_{\cl{c}}}, p_{\cl{c}}, \overrightarrow{p_{\cl{c}}}\}}_{\text{OR } p_{\cl{c}}} \cup 
    \underbrace{\{\overleftarrow{o_{\cl{c}}}, o_{\cl{c}}, \overrightarrow{o_{\cl{c}}}\}}_{\text{OR } o_{\cl{c}}}.
    \]
The key observation is that $\mathcal{S}_{\cl{c}}$ naturally requires exactly $6$ qubit wires, and any $6$-wire realization uniquely encodes a truth assignment. We establish this through the following structural lemmas.

\noindent
\begin{lemma}[Base-qubit and distinct clause isolation]\label{lem:qubit-iso}
    Every circuit that realizes \(f_{\Phi}\) places the symbol \(b\) on its own qubit containing \emph{no internal} Hadamard gates, i.e., $\supp(b) = I$. Moreover, such a circuit places $\mathcal{S}_{\cl{c}}$ on separate qubit wires for different clauses \(\cl{c}\in C\), i.e., there is no qubit wire containing symbols from two different clauses.
\end{lemma}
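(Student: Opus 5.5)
The plan is to use only one structural fact about how circuits produce polynomials: two segments on the same wire that are consecutive (separated by one internal $H$ gate) always contribute the quadratic monomial equal to their product. Every internal Hadamard gate between symbols $u$ and $v$ forces $uv$ to appear in $f_{\mathcal{C}}$, unless it is cancelled by another gate contributing the same monomial. So I would first record the list of quadratic monomials of $f_\Phi$. By \cref{subsubsec:global-poly} these are exactly $t_{\cl{c}}\hat t_{\cl{c}}$, $\hat t_{\cl{c}}\tilde t_{\cl{c}}$, $\hat x_{\cl{c}}x_{\cl{c}}$, $\hat x_{\cl{c}}\bar x_{\cl{c}}$ (for each literal occurrence), and $\overleftarrow{q}q$, $\overrightarrow{q}q$ for $q\in\{p_{\cl{c}},o_{\cl{c}}\}$. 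Each involves two symbols from the same $\mathcal{S}_{\cl{c}}$, and none involves $b$.

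The cancellation issue needs care. A monomial $uv$ in $f_{\mathcal{C}}$ is the XOR of all gates contributing $uv$. Such gates are an internal $H$ between $u$ and $v$ (at most one, since consecutive segments are unique) and any number of CZ gates on $(u,v)$. Suppose some internal $H$ gate separates $u,v$ where $uv$ is \emph{not} a monomial of $f_\Phi$. Then the circuit must also contain an odd number of CZ gates on $u,v$. These require $u$ and $v$ to be active at the same time, yet they lie on the same wire with disjoint consecutive supports. A CZ gate acts on two distinct qubits, so it cannot act on two segments of one wire simultaneously. Hence there is no cancellation, and every adjacent pair $(u,v)$ on a wire has $uv$ as a monomial of $f_\Phi$. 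I expect this to be the main point needing rigour: ruling out cancellation by CZ gates, using the fact that diagonal gates act on distinct wires at a single time.

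With this, I would conclude as follows. If $b$ shared a wire with another symbol, then some symbol adjacent to $b$ on that wire, say $u$, would give $bu$ as a monomial of $f_\Phi$. No quadratic monomial involves $b$, so this is a contradiction. Hence $b$ is alone on its wire with no internal $H$, and $\supp(b)=I$. Similarly, the symbols on any wire form a chain $s_1,\dots,s_\ell$ in which each consecutive pair is a quadratic monomial of $f_\Phi$. Every such monomial lies inside a single $\mathcal{S}_{\cl{c}}$, so by induction along the chain all symbols of a wire lie in the same $\mathcal{S}_{\cl{c}}$. This gives separation of different clauses.

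One remaining subtlety is repeated literals within a clause. Occurrence symbols carry suppressed indices, but they still belong to the same clause's $\mathcal{S}_{\cl{c}}$, so the argument is unaffected. The cubic cross-clause terms such as $b\,t_{\cl{c_1}}t_{\cl{c_2}}$ never arise from $H$ gates, so they impose no adjacency and do not interfere.
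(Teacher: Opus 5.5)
Your proof is correct and follows essentially the same route as the paper. An internal $H$ gate between adjacent segments contributes a quadratic monomial that no CZ gate can cancel, because the two segments lie on the same wire with disjoint supports. Since $f_\Phi$ has no quadratic monomial involving $b$ or linking symbols of different clauses, $b$ is alone on its wire and each wire stays inside a single $\mathcal{S}_{\cl{c}}$; your chain argument along a wire simply spells out the paper's remark that ``any wire containing symbols from two different clauses has such a transition.''
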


\begin{proof}
    \(b\) appears only in cubic monomials of \(f_{\Phi}\).  If an internal Hadamard gate appears on the qubit wire containing the segment representing $b$, it generates a quadratic term of
    the form \(bs\) for some $s\in \mathcal{S}$. Since the two adjacent segments separated by that Hadamard gate have disjoint open supports, no diagonal gate can produce the same monomial \(bs\), so this term cannot be canceled. This contradicts the construction of \(f_{\Phi}\). For the same reason, because there are no quadratic terms between symbols from different clauses, consecutive symbols on the same qubit wire cannot belong to different clauses. Any wire containing symbols from two different clauses has such a transition, so the symbols from different clauses must be placed on separate qubit wires.
\end{proof}

\begin{lemma}[Good clause structure]\label{lem:good-cl}
    For any clause $\cl{c}=(\lit{x} \lor \lit{y} \lor \lit{z})$, where each literal $\lit{x}\in\{x,\bar{x}\}$ (and similarly for $y,z$), the symbol set $\mathcal{S}_{\cl{c}}$ requires at least $6$ qubit wires in any circuit realizing $f_{\Phi}$. If exactly $6$ wires are used, then the configuration on each wire is determined (up to reversal) as
        \[
        t_{\cl{c}}-\hat{t}_{\cl{c}}-\tilde{t}_{\cl{c}},\quad x_{\cl{c}}-\hat{x}_{\cl{c}}-\bar{x}_{\cl{c}},\quad y_{\cl{c}}-\hat{y}_{\cl{c}}-\bar{y}_{\cl{c}},\quad z_{\cl{c}}-\hat{z}_{\cl{c}}-\bar{z}_{\cl{c}},\quad \overleftarrow{p_{\cl{c}}}-p_{\cl{c}}-\overrightarrow{p_{\cl{c}}},\quad \overleftarrow{o_{\cl{c}}}-o_{\cl{c}}-\overrightarrow{o_{\cl{c}}}.
        \]
    If the conditions are met, we will call the clause $\cl{c}$ \emph{a good clause} for the circuit realizing $f_{\Phi}$.
\end{lemma}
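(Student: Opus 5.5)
The plan is a counting argument on the symbols of $\mathcal{S}_{\cl{c}}$, followed by a structural analysis of the equality case.

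By \cref{lem:qubit-iso}, the wires holding symbols of $\mathcal{S}_{\cl{c}}$ hold no other symbols. A wire carrying $s$ symbols has exactly $s-1$ internal Hadamard gates. Each internal Hadamard gate contributes the quadratic monomial of its two adjacent symbols. No CZ gate can cancel it, because those two segments have disjoint open supports; this is the same argument as in \cref{lem:qubit-iso}. Distinct internal $H$ gates also give distinct monomials, since two segments can be adjacent at most once. Hence the internal $H$ gates on the wires of $\mathcal{S}_{\cl{c}}$ inject into the set of quadratic monomials of $f_\Phi$ supported inside $\mathcal{S}_{\cl{c}}$.

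Reading off the global polynomial, there are exactly $12$ such monomials:
\begin{itemize}
\item the truth-marker pairs $t_{\cl{c}}\hat t_{\cl{c}}$ and $\hat t_{\cl{c}}\tilde t_{\cl{c}}$;
\item the pairs $\hat x_{\cl{c}}x_{\cl{c}}$ and $\hat x_{\cl{c}}\bar x_{\cl{c}}$, and likewise for $y$ and $z$;
\item the pairs $\overleftarrow{q}q$ and $\overrightarrow{q}q$ for $q\in\{p_{\cl{c}},o_{\cl{c}}\}$.
\end{itemize}
No other quadratic terms exist: the terms of $G_x$ and $G_{\mathrm{TM}}$ are cubic and involve $b$. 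So if the clause occupies $r$ wires, then $18-r\le 12$, giving $r\ge 6$.

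If $r=6$, the injection is a bijection. Every one of the 12 quadratic monomials is realized by an internal $H$ gate, and no CZ gates act inside $\mathcal{S}_{\cl{c}}$. Consider the graph on $\mathcal{S}_{\cl{c}}$ whose edges are these 12 monomials. It is the disjoint union of six paths on three vertices:
\[
t_{\cl{c}}-\hat t_{\cl{c}}-\tilde t_{\cl{c}},\quad
x_{\cl{c}}-\hat x_{\cl{c}}-\bar x_{\cl{c}},\ \dots,\quad
\overleftarrow{o_{\cl{c}}}-o_{\cl{c}}-\overrightarrow{o_{\cl{c}}}.
\]
The wire-adjacency graph of the circuit, restricted to $\mathcal{S}_{\cl{c}}$, is a disjoint union of paths (one per wire), and its edge set equals this graph's edge set. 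The connected components are therefore the wires. So each wire carries exactly one triple, in the path order shown or its reversal, which is the claimed configuration.

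The main subtlety is justifying the injection: a quadratic monomial must not be produced both by an $H$ gate and by another gate cancelling it. I handle this with the disjoint-support observation together with the fact that each monomial appears in $f_\Phi$ with coefficient $1$, so it is produced an odd number of times. For $r=6$ the count forces exactly one $H$ gate per monomial. An extra CZ producing a monomial already generated by an $H$ gate is impossible because of the disjoint supports. A monomial arising twice from a CZ plus something else cannot happen either, since all 12 monomials are already accounted for by the 12 internal $H$ gates.
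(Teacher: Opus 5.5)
Your proof is correct. It shares the paper's key ingredient: an internal Hadamard gate's monomial cannot be cancelled, so every wire adjacency inside $\mathcal{S}_{\cl{c}}$ must be one of the twelve quadratic monomials of $f_\Phi$. Where you differ is in how you use that fact.

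The paper argues per wire. Each wire traces a path in the quadratic graph on $\mathcal{S}_{\cl{c}}$, which is a disjoint union of six three-vertex paths. So no wire holds more than $3$ symbols, $18/3$ gives the bound, and equality forces each wire to be a full triple.

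You count globally instead. The $r$ wires carry $18-r$ internal Hadamard gates, which inject into the $12$ available monomials. In the equality case the injection is a bijection, the wire-adjacency graph equals the quadratic graph, and the components are the wires.

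What each route buys:
\begin{itemize}
\item Yours is the counting intuition $k \ge |\mathcal{S}| - h$ from the technical overview, made rigorous. It generalizes directly: a symbol set confined to its own wires needs at least as many wires as its number of symbols minus the number of quadratic monomials inside it.
\item Yours also spells out the non-cancellation step, which the paper leaves implicit.
\item The paper's route directly gives a stronger local fact: in any realization, not only a $6$-wire one, each wire's contents form a subpath of a single gadget triple.
\end{itemize}

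One small overstatement. For $r=6$ you cannot conclude that no CZ gates act inside $\mathcal{S}_{\cl{c}}$, only that their net contribution cancels (two identical CZ gates, for instance). This is harmless, because your structural conclusion uses only the wire adjacencies.
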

\begin{proof}
Consider the symbols in $\mathcal{S}_{\cl{c}}$:
\[
\{t_{\cl{c}}, \hat{t}_{\cl{c}}, \tilde{t}_{\cl{c}}\}\cup\{x_{\cl{c}}, \hat{x}_{\cl{c}}, \bar{x}_{\cl{c}}\}\cup\{y_{\cl{c}}, \hat{y}_{\cl{c}}, \bar{y}_{\cl{c}}\}\cup\{z_{\cl{c}}, \hat{z}_{\cl{c}}, \bar{z}_{\cl{c}}\}\cup\{\overleftarrow{p_{\cl{c}}}, p_{\cl{c}}, \overrightarrow{p_{\cl{c}}}\}\cup\{\overleftarrow{o_{\cl{c}}}, o_{\cl{c}}, \overrightarrow{o_{\cl{c}}}\}.
\]
The baseline quadratic pairs among them are
\[
\hat{t}_{\cl{c}} t_{\cl{c}},\;\hat{t}_{\cl{c}}\tilde{t}_{\cl{c}};\quad
\hat{x}_{\cl{c}} x_{\cl{c}},\;\hat{x}_{\cl{c}}\bar{x}_{\cl{c}};\quad
\hat{y}_{\cl{c}} y_{\cl{c}},\;\hat{y}_{\cl{c}}\bar{y}_{\cl{c}};\quad
\hat{z}_{\cl{c}} z_{\cl{c}},\;\hat{z}_{\cl{c}}\bar{z}_{\cl{c}};\quad
\overleftarrow{p_{\cl{c}}}p_{\cl{c}},\;\overrightarrow{p_{\cl{c}}}p_{\cl{c}};\quad
\overleftarrow{o_{\cl{c}}}o_{\cl{c}},\;\overrightarrow{o_{\cl{c}}}o_{\cl{c}}.
\]
If a qubit wire were to contain $4$ of these symbols, the consecutive adjacency along that wire would yield a length-$4$ chain of quadratic terms, which would introduce a quadratic monomial not present in $f_{\Phi}$. Hence every wire contains at most $3$ symbols. Since $|\mathcal{S}_{\cl{c}}|=18$, at least $6$ wires are necessary. If exactly $6$ wires are used, then each wire contains exactly $3$ symbols, and the quadratic pairs uniquely force the six chains listed above (up to reversal on each wire).
\end{proof}
In the following, we will prove soundness and completeness simultaneously. The soundness result is an impossibility proof, and completeness is usually easier since it just constructs a satisfying circuit. We denote the qubit wires with symbols
\[
b,\quad t_{\cl{c}}-\hat{t}_{\cl{c}}-\tilde{t}_{\cl{c}} ,\quad x_{\cl{c}}-\hat{x}_{\cl{c}}-\bar{x}_{\cl{c}},\quad \overleftarrow{q}-q-\overrightarrow{q}
\]
as the $b$-, $\hat{t}_{\cl{c}}$-, $x_{\cl{c}}$-, and $q$-qubit wires, respectively.
\begin{definition}[Encode]\label{def:encode}
    For any \emph{good clause} \(\cl{c}=(\lit{x}\lor\lit{y}\lor\lit{z})\)
    and symbol \(s\in \{x_{\cl{c}}, \bar{x}_{\cl{c}}, y_{\cl{c}}, \bar{y}_{\cl{c}}, z_{\cl{c}},\allowbreak \bar{z}_{\cl{c}}, p_{\cl{c}}, o_{\cl{c}}\}\), 
    we say that the segment \(s\) \emph{encodes} \(\bool{true}\) 
    if it overlaps (\cref{def:overlap}) the truth marker segment \(t_{\cl{c}}\), 
    and encodes \(\bool{false}\) otherwise.
\end{definition}

\begin{lemma}[Encoding well-definedness]\label{lem:encode-wd}
    For any good clause $\cl{c}$ and any underlying variable $x$ represented in $\cl{c}$, the segment $t_{\cl{c}}$ overlaps with exactly one of $\{x_{\cl{c}}, \bar{x}_{\cl{c}}\}$, making the encoding in \cref{def:encode} well-defined.
\end{lemma}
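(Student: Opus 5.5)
The plan is to work entirely with the support intervals of a good clause and reduce everything to one-dimensional interval geometry on $I=(0,1)$. Since $\cl{c}$ is good, \cref{lem:good-cl} puts $t_{\cl{c}},\hat{t}_{\cl{c}},\tilde{t}_{\cl{c}}$ on a single wire cut by exactly two Hadamard gates. With the canonical orientation fixed in \cref{subsec:marker-gadget}, this gives $\supp(t_{\cl{c}})=(0,\alpha)$, $\supp(\hat{t}_{\cl{c}})=(\alpha,\beta)$ and $\supp(\tilde{t}_{\cl{c}})=(\beta,1)$ for some $0<\alpha<\beta<1$. Likewise the $x_{\cl{c}}$-wire carries $x_{\cl{c}}-\hat{x}_{\cl{c}}-\bar{x}_{\cl{c}}$ in one of the two orders. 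Its supports are therefore $(0,\gamma),(\gamma,\delta),(\delta,1)$ with $0<\gamma<\delta<1$, where the outer two are $x_{\cl{c}}$ and $\bar{x}_{\cl{c}}$ in some order. By \cref{lem:qubit-iso}, $\supp(b)=I$.

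The first step turns the cubic terms $b\,\hat{t}_{\cl{c}}\,x_{\cl{c}}$, $b\,\hat{t}_{\cl{c}}\,\hat{x}_{\cl{c}}$ and $b\,\hat{t}_{\cl{c}}\,\bar{x}_{\cl{c}}$ into overlap conditions. These monomials occur in $f_\Phi$, and no internal Hadamard gate produces a cubic monomial, so each must come from a CCZ gate acting simultaneously on the three segments. For this step I will use that, since the circuit has the minimum number of wires for this clause, no extra CZ or CCZ gates can cancel the monomials in pairs. More carefully: $f_\Phi$ has coefficient $1$ on each of these monomials, so an odd number, hence at least one, of CCZ gates act on that triple. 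Hence the three supports have a common point, and in particular $\hat{t}_{\cl{c}}$ overlaps each of $x_{\cl{c}}$, $\hat{x}_{\cl{c}}$ and $\bar{x}_{\cl{c}}$.

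The second step is the interval argument. Overlap of $(\alpha,\beta)$ with $(0,\gamma)$ gives $\alpha<\gamma$, and overlap with $(\delta,1)$ gives $\delta<\beta$. So $\alpha<\gamma<\delta<\beta$, i.e.\ $\supp(\hat{x}_{\cl{c}})\subseteq\supp(\hat{t}_{\cl{c}})$. Consequently $(0,\alpha)$ meets $(0,\gamma)$, because both intervals start at $0$ and are nonempty. On the other hand $(0,\alpha)$ misses $(\delta,1)$, because $\alpha<\delta$. Exactly one of the outer segments $x_{\cl{c}},\bar{x}_{\cl{c}}$ occupies $(0,\gamma)$, so $t_{\cl{c}}$ overlaps exactly one of them, as required. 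The same argument applies verbatim to $y$ and $z$, and to each occurrence separately when a variable repeats within a clause.

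The only point needing care, and the expected main obstacle, is justifying that each cubic monomial really forces a CCZ gate on exactly those three segments. The resolution is the parity argument above: every CCZ gate contributes the product of the three symbols it touches, so coefficient $1$ in $f_\Phi$ forces at least one such gate, and a CCZ gate acts at a single time on all three wires. Each of $b$, $\hat{t}_{\cl{c}}$ and $x_{\cl{c}}$ lies on its own wire in a good clause, so the three symbols occupy three distinct wires and the gate is well defined. Degenerate endpoint coincidences are excluded because supports are open intervals and overlap is strict, which is what gives the strict inequalities used above.
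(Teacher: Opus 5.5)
Your proposal is correct and follows the paper's proof: the three cubics $b\,\hat{t}_{\cl{c}}\,(\cdot)$ force $\supp(\hat{t}_{\cl{c}})$ to meet all three segments of the variable wire. This squeezes both inner junctions into $\supp(\hat{t}_{\cl{c}})$, so $t_{\cl{c}}$ meets the left outer segment and misses the right one; your single parametrization just merges the paper's two symmetric cases. The parity argument you give for why each cubic monomial forces a CCZ gate is a useful detail the paper leaves implicit, and it is that argument, not wire-minimality (which does not rule out cancelling pairs of gates), that does the work.
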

\begin{proof}
    The cubics
    \[
        b\,\hat{t}_{\cl{c}}\,x_{\cl{c}},  \quad b\,\hat{t}_{\cl{c}}\,\hat{x}_{\cl{c}}, \quad b\,\hat{t}_{\cl{c}}\,\bar{x}_{\cl{c}}
    \]
    force the middle segment \(\hat{t}_{\cl{c}}\) of the truth-marker qubit wire to overlap
    \emph{each} of \(x_{\cl{c}},\hat{x}_{\cl{c}},\bar{x}_{\cl{c}}\).
    
    Without loss of generality, suppose 
    \[
    \supp(t_{\cl{c}}) = (0, l_{\hat{t}_{\cl{c}}}), \quad 
    \supp(\hat{t}_{\cl{c}}) = (l_{\hat{t}_{\cl{c}}}, r_{\hat{t}_{\cl{c}}}), \quad 
    \supp(\tilde{t}_{\cl{c}}) = (r_{\hat{t}_{\cl{c}}}, 1).
    \]
    Since the truth-marker qubit wire is ordered \(t_{\cl{c}}\,\hat{t}_{\cl{c}}\,\tilde{t}_{\cl{c}}\), the segment \(\hat{t}_{\cl{c}}\) touches \(t_{\cl{c}}\) on its left edge at \(l_{\hat{t}_{\cl{c}}}\) and \(\tilde{t}_{\cl{c}}\) on its right edge at \(r_{\hat{t}_{\cl{c}}}\).
    
    Because \(x_{\cl{c}}\) and \(\bar{x}_{\cl{c}}\) lie on \emph{opposite} sides of \(\hat{x}_{\cl{c}}\) on their qubit wire, we consider two cases:
    
    \noindent\textbf{Case 1: \(x_{\cl{c}}\) on the left, \(\bar{x}_{\cl{c}}\) on the right.}
    Write the variable-wire supports as
    \[
        \supp(x_{\cl{c}})=(0,a),\qquad
        \supp(\hat{x}_{\cl{c}})=(a,b),\qquad
        \supp(\bar{x}_{\cl{c}})=(b,1).
    \]
    The cubics involving \(x_{\cl{c}}\) and \(\bar{x}_{\cl{c}}\) force
    \(a>l_{\hat{t}_{\cl{c}}}\) and \(b<r_{\hat{t}_{\cl{c}}}\).  Since \(a<b\), this implies
    \(x_{\cl{c}}\) overlaps \(t_{\cl{c}}=(0,l_{\hat{t}_{\cl{c}}})\) and is disjoint from
    \(\tilde{t}_{\cl{c}}=(r_{\hat{t}_{\cl{c}}},1)\), while \(\bar{x}_{\cl{c}}\) overlaps
    \(\tilde{t}_{\cl{c}}\) and is disjoint from \(t_{\cl{c}}\).
    By \cref{def:encode}, $x_{\cl{c}}$ encodes \(\bool{true}\) and \(\bar{x}_{\cl{c}}\) encodes \(\bool{false}\).
    
    \noindent\textbf{Case 2: \(x_{\cl{c}}\) on the right, \(\bar{x}_{\cl{c}}\) on the left.}
    By symmetry, $x_{\cl{c}}$ encodes \(\bool{false}\) and \(\bar{x}_{\cl{c}}\) encodes \(\bool{true}\).
    
    Therefore, \(t_{\cl{c}}\) overlaps with exactly one of \(\{x_{\cl{c}}, \bar{x}_{\cl{c}}\}\), making the encoding well-defined.
\end{proof}

We now relate occurrences of the same variable across good clauses.

\begin{lemma}[Variable-value realizability]\label{lem:var-rlz}\hfill
    \begin{enumerate}[(a)]    
    \item (Soundness) For any fixed variable $x$, all occurrences whose underlying variable is $x$ in good clauses induce the same value of $x$: for each such occurrence in a good clause $\cl{c}$, the segment $x_{\cl{c}}$ overlaps $t_{\cl{c}}$ if and only if this common value is \bool{true}, and the segment $\bar{x}_{\cl{c}}$ overlaps $t_{\cl{c}}$ if and only if this common value is \bool{false}.
    \item (Completeness) Given any truth assignment $w$ of variables in $X$, there is a choice of the positions of the two Hadamard gates and the ordering of symbols on all $\hat{t}_{\cl{c}}$-qubit and $\hat{x}_{\cl{c}}$-qubit wires such that $x_{\cl{c}}$ overlaps $t_{\cl{c}}$ and $\bar{x}_{\cl{c}}$ overlaps $\tilde{t}_{\cl{c}}$ if $w(x)=\bool{true}$, or vice versa if $w(x)=\bool{false}$, while realizing exactly the same quadratic terms and cubic monomials as in $f_\Phi$.
    \end{enumerate}

\end{lemma}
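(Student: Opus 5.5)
The plan is to reduce both parts to the left/right orientation of the three-segment wires in good clauses. For a good clause $\cl{c}$, \cref{lem:good-cl} says every wire of $\mathcal{S}_{\cl{c}}$ carries exactly two internal Hadamard gates. Hence its end segments have supports of the form $(0,a)$ and $(b,1)$, with middle segment $(a,b)$ and $a<b$. By \cref{lem:qubit-iso} we have $\supp(b)=I$, so every cubic monomial $b\,u\,v$ in $f_\Phi$ forces $\supp(u)\cap\supp(v)\neq\varnothing$. By \cref{prelim:helly}, this is also sufficient: a CCZ gate can be placed at a common point.

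\textbf{Soundness.} First I would justify the canonical orientation of the truth-marker wires across good clauses. Take good clauses $\cl{c_1},\cl{c_2}$ and suppose the orientations are opposite, so $t_{\cl{c_1}}=(0,l_1)$, $\tilde t_{\cl{c_1}}=(r_1,1)$ and $t_{\cl{c_2}}=(r_2,1)$, $\tilde t_{\cl{c_2}}=(0,l_2)$. The terms $b\,t_{\cl{c_1}}t_{\cl{c_2}}$ and $b\,\tilde t_{\cl{c_1}}\tilde t_{\cl{c_2}}$ force $r_2<l_1$ and $r_1<l_2$. Combined with $l_i<r_i$ this gives the cycle $r_2<l_1<r_1<l_2<r_2$, a contradiction. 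So all $t_{\cl{c}}$ sit on the left.

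By the proof of \cref{lem:encode-wd}, $x_{\cl{c}}$ encodes \bool{true} exactly when $x_{\cl{c}}$ is the left segment of its wire. Consistency of $x$ therefore amounts to all occurrence wires of $x$ in good clauses having the same orientation.

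Suppose two occurrences (possibly in the same clause) have opposite orientations. Then $x_{\cl{c_1}}=(0,a_1)$, $\bar x_{\cl{c_1}}=(b_1,1)$, while $x_{\cl{c_2}}=(b_2,1)$, $\bar x_{\cl{c_2}}=(0,a_2)$. The terms $b\,x_{\cl{c_1}}x_{\cl{c_2}}$ and $b\,\bar x_{\cl{c_1}}\bar x_{\cl{c_2}}$ force $b_2<a_1$ and $b_1<a_2$. Together with $a_i<b_i$ this gives $b_2<a_1<b_1<a_2<b_2$, again a contradiction. This yields (a), with the common value read off from any good occurrence.

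\textbf{Completeness.} I would give explicit coordinates. Set $t_{\cl{c}}=(0,\tfrac14)$, $\hat t_{\cl{c}}=(\tfrac14,\tfrac34)$, $\tilde t_{\cl{c}}=(\tfrac34,1)$ for every clause. For each occurrence of $x$, set $x_{\cl{c}}=(0,0.3)$, $\hat x_{\cl{c}}=(0.3,0.7)$, $\bar x_{\cl{c}}=(0.7,1)$ if $w(x)=\bool{true}$, and use the reversed layout otherwise. The internal Hadamard gates produce exactly the quadratic terms $t_{\cl{c}}\hat t_{\cl{c}}$, $\hat t_{\cl{c}}\tilde t_{\cl{c}}$, $\hat x_{\cl{c}}x_{\cl{c}}$ and $\hat x_{\cl{c}}\bar x_{\cl{c}}$.

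For the cubic monomials I would check pairwise overlaps.
\begin{itemize}
\item For $b\,\hat t_{\cl{c}}\,s$: $\hat t_{\cl{c}}$ meets all three of $(0,0.3)$, $(0.3,0.7)$, $(0.7,1)$.
\item For the $G_{\mathrm{TM}}$ terms: corresponding truth-marker segments have identical supports.
\item For the $G_x$ terms: all occurrences of $x$ share one orientation, so corresponding segments coincide.
\end{itemize}
Each such monomial is then realized by a single CCZ gate at a common interior time point, together with $b$. No CZ or $Z$ gates are added, so exactly the listed monomials of $f_\Phi$ arise.

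The main obstacle is the soundness step, specifically making sure the end segments really touch $0$ and $1$. This relies on each wire of a good clause carrying exactly the three listed symbols, which \cref{lem:good-cl} and \cref{lem:qubit-iso} provide. It is also important that the inequality-cycle argument needs no assumption on the exact Hadamard positions.
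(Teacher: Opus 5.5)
Your proposal is correct and follows the paper's proof. Soundness comes from the synchronizing cubics forcing a common orientation on the truth-marker wires and on the occurrence wires of $x$; your inequality cycles $r_2<l_1<r_1<l_2<r_2$ and $b_2<a_1<b_1<a_2<b_2$ just make precise the paper's remark that reversed end segments cannot both overlap. Completeness comes from an explicit coordinate placement plus one CCZ per cubic. The only point to tighten is that your ``all $t_{\cl{c}}$ sit on the left'' actually establishes a common orientation; placing them on the left then uses the paper's global time-reversal convention.
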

\begin{proof}
    \smallskip\noindent
    \textbf{(a) Soundness.}
    The clause-synchronizing cubics
    \[
    b\,t_{\cl{c_1}}t_{\cl{c_2}},\quad b\,\hat{t}_{\cl{c_1}}\hat{t}_{\cl{c_2}},\quad b\,\tilde{t}_{\cl{c_1}}\tilde{t}_{\cl{c_2}}
    \]
    force the truth-marker wire of \(\cl{c_1}\) and that of \(\cl{c_2}\) to have the \emph{same order} (up to global reversal): if one of the two three-segment orders were reversed, the corresponding left-end and right-end segments could not both overlap.
    Likewise, the variable-synchronizing cubics
    \[
    b\,x_{\cl{c_1}}x_{\cl{c_2}},\quad b\,\hat{x}_{\cl{c_1}}\hat{x}_{\cl{c_2}},\quad b\,\bar{x}_{\cl{c_1}}\bar{x}_{\cl{c_2}}
    \]
    force the two occurrence wires whose underlying variable is \(x\) to have the same order (again up to global reversal), by the same interval-order argument.
    In each clause, the encoded value of the underlying variable is determined solely by whether \(x_{\cl{c}}\) or \(\bar{x}_{\cl{c}}\) overlaps \(t_{\cl{c}}\) (as in \cref{lem:encode-wd}).
    Since the relative orders on the \(t\)-qubit wire and the occurrence wire for \(x\) are synchronized across \(\cl{c_1},\cl{c_2}\), the overlap pattern of \(x_{\cl{c}}\) versus \(\bar{x}_{\cl{c}}\) with \(t_{\cl{c}}\) is identical in both occurrences. Hence all good occurrences with underlying variable \(x\) induce the same Boolean value of \(x\).
    
    \smallskip\noindent
    \textbf{(b) Completeness.}
    Start from any good-clause configuration in \cref{lem:good-cl} and fix a truth assignment \(w\).
    We now give an explicit placement of segment endpoints.
    
    For every clause \(\cl{c}\), place the truth-marker wire in the same order
    \(t_{\cl{c}}-\hat{t}_{\cl{c}}-\tilde{t}_{\cl{c}}\) with
    \[
        l_{\hat{t}_{\cl{c}}}=0.3,\qquad r_{\hat{t}_{\cl{c}}}=0.7,
    \]
    so that \(\supp(t_{\cl{c}})=(0,0.3)\) and \(\supp(\tilde{t}_{\cl{c}})=(0.7,1)\).
    
    For each variable \(x\) and each occurrence of \(x\) or \(\bar{x}\) in a clause \(\cl{c}\), choose the order on the \(\hat{x}_{\cl{c}}\)-qubit wire
    consistently across all such occurrences as follows.
    If \(w(x)=\bool{true}\), set
    \[
        l_{x_{\cl{c}}}=0.0,\quad r_{x_{\cl{c}}}=0.4,\quad
        l_{\bar{x}_{\cl{c}}}=0.6,\quad r_{\bar{x}_{\cl{c}}}=1.0,
    \]
    so that \(\supp(x_{\cl{c}})=(0,0.4)\) overlaps both \(\supp(t_{\cl{c}})\) and \(\supp(\hat{t}_{\cl{c}})\),
    while \(\supp(\bar{x}_{\cl{c}})=(0.6,1)\) overlaps both \(\supp(\tilde{t}_{\cl{c}})\) and \(\supp(\hat{t}_{\cl{c}})\).
    If \(w(x)=\bool{false}\), reverse the order by swapping the two choices above:
    \[
        l_{x_{\cl{c}}}=0.6,\quad r_{x_{\cl{c}}}=1.0,\quad
        l_{\bar{x}_{\cl{c}}}=0.0,\quad r_{\bar{x}_{\cl{c}}}=0.4.
    \]
    In either case, place the two Hadamard gates on the \(\hat{x}_{\cl{c}}\)-qubit wire at the boundaries
    between \(x_{\cl{c}},\hat{x}_{\cl{c}},\bar{x}_{\cl{c}}\), so the quadratic pairs on that wire are unchanged.
    
    This explicit placement ensures that the segment designated \(\bool{true}\) overlaps \(t_{\cl{c}}\),
    while the other overlaps \(\tilde{t}_{\cl{c}}\), and we can append CCZ gates between
    \(b\), \(\hat{t}_{\cl{c}}\), and \(x_{\cl{c}},\hat{x}_{\cl{c}},\bar{x}_{\cl{c}}\)
    to realize exactly the same marker--literal cubic monomials as in \(f_\Phi\).
    Because the truth-marker order is identical for all clauses and the \(x\)-qubit wire order is chosen
    consistently by \(w\), we can also append CCZ gates to realize every clause-synchronizing
    and literal-synchronizing cubic. Hence the circuit realizes all required monomials while
    encoding \(w\).
\end{proof}

\begin{lemma}[Binary OR correctness]\label{lem:or-rlz}
    \hfill
    \begin{enumerate}[label=(\alph*)]
        \item (Soundness) For a circuit realizing \(f_{\Phi}\) and any binary OR variable \(q = u \lor v\) in a \emph{good clause} $\cl{c}$ ($q$ can be $p_{\cl{c}}$ or $o_{\cl{c}}$), if segment \(q\) overlaps \(t_{\cl{c}}\) vertically, then at least one of the segments representing \(u,v\) overlaps \(t_{\cl{c}}\).
        \item (Completeness) Fix configurations of the input wires carrying \(u\) and \(v\). There is a choice of the positions of the two Hadamard gates and the ordering of symbols on the \(\overleftarrow{q}-q-\overrightarrow{q}\) qubit wire that realizes exactly the same quadratic terms and the same cubic monomials that concern $q$ as in $f_\Phi$. Moreover, if one of \(u,v\) overlaps \(t_{\cl{c}}\), the choice can be made so that \(q\) also overlaps \(t_{\cl{c}}\).
    \end{enumerate}
\end{lemma}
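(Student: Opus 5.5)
For soundness, we work inside a good clause $\cl{c}$, so by \cref{lem:good-cl} the $q$-wire is exactly $\overleftarrow{q}-q-\overrightarrow{q}$ up to reversal. By \cref{lem:qubit-iso} we have $\supp(b)=I$. Every cubic monomial $b\,s\,u$ arising from a CCZ gate therefore only requires $\supp(s)\cap\supp(u)\neq\varnothing$; by \cref{prelim:helly}, this is also sufficient for placing such a gate. The first step is to show that each cubic monomial of $G_{u\lor v}$ must come from a CCZ gate. The factor $b$ lies on its own wire with no internal $H$ gates, so these cubic terms cannot arise in any other way. Hence $\supp(\overleftarrow{q})$ and $\supp(q)$ both meet $\supp(u)$, and $\supp(q)$ and $\supp(\overrightarrow{q})$ both meet $\supp(v)$.

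Next, we orient the wire. Suppose first that the order is $\overleftarrow{q},q,\overrightarrow{q}$, and write $\supp(q)=(\alpha,\beta)$. Since $u$ meets both $(0,\alpha)$ and $(\alpha,\beta)$ and is an open interval, we get $\alpha\in\supp(u)$. Similarly, $\beta\in\supp(v)$. Now suppose $q$ overlaps $t_{\cl{c}}=(0,\ell)$, so $\alpha<\ell$. Then $l_u<\alpha<\ell$, and since $\supp(u)$ starts at $l_u<\ell$, the interval $\supp(u)$ meets $(0,\ell)$. So $u$ is \bool{true}. In this orientation $u$ alone suffices.

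In the reversed order $\overrightarrow{q},q,\overleftarrow{q}$, the roles swap: the left endpoint of $q$ lies in $\supp(v)$, and the same argument makes $v$ \bool{true}. Either way at least one input overlaps $t_{\cl{c}}$. The only subtlety is checking that the input segments $u$ and $v$ are themselves well defined. For $u=p_{\cl{c}}$ this is the $p$-wire of the same good clause; otherwise it is a literal segment, with \cref{lem:encode-wd}.

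For completeness, we fix the input wires and choose the $q$-wire with Hadamards at positions $\alpha<\beta$ satisfying $\alpha\in\supp(u)$ and $\beta\in\supp(v)$, using the order $\overleftarrow{q},q,\overrightarrow{q}$. If such a pair is not available, we swap to the reversed order with $\alpha\in\supp(v)$ and $\beta\in\supp(u)$. Then all four pairwise overlaps hold, and we place four CCZ gates with $b$, valid by Helly. The two internal $H$ gates give exactly $\overleftarrow{q}q$ and $\overrightarrow{q}q$, and no other monomials involving $q$ are introduced. If $u$ is \bool{true}, we choose the order with $\alpha\in\supp(u)$ and take $\alpha<\ell$, which puts $q$ over $t_{\cl{c}}$. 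If only $v$ is \bool{true}, we take the reversed order with $\alpha\in\supp(v)$ and $\alpha<\ell$.

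The main obstacle is guaranteeing $\alpha<\beta$ with each endpoint in the required interval. This fails if, say, $\supp(v)$ lies entirely to the left of $\supp(u)$ in the chosen orientation; in that case we use the reversal. In the paper's intended configurations, $u$ and $v$ both overlap $\hat{t}_{\cl{c}}$: literal segments do so by \cref{lem:encode-wd}, and $p_{\cl{c}}$ does so by construction. Because of this common point, in at least one orientation we can select $\alpha<\beta$ while keeping $\alpha<\ell$ whenever the desired input is \bool{true}. Taking $\alpha$ just inside the left end of the \bool{true} input and $\beta$ near the right end of the other input makes this explicit.
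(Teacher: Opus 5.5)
Your soundness argument is correct and is the paper's junction argument. In a good clause the $q$-wire is $\overleftarrow{q}-q-\overrightarrow{q}$ up to reversal, and the four cubic terms force the two junction points of $q$ into $\supp(u)$ and $\supp(v)$. The left endpoint of $q$ lying below $\ell$ then makes the corresponding input overlap $t_{\cl{c}}$. The paper phrases this as $l_q>\min(l_u,l_v)$ and also treats the mirrored marker orientation, which you cover by the canonical ordering. The unconditional part of (b) also matches the paper: put one junction point in each input support and orient the wire by comparing them.

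The gap is in the ``moreover'' clause of (b). You require the point taken in the \bool{true} input to be the \emph{left} endpoint $\alpha$ of $q$. That self-imposed requirement is what produces your ``main obstacle'', and you resolve it by assuming that $u$ and $v$ both overlap $\hat t_{\cl{c}}$. This is not a hypothesis of the lemma, which fixes \emph{arbitrary} input configurations; \cref{lem:encode-wd} is a statement about realizing circuits, not about a configuration you are handed. For $u=p_{\cl{c}}$ it is also not true ``by construction'', since no monomial of $f_\Phi$ links $p_{\cl{c}}$ to $\hat t_{\cl{c}}$. For example, take $\supp(\lit{x}_{\cl{c}})=\supp(\lit{y}_{\cl{c}})=(0,0.4)$ and $\supp(\hat t_{\cl{c}})=(0.3,0.7)$. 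Then the placement $\supp(p_{\cl{c}})=(0.1,0.2)$ satisfies every monomial of $G_{\lit{x}_{\cl{c}}\lor\lit{y}_{\cl{c}}}$, yet $p_{\cl{c}}$ misses $\hat t_{\cl{c}}$. So as written you would have to strengthen (b) to carry this extra invariant into the second gadget.

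The missing observation makes all of this unnecessary. If a junction point of $q$ lies in the open interval $\supp(t_{\cl{c}})$, then $q$ overlaps $t_{\cl{c}}$ whether that point is the left or the right endpoint of $\supp(q)$. Hence, as in the paper:
\begin{enumerate}
  \item pick the \bool{true} input's point in its intersection with $\supp(t_{\cl{c}})$;
  \item pick any distinct point in the other support;
  \item orient the wire by comparing the two points.
\end{enumerate}
In your own case split the fix is one line. If $u$ is \bool{true} but $\supp(v)$ lies entirely to the left of $\supp(u)$, then $r_v\le l_u<\ell$. So $v$ is \bool{true} as well, and the reversed orientation already puts the left endpoint of $q$ below $\ell$.
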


\begin{proof}
 (a) \textbf{Soundness.} 
    Suppose, for contradiction, that \(q\) overlaps \(t_{\cl{c}}\) while neither \(u\) nor \(v\) overlaps \(t_{\cl{c}}\).  
  Cubic monomials $b\,\overleftarrow{q}\,u$ and $b\,q\,u$ force the junction point of $\overleftarrow{q}$ and $q$ to lie in $\supp(u)$.
  Similarly, the junction point of $\overrightarrow{q}$ and $q$ must lie in $\supp(v)$.
  Consequently, $\supp(q)$ must be contained in the interval spanned by \(\supp(u)\cup\supp(v)\), i.e.,
  \[
      l_q > \min(l_u,l_v), \qquad r_q < \max(r_u,r_v).
  \]
  Since \(q\) overlaps \(t_{\cl{c}}\), we have \(l_q<r_{t_{\cl{c}}}\) and \(r_q>l_{t_{\cl{c}}}\).
  If \(t_{\cl{c}}\) is the leftmost truth-marker segment, then \(l_{t_{\cl{c}}}=0\).  The assumption that neither input overlaps \(t_{\cl{c}}\) implies
  \(l_u,l_v\ge r_{t_{\cl{c}}}\), hence \(l_q>\min(l_u,l_v)\ge r_{t_{\cl{c}}}\), contradicting \(l_q<r_{t_{\cl{c}}}\).
  If \(t_{\cl{c}}\) is the rightmost truth-marker segment, then \(r_{t_{\cl{c}}}=1\).  The assumption that neither input overlaps \(t_{\cl{c}}\) implies
  \(r_u,r_v\le l_{t_{\cl{c}}}\), hence \(r_q<\max(r_u,r_v)\le l_{t_{\cl{c}}}\), contradicting \(r_q>l_{t_{\cl{c}}}\).
  
 (b) \textbf{Completeness.}
    Pick points
    \[
        a\in \supp(u),
        \qquad
        b'\in \supp(v),
    \]
    with \(a\neq b'\).  If one of the inputs overlaps \(t_{\cl{c}}\), choose the corresponding point inside that intersection; for example, if \(u\) overlaps \(t_{\cl{c}}\), choose \(a\in \supp(u)\cap\supp(t_{\cl{c}})\), and if only \(v\) does, choose \(b'\in \supp(v)\cap\supp(t_{\cl{c}})\).  This is possible since supports are open intervals, perturbing the other point if necessary to keep \(a\neq b'\).

    If \(a<b'\), place the symbols on the \(q\)-qubit wire in the order
    \(\overleftarrow{q}-q-\overrightarrow{q}\) and set
    \[
        l_q=a,\qquad r_q=b'.
    \]
    If \(b'<a\), place them in the reversed order
    \(\overrightarrow{q}-q-\overleftarrow{q}\) and set
    \[
        l_q=b',\qquad r_q=a.
    \]
    The junction adjacent to \(u\) is chosen inside \(\supp(u)\), so
    \(u\) overlaps both \(q\) and \(\overleftarrow{q}\).
    Likewise, the junction adjacent to \(v\) lies inside \(\supp(v)\), so
    \(v\) overlaps both \(q\) and \(\overrightarrow{q}\).
    Thus we can add CCZ gates to realize exactly the four cubic monomials
    \[
        b\,\overleftarrow{q}\,u,\quad b\,q\,u,\quad b\,q\,v,\quad b\,\overrightarrow{q}\,v
    \]
    from \(G_{u\lor v}\).
    If one input overlaps \(t_{\cl{c}}\), the corresponding chosen junction is an interior point of \(t_{\cl{c}}\); therefore the open interval \(\supp(q)=(l_q,r_q)\) overlaps \(t_{\cl{c}}\).  Finally, placing the three symbols consecutively on one wire realizes precisely the quadratic
    terms \(\overleftarrow{q}\,q\) and \(\overrightarrow{q}\,q\), and no others.
\end{proof}

\begin{lemma}[Clause correctness]\label{lem:cl-rlz}
For any clause \(\cl{c}=(\lit{x}\lor\lit{y}\lor\lit{z})\):
    \begin{enumerate}[label=(\alph*)]
       \item (Soundness) For any circuit realizing \(f_{\Phi}\) and any \emph{good clause} \(\cl{c} = (\lit{x} \lor \lit{y} \lor \lit{z})\), if segment \(o_{\cl{c}}\) overlaps \(t_{\cl{c}}\), then at least one of the segments representing \(\lit{x},\lit{y},\lit{z}\) overlaps \(t_{\cl{c}}\).
        \item (Completeness) Fix a configuration of the \(\lit{x}_{\cl{c}},\lit{y}_{\cl{c}},\lit{z}_{\cl{c}}\)-qubit wires. There is a choice of the positions of the two Hadamard gates and the ordering of symbols on the \(p_{\cl{c}}\)-qubit wire and the \(o_{\cl{c}}\)-qubit wire such that if one of the segments representing \(\lit{x},\lit{y},\lit{z}\) overlaps \(t_{\cl{c}}\),
        the final \(o_{\cl{c}}\) also overlaps \(t_{\cl{c}}\),
        while realizing exactly the same quadratic terms and cubic monomials as in $f_\Phi$.
    \end{enumerate}
\end{lemma}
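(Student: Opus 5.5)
The plan is to obtain both parts by composing the two binary OR gadgets, applying \cref{lem:or-rlz} twice: first to the inner gadget $p_{\cl{c}}=\lit{x}\lor\lit{y}$, then to the outer gadget $o_{\cl{c}}=p_{\cl{c}}\lor\lit{z}$. Here the input $p_{\cl{c}}$ of the outer gadget is itself a segment whose truth value is given by overlap with $t_{\cl{c}}$. Throughout, ``the segment representing $\lit{x}$'' means $x_{\cl{c}}$ if the literal is positive and $\bar{x}_{\cl{c}}$ otherwise, in line with \cref{def:encode}. \cref{lem:encode-wd} ensures this is well defined for a good clause.

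For soundness, take a circuit realizing $f_{\Phi}$ in which $\cl{c}$ is good, and assume $o_{\cl{c}}$ overlaps $t_{\cl{c}}$. By \cref{lem:good-cl}, the $o_{\cl{c}}$-wire has the form $\overleftarrow{o_{\cl{c}}}-o_{\cl{c}}-\overrightarrow{o_{\cl{c}}}$ up to reversal. Its cubic monomials are exactly those of $G_{p_{\cl{c}}\lor \lit{z}_{\cl{c}}}$: $b\,\overleftarrow{o_{\cl{c}}}\,p_{\cl{c}}$, $b\,o_{\cl{c}}\,p_{\cl{c}}$, $b\,o_{\cl{c}}\,\lit{z}_{\cl{c}}$ and $b\,\overrightarrow{o_{\cl{c}}}\,\lit{z}_{\cl{c}}$. \cref{lem:or-rlz}(a) with $u=p_{\cl{c}}$ and $v=\lit{z}_{\cl{c}}$ then shows that $p_{\cl{c}}$ or $\lit{z}_{\cl{c}}$ overlaps $t_{\cl{c}}$. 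If it is $\lit{z}_{\cl{c}}$, we are done. Otherwise $p_{\cl{c}}$ overlaps $t_{\cl{c}}$, and \cref{lem:or-rlz}(a) applied to $G_{\lit{x}_{\cl{c}}\lor\lit{y}_{\cl{c}}}$ gives that $\lit{x}_{\cl{c}}$ or $\lit{y}_{\cl{c}}$ overlaps $t_{\cl{c}}$. One point needs checking. The proof of \cref{lem:or-rlz}(a) only uses the interval containment $\supp(q)\subseteq$ span of $\supp(u)\cup\supp(v)$ and the fact that $t_{\cl{c}}$ is an end segment of its wire. Both hold whether the input is a literal segment or the middle segment $p_{\cl{c}}$, so the lemma applies to the nested gadget unchanged.

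For completeness, fix the literal wires. We apply \cref{lem:or-rlz}(b) first to the $p_{\cl{c}}$-wire with inputs $\lit{x}_{\cl{c}},\lit{y}_{\cl{c}}$; if $\lit{x}$ or $\lit{y}$ overlaps $t_{\cl{c}}$, we choose the junctions so that $p_{\cl{c}}$ overlaps $t_{\cl{c}}$. With the $p_{\cl{c}}$-wire now fixed, we apply \cref{lem:or-rlz}(b) to the $o_{\cl{c}}$-wire with inputs $p_{\cl{c}}$ and $\lit{z}_{\cl{c}}$. Whenever one of the three literals is true, one of these two inputs overlaps $t_{\cl{c}}$, so $o_{\cl{c}}$ can be placed to overlap $t_{\cl{c}}$. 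Each application realizes exactly the quadratic terms $\overleftarrow{q}q$ and $\overrightarrow{q}q$ and the four cubic terms of its gadget. Adding one CCZ gate on $(b,t_{\cl{c}},o_{\cl{c}})$ at a time in $\supp(t_{\cl{c}})\cap\supp(o_{\cl{c}})$ then realizes the pinning monomial. This is possible because $\supp(b)=I$ by \cref{lem:qubit-iso}. No extra monomials arise, since we only add CCZ gates for the listed monomials and no CZ gates.

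The main obstacle is a minor point in completeness. \cref{lem:or-rlz}(b) requires $a\neq b'$ and chooses a point in $\supp(u)\cap\supp(t_{\cl{c}})$. For the outer gadget, $\supp(p_{\cl{c}})$ is an open interval determined by the first step, and its intersection with $t_{\cl{c}}$ is a nonempty open interval whenever $p_{\cl{c}}$ is true. Hence a suitable junction point exists and can be perturbed to avoid coincidences. The argument therefore carries over verbatim.
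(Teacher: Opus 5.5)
Your proposal is correct and matches the paper's proof. Both parts apply \cref{lem:or-rlz} first to the inner gadget $p_{\cl{c}}=\lit{x}\lor\lit{y}$ and then to the outer gadget $o_{\cl{c}}=p_{\cl{c}}\lor\lit{z}$, with the same case split on whether $\lit{z}_{\cl{c}}$ overlaps $t_{\cl{c}}$.

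Your extra remarks are valid details the paper leaves implicit: that \cref{lem:or-rlz}(a) uses only interval containment and so still applies when $p_{\cl{c}}$ is an input, and the explicit placement of the pinning CCZ gate for $b\,t_{\cl{c}}\,o_{\cl{c}}$. One small slip: the four gadget cubics are not the \emph{exact} set of cubics on the $o_{\cl{c}}$-wire, since $b\,t_{\cl{c}}\,o_{\cl{c}}$ also involves $o_{\cl{c}}$. This is harmless, because your argument only uses that those four are present.
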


\begin{proof}
    The clause gadget is the cascade of two binary OR gadgets:
    \(p_{\cl{c}} = \lit{x}\lor\lit{y}\) and \(o_{\cl{c}} = p_{\cl{c}}\lor\lit{z}\).
    
    \textbf{(Soundness).}
    Apply \Cref{lem:or-rlz}(a) to the second OR-gadget with inputs \(p_{\cl{c}}\) and \(\lit{z}_{\cl{c}}\).
    If \(o_{\cl{c}}\) overlaps \(t_{\cl{c}}\), then either \(p_{\cl{c}}\) or \(\lit{z}_{\cl{c}}\) overlaps \(t_{\cl{c}}\).
    If \(\lit{z}_{\cl{c}}\) overlaps \(t_{\cl{c}}\), the desired conclusion already holds; otherwise \(p_{\cl{c}}\) overlaps \(t_{\cl{c}}\).
    Applying \Cref{lem:or-rlz}(a) to the first OR-gadget then yields that either \(\lit{x}_{\cl{c}}\) or \(\lit{y}_{\cl{c}}\)
    overlaps \(t_{\cl{c}}\). Hence at least one of \(\lit{x}_{\cl{c}},\lit{y}_{\cl{c}},\lit{z}_{\cl{c}}\) overlaps \(t_{\cl{c}}\).
    
    \textbf{(Completeness).}
    Fix a configuration of the \(\lit{x}_{\cl{c}},\lit{y}_{\cl{c}},\lit{z}_{\cl{c}}\)-qubit wires and suppose one of them overlaps \(t_{\cl{c}}\).
    If \(\lit{z}_{\cl{c}}\) overlaps \(t_{\cl{c}}\), first use \Cref{lem:or-rlz}(b) to realize the gadget
    \(p_{\cl{c}}=\lit{x}_{\cl{c}}\lor\lit{y}_{\cl{c}}\) (without requiring \(p_{\cl{c}}\) to overlap \(t_{\cl{c}}\)).  Then use \Cref{lem:or-rlz}(b) to realize
    \(o_{\cl{c}} = p_{\cl{c}}\lor\lit{z}_{\cl{c}}\) so that \(o_{\cl{c}}\) overlaps \(t_{\cl{c}}\), while keeping exactly the quadratic and cubic
    monomials that concern \(o_{\cl{c}}\).
    Otherwise, one of \(\lit{x}_{\cl{c}},\lit{y}_{\cl{c}}\) overlaps \(t_{\cl{c}}\); apply \Cref{lem:or-rlz}(b) to the gadget
    \(p_{\cl{c}}=\lit{x}_{\cl{c}}\lor\lit{y}_{\cl{c}}\) to make \(p_{\cl{c}}\) overlap \(t_{\cl{c}}\), and then apply it again to
    \(o_{\cl{c}}=p_{\cl{c}}\lor\lit{z}_{\cl{c}}\) to make \(o_{\cl{c}}\) overlap \(t_{\cl{c}}\).
    In both cases we realize exactly the quadratic and cubic monomials associated with the two OR gadgets, as required.
\end{proof}

\begin{lemma}[Clause gadget with one extra wire]\label{lem:extra-wire}
Fix a clause \(\cl{c}=(\lit{x}\lor\lit{y}\lor\lit{z})\) and configurations of its three literal wires.
There exists a realization of the clause-local monomials involving
\(t_{\cl{c}},\hat{t}_{\cl{c}},\tilde{t}_{\cl{c}},\overleftarrow{p_{\cl{c}}},p_{\cl{c}},\overrightarrow{p_{\cl{c}}},\overleftarrow{o_{\cl{c}}},o_{\cl{c}},\overrightarrow{o_{\cl{c}}}\) using at most \(7\) wires,
such that the segment \(o_{\cl{c}}\) overlaps \(t_{\cl{c}}\), regardless of whether any literal overlaps the original truth-marker segment.
\end{lemma}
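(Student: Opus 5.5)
The idea is to spend the seventh wire on splitting the truth-marker wire. That way $t_{\cl{c}}$ no longer has to sit at the end of $\hat{t}_{\cl{c}}$'s wire next to literal segments that may encode \bool{false}; it can instead be placed wherever $o_{\cl{c}}$ lives.

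\emph{Step 1: fix the $\hat t$ segment and the two OR wires.} The literal wires are given, so $\supp(\lit{x}_{\cl{c}}),\supp(\lit{y}_{\cl{c}}),\supp(\lit{z}_{\cl{c}})$ are fixed open intervals. I put $\hat{t}_{\cl{c}}$ and $\tilde{t}_{\cl{c}}$ on one wire, with $\tilde{t}_{\cl{c}}$ to the right of $\hat{t}_{\cl{c}}$ and separated by a single internal Hadamard gate. The interval $\supp(\hat{t}_{\cl{c}})$ is chosen exactly as in the good configuration, so that it overlaps every segment of the three variable wires; this preserves the cubics $b\,\hat{t}_{\cl{c}}\,(\cdot)$ and the quadratic term $\hat{t}_{\cl{c}}\tilde{t}_{\cl{c}}$. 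I then apply \cref{lem:or-rlz}(b) twice, with no truth requirement: first to place the $p_{\cl{c}}$-wire for inputs $\lit{x}_{\cl{c}},\lit{y}_{\cl{c}}$, then to place the $o_{\cl{c}}$-wire for inputs $p_{\cl{c}},\lit{z}_{\cl{c}}$. This gives all monomials of $G_{\lit{x}\lor\lit{y}}$ and $G_{p\lor\lit{z}}$. Counting the three literal wires, the $\hat t\tilde t$ wire and the two OR wires, six wires are used so far.

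\emph{Step 2: the extra wire.} I place $t_{\cl{c}}$ alone on a fresh seventh wire with $\supp(t_{\cl{c}})=I$, so its only internal Hadamard gates are the terminal ones. It then overlaps $o_{\cl{c}}$, and the CCZ gate $b\,t_{\cl{c}}\,o_{\cl{c}}$ can be applied. The quadratic monomial $t_{\cl{c}}\hat{t}_{\cl{c}}$, which previously came from the Hadamard gate between them, must now be produced by a CZ gate on $t_{\cl{c}}$ and $\hat t_{\cl{c}}$. This is legal because the two segments overlap.

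\emph{Step 3: the global synchronizing cubics.} The terms $b\,t_{\cl{c}}\,t_{\cl{c'}}$, $b\,\hat t_{\cl{c}}\hat t_{\cl{c'}}$ and $b\,\tilde t_{\cl{c}}\tilde t_{\cl{c'}}$ involve other clauses, so I need each pair to overlap. Since $t_{\cl{c}}$ spans all of $I$, it overlaps every $t_{\cl{c'}}$ automatically. For the hatted and tilded pairs I take the standard marker positions $(0.3,0.7)$ and $(0.7,1)$ from \cref{lem:var-rlz}(b), which every other clause also uses. The requirement is then that this $\hat t$ still overlaps every literal segment of the given literal wires. This is the step I expect to be the main obstacle. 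The given literal wires might not be compatible with the standard $\hat t$, but I would argue that, as in the good configuration, they can be taken in the form of \cref{lem:var-rlz}(b), where they satisfy exactly this condition. The same reasoning keeps all remaining cubics satisfiable, and the total is at most $7$ wires with $o_{\cl{c}}$ overlapping $t_{\cl{c}}$.
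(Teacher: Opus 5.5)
Your Steps 1 and 2 are the paper's proof. You split $t_{\cl{c}}$ onto a fresh wire with $\supp(t_{\cl{c}})=I$, keep $\hat t_{\cl{c}}$ and $\tilde t_{\cl{c}}$ on one wire separated by a single internal Hadamard gate, produce $t_{\cl{c}}\hat t_{\cl{c}}$ by a CZ gate, and apply \cref{lem:or-rlz}(b) twice with no truth requirement. The wire count of seven also matches.

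The one step that fails as written is the placement of $\hat t_{\cl{c}}$.
\begin{itemize}
\item The lemma fixes \emph{arbitrary} configurations of the three literal wires. A good-configuration position need not meet all nine literal segments. For example, a literal wire split as $(0,0.8),(0.8,0.9),(0.9,1)$ has its middle and right segments disjoint from $(0,0.7)$.
\item Your fallback in Step 3, that the literal wires ``can be taken'' in the form of \cref{lem:var-rlz}(b), changes the hypothesis instead of proving the statement.
\item $\hat t_{\cl{c}}$ is now the first segment of its wire, so its support is $(0,r)$. It can never be $(0.3,0.7)$.
\end{itemize}

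The fix is what the paper does: choose $\supp(\hat t_{\cl{c}})$ ``wide enough.'' Take $r$ larger than the left endpoint of the last segment of each literal wire. This is possible because all these endpoints are $<1$, and then $(0,r)$ meets every literal segment and $\tilde t_{\cl{c}}=(r,1)$ is nonempty.

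The same choice removes what you called the main obstacle. For $r>0.3$, the segment $(0,r)$ meets the standard $\hat t_{\cl{c'}}=(0.3,0.7)$ and any other split-off $(0,r')$. The segment $(r,1)$ meets every $\tilde t_{\cl{c'}}$, since all of these end at $1$. In any case, the cross-clause synchronization terms are not part of this lemma; the paper handles them in the proof of \cref{thm:maxsat-reduction}.
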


\begin{proof}
Start from the canonical 6-wire clause gadget (one wire each for \(t_{\cl{c}}\), the three literals, \(p_{\cl{c}}\), and \(o_{\cl{c}}\)).
We use the extra wire to \emph{split off} the truth-marker segment \(t_{\cl{c}}\): place \(t_{\cl{c}}\) alone on a fresh wire with no
internal Hadamard gates, so \(\supp(t_{\cl{c}})=I\).
Then \(t_{\cl{c}}\) overlaps \(o_{\cl{c}}\) regardless of the literals, and the cubic monomial \(b\,t_{\cl{c}}\,o_{\cl{c}}\) is realizable.

On the remaining truth-marker wire, keep \(\hat{t}_{\cl{c}}\) and \(\tilde{t}_{\cl{c}}\) as two consecutive segments (one \(H\) gate).
Choose \(\supp(\hat{t}_{\cl{c}})\) wide enough to intersect all three symbols on each of the three literal wires, so all marker--literal CCZ terms
\(b\,\hat{t}_{\cl{c}}\,s\), with \(s\in\{x_{\cl{c}},\hat{x}_{\cl{c}},\bar{x}_{\cl{c}},y_{\cl{c}},\hat{y}_{\cl{c}},\bar{y}_{\cl{c}},z_{\cl{c}},\hat{z}_{\cl{c}},\bar{z}_{\cl{c}}\}\), are realized.
The quadratic term \(\hat{t}_{\cl{c}}\tilde{t}_{\cl{c}}\) is realized by adjacency on this wire, and \(\hat{t}_{\cl{c}}t_{\cl{c}}\) is realized by a CZ gate
between the \(\hat{t}_{\cl{c}}\)-wire and the \(t_{\cl{c}}\)-wire.

Finally, realize the \(p_{\cl{c}}\)-wire and \(o_{\cl{c}}\)-wire using the unconditional part of \Cref{lem:or-rlz}(b), which realizes exactly the quadratic and cubic
monomials in \(G_{\lit{x}_{\cl{c}}\lor\lit{y}_{\cl{c}}}\) and \(G_{p_{\cl{c}}\lor\lit{z}_{\cl{c}}}\).
Altogether this uses one extra wire and realizes all required monomials while ensuring \(o_{\cl{c}}\) overlaps \(t_{\cl{c}}\), since \(\supp(t_{\cl{c}})=I\).
\end{proof}

\subsection{Main Theorems}\label{subsec:main-theorems}
In this subsection, we state and prove the two main theorems of completeness and soundness, using the structural lemmas established above. Furthermore, a conditional lower bound under ETH is also provided.
\begin{theorem}[Completeness]\label{thm:completeness}
    If the \SAT{} instance \((X,C)\) is satisfiable, then there exists a circuit with exactly
    \(k=1+6|C|\) qubits that realizes \(f_{\Phi}\).
\end{theorem}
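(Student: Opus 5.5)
The plan is to build the circuit explicitly from a satisfying assignment $w$ of $(X,C)$, assembling the pieces already provided by \cref{lem:var-rlz}(b), \cref{lem:or-rlz}(b) and \cref{lem:cl-rlz}(b). The wire count is immediate from the layout. We use one base wire carrying $b$ with no internal Hadamard gates, so $\supp(b)=I$. For each clause $\cl{c}$ we use six wires in the good-clause configuration of \cref{lem:good-cl}: the truth-marker wire, three literal-occurrence wires, the $p_{\cl{c}}$-wire and the $o_{\cl{c}}$-wire, each cut by exactly two internal $H$ gates. This gives $1+6|C|$ wires and $1+18|C|$ symbols. The internal $H$ gates produce exactly the $12|C|$ quadratic monomials of $f_\Phi$, namely $t_{\cl{c}}\hat t_{\cl{c}}$, $\hat t_{\cl{c}}\tilde t_{\cl{c}}$, $\hat x_{\cl{c}}x_{\cl{c}}$, $\hat x_{\cl{c}}\bar x_{\cl{c}}$, $\overleftarrow{q}q$ and $\overrightarrow{q}q$. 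No CZ gates are needed, and no $Z$ gates are needed since $f_\Phi$ has no linear terms.

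The first step is to place the truth-marker and literal wires as in the proof of \cref{lem:var-rlz}(b). Every truth-marker wire uses $t_{\cl{c}}=(0,0.3)$, $\hat t_{\cl{c}}=(0.3,0.7)$, $\tilde t_{\cl{c}}=(0.7,1)$. Every occurrence wire of $x$ uses $(0,0.4),(0.4,0.6),(0.6,1)$, assigned to $x_{\cl{c}},\hat x_{\cl{c}},\bar x_{\cl{c}}$ in this order if $w(x)=\bool{true}$ and in reversed order otherwise. Because all occurrences of $x$ then have identical supports slot by slot, and all truth markers are identical, the synchronizing cubics $G_{\mathrm{TM}}$ and $G_x$ pair segments with equal supports together with $b$. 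These triples overlap, so each can be realized by a CCZ gate placed at any common time point. The marker--literal cubics $b\,\hat t_{\cl{c}}\,s$ are realizable because $(0.3,0.7)$ meets $(0,0.4)$, $(0.4,0.6)$ and $(0.6,1)$. The effect is that the literal $\lit{x}$ overlaps $t_{\cl{c}}$ exactly when it is true under $w$.

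The second step handles each clause. Since $w$ satisfies $\cl{c}$, some literal segment of $\cl{c}$ overlaps $t_{\cl{c}}$. \Cref{lem:cl-rlz}(b) then gives placements of the $p_{\cl{c}}$- and $o_{\cl{c}}$-wires that realize exactly the monomials of $G_{\lit{x}\lor\lit{y}}$ and $G_{p_{\cl{c}}\lor\lit{z}}$ and make $o_{\cl{c}}$ overlap $t_{\cl{c}}$. The pinning cubic $b\,t_{\cl{c}}\,o_{\cl{c}}$ is then realized by a CCZ at a point of $\supp(t_{\cl{c}})\cap\supp(o_{\cl{c}})$, using $\supp(b)=I$ and \cref{prelim:helly}.

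The remaining obstacle is bookkeeping. We must check that the circuit's polynomial equals $f_\Phi$ exactly, with no extra monomials. We add precisely one CCZ per cubic monomial of $f_\Phi$, each at a time where its three segments overlap, and no other diagonal gates. The $H$ gates contribute only the intended quadratics. Since all monomials of $f_\Phi$ are distinct, nothing cancels. We also need the Hadamard gate times to be consistent across wires; this holds because every wire's $H$ positions are fixed reals and CCZ gates are simply inserted at chosen interior times, so we can choose distinct generic times to avoid any coincidences. This yields a $(1+6|C|)$-qubit circuit realizing $f_\Phi$.
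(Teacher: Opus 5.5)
Your proposal is correct and follows essentially the same route as the paper. The paper likewise isolates $b$ on its own wire, uses the six-wire good-clause layout of \cref{lem:good-cl}, places the variable segments via \cref{lem:var-rlz}(b), and realizes the OR gadgets via \cref{lem:cl-rlz}(b); your version additionally spells out the bookkeeping the paper leaves implicit in those lemmas (the $12|C|$ Hadamard-induced quadratics, the synchronizing and pinning cubics, and the absence of cancellations).
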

\begin{proof}
    Let \(w\) be a satisfying assignment. We construct the circuit using exactly $k=1+6|C|$ qubits as follows:
    The base qubit $b$ is isolated (\Cref{lem:qubit-iso}). For each clause \(\cl{c}\), we use the unique 6-wire configuration described in \Cref{lem:good-cl}.
    By \Cref{lem:var-rlz}(b), we can place the variable segments so that if a literal is true, its segment overlaps $t_{\cl{c}}$.
    Since \(w\) is a satisfying assignment, for every clause \(\cl{c}\), at least one literal is true. By \Cref{lem:cl-rlz}(b), there exists a valid configuration of the OR-gadget wires such that $f_\Phi$ is realized, on exactly $k$ qubits.
\end{proof}

\begin{theorem}[Soundness]\label{thm:soundness}
    If the \SAT{} instance \((X,C)\) is \emph{unsatisfiable}, 
    then every quantum circuit realizing \(f_{\Phi}\) must use \emph{strictly more} than \(k\) qubits.
\end{theorem}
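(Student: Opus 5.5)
We argue by contrapositive: from any circuit realizing \(f_{\Phi}\) on at most \(k=1+6|C|\) qubits we extract a satisfying assignment of \((X,C)\). The first step is to count wires. By \cref{lem:qubit-iso}, the symbol \(b\) occupies a wire of its own with no internal Hadamard gates. The same lemma shows that no wire carries symbols from two different clauses, so the wires not holding \(b\) are partitioned among the sets \(\mathcal{S}_{\cl{c}}\). By \cref{lem:good-cl}, each \(\mathcal{S}_{\cl{c}}\) needs at least \(6\) wires, so the circuit has at least \(1+6|C|\) wires. Equality therefore forces every clause to use exactly \(6\) wires, i.e., every clause is \emph{good}, and each wire carries the canonical three-segment chain listed in \cref{lem:good-cl}.

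The second step uses the global synchronizing cubics \(b\,t_{\cl{c_1}}t_{\cl{c_2}}\), \(b\,\hat t_{\cl{c_1}}\hat t_{\cl{c_2}}\), \(b\,\tilde t_{\cl{c_1}}\tilde t_{\cl{c_2}}\). As argued in \cref{lem:var-rlz}(a), these force all truth-marker wires to share a common orientation. After possibly reversing the whole circuit, which does not change the polynomial, we may assume every truth-marker wire reads \(t_{\cl{c}}-\hat t_{\cl{c}}-\tilde t_{\cl{c}}\) from left to right. Since all clauses are good, \cref{lem:encode-wd} makes the encoding of \cref{def:encode} well defined for every occurrence: exactly one of \(x_{\cl{c}},\bar x_{\cl{c}}\) overlaps \(t_{\cl{c}}\). \Cref{lem:var-rlz}(a) then says all occurrences of a given variable \(x\) induce the same value. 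We define \(w(x)\) to be that common value, and set it arbitrarily if \(x\) does not occur. The defining property is that, for every occurrence in clause \(\cl{c}\), the segment representing the literal (\(x_{\cl{c}}\) if positive, \(\bar x_{\cl{c}}\) if negated) overlaps \(t_{\cl{c}}\) if and only if the literal is \bool{true} under \(w\).

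The final step handles each clause \(\cl{c}\). The monomial \(b\,t_{\cl{c}}\,o_{\cl{c}}\) appears in \(f_\Phi\). Its only possible source is a CCZ gate on three mutually overlapping segments, because \(b\) lies on its own \(H\)-free wire and only cubic gates produce degree-3 terms. Hence \(o_{\cl{c}}\) overlaps \(t_{\cl{c}}\). By \cref{lem:cl-rlz}(a), applied to the good clause \(\cl{c}\), at least one literal segment of \(\cl{c}\) overlaps \(t_{\cl{c}}\), so that literal is \bool{true} under \(w\). Thus \(w\) satisfies every clause, contradicting unsatisfiability, and so any realizing circuit must use more than \(k\) qubits.

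The delicate point is the orientation and encoding bookkeeping. The soundness argument of \cref{lem:or-rlz}(a) relies on \(t_{\cl{c}}\) being an end segment of its wire, which holds in both orientations. The per-clause encodings must also refer to the same global orientation before \cref{lem:var-rlz}(a) can identify values across clauses. I would check carefully that the single global reversal makes all truth-marker wires left-to-right at once, so that the value read off in one clause matches the value read off in every other.
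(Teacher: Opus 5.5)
Your proposal is correct and follows essentially the same route as the paper. The wire count from \cref{lem:qubit-iso,lem:good-cl} forces every clause to be good, \cref{lem:var-rlz}(a) extracts a consistent assignment, and the pinning monomial $b\,t_{\cl{c}}\,o_{\cl{c}}$ together with \cref{lem:cl-rlz}(a) shows that every clause is satisfied. Your explicit orientation bookkeeping just spells out more carefully what the paper handles inside \cref{lem:var-rlz}(a) and through the symmetric case analysis in \cref{lem:encode-wd,lem:or-rlz}.
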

\begin{proof}
    Suppose a circuit on \(\le k\) qubits realizes \(f_{\Phi}\). By \Cref{lem:qubit-iso,lem:good-cl}, the circuit must use exactly $k$ qubits (allocating the isolated base wire and 6 wires per clause) and the wire configuration is the canonical one. By \Cref{lem:var-rlz}(a), this configuration encodes a truth assignment \(w\).
    The presence of the monomial $b\,t_{\cl{c}}\,o_{\cl{c}}$ in $f_\Phi$ requires the segment $o_{\cl{c}}$ to overlap $t_{\cl{c}}$ for every clause.
    By \Cref{lem:cl-rlz}(a), the segment $o_{\cl{c}}$ can overlap $t_{\cl{c}}$ only if the encoded assignment $w$ satisfies the clause $\cl{c}$.
    Therefore, the existence of such a circuit implies $w$ satisfies all clauses, which contradicts the assumption that $\Phi$ is unsatisfiable.
\end{proof}

\begin{corollary}[$\mathsf{NP}$-completeness]\label{cor:np-completeness}
    \problem{} is $\mathsf{NP}$-complete.
\end{corollary}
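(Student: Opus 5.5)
The proof has two parts: membership of \problem{} in \NP, and \NP-hardness. Hardness follows directly from the reduction of \cref{sec:reduction}. Given a \SAT{} instance $\Phi$ with $m$ clauses, we build $f_\Phi$ together with the budget $k=1+6m$. This construction is computable in polynomial time: it has $1+18m$ symbols and $O(m^2)$ monomials, the pairwise synchronizing cubics being the dominant part. \Cref{thm:completeness} and \Cref{thm:soundness} together give that $\Phi$ is satisfiable if and only if $w(f_\Phi)\le k$. Hence $\SAT\le_p \problem{}$.

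For membership, the certificate is a quantum circuit on at most $k$ qubits; the verifier extracts its associated polynomial and compares it with $f$. The only issue is certificate size, since a circuit could contain arbitrarily many redundant diagonal gates. To control this, I will argue that one may assume a canonical form.
\begin{itemize}
\item The number of segments equals the number of symbols, so there are $k+h=n$ segments, where $h$ is the number of internal Hadamard gates. Thus $h\le n$.
\item Gates that appear twice on the same symbols cancel over $\F_2$, so each monomial of $f$ is produced by at most one diagonal gate. This bounds the number of $Z$, CZ and CCZ gates by the number of monomials of $f$.
\end{itemize}
The certificate is therefore a list of $k$ wires, each carrying an ordered sequence of symbols, together with a time-ordered list of at most $|f|+n$ gates. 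Its size is polynomial in the input.

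Verification proceeds in three steps. First, check that every symbol of $f$ appears on exactly one segment. Second, check that every diagonal gate is applied at a time at which it acts on the claimed segments; this is immediate once positions are read from the time ordering. Third, compute $f_{\mathcal{C}}$ by the gate-to-monomial map of \cref{subsec:extraction} and check $f_{\mathcal{C}}=f$ over $\F_2$. All three steps are linear in the certificate size.

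The one step needing care is the canonical-form reduction. For it, I will note that deleting a pair of identical diagonal gates acting on the same symbols preserves $f_{\mathcal{C}}$, because the two contributions cancel mod $2$. Diagonal gates commute, so gates within a segment can be reordered freely, and each such deletion leaves the rest of the circuit a valid circuit. Combining hardness and membership yields the corollary.
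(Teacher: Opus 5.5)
Your proposal is correct and follows the same route as the paper: membership in \NP{} by guessing a circuit and recomputing its associated polynomial, and hardness from the reduction via \cref{thm:completeness} and \cref{thm:soundness}. Your cancellation argument makes explicit the polynomial bound on the certificate that the paper only asserts: identical diagonal gates on the same symbols cancel mod $2$, and $h \le n$.
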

\begin{proof}
    Membership in $\mathsf{NP}$ is straightforward: one can guess a polynomial-size circuit description (the ordered wire segments, hence the $H$ gates, together with a list of $Z$, CZ, and CCZ gates), and then verify by recomputing the associated polynomial. Hardness follows from the reduction (Theorems \ref{thm:completeness} and \ref{thm:soundness}).
\end{proof}

Finally, we provide a conditional lower bound under the Exponential Time Hypothesis (ETH)~\cite{ETH0,ETH} using the same reduction.

\begin{theorem}
\label{thm:eth}
Assuming the Exponential Time Hypothesis \textnormal{(ETH)}, no algorithm solves \textnormal{\problem{}} on instances with $n$ symbols in time $2^{o(n)}$, even when restricted to instances with $k = \Theta(n)$ qubits.
\end{theorem}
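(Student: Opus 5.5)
The plan is a direct composition of the reduction of \cref{sec:reduction} with the Sparsification Lemma (\cref{thm:sparsification}). Suppose, for contradiction, that some algorithm $\mathcal{A}$ decides \problem{(f,k)} in time $2^{o(n)}$ on instances with $n$ symbols. By \cref{thm:sparsification}, there are constants $D,\epsilon>0$ such that, unless ETH fails, no algorithm solves \SAT{} with $N$ variables and $m\le DN$ clauses in time $2^{\epsilon N}$. I would therefore take such a sparse \SAT{} instance $\Phi$ and apply the construction of \cref{subsubsec:global-poly} to obtain $f_\Phi$ and budget $k=1+6m$.

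The first step is size bookkeeping. The polynomial $f_\Phi$ has exactly $n=1+18m$ symbols (the base symbol $b$ plus the $18$ symbols of each $\mathcal{S}_{\cl{c}}$), so $n=\Theta(m)$ and $k=1+6m=\Theta(n)$. This also justifies the restriction to $k=\Theta(n)$. One caveat is that the number of monomials is not linear: the synchronizing sums $G_{\mathrm{TM}}$ and $G_x$ range over pairs, so there are $O(m^2)$ terms. The construction time is therefore $\poly(m)$, which is harmless since we only need it to be $2^{o(m)}$. The lower bound is stated in terms of the number of symbols $n$, not the encoding length, so this caveat does not affect it.

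The second step is correctness and runtime. By \cref{thm:completeness} and \cref{thm:soundness}, $\Phi$ is satisfiable if and only if $w(f_\Phi)\le k$. Running $\mathcal{A}$ on $(f_\Phi,k)$ therefore decides $\Phi$ in time $\poly(m)+2^{o(1+18m)}=2^{o(m)}$. Since $m\le DN$, this is $2^{o(N)}$, and in particular less than $2^{\epsilon N}$ for large $N$. This contradicts \cref{thm:sparsification}, so ETH would fail.

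The only delicate point, and the one I expect to be the main obstacle, is making the $o(\cdot)$ quantification rigorous. I would fix the function $g(n)=o(n)$ bounding $\mathcal{A}$'s runtime. I would then note that $g(1+18DN)\le \epsilon N/2$ for all sufficiently large $N$, absorb the polynomial preprocessing into the remaining $\epsilon N/2$, and handle small $N$ by brute force. I would also remark that if $m$ is much smaller than $N$, the unused variables can be discarded, which only helps. Finally, I would check that instances with repeated literals, which use suppressed occurrence indices, do not change the symbol count beyond the constant $18$ per clause.
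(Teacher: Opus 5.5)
Your proposal is correct and follows essentially the same route as the paper's proof. Both arguments compose the 3-SAT reduction, which has $n = 18|C|+1$ symbols and $k = 1+6|C| = \Theta(n)$, with the sparse form of ETH from the Sparsification Lemma, and turn a $2^{o(n)}$ algorithm into a $2^{o(|X|)}$ one for sparse 3-SAT. Your treatment of the $o(\cdot)$ quantification (fixing $g(n)=o(n)$, absorbing the polynomial construction time, and brute-forcing small instances) is a slightly more careful version of the paper's choice $\varepsilon = \epsilon/(18D+1)$.
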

\begin{proof}
The polynomial reduction from \SAT{} to \problem{} transforms a \SAT{} instance with $|X|$ variables and $|C|$ clauses into a \problem{} instance with $n = |\mathcal{S}| = 18|C|+1$ symbols, giving $n = \Theta(|C|)$ and $k = 1+6|C| = \Theta(n)$.

By the Sparsification Lemma (\cref{thm:sparsification}), ETH implies that there exists a constant $D>0$ such that no algorithm solves sparse \SAT{} instances with $|C|\le D|X|$ in time $2^{o(|X|)}$. For every such instance, the reduction produces a \problem{} instance of size $n \le (18D+1)|X|+1$, so $|X| \ge \frac{n-1}{18D+1}$.

Suppose for contradiction that \problem{} can be solved in time $2^{\varepsilon n}$ for every constant $\varepsilon > 0$. Given a sparse \SAT{} instance with $|C| \le D|X|$, apply the reduction and run this algorithm with $\varepsilon = \epsilon/(18D+1)$. The total running time is at most $2^{\varepsilon n} \cdot \mathrm{poly}(|X|) \le 2^{\epsilon |X|} \cdot \mathrm{poly}(|X|)$. Since $\epsilon > 0$ was arbitrary, this yields a $2^{o(|X|)}$-time algorithm for sparse \SAT{}, contradicting ETH.
\end{proof}

\section{Approximation Hardness and Restricted Degree}\label{sec:approx-hardness}

\subsection{Approximation Hardness}
In this section, we prove that even the $\alpha$-approximation version of the problem \problem{} is \NP-hard. By observing that the previous reduction from \SAT{} to \problem{} in \Cref{sec:reduction} is actually a reduction for \textsc{Max-3-SAT}, we can invoke the \NP-hardness of approximating \textsc{Max-3-SAT} due to Håstad (see \Cref{thm:hastad}). We first review some basic definitions and then prove the main theorem.

\begin{definition}[$\alpha$-approximation version of \problem{}]
  \label{def:approxproblem}
  Given a real parameter $\alpha > 1$, the promise problem \approxproblem{$\alpha$} is to distinguish between instances in the following $(\mathsf{Yes}, \mathsf{No})$ languages:
  \begin{itemize}
    \item[$\mathsf{Yes}$:] $(f\in \mathbb{F}_2[\mathcal{S}],k\in \mathbb{N})$, where $f$ is a degree-3 polynomial with no constant term and $w(f)\leq k$.
    \item[$\mathsf{No}$:]  $(f\in \mathbb{F}_2[\mathcal{S}],k\in \mathbb{N})$, where $f$ is a degree-3 polynomial with no constant term and $w(f)\geq \alpha k$.
  \end{itemize}
\end{definition}

\begin{definition}[\textsc{Max-3-SAT}]
  \label{def:maxsat}
  An instance of \textsc{Max-3-SAT} is a \SAT{} formula $\Phi$ with $m$ clauses. The task is to determine the maximum number of clauses that can be simultaneously satisfied by any assignment.
\end{definition}

\begin{theorem}[Reduction from \textsc{Max-3-SAT}]\label{thm:maxsat-reduction}
  Let $\Phi$ be a \SAT{} instance with $m$ clauses and let $t$ be an integer.
  Construct $f_\Phi$ in polynomial time as in \Cref{sec:reduction} and set
  \[
    K \;=\; 1 + 6m + (m - t).
  \]
  Then $w(f_\Phi)\le K$ if and only if $\Phi$ has an assignment satisfying at least $t$ clauses.
\end{theorem}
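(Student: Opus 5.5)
The proof has two directions, and the plan is to assemble both from the structural lemmas of \Cref{subsec:lemmas}, treating good clauses exactly as in the exact reduction and paying exactly one extra wire for every other clause.

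\textbf{Completeness ($\Leftarrow$).} Fix an assignment $w$ satisfying a set $T$ of at least $t$ clauses. Place $b$ on its own $H$-free wire. Using \Cref{lem:var-rlz}(b), place every truth-marker wire and every variable wire according to $w$, so the variable-synchronizing and marker-synchronizing cubics are realizable.
\begin{itemize}
\item For each $\cl{c}\in T$, some literal segment overlaps $t_{\cl{c}}$. \Cref{lem:cl-rlz}(b) then gives a 6-wire realization of the clause gadget with $o_{\cl{c}}$ overlapping $t_{\cl{c}}$.
\item For each $\cl{c}\notin T$, \Cref{lem:extra-wire} gives a realization of the clause-local monomials with $7$ wires.
\end{itemize}

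One check is needed in the 7-wire case. The synchronizing cubics $b\,t_{\cl{c}}t_{\cl{c'}}$, $b\,\hat t_{\cl{c}}\hat t_{\cl{c'}}$ and $b\,\tilde t_{\cl{c}}\tilde t_{\cl{c'}}$ must remain realizable after $t_{\cl{c}}$ is split off with $\supp(t_{\cl{c}})=I$. This holds provided the remaining $\hat t_{\cl{c}},\tilde t_{\cl{c}}$ wire keeps $\hat t_{\cl{c}}$ over $(0,0.7)$ and $\tilde t_{\cl{c}}$ over $(0.7,1)$, matching the canonical positions used in \Cref{lem:var-rlz}(b). Under that placement every required pairwise overlap persists, and Helly's theorem (\Cref{prelim:helly}) then lets us realize each such cubic with a single CCZ gate. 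The total wire count is $1+6|T|+7(m-|T|)\le K$. If this is strictly less than $K$, we pad to exactly $K$ by the qubit-splitting remark.

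\textbf{Soundness ($\Rightarrow$).} Take a circuit realizing $f_\Phi$ on at most $K$ wires. By \Cref{lem:qubit-iso}, $b$ occupies one wire by itself and the sets $\mathcal S_{\cl{c}}$ occupy disjoint sets of wires. By \Cref{lem:good-cl}, each $\mathcal S_{\cl{c}}$ uses at least $6$ wires, and exactly $6$ precisely when $\cl{c}$ is good. A non-good clause therefore uses at least $7$ wires, so if $g$ denotes the number of good clauses, the width is at least $1+6g+7(m-g)$. Comparing with $K$ gives $g\ge t$.

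We now extract an assignment from the good clauses. By \Cref{lem:var-rlz}(a), all good occurrences of each variable $x$ encode a common value; we use it, and set any variable with no good occurrence arbitrarily. The cubic $b\,t_{\cl{c}}o_{\cl{c}}$ forces $o_{\cl{c}}$ to overlap $t_{\cl{c}}$, so by \Cref{lem:cl-rlz}(a) every good clause contains a literal encoded true. That literal is true under our assignment, so at least $g\ge t$ clauses are satisfied.

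\textbf{Main obstacle.} The delicate step is the synchronization in soundness. \Cref{lem:var-rlz}(a) compares the truth-marker orders of two clauses only "up to global reversal", and its argument is pairwise. If different good clauses had their truth-marker wires reversed relative to one another, the encoding in \Cref{def:encode} would be inconsistent across clauses. My first attempt would use the cubics $b\,t_{\cl{c}}t_{\cl{c'}}$ and $b\,\tilde t_{\cl{c}}\tilde t_{\cl{c'}}$ among good clauses to show that all good truth-marker wires share one orientation: since $t$ and $\tilde t$ sit at opposite ends of $I$, a reversed pair would have $t_{\cl{c}}$ and $t_{\cl{c'}}$ at opposite ends, making them disjoint. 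I would then run the same interval argument for the variable wires, using $b\,x_{\cl{c}}x_{\cl{c'}}$ and $b\,\bar x_{\cl{c}}\bar x_{\cl{c'}}$. These cubics exist between every pair of clauses or occurrences, good or not, so restricting the argument to good ones is legitimate.
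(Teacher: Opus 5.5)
Your proposal is correct and follows the paper's proof essentially step for step. It uses $7$-wire realizations from \Cref{lem:extra-wire} for unsatisfied clauses, with $\hat t_{c},\tilde t_{c}$ kept at their canonical coarse positions; it counts $1+7m-g\le K$ to get $g\ge t$; and it extracts a consistent assignment from the good clauses via \Cref{lem:var-rlz}(a) and \Cref{lem:cl-rlz}(a).

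The orientation issue you flag is exactly what the pairwise argument in \Cref{lem:var-rlz}(a) already settles, but a reversed pair is excluded only by combining the two cubics. Writing $t_c=(0,a)$, $\tilde t_c=(b,1)$, $t_{c'}=(a',1)$, $\tilde t_{c'}=(0,b')$, the required overlaps force $a'<a<b<b'<a'$. It is not true that $t_c$ and $t_{c'}$ alone must be disjoint. Also, padding to exactly $K$ wires is unnecessary, and impossible when $K$ exceeds the number of symbols, since only $w(f_\Phi)\le K$ is needed.
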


\begin{proof}
  Let $\OPT(\Phi)$ be the maximum number of clauses satisfiable by any assignment.
  By \Cref{lem:cl-rlz}(a), any clause gadget realized on exactly $6$ wires must encode a
  satisfied clause. Conversely, by \Cref{lem:extra-wire}, any clause can be realized on $7$
  wires with the segment $o_{\cl{c}}$ overlapping $t_{\cl{c}}$ regardless of the literals.

  Now, if $\Phi$ has an assignment satisfying at least $t$ clauses, then we can set the truth-marker wires honestly and variable wires accordingly, and realize each satisfied clause gadget on $6$ wires and each unsatisfied clause gadget on $7$ wires. For clauses realized with the extra wire, choose the split-off $t_{\cl{c}}$ with full support $I$, and choose $\hat t_{\cl{c}}$ and $\tilde t_{\cl{c}}$ with the same coarse positions as in the good clauses; hence all global truth-marker synchronization terms involving \(\cl{c}\) remain realizable. The variable-synchronization terms are realized by the common assignment as in \cref{lem:var-rlz}(b). This yields a circuit realizing $f_\Phi$ on a number of wires at most
  \[
    1 + 6m + (m - \OPT(\Phi)) \leq 1 + 6m + (m - t) = K.
  \]
  Conversely, suppose there is a circuit realizing $f_\Phi$ on at most $K$ wires. 
  Let \(g\) be the number of clauses whose symbols are realized on exactly \(6\) wires.
  By \Cref{lem:qubit-iso,lem:good-cl}, each such clause is a \emph{good clause}.
  Every clause needs at least \(6\) wires, so the remaining \(m-g\) clauses use at least \(7\) wires each.
  Hence the total number of wires is at least
  \[
    1 + 6g + 7(m-g) \;=\; 1 + 7m - g.
  \]
  Since the circuit uses at most \(K = 1 + 7m - t\) wires, the pigeonhole principle gives \(g \ge t\).

  Now extract a Boolean assignment from the good clauses.
  For any variable \(x\) that appears in at least one good clause, define \(w(x)\) to be the truth value
  encoded by \(x_{\cl{c}}\) in that good clause (via \Cref{def:encode}).
  This is well-defined by \Cref{lem:var-rlz}(a), which synchronizes all occurrences of \(x\) across good clauses.
  For variables that do not appear in any good clause, assign an arbitrary truth value.

  For each good clause \(\cl{c}\), the cubic term \(b\,t_{\cl{c}}\,o_{\cl{c}}\) forces \(o_{\cl{c}}\) to overlap \(t_{\cl{c}}\).
  By \Cref{lem:cl-rlz}(a), this implies that at least one of the literals in \(\cl{c}\) overlaps \(t_{\cl{c}}\),
  hence is \bool{true} under the encoding. Therefore all \(g\) good clauses are satisfied by \(w\),
  and in particular at least \(t\) clauses are satisfied. This completes the reduction.
\end{proof}

We now apply Håstad's theorem (\Cref{thm:hastad}) to establish the inapproximability of our problem.
\begin{theorem}[\approxproblem{$\alpha$} Hardness]\label{thm:approxnphard}
    For any constant $0<\epsilon<1/48$, \textnormal{\approxproblem{($\tfrac{49}{48} - \epsilon$)}} is $\mathsf{NP}$-hard.
\end{theorem}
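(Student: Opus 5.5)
The proof is a gap-preserving reduction from the Håstad gap version of \textsc{Max-3-SAT} (\cref{thm:hastad}), routed through \cref{thm:maxsat-reduction}. Fix $0<\epsilon<1/48$ and choose a Håstad parameter $\delta>0$ depending only on $\epsilon$; taking $\delta=\epsilon$ should suffice. Given a 3-SAT formula $\Phi$ with $m$ clauses, we build $f_\Phi$ as in \cref{sec:reduction} and output the instance $(f_\Phi,k)$ with $k=1+6m$. The construction runs in polynomial time and always yields a degree-3 polynomial with no constant term.

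\textbf{Completeness.} If $\Phi$ is satisfiable, \cref{thm:completeness} (equivalently \cref{thm:maxsat-reduction} with $t=m$) gives $w(f_\Phi)\le 1+6m=k$, so we have a $\mathsf{Yes}$ instance.

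\textbf{Soundness.} Suppose every assignment satisfies at most $(7/8+\delta)m$ clauses. Apply \cref{thm:maxsat-reduction} with $t=\lfloor(7/8+\delta)m\rfloor+1$. Since no assignment satisfies $t$ clauses, we get $w(f_\Phi)>1+7m-t$. Integrality of the width then yields
\[
w(f_\Phi)\;\ge\;1+7m-\lfloor(7/8+\delta)m\rfloor\;\ge\;1+6m+(1/8-\delta)m .
\]
It remains to show this is at least $(\tfrac{49}{48}-\epsilon)(1+6m)$.

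\textbf{Arithmetic.} The right-hand side equals $(1+6m)+(\tfrac{1}{48}-\epsilon)(1+6m)$, so we need
\[
(1/8-\delta)m\;\ge\;(1/48-\epsilon)(6m+1)=(1/8-6\epsilon)m+(1/48-\epsilon).
\]
With $\delta=\epsilon$ this reduces to $5\epsilon m\ge 1/48-\epsilon$. That inequality holds whenever $m\ge 1/(240\epsilon)$, a constant depending only on $\epsilon$. Formulas with fewer clauses can be solved by brute force in constant time, and the reduction then outputs a fixed $\mathsf{Yes}$ or $\mathsf{No}$ instance accordingly. Alternatively, we could pad $\Phi$ by disjoint copies to make $m$ large, which preserves the satisfiable fraction.

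Hence the map $\Phi\mapsto(f_\Phi,1+6m)$ sends satisfiable formulas to $\mathsf{Yes}$ instances and $(7/8+\epsilon)$-unsatisfiable formulas to $\mathsf{No}$ instances of \approxproblem{($\tfrac{49}{48}-\epsilon$)}. \NP-hardness then follows from \cref{thm:hastad}.

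The main obstacle is only bookkeeping: handling the additive $+1$ from the base wire and the floor in $t$ so that the multiplicative gap is exactly $\tfrac{49}{48}-\epsilon$ rather than something slightly weaker. This is resolved by the large-$m$ argument above, and if more slack is needed one can take $\delta=\epsilon/2$.
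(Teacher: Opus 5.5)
Your proposal is correct and follows essentially the same route as the paper. Both use the reduction $\Phi\mapsto(f_\Phi,1+6m)$ with H\aa{}stad parameter $\delta=\epsilon$, get completeness from $t=m$, get soundness from the counting argument of \cref{thm:maxsat-reduction}, and reduce the arithmetic to $\epsilon(5m+1)\ge 1/48$ for large $m$. Your explicit handling of the floor in $t$ and of small $m$ (by brute force or padding) only makes precise what the paper covers with ``positive for all sufficiently large $m$.''
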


\begin{proof}
Fix a constant $0<\epsilon<1/48$, and apply \Cref{thm:hastad} with the gap parameter $\delta=\epsilon$. We use the same reduction $\Phi \mapsto f_{\Phi}$ from \Cref{sec:reduction}, together with \Cref{thm:maxsat-reduction}. If $\Phi$ is satisfiable, then $w(f_\Phi) \le 1+6m$ by setting $t=m$. If every assignment satisfies at most $(\tfrac{7}{8}+\delta)m$ clauses, then the counting argument in \Cref{thm:maxsat-reduction} gives
\[
  w(f_\Phi) \;\ge\; 1 + 6m + \left(\frac{1}{8} - \delta\right)m.
\]
Set $k=1+6m$. Since $\delta=\epsilon$, we have
\[
\begin{aligned}
  \left(1 + 6m + \left(\frac{1}{8} - \delta\right)m\right)
  - \left(\frac{49}{48}-\epsilon\right)(1+6m) = \epsilon(5m+1)-\frac{1}{48}.
\end{aligned}
\]
This quantity is positive for all sufficiently large $m$. By \Cref{thm:hastad}, distinguishing the satisfiable case from the case in which every assignment satisfies at most $(\tfrac{7}{8}+\delta)m$ clauses is $\mathsf{NP}$-hard. Hence \approxproblem{($\tfrac{49}{48}-\epsilon$)} is $\mathsf{NP}$-hard.
\end{proof}

\subsection{Restriction to Degree-2 Polynomials}
We now prove that approximation hardness persists even when the instance is restricted to degree-2 polynomials, i.e., cubic terms are forbidden. The key idea is to replace each symbol with multiple ``copies'' and connect copies via ``cliques'' of CZ gates. Since at most two copies of any symbol can share a single wire, sufficiently many surviving copies enforce the same geometric constraints as the original construction.

\subsubsection{Twin Construction}
Given a \textsc{Max-3-SAT} instance $\Phi$ with $m$ clauses and $n$ variables, we construct a degree-2 polynomial $f'_\Phi$ using the following twin duplication strategy.

\begin{definition}[Twin Polynomial]\label{def:twin-poly}
    Let $f_\Phi$ be the polynomial from \Cref{sec:reduction}, and fix sufficiently large integers $M\leq Nm$. We construct the \emph{twin polynomial} $f'_\Phi$ over an expanded symbol set $\mathcal{S}'$ as follows. Equivalently, we specify the quadratic support of $f'_\Phi$: if the same quadratic monomial is produced more than once by the rules below, it appears only once in $f'_\Phi$.
    \begin{enumerate}
        \item Replace the base symbol $b$ with $N\cdot m$ copies: $\{b^{(1)}, b^{(2)}, \dots, b^{(Nm)}\}$.
        \item Replace every other symbol $s \in \mathcal{S} \setminus \{b\}$ with $M$ copies: $\{s^{(1)}, s^{(2)}, \dots, s^{(M)}\}$.
        \item For each set of copies, add a ``\emph{clique constraint}'':
        \[
            \text{For } b:\quad \sum_{1 \le i < j \le Nm} b^{(i)} b^{(j)}, 
            \qquad \text{For } s \ne b:\quad \sum_{1 \le i < j \le M} s^{(i)} s^{(j)}.
        \]
        \item Replace each \emph{quadratic} term $u v$ in $f_\Phi$ ($u,v\neq b$) with a ``\emph{matching}'' only:
        \[
            u v \;\mapsto\; \sum_{1\leq i \leq M} u^{(i)} v^{(i)}.
        \]
        \item Replace each \emph{cubic} term $b\, u\, v$ in $f_\Phi$ with a ``\emph{tripartite complete graph}'':
        \[
            b\,u\,v \;\mapsto\; \sum_{\substack{1\leq i \leq Nm\\1\leq j \leq M}} b^{(i)}u^{(j)} +  \sum_{\substack{1\leq i \leq Nm\\1\leq k \leq M}} b^{(i)}v^{(k)}+ \sum_{1\leq j,k\leq M} u^{(j)} v^{(k)}.
        \]
    \end{enumerate}
    The resulting polynomial $f'_\Phi$ has degree 2.
\end{definition}

The crucial geometric insight is that ``clique constraints'' prevent more than two copies from sharing a wire, hence ensure the vast majority of copies ``survive'' as isolated symbols among symbols in the clique.

\begin{lemma}[Clique Survival]\label{lem:twin-survival}
    Let $\{s_1, \dots, s_r\} \subseteq  \mathcal{S}'$ be a set of expanded symbols so that a ``clique constraint'' $\sum_{1\leq i<j \leq r} s_i s_j$ appears in $f'_\Phi$.
    Then in any circuit realizing $f'_\Phi$, at most $2$ of $\{s_1, \dots, s_r\}$ can share a qubit wire with another symbol in $\{s_1, \dots, s_r\}$.
\end{lemma}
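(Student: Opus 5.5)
The plan is to exploit two rigid facts about how quadratic monomials can arise in a realizing circuit. Two symbols on the same wire have disjoint open supports, so a monomial $s_is_j$ between them cannot come from a CZ gate. Hence it must come from an internal Hadamard gate, which forces $s_i$ and $s_j$ to be \emph{adjacent} on that wire. There is also a parity subtlety: a product could be produced twice and cancel. Since adjacency on a wire is unique, however, each pair of same-wire symbols produces its monomial at most once via Hadamard and never via CZ. So the coefficient of $s_is_j$ in $f_{\mathcal{C}}$ is $1$ if they are adjacent and $0$ otherwise.

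Step 1 is the parity remark just made. For same-wire $s_i,s_j$, the term $s_is_j$ appears in $f'_\Phi$ (by the clique constraint) if and only if $s_i,s_j$ are adjacent on that wire.

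Step 2 shows that no wire carries three clique symbols. If $s_i,s_j,s_k$ all lie on one wire, Step 1 makes them pairwise adjacent. But on a line, three segments cannot be pairwise adjacent; the outer two of any three are separated by the middle one. This is a contradiction.

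Step 3 shows that no two distinct wires each carry a pair. Suppose wire $A$ carries $s_1,s_2$ and wire $B$ carries $s_3,s_4$. By Step 1 each pair is adjacent, so $\supp(s_1),\supp(s_2)$ are consecutive open intervals meeting at a single Hadamard time $\tau_A$, and similarly at $\tau_B$. The cross terms $s_1s_3, s_1s_4, s_2s_3, s_2s_4$ all appear in $f'_\Phi$ and must come from CZ gates, so all four cross pairs must overlap. Without loss of generality let $s_1$ lie left of $\tau_A$ and $s_2$ right of it, and let $s_3$ lie left of $\tau_B$ and $s_4$ right of it. Assume $\tau_A\le\tau_B$. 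Then $s_2$, which lies in $(\tau_A,\cdot)$, and $s_3$, which lies in $(\cdot,\tau_B)$, can overlap, and so can the other pairs, except that $s_1$ lies in $(\cdot,\tau_A)$ while $s_4$ lies in $(\tau_B,\cdot)$. These are disjoint because $\tau_A\le\tau_B$. The case $\tau_B\le\tau_A$ is symmetric, with $s_2$ and $s_3$ now disjoint. This is the main obstacle, namely getting the orientation cases right (including reversals on each wire). The argument above handles them by labeling the left and right symbols on each wire rather than fixing names.

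Combining the steps, the set of clique symbols sharing a wire with another clique symbol is contained in a single wire and has size at most $2$.
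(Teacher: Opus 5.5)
Your proposal is correct and follows the paper's proof. In both, three clique symbols on one wire are ruled out because the outer pair is non-adjacent and cannot come from a CZ gate. Two doubly-occupied wires are ruled out by comparing the times of the two separating Hadamard gates, which forces the left segment of the earlier pair and the right segment of the later pair to be disjoint. Your coefficient/parity remark, your orientation-free left/right labeling, and your use of $\tau_A\le\tau_B$ (which covers simultaneous Hadamards) make explicit details that the paper's shorter argument leaves implicit.
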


\begin{proof}
    First suppose three symbols $s_i, s_j, s_k$ all lie on the same wire.
    Assume the order of them along the wire is $s_i - s_j - s_k$ (possibly connected with Hadamard gates).
    Then, the term $s_i s_k$ cannot be realized: $s_i$ and $s_k$ are non-adjacent on the wire, and CZ gates can only connect symbols on \emph{distinct} wires. 

    Next, suppose four symbols $s_{i_1}, s_{i_2}$ lie on the same wire, while $s_{j_1}, s_{j_2}$ lie on another wire (with order from left to right). Then clearly the terms $s_{i_1} s_{i_2}$ and $s_{j_1} s_{j_2}$ force adjacency realization of both pairs. Without loss of generality, assume in the circuit realizing $f'_\Phi$, the Hadamard gate between $s_{i_1}$ and $s_{i_2}$ is applied before the Hadamard gate between $s_{j_1}$ and $s_{j_2}$. Then clearly $\supp(s_{i_1})$ and $\supp(s_{j_2})$ are disjoint, so the term $s_{i_1} s_{j_2}$ cannot be realized.

    Hence, we conclude that at most two symbols from the clique constraint can share a wire with another symbol from the same set.
\end{proof}

\begin{definition}[Surviving a Single Cubic Term]\label{def:cubic-survival}
    Consider a cubic term $b\,u\,v$ in the original polynomial $f_\Phi$, replaced by the ``tripartite complete graph'' in \Cref{def:twin-poly}.
    In any circuit realizing $f'_\Phi$, we say a copy of $b$, $u$, or $v$ \emph{survives} the cubic term $b\,u\,v$ if no other symbols in the set $\{b^{(i)}\}_{i=1}^{Nm}\cup \{u^{(j)}\}_{j=1}^{M} \cup \{v^{(k)}\}_{k=1}^{M}$ lie on the same wire as the copy.
\end{definition}

\begin{corollary}[Single Cubic Term Survival]\label{lem:single-cubic-survival}
    Consider any cubic term $b\,u\,v$ in the original polynomial $f_\Phi$.
    In any circuit realizing $f'_\Phi$, there are at least $Nm - 2$ copies of $b$ that survive this cubic term, and at least $M - 2$ copies of each of $u$ and $v$ that survive this cubic term.
\end{corollary}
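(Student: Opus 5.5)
The idea is to apply \cref{lem:twin-survival} to the single clique formed by the union $T=\{b^{(i)}\}_{i=1}^{Nm}\cup\{u^{(j)}\}_{j=1}^{M}\cup\{v^{(k)}\}_{k=1}^{M}$. A copy survives exactly when it shares no wire with another member of $T$. So we want to show that at most two members of $T$ fail to survive, and then count.

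\textbf{Step 1: every pair in $T$ appears as a quadratic monomial in $f'_\Phi$.}
\begin{itemize}
\item Pairs within the same symbol's copies come from the clique constraints of \cref{def:twin-poly}.
\item Pairs $b^{(i)}u^{(j)}$, $b^{(i)}v^{(k)}$ and $u^{(j)}v^{(k)}$ come from the tripartite complete graph replacing $b\,u\,v$.
\end{itemize}
We must also rule out cancellation. By \cref{def:twin-poly}, $f'_\Phi$ is specified by its quadratic support, and a monomial produced more than once still appears exactly once. Hence every pair in $T$ is a monomial of $f'_\Phi$, so $\sum_{\{s,s'\}\subseteq T} ss'$ is a clique constraint present in $f'_\Phi$.

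\textbf{Step 2: apply the lemma.} The proof of \cref{lem:twin-survival} only uses that every pair in the set is a monomial of $f'_\Phi$. Concretely:
\begin{itemize}
\item Three members on one wire leave a non-adjacent pair, and that pair cannot be realized by CZ gates, which act only across distinct wires.
\item Two adjacent pairs on two different wires force disjoint supports for some cross pair.
\end{itemize}
So the lemma applies verbatim to $T$. Consequently at most two elements of $T$ share a wire with another element of $T$, and every other element of $T$ survives the cubic term $b\,u\,v$ in the sense of \cref{def:cubic-survival}.

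\textbf{Step 3: count.} At most two copies in total fail to survive. Hence at least $Nm-2$ copies of $b$ survive, and at least $M-2$ copies of each of $u$ and $v$ survive.

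The only point needing care is Step 1. One must check that the union really forms a full clique in $f'_\Phi$, which is exactly where the support-based (non-cancelling) definition of the twin polynomial is used. If one insisted instead on $\F_2$ addition with cancellation, this step would fail, since monomials such as $b^{(i)}u^{(j)}$ arise from several cubic terms containing $b$ and $u$.
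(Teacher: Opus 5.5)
Your proof is correct and follows the paper's argument. The clique constraints on the copies of $b,u,v$, together with the tripartite complete graph replacing $b\,u\,v$, form a single clique constraint on all their copies, and \cref{lem:twin-survival} then shows that at most two copies in total fail to survive. Your explicit check that the support-based definition of $f'_\Phi$ rules out cancellation is left implicit in the paper, but it is exactly what makes this step valid.
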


\begin{proof}
    It is clear from the construction of the twin polynomial that each cubic term $b\,u\,v$ in $f_\Phi$ gives rise to a ``tripartite complete graph'' in $f'_\Phi$, which will form a complete ``clique constraint'' on all copies of $b,u$ and $v$ together with the clique constraints on the copies of $b,u,v$ themselves. Hence the result follows from \Cref{lem:twin-survival}.
\end{proof}
\begin{lemma}[Global Cubic Term Survival]\label{lem:global-cubic-survival}
    In any circuit realizing $f'_\Phi$, define a copy $s^{(j)}$ (for $s \in \mathcal{S}$) to be \emph{globally surviving} if it survives \emph{all} cubic terms in $f_\Phi$ that involve $s$ in the sense of \Cref{def:cubic-survival}. Then for any symbol $s \in \mathcal{S} \setminus \{b\}$, there are at least $M - 20$ globally surviving copies of $s$.
    Similarly, there are at least $Nm - 60m$ globally surviving copies of $b$.
\end{lemma}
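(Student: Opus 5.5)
The plan is a union bound over the cubic terms of $f_\Phi$ that involve each symbol, using \Cref{lem:single-cubic-survival}. First, for each $s\in\mathcal{S}\setminus\{b\}$ I will bound the number of cubic terms of $f_\Phi$ containing $s$. Then, for $b$, I will bound the total number of cubic terms by $O(m)$.

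For a non-base symbol $s$, a copy $s^{(j)}$ fails to globally survive only if it fails to survive some cubic term $b\,s\,v$. By \Cref{lem:single-cubic-survival}, at most $2$ copies of $s$ fail to survive any single cubic term. Hence the number of non-surviving copies is at most $2\cdot d(s)$, where $d(s)$ is the number of cubic terms of $f_\Phi$ containing $s$. The task is therefore to show $d(s)\le 10$ for every $s\neq b$, by inspecting the global polynomial in \cref{subsubsec:global-poly}.

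Counting by symbol type:
\begin{itemize}
\item $\hat t_{\cl{c}}$ appears in the $9$ marker--literal cubics, plus the synchronization cubics $b\,\hat t_{\cl{c}}\hat t_{\cl{c'}}$.
\item Literal symbols appear in one marker cubic, at most $3$ OR-gadget cubics, plus the cubics $b\,x_{\cl{c_1}}x_{\cl{c_2}}$.
\item $p_{\cl{c}}$ appears in $4$ cubics.
\item $o_{\cl{c}}$ appears in $3$ cubics.
\end{itemize}

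The synchronization sums over all pairs of clauses (or all pairs of occurrences) are the main obstacle, since their number is unbounded. The approach I would try first is to exploit that the synchronization terms for a fixed symbol form their own structure. Consider all copies of $\{t_{\cl{c}}\}_{\cl{c}\in C}$ together with $b$. All pairs among copies of distinct $t_{\cl{c}}$ appear (tripartite graphs), and all pairs within each clique appear. So the union of copies of $b$ and of all $t_{\cl{c}}$ forms one big clique constraint in $f'_\Phi$, and \Cref{lem:twin-survival} applies to it directly: at most $2$ copies in that whole union share a wire with another member. Thus each family of synchronization cubics (the $t$-family, $\hat t$-family, $\tilde t$-family, and for each variable $x$ the $x$-, $\hat x$-, $\bar x$-families) costs at most $2$ copies in total, rather than $2$ per term. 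With this, each symbol $s\neq b$ is charged for at most a constant number of "local" cubic terms plus one synchronization family, each costing $2$. The $\hat t_{\cl{c}}$ count is $9+1$ groups, giving $2\cdot 10=20$, which matches the stated $M-20$.

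For $b$, I will charge each cubic-term group rather than each symbol:
\begin{itemize}
\item per clause, the local cubic terms: $9$ marker--literal, $4+4$ OR-gadget, and $1$ pinning cubic, i.e.\ at most $18$ terms, each costing at most $2$ copies of $b$;
\item the synchronization families, of which there are $3$ for the truth markers and $3$ per variable, i.e.\ $O(m)$ families total, each costing at most $2$ copies.
\end{itemize}
Summing gives at most $2(18m+3+3\cdot 3m)\le 60m$ non-surviving copies of $b$ for $m\ge1$. I need to double-check that the family-clique view correctly captures "survives" in the sense of \Cref{def:cubic-survival}. A copy not sharing a wire with any member of the family union certainly survives every term in that family, so this holds.
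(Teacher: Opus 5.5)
Your proposal is correct and follows essentially the same route as the paper: a union bound over the cubic terms of $f_\Phi$. Each clause-local term costs at most $2$ copies via \Cref{lem:single-cubic-survival}, and each synchronization family (together with the copies of $b$) is treated as a single clique constraint costing at most $2$ copies via \Cref{lem:twin-survival}. This gives $2(9+1)=20$ for $\hat{t}_{\cl{c}}$ and $6+36m+6n\le 6+54m\le 60m$ for $b$. The slips are harmless: your preliminary target $d(s)\le 10$ is not attainable, as you yourself observe and correct with the family-clique view, and a literal symbol lies in exactly $2$ OR-gadget cubics rather than up to $3$, which only loosens your count.
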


\begin{proof}
    We analyze all cubic terms in $f_\Phi$ from the construction in \Cref{sec:reduction}. Recall from \cref{subsec:marker-gadget} to \cref{subsec:clause-gadget} that the cubic terms fall into the following categories:
    
    \paragraph{Type 1: Truth-Marker Synchronization ($G_{\mathrm{TM}}$).}
    For each pair of distinct clauses $\cl{c_1} \ne \cl{c_2}$:
    \[
        b\,t_{\cl{c_1}}\,t_{\cl{c_2}}, \quad b\,\hat{t}_{\cl{c_1}}\,\hat{t}_{\cl{c_2}}, \quad b\,\tilde{t}_{\cl{c_1}}\,\tilde{t}_{\cl{c_2}}.
    \]
    
    \paragraph{Type 2: Variable Gadget ($G_{x_{\cl{c}}}$).}
    For each clause $\cl{c}$ containing variable $x$:
    \[
        b\,\hat{t}_{\cl{c}}\,x_{\cl{c}}, \quad b\,\hat{t}_{\cl{c}}\,\hat{x}_{\cl{c}}, \quad b\,\hat{t}_{\cl{c}}\,\bar{x}_{\cl{c}}.
    \]
    
    \paragraph{Type 3: Variable Synchronization ($G_x$).}
    For each variable $x$ and each pair of distinct literal occurrences of $\lit{x}$ or $\bar{\lit{x}}$ (not necessarily from distinct clauses):
    \[
        b\,x_{\cl{c_1}}\,x_{\cl{c_2}}, \quad b\,\hat{x}_{\cl{c_1}}\,\hat{x}_{\cl{c_2}}, \quad b\,\bar{x}_{\cl{c_1}}\,\bar{x}_{\cl{c_2}}.
    \]
    
    \paragraph{Type 4: OR Gadgets for a clause $\cl{c}$ ($G_{u \lor v}$).}
    For each clause $\cl{c}$ with literals $\lit{x}, \lit{y}, \lit{z}$, the two OR gadgets contribute the following cubic terms:
    $$
    b\,\overleftarrow{p_{\cl{c}}}\,\lit{x}_{\cl{c}}
            , b\,p_{\cl{c}}\,\lit{x}_{\cl{c}}
      , b\,\overrightarrow{p_{\cl{c}}}\,\lit{y}_{\cl{c}}
      , b\,p_{\cl{c}}\,\lit{y}_{\cl{c}}, b\,\overleftarrow{o_{\cl{c}}}\,p_{\cl{c}}
            , b\,o_{\cl{c}}\,p_{\cl{c}}
      , b\,\overrightarrow{o_{\cl{c}}}\,\lit{z}_{\cl{c}}
      , b\,o_{\cl{c}}\,\lit{z}_{\cl{c}}.
    $$
    \paragraph{Type 5: Clause Synchronization ($G_{\cl{c}}$).}
    For each clause $\cl{c}$, the pinning cubic term is:
    \[
        b\,t_{\cl{c}}\,o_{\cl{c}}.
    \]
    
    We now count the number of dead copies by analyzing the clique structure induced by each type of cubic term.
    
    \textbf{Type 1:} The terms $\{b\,t_{\cl{c_1}}\,t_{\cl{c_2}} : \cl{c_1} \ne \cl{c_2}\}$ (together with individual constraints on $b,t_{\cl{c_1}},t_{\cl{c_2}}$ respectively) will form a clique constraint in $\mathcal{S}'$ among all $\{ b^{(i)} : 1\leq i\leq Nm\}\cup \{t_{\cl{c}}^{(j)} :  \cl{c} \in C, 1 \le j \le M\}$ copies. Similarly, for $\{b\,\hat{t}_{\cl{c_1}}\,\hat{t}_{\cl{c_2}}\}$ and $\{b\,\tilde{t}_{\cl{c_1}}\,\tilde{t}_{\cl{c_2}}\}$ respectively. By \Cref{lem:twin-survival}, each clique kills at most 2 copies of any individual symbol. Thus Type 1 kills at most 2 copies each of $ t_{\cl{c}}, \hat{t}_{\cl{c}}, \tilde{t}_{\cl{c}}$ for any clause $\cl{c}$ and at most $6$ copies of $b$.
    
    \textbf{Type 2:} There are $9m$ cubic terms (three for each of the three literal occurrences in a clause). Each literal-occurrence symbol, such as $x_{\cl{c}}, \hat{x}_{\cl{c}}$, or $\bar{x}_{\cl{c}}$, appears in one Type 2 term, so Type 2 kills at most $2$ copies of it by \Cref{lem:single-cubic-survival}. The marker $\hat{t}_{\cl{c}}$ appears in the nine Type 2 terms of clause $\cl{c}$, so Type 2 kills at most $18$ copies of $\hat{t}_{\cl{c}}$. The markers $t_{\cl{c}}$ and $\tilde{t}_{\cl{c}}$ do not appear in Type 2. For $b$, Type 2 kills at most $18m$ copies.
    
    \textbf{Type 3:} For each variable $x$ and each of its three symbol types ($x_{\cl{c}}, \hat{x}_{\cl{c}}, \bar{x}_{\cl{c}}$), the terms $\{b\,x_{\cl{c_1}}\,x_{\cl{c_2}}\}$ (allowing $\cl{c_1}=\cl{c_2}$ when a clause contains multiple $x$-literals) form the same clique among all copies of $b$ and the corresponding occurrence-symbol copies across clauses; intra-clause duplicate occurrences are treated as distinct occurrence symbols. By \Cref{lem:twin-survival}, Type 3 kills at most $2$ copies of each participating non-$b$ symbol and at most $6n$ copies of $b$.
    
    \textbf{Type 4:} Each clause $c$ has two OR gadgets ($p_{\cl{c}}$ and $o_{\cl{c}}$), each with 4 cubic terms. Each OR symbol (e.g., $\overleftarrow{p_{\cl{c}}}, p_{\cl{c}}, \overrightarrow{p_{\cl{c}}}, \overleftarrow{o_{\cl{c}}}, o_{\cl{c}}, \overrightarrow{o_{\cl{c}}}$) appears in at most 4 Type 4 terms; the worst case is the middle symbol $p_{\cl{c}}$, which is both the output of the first OR gadget and an input of the second. By \Cref{lem:single-cubic-survival}, Type 4 kills at most $2 \times 4 = 8$ copies per OR symbol. For literal symbols $\lit{x},\lit{y},\lit{z}$ in the OR gadgets, each appears in at most 2 Type 4 terms, killing at most $2 \times 2 = 4$ copies per literal symbol. The symbol $b$ appears in all $8$ Type 4 terms per clause, killing at most $2 \times 8 = 16$ copies of $b$ per clause.
    
    \textbf{Type 5:} Each clause has one term $b\,t_{\cl{c}}\,o_{\cl{c}}$. This kills at most 2 copies each of $b$, $t_{\cl{c}}$ and $o_{\cl{c}}$ by \Cref{lem:single-cubic-survival}.
    
    \medskip
    \textbf{Summary for non-$b$ symbols:} Summing over all types:
    \begin{itemize}
        \item Truth markers: $t_{\cl{c}}$ has at most $2+2=4$ dead copies, $\tilde{t}_{\cl{c}}$ has at most $2$ dead copies, and $\hat{t}_{\cl{c}}$ has at most $2+18=20$ dead copies.
        \item Variable symbols: Type 2 ($2$) + Type 3 ($2$) + Type 4 ($4$) $\le 8$ dead.
        \item OR symbols: Type 4 ($8$) + Type 5 ($2$) $\le 10$ dead.
    \end{itemize}
    Taking the worst case, at least $M - 20$ copies survive for any non-$b$ symbol. 
    
    \textbf{For $b$ copies:} Variables not appearing in any clause can be discarded, so $n\leq 3m$. Hence the total number of dead $b$ copies is at most
    $$
    6 + 18m + 6n + 16m + 2m \leq 6+54m \leq 60m,
    $$
    since $m\geq 1$.
    Thus at least $Nm - 60m$ copies of $b$ survive globally.
\end{proof}

\subsubsection{Completeness and Soundness}

We now establish the wire count bounds for $f'_\Phi$.

\begin{lemma}[Twin Polynomial Completeness]\label{lem:twin-complete}
    If $\Phi$ has a satisfiable assignment, then there exists a circuit realizing $f'_\Phi$ using at most $(6M+N)m$ qubit wires.
\end{lemma}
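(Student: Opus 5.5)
Starting from a satisfying assignment $w$, I would take the $(1+6m)$-qubit circuit $\mathcal{C}$ of \Cref{thm:completeness} that realizes $f_\Phi$ and build a circuit for $f'_\Phi$ in four steps.

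\textbf{Step 1 (layout).} Fix the supports in $\mathcal{C}$ with generic endpoints. By \Cref{lem:good-cl}, the $6m$ non-base wires each carry a three-symbol chain. Take $M$ parallel copies of these $6m$ wires, with the $j$-th copy carrying $s^{(j)}$ for every non-base symbol $s$. Each copy keeps the same time layout; I would shift Hadamard positions by tiny distinct amounts so that no two junctions coincide, which is harmless because all supports are open intervals with generic endpoints. For $b$, use $Nm$ fresh wires with no internal Hadamard gates, so $\supp(b^{(i)})=I$. The total is $6Mm+Nm=(6M+N)m$ wires.

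\textbf{Step 2 (quadratic monomials by adjacency).} The matching terms $u^{(i)}v^{(i)}$ with $uv$ quadratic in $f_\Phi$ are exactly the terms produced by the internal Hadamard gates on the copied wires. These gates produce no other quadratic monomials.

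\textbf{Step 3 (remaining monomials by CZ gates).} Every other required monomial must come from a CZ gate between overlapping segments on distinct wires.
\begin{itemize}
\item For the clique constraints of a non-base symbol $s$, distinct copies $s^{(i)},s^{(j)}$ lie on different wires with (essentially) the same support, so they overlap.
\item For the $b$-clique, all $b^{(i)}$ have full support and lie on distinct wires.
\item For each cubic $b\,u\,v$ of $f_\Phi$, $u$ and $v$ overlap in $\mathcal{C}$ because the CCZ gate exists there. Hence $u^{(j)}$ and $v^{(k)}$ overlap for all $j,k$, after the tiny shifts which preserve nonempty open intersections. Also $b^{(i)}$ overlaps everything.
\end{itemize}
I also need $u^{(j)}$ and $v^{(k)}$ on different wires. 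This holds whenever $u,v$ lie on different wires of $\mathcal{C}$. In a cubic $b\,u\,v$ the two symbols must overlap, so they cannot be consecutive on a wire. They also cannot be the two end segments of one chain, since those have disjoint supports. So they are on different wires of $\mathcal{C}$, and hence so are their copies.

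\textbf{Step 4 (deduplication and parity).} A monomial produced by several rules appears only once in $f'_\Phi$, so I place exactly one CZ gate per distinct required monomial and one Hadamard link per matching term. I must check that no required quadratic monomial is generated both by an adjacency and by a CZ gate, which would cancel it over $\F_2$. Adjacency produces only the pairs $u^{(i)}v^{(i)}$ with $uv$ quadratic in $f_\Phi$. Could such a pair also be demanded by a clique or tripartite rule?
\begin{itemize}
\item Clique rules involve copies of the same symbol, so they cannot produce it.
\item A tripartite rule would need $u,v$ to co-occur in a cubic term. By the argument in Step 3 this is impossible, since adjacent symbols never overlap.
\end{itemize}
So the monomial sets are disjoint, and the circuit's polynomial is exactly $f'_\Phi$. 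Since the construction is degree 2, no CCZ or $Z$ gates are used.

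The main obstacle I expect is the bookkeeping in Step 4: verifying that every monomial generated by the rules of \Cref{def:twin-poly} is produced exactly once, that adjacency never collides with a CZ-required monomial, and that the perturbation of junctions preserves every overlap inherited from $\mathcal{C}$. I would handle the perturbation by choosing the shift smaller than half the minimum positive overlap length among all overlapping pairs in $\mathcal{C}$.
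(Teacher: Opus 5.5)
Your proposal is correct and follows essentially the same construction as the paper. The paper uses $M$ time-aligned, slightly perturbed copies of the whole $(1+6m)$-wire circuit from \cref{thm:completeness} plus $Nm-M$ extra full-support $b$-wires; you use $M$ copies of the $6m$ clause wires plus $Nm$ separate $b$-wires, which is the same layout and the same count $(6M+N)m$. In both, matching terms come from Hadamard links and all clique and tripartite terms from CZ gates. Your explicit checks that tripartite pairs always lie on distinct wires, and that no Hadamard-link monomial is also demanded by a CZ rule (so nothing cancels over $\F_2$), are details the paper leaves implicit, and your arguments for them are sound.
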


\begin{proof}
    Construct $M$ parallel copies of the circuit from \cref{thm:completeness}, but first add only the Hadamard gates in the temporal order of the original circuit (the same Hadamard gates in different copies are placed close to one another with small perturbations), one for each twin index $j \in \{1, 2, \dots, M\}$. Hence, the Hadamards already realize all quadratic terms arising from the original quadratic terms in $f_{\Phi}$ (note that \cref{thm:completeness} deals with perfect completeness). We place each remaining base symbol copy $b^{(i)}$, for $i=M+1,\dots,Nm$, on its own additional wire.

    Next, we add CZ gates across copies: for any cubic term $b\,u\,v$ in $f_\Phi$, we add CZ gates between all copies of $b^{(i)}, u^{(j)}, v^{(k)}$ across all indices to realize the tripartite complete graphs in $f'_\Phi$. As the circuit from \cref{thm:completeness} can implement the CCZ gate for $b\,u\,v$, and we parallelize $M$ copies, all current supports of $b^{(i)}, u^{(j)}, v^{(k)}$ intersect for all $i,j,k$. Thus the CZ gates can be added as desired. 

    Finally, for the ``clique constraints'', we can add CZ gates between all pairs of copies of each symbol on their respective wires using the same reasoning.

    Total wires: $M \times (6m+1) + Nm - M = (6M+N)m$.
\end{proof}

\begin{lemma}[Twin Polynomial Soundness]\label{lem:twin-sound}
  Let $N > 60$ and $M$ be sufficiently large. If every assignment to $\Phi$ satisfies at most $t$ clauses, then every circuit realizing $f'_\Phi$ requires at least
    \[
        K' \;=\; (7M+N)m - tM - 750m.
    \]
\end{lemma}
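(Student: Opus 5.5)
Fix a circuit $\mathcal{C}'$ realizing $f'_\Phi$. The plan is to reduce to the degree-3 counting argument of \Cref{thm:maxsat-reduction}, applied to ``clean'' twin indices. By \Cref{lem:global-cubic-survival}, each non-$b$ symbol $s$ has at most $20$ non-globally-surviving copies. For a clause $\cl{c}$, call index $j$ \emph{clean} for $\cl{c}$ if every copy $s^{(j)}$, $s\in\mathcal{S}_{\cl{c}}$, is globally surviving. A union bound over the $18$ symbols of $\mathcal{S}_{\cl{c}}$ gives at least $M-360$ clean indices per clause.

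\textbf{Wires of a clean copy.} Fix a clean index $j$ for $\cl{c}$ and let $W_{\cl{c}}^{(j)}$ be the set of wires carrying symbols of $\mathcal{S}_{\cl{c}}^{(j)}$. Quadratic terms of $f'_\Phi$ between different twin indices of non-$b$ symbols arise only from clique constraints and tripartite graphs, and global survival forbids those pairs from being wire-adjacent. The only wire-adjacencies possible within $\mathcal{S}_{\cl{c}}^{(j)}$ are therefore the matchings $u^{(j)}v^{(j)}$ for quadratic terms $uv$ of $f_\Phi$. Any symbol not in $\mathcal{S}_{\cl{c}}^{(j)}$ that is adjacent on a wire to some $s^{(j)}$ must be joined to it by a quadratic monomial of $f'_\Phi$. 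I would check that every such neighbour of a globally surviving copy is another copy with index $j$ of a symbol in the same clause; note that copies of $b$ cannot share a wire with globally surviving partners. Under this check the $W_{\cl{c}}^{(j)}$ are pairwise disjoint across $(\cl{c},j)$. They are also disjoint from the at least $Nm-60m-O(m)$ globally surviving $b$-copies, each of which sits alone on its wire since it survives the clique of all $b$-copies.

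\textbf{Each clean copy mimics the degree-3 gadget.} The argument of \Cref{lem:good-cl} shows $|W_{\cl{c}}^{(j)}|\ge 6$, with equality only in the canonical chain configuration. For a good copy (exactly $6$ wires), I would replay \Cref{lem:encode-wd,lem:var-rlz,lem:or-rlz,lem:cl-rlz}. Each cubic constraint $b\,u\,v$ is used there only as ``$u$ and $v$ overlap, inside $\supp(b)$''. Here, pick any globally surviving copy $b^{(i)}$: it lies on a wire alone, so it has full support, and survival makes $u^{(j)}v^{(k)}$ necessarily a CZ between distinct wires, forcing the supports to overlap. By Helly (\cref{prelim:helly}) this is the same constraint. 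For cross-index synchronization (Types~1 and~3), the tripartite graphs couple $t^{(j)}_{\cl{c_1}}$ with $t^{(k)}_{\cl{c_2}}$ for all $j,k$. Consequently all good copies of all clauses share one truth-marker orientation, and all good occurrence copies of a variable $x$ encode one value. This yields a single global assignment $w$, under which every good copy encodes a satisfied clause via the pinning term $b\,t\,o$.

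\textbf{Counting, and the main obstacle.} Every clause $\cl{c}$ unsatisfied by $w$ forces all of its $\ge M-360$ clean copies to use $\ge 7$ wires. Since at most $t$ clauses are satisfied, the total wire count is at least
\[
(Nm-60m)+\sum_{\cl{c}}6(M-360)+(m-t)(M-360),
\]
which is $\ge (7M+N)m-tM-O(m)$. With the constants tracked this gives the stated $750m$ slack, and the hypothesis $N>60$ ensures the $b$ count stays positive. The main obstacle is the disjointness claim in the first step: showing that non-surviving ``garbage'' copies cannot be shared between wires of two clean copies or merge them. I would handle this by charging every wire to a unique clean copy, or to a surviving $b$, through its globally surviving symbols. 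Wires carrying only dead copies are simply ignored, since they only add to the count.
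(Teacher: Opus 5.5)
\textbf{Overall assessment.} Your route matches the paper's: clean index sets via global survival and a union bound, a Helly-based replay of \Cref{lem:good-cl,lem:encode-wd,lem:var-rlz,lem:or-rlz,lem:cl-rlz} on good copies, then a wire count. Counting directly against the assignment $w$ is an equivalent rephrasing of the paper's ``$g>tM$'' contradiction.

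\textbf{The gap: your constant does not reach $750m$.} With the uniform bound of $20$ dead copies per symbol you only get $M-360$ clean indices per clause. Your displayed sum then equals
\[
(7M+N)m - tM - 2580m + 360t \;\le\; (7M+N)m - tM - 2220m ,
\]
using $t\le m$. This is strictly weaker than $K'$, so your assertion that tracking constants gives $750m$ is false for your own numbers. To reach $750m$ you need the per-symbol, type-wise dead counts from the proof of \Cref{lem:global-cubic-survival}. Per clause:
\begin{itemize}
\item the three truth-marker symbols lose at most $4+20+2=26$ indices;
\item each occurrence triple loses at most $8+4+4=16$, since only one of $x_{\cl{c}},\bar x_{\cl{c}}$ enters an OR gadget;
\item the six OR symbols lose at most $2+8+2+2+6+2=22$.
\end{itemize}
Hence $|J_{\cl{c}}|\ge M-96$. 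Your count then has slack $60m+576m+96(m-t)\le 732m\le 750m$.

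\textbf{The obstacle you flag is already closed by your own check.} No charging scheme is needed. Suppose $s^{(j)}$ is globally surviving. Clique and tripartite partners (including all $b$-copies) are excluded from its wire by survival, and all original quadratic terms are clause-local. So its wire-neighbours can only be matching partners $u^{(j)}$ with $u$ in the same clause. Because $j$ is clean for $\cl{c}$, $u^{(j)}$ is again globally surviving. Inducting along the wire, every wire that meets $\mathcal{S}_{\cl{c}}^{(j)}$ consists \emph{only} of symbols of $\mathcal{S}_{\cl{c}}^{(j)}$, so no dead copy can sit on it. This gives both the disjointness across pairs $(\cl{c},j)$ and the ``at most $3$ symbols per wire'' bound. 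It is in fact a cleaner justification than the paper's remark about consecutive symbols.

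\textbf{One correction in the synchronization step.} Copies of the same clause, or of the same occurrence, with different indices are aligned by the clique constraints on the copies of $t_{\cl{c}},\hat t_{\cl{c}},\tilde t_{\cl{c}}$ (respectively $x_{\cl{c}},\hat x_{\cl{c}},\bar x_{\cl{c}}$). The tripartite graphs do not do this; they only couple distinct clauses or distinct occurrences.
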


\begin{proof}
    Consider any circuit realizing $f'_\Phi$. By \Cref{lem:global-cubic-survival}, for any symbol $s \in \mathcal{S} \setminus \{b\}$, there are at least $M - 20$ globally surviving copies, and for $b$, there are at least $Nm - 60m$ globally surviving copies.
    
    \paragraph{Step 1: Surviving copies enforce the original structure.}
    Fix a clause $\cl{c} \in C$ and consider the globally surviving copies of symbols in $\mathcal{S}_{\cl{c}}$ (the 18 symbols belonging to clause $\cl{c}$). We use the type-wise bounds from the proof of \Cref{lem:global-cubic-survival}. The three truth-marker symbols lose at most $4+20+2=26$ indices in total. The three literal-occurrence wires lose at most $3\cdot(8+4+4)=48$ indices in total. The six OR symbols lose at most $2+8+2+2+6+2=22$ indices in total. Hence, by a union bound, there are at least $M-96$ indices $1\leq j\leq M$ such that all 18 symbols in $\mathcal{S}_{\cl{c}}$ have their $j$-th copy surviving \emph{all} cubic terms. We denote the surviving copies by $\mathcal{S}_{\cl{c}}^{(j)} = \{s^{(j)} : s \in \mathcal{S}_{\cl{c}}\}$ where $j \in J_{\cl{c}} \subseteq \{1, 2, \dots, M\}$ is the index set of size at least $M - 96$. Moreover, since $N>60$, there exists a global base copy $b^{(r)}$ that survives all cubic terms involving $b^{(r)}$. 
    
    \paragraph{Step 2: Applying the original structural lemmas.}  
    The key observation is that, given a clause $\cl{c}$ and the surviving copies $\mathcal{S}_{\cl{c}}^{(j)}, j\in J_{\cl{c}}$ with the fixed base copy $b^{(r)}$, the ``tripartite complete graph'' that arose from the original cubic term in $f_{\Phi}$ actually enforces the same geometric constraints as in the original polynomial. Specifically, suppose a triple $(b^{(i)}, u^{(j)}, v^{(k)})$ survives the original cubic term $b\,u\,v$ in $f_\Phi$. The quadratic terms arising from the tripartite complete graph in $f'_\Phi$ and the surviving condition ($b^{(i)}, u^{(j)}, v^{(k)}$ are on distinct wires) enforce overlap constraints $\supp(b^{(i)}) \cap \supp(u^{(j)}) \ne \emptyset$, $\supp(b^{(i)}) \cap \supp(v^{(k)}) \ne \emptyset$, and $\supp(u^{(j)}) \cap \supp(v^{(k)}) \ne \emptyset$, exactly the same as the CCZ gate in $f_\Phi$ by Helly's theorem (\cref{prelim:helly}).

    Therefore, for each surviving copy index $j \in J_{\cl{c}}$, the surviving copies $\mathcal{S}_{\cl{c}}^{(j)}$ together with the fixed base copy $b^{(r)}$ satisfy the same geometric constraints as in \Cref{lem:qubit-iso,lem:good-cl}. Hence, we can similarly define a clause ``copy'' $\cl{c}^{(j)}$ to be a \emph{good clause} if $\mathcal{S}_{\cl{c}}^{(j)}$ (with $j\in J_{\cl{c}}$) uses only $6$ wires. Further, given a good clause copy $\cl{c}^{(j)}$, we can define the \emph{encoded assignment} as in \Cref{def:encode}, which is well-defined by \Cref{lem:encode-wd}, also crucially using the definition of surviving copies. The arguments in \cref{lem:var-rlz} also go through by showing that for any variable $x\in X$, all occurrences $x_{\cl{c_1}^{(j_1)}}$ and $x_{\cl{c_2}^{(j_2)}}$ encode the same truth value for different \emph{good} clauses \(\cl{c_1}^{(j_1)},\cl{c_2}^{(j_2)}\), with $j_1 \in J_{\cl{c_1}}$ and $j_2 \in J_{\cl{c_2}}$. The correctness of OR gadgets in \cref{lem:or-rlz,lem:cl-rlz} also follows.
    
    \paragraph{Step 3: Wire count lower bound.}
    We now apply the counting argument from \Cref{thm:maxsat-reduction} to the surviving clause copies. 
    
    By the definition of global surviving, surviving copies of the same symbol cannot share wires. Furthermore, for different $j_1,j_2 \in J_{\cl{c}}$, symbols in $\mathcal{S}_{\cl{c}}^{(j_1)}$ and $\mathcal{S}_{\cl{c}}^{(j_2)}$ cannot share wires either, because by continuity we can focus on two consecutive symbols from the two sets along a wire. Their product must appear in $f'_\Phi$, then either they are copies of the same symbol (impossible as discussed before), or the product comes from a ``matching'' of quadratic terms in $f_{\Phi}$ (impossible since $j_1\neq j_2$), or a term in a ``tripartite complete graph'' from some cubic term in $f_{\Phi}$, contradicting the surviving condition. The same consecutive-symbol argument also excludes sharing between surviving clause copies with different clause labels, since the matching terms coming from original quadratic terms are clause-local and all cross-clause quadratic terms in $f'_\Phi$ arise from tripartite replacements of cubic synchronization terms. Finally, every globally surviving base copy is isolated, because $b$ has no original quadratic terms and every quadratic term involving a copy of $b$ comes either from the clique constraint on $b$-copies or from a tripartite replacement of a cubic term.

    Therefore, the surviving base symbol copies (at least $Nm-60m$) and all surviving clause symbol copies $\mathcal{S}_{\cl{c}}^{(j)}$ for different $(\cl{c}, j)$ pairs cannot share wires. Moreover, each clause copy $\cl{c}^{(j)}$ for $j \in J_{\cl{c}}$ requires only 6 wires when it is a good clause copy (defined above), and at least 7 wires otherwise.
    
    Let $g$ be the total number of good clause copies across all surviving clauses. Since each clause $\cl{c}$ has at least $|J_{\cl{c}}| \geq M - 96$ clause copies, the total number of clause copies is at least $\sum_{\cl{c}} |J_{\cl{c}}| \geq m(M - 96)$. The total wire count is at least:
    \[
        w(f'_\Phi) \;\ge\; Nm - 60m + 6g + 7\left(\sum_{\cl{c}} |J_{\cl{c}}| - g\right) \;=\; Nm - 60m + 7\sum_{\cl{c}} |J_{\cl{c}}| - g.
    \]
    
    If the circuit uses at most $K' = (7M+N)m - tM - 750m$ wires, then:
    \[
        Nm - 60m + 7\sum_{\cl{c}} |J_{\cl{c}}| - g \;\le\; K', \quad\text{which implies}\quad g \geq Nm - 60m + 7m(M-96) - K' = tM+18m >tM .
    \]

    Now, we extract a Boolean assignment from the good clause copies. By Step 2, each good clause copy encodes a well-defined truth assignment via \Cref{def:encode} and \Cref{lem:encode-wd}. Since $g > tM$ and each of the $m$ clauses has only $M$ copies, there must be more than $t$ distinct clauses each with at least one good copy. By \Cref{lem:cl-rlz}(a), each good clause copy corresponds to a satisfied clause under the encoded assignment.
    
    Therefore, the encoded assignment satisfies more than $t$ clauses, contradicting the assumption that every assignment satisfies at most $t$ clauses. This completes the proof that $w(f'_\Phi) \geq K'$.
\end{proof}

\begin{theorem}[Degree-2 Approximation Hardness]\label{thm:degree2-hard}
    For any constant $0<\epsilon<1/48$, the promise problem \approxproblem{($\tfrac{49}{48} - \epsilon$)} restricted to degree-2 polynomials is $\mathsf{NP}$-hard.
\end{theorem}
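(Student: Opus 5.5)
The plan is to combine Håstad's theorem (\cref{thm:hastad}) with the twin construction of \cref{def:twin-poly}. We then use the completeness and soundness bounds \cref{lem:twin-complete} and \cref{lem:twin-sound} to obtain a multiplicative gap.

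Fix $0<\epsilon<1/48$ and pick a Håstad parameter $\delta>0$ together with constants $N>60$ and $M$, each depending only on $\epsilon$. Given a \SAT{} instance $\Phi$ with $m$ clauses, we build $f_\Phi$ and then $f'_\Phi$. This is a polynomial-time map, since $|\mathcal{S}'|=Nm+18mM$ and the number of quadratic terms is polynomial in $m$. We set $k=(6M+N)m$.

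In the satisfiable case, \cref{lem:twin-complete} gives $w(f'_\Phi)\le k$. In the case where at most $t=(7/8+\delta)m$ clauses are satisfiable, \cref{lem:twin-sound} gives
\[
w(f'_\Phi)\ \ge\ (7M+N)m-(7/8+\delta)Mm-750m=\Bigl(6M+N+\bigl(\tfrac18-\delta\bigr)M-750\Bigr)m .
\]
The soundness bound needs the threshold $t$ to be an integer, so we use $t=\lfloor(7/8+\delta)m\rfloor$, which can only help.

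The remaining work is the parameter calculation. We need the ratio
\[
\frac{6M+N+(1/8-\delta)M-750}{6M+N}
\]
to be at least $49/48-\epsilon$. Set $N=\gamma M$ with $\gamma$ small but $N>60$; for instance $N=\lceil \sqrt{M}\,\rceil+61$. Then the ratio equals
\[
1+\frac{(1/8-\delta)-750/M}{6+N/M},
\]
which tends to $1+(1/8-\delta)/6=49/48-\delta/6$ as $M\to\infty$ with $N/M\to0$. So we choose $\delta=\epsilon$, which gives a limiting ratio of $49/48-\epsilon/6$. We then pick $M$ large enough that the $750/M$ loss and the $N/M$ term together cost less than $5\epsilon/6$. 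All three parameters depend only on $\epsilon$, and the bound $M\le Nm$ required by \cref{def:twin-poly} holds for large $m$; small $m$ can be solved by brute force. Hence distinguishing the two Håstad cases reduces to distinguishing $w(f'_\Phi)\le k$ from $w(f'_\Phi)\ge(49/48-\epsilon)k$, and the latter is \NP-hard.

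The main obstacle is balancing the parameters. $N$ must exceed $60$ so that a globally surviving base copy exists in \cref{lem:twin-sound}, yet $N/M$ must vanish so that the extra base wires do not dilute the gap; taking $M\gg N>60$, both constant, settles this. We should also check that the soundness lemma's ``$M$ sufficiently large'' hypothesis is compatible with this choice. Finally, all terms must stay quadratic: the constant term is zero, and any repeated monomials are merged once as the definition specifies, which is immediate from the construction.
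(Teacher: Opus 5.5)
Your proposal is correct and follows essentially the same route as the paper. Both use the twin polynomial $f'_\Phi$ with $k=(6M+N)m$, Håstad's theorem with $\delta=\epsilon$, the twin completeness and soundness lemmas, and a restriction to large $m$ so that $M\le Nm$. The paper fixes $N=61$ and checks directly that the No-case bound exceeds $(\tfrac{49}{48}-\epsilon)k$ by at least $(5\epsilon M-750-N)m>0$ once $M>(750+N)/(5\epsilon)$. That is the same $5\epsilon/6$ margin your limiting-ratio computation finds, so your $N\approx\sqrt{M}$ choice is unnecessary: any constant $N>60$ works.
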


\begin{proof}
    We use the same reduction $\Phi \mapsto f_{\Phi}$ and twin polynomial construction $f'_\Phi$ as above.
    We set $N = 61$ (satisfying the requirement $N > 60$) and choose a sufficiently large $M$.
    
    If $\Phi$ is satisfiable (all $m$ clauses can be satisfied), then by \Cref{lem:twin-complete}, there exists a circuit realizing $f'_\Phi$ using at most
    \[
        w(f'_\Phi) \;\le\; (6M+N)m.
    \]
    
    Apply \Cref{thm:hastad} with the gap parameter $\delta=\epsilon$. If every assignment satisfies at most $(\tfrac{7}{8}+\delta)m$ clauses, then by \Cref{lem:twin-sound} with $t = (\tfrac{7}{8}+\delta)m$, every circuit realizing $f'_\Phi$ requires at least
    \[
        w(f'_\Phi) \;\ge\; (7M+N)m - \left(\frac{7}{8}+\delta\right)mM - 750m
        \;=\; \left(6 + \frac{1}{8} - \delta\right)mM + Nm - 750m.
    \]
    
    Let us analyze the approximation gap. Let $k = (6M+N)m$. Then in the \textsf{Yes} case, $w(f'_\Phi) \le k$.
    In the \textsf{No} case, since $\delta=\epsilon$, we have:
    \[
        w(f'_\Phi) \;\ge\; \left(6.125 - \delta\right)mM + Nm - 750m \geq (\frac{49}{48} - \epsilon) k +  (5\epsilon M - 750 - N)m.
    \]
    Now, for fixed $\epsilon>0$, choose $M > (750+N)/(5\epsilon)$. Restricting to sufficiently large instances (or padding by duplicating clauses) ensures $M \leq Nm$, and then the last term is positive. This proves that \approxproblem{($\tfrac{49}{48} - \epsilon$)} is \NP-hard even when restricted to degree-2 polynomials.
\end{proof}

\section{Algorithms for \problem{}}\label{sec:alg}
In this section, we present a nondeterministic polynomial-time algorithm to solve the search version of \problem{} defined in \cref{def:search-problem}: given a degree-3 polynomial $f \in \mathbb{F}_2[\mathcal{S}]$ without a constant term and an integer $k$, the algorithm either constructs a quantum circuit on at most $k$ qubits realizing $f$ as its associated polynomial or concludes that no such circuit exists.

Without loss of generality, we can always assume all symbols in $\mathcal{S}$ appear in $f$. Then, we denote $|\mathcal{S}| = n$ and the number of monomials in $f$ as $m \geq n / 3$. Furthermore, we assume the input polynomial $f$ is given as a list of its monomials in dictionary order. Then, in $O(m)$ time, one can construct an adjacency list representation for the connection hypergraph where the vertex set is $\mathcal{S}$ and each quadratic or cubic monomial in $f$ corresponds to an edge or hyperedge in the hypergraph. Then, in $O(m)$ time we also construct the connection graph $G_f$, the skeleton of the above hypergraph: the undirected graph with vertex set $\mathcal{S}$ and an edge $(u,v) \in E(G_f)$ if and only if the product $uv$ appears in a monomial of $f$ (i.e., as $uv$ or $uvw$ for some $w$).

As mentioned in the discussion after \cref{def:search-problem}, we only need to consider the case where $k \leq n$ and search for a circuit on exactly $k$ qubits. Moreover, if a cubic term $uvw$ appears in $f$, then any circuit realizing $f$ must implement a CCZ gate on $u,v,w$ on different wires. If the quadratic term $uv$ also appears in $f$, then the circuit must use a CZ gate on $u,v$ (because they are already on different wires), hence we obtain a circuit for $f-uv$ using the same number of qubits. The same holds for $f+uv$. Therefore, with another $O(m)$ time for cleaning up and adding back such redundant quadratic terms, we can assume without loss of generality that no quadratic term $uv$ appears in $f$ if there exists a cubic term $uvw$ in $f$ for some $w$.

Before presenting our algorithm, we first provide a structural characterization of valid quantum circuits corresponding to the associated polynomial $f$. This characterization shows that to reconstruct a valid quantum circuit, it suffices to find a good partition of the symbols into $k$ ordered lists $L_i$ (representing the order of symbols on each wire). A further check algorithm on $L_i$, using a constructed directed graph $\GH$ (representing required orders between Hadamard gates), can verify whether a valid circuit layout for $f$ (consistent with $L_i$) exists. An illustration is in \cref{fig:structure}.

\begin{lemma} \label{lem:structure}
    A valid quantum circuit with associated polynomial $f$ exists if and only if all symbols can be partitioned into $k$ ordered lists $L_i$ (each starting with a local source $\localendpoint{s}$ and ending with a local sink $\localendpoint{e}$), such that for every adjacent pair of real symbols in any $L_i$, the corresponding quadratic monomial appears in $f$, and the directed graph $\GH$ (ignoring self-loops and multiple edges) defined as follows has no directed cycles (i.e., admits a topological ordering):
    \begin{itemize}
        \item The vertices of $\GH$ (representing Hadamard gates) correspond to the ``links'' between consecutive symbols in the lists. Specifically, for each list $L_i = (\localendpoint{s}, u_{i,1}, \dots, u_{i, n_i}, \localendpoint{e})$, we introduce internal link vertices between adjacent elements. Note that the list endpoints $\localendpoint{s}$ and $\localendpoint{e}$ are identified with the global vertices $S$ and $E$ in $\GH$.
        \item For each list $L_i$, add directed edges $S \to H_{i,1} \to H_{i,2} \to \dots \to H_{i, n_i-1} \to E$ to $\GH$ (representing the linear order on the wire). Let $L(u)$ and $R(u)$ denote the link vertices immediately preceding and succeeding symbol $u$ in its list.
        \item For each quadratic monomial $uv$ in $f$, add directed edges $L(u) \to R(v)$ and $L(v) \to R(u)$ to $\GH$.
        \item For each cubic monomial $uvw$ in $f$, add directed edges enforcing pairwise overlaps: $L(u) \to R(v)$, $L(v) \to R(u)$, $L(u) \to R(w)$, $L(w) \to R(u)$, $L(v) \to R(w)$, and $L(w) \to R(v)$ to $\GH$.
    \end{itemize}
    Moreover, given a partition into $k$ ordered lists $L_i$ (using two-way linked lists), checking whether $\GH$ is acyclic and constructing a valid circuit if it is acyclic can be done in time $O(m)$.
\end{lemma}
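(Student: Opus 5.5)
We prove the two directions separately and then address the running time. Throughout, a Hadamard gate on wire $i$ between consecutive symbols is identified with a time coordinate $\theta(H_{i,j})\in(0,1)$. The global source $S$ sits at time $0$ and the global sink $E$ at time $1$. With this convention $\supp(u)=(\theta(L(u)),\theta(R(u)))$, where $L(u)$ and $R(u)$ are the Hadamards immediately before and after $u$ on its wire.

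\emph{($\Rightarrow$)} Suppose a valid circuit realizes $f$ on $k$ wires, after the clean-up described before the lemma. Read each wire left to right to obtain the list $L_i$. We check the two required properties.

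\begin{itemize}
\item \textbf{Adjacency condition.} Two consecutive real symbols on a wire are separated by an internal $H$ gate, which contributes the monomial $uv$. The two segments have disjoint supports, so no diagonal gate can produce $uv$ and cancel it. Hence $uv$ appears in $f$.
\item \textbf{Acyclicity of $\GH$.} We show that every edge $P\to Q$ of $\GH$ satisfies $\theta(P)<\theta(Q)$; sorting vertices by $\theta$ (after a tie-breaking perturbation) then gives a topological order. Wire edges hold trivially.
\item \textbf{Cubic edges.} A cubic monomial $uvw$ has no other source, so it comes from a CCZ gate at some time lying in $\supp(u)\cap\supp(v)\cap\supp(w)$. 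This gives $\theta(L(u))<\theta(R(v))$ and the analogous inequalities for every ordered pair.
\item \textbf{Quadratic edges.} A monomial $uv$ comes either from a Hadamard link or from a CZ gate. For a Hadamard link with $u$ immediately before $v$, we have $R(u)=L(v)$. The constraint $L(u)\to R(v)$ is then implied by the wire order. The constraint $L(v)\to R(u)$ collapses to a self-loop, which the lemma says to ignore. For a CZ gate, the supports overlap, which gives both edges.
\end{itemize}

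The point needing care is that a quadratic monomial could arise from an odd number of gates with cancellation. I will handle this by noting that we may assume each monomial of $f$ is produced by exactly one gate. Duplicate diagonal gates cancel in pairs and can be deleted. Hadamard-link monomials cannot be duplicated by CZ gates, again by disjointness of supports.

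\emph{($\Leftarrow$)} Take a topological ordering of $\GH$ and assign distinct times in $(0,1)$ to the link vertices in that order, with $S=0$ and $E=1$. Place the internal $H$ gates of wire $i$ at these times; wire edges make them increasing along the wire. Then, for each monomial of $f$, add gates as follows.

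\begin{itemize}
\item For a linear monomial, add a $Z$ gate.
\item For a quadratic monomial $uv$ whose symbols are adjacent in some list, do nothing: it is already realized by the Hadamard link.
\item For any other quadratic monomial $uv$, the symbols lie on different wires. This holds because if they were on the same wire but not adjacent, then $L(u)\to R(v)$ together with the wire order would yield a directed cycle. The edges $L(u)\to R(v)$ and $L(v)\to R(u)$ give $\max(\theta L(u),\theta L(v))<\min(\theta R(u),\theta R(v))$, so the supports overlap and we can place a CZ gate at any time in the overlap.
\item For a cubic monomial $uvw$, the three intervals are pairwise intersecting, so by \cref{prelim:helly} they have a common point where a CCZ gate can be placed. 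This also needs the three symbols to be on distinct wires. That follows because two of them on the same wire would be non-adjacent and produce a cycle, or adjacent and excluded by the clean-up.
\end{itemize}

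The resulting polynomial contains each required monomial exactly once and no extra monomials, since every Hadamard link corresponds to a monomial of $f$ by the adjacency condition.

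\emph{Running time.} $\GH$ has $O(n)$ vertices and $O(m)$ edges, since each monomial contributes at most six. Building it from the linked lists takes $O(m)$ time, and Kahn's algorithm runs in $O(n+m)=O(m)$. The constructive step above places gates in time linear in the number of monomials.

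The main obstacle I expect is the bookkeeping in the forward direction: justifying that the gate-to-monomial map can be assumed injective, and treating the degenerate self-loop and $R(u)=L(v)$ cases correctly. I would first dispose of cancellations by the pairwise deletion argument and the disjoint-support argument, and only then verify the edges.
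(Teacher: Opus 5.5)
Your proposal is correct and follows essentially the same route as the paper. In the forward direction, the Hadamard time coordinates give a topological order of $\GH$. Conversely, a topological order fixes the Hadamard positions, and then $Z$, CZ, and CCZ gates are placed using the overlap edges, Helly's theorem, and the clean-up of quadratic subterms of cubic monomials, with Kahn's algorithm giving the $O(m)$ bound.

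Your extra care about cancelling duplicate gates, and about ruling out same-wire non-adjacent pairs, only makes explicit what the paper leaves implicit. One small wording point: the directed cycle comes from whichever of $L(u)\to R(v)$ and $L(v)\to R(u)$ points backward along the wire, which is not necessarily $L(u)\to R(v)$ as you wrote.
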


\begin{figure}[t]
    \centering
    
  	\tikzset{hgate/.style={draw,fill=blue!30,inner sep=0pt,minimum width=6mm,minimum height=5mm,text width=6mm,align=center,font=\footnotesize}}
    \begin{tikzpicture}[scale=1.1, baseline=(current bounding box.center)]
        \definecolor{wirecolor}{RGB}{200,200,200}
        \definecolor{hgatecolor}{RGB}{173,216,230}
        \definecolor{edgecolor}{RGB}{255,99,71}
        \definecolor{timecolor}{RGB}{144,238,144}
        
        \node[font=\Large\bfseries] at (-2,4) {Ordered Lists $L_i$};
        
        \draw[thick,wirecolor] (-3.5,3) -- (-0.5,3);
  \node[hgate] at (-3.5,3) {$H$};
  \node[hgate] at (-2.5,3) {$H_1$};
  \node[hgate] at (-1.5,3) {$H_2$};
  \node[hgate] at (-0.5,3) {$H$};
        \node[above,font=\footnotesize] at (-3,3.1) {$x_1$};
        \node[above,font=\footnotesize] at (-2,3.1) {$x_2$};
        \node[above,font=\footnotesize] at (-1,3.1) {$x_3$};
  
        \draw[thick,wirecolor] (-3.5,2) -- (-0.5,2);

        \node[hgate] at (-3.5,2) {$H$};
        \node[hgate] at (-2,2) {$H_3$};
        \node[hgate] at (-0.5,2) {$H$};
        \node[above,font=\footnotesize] at (-2.75,2.05) {$y_1$};
        \node[above,font=\footnotesize] at (-1.25,2.05) {$y_2$};
        \draw[thick,wirecolor] (-3.5,1) -- (-0.5,1);
  \node[hgate] at (-3.5,1) {$H$};
  \node[hgate] at (-2,1) {$H_4$};
  \node[hgate] at (-0.5,1) {$H$};
        \node[above,font=\footnotesize] at (-2.75,1.05) {$z_1$};
        \node[above,font=\footnotesize] at (-1.25,1.05) {$z_2$};
        
        \draw[->,thick,black] (0,2) -- (1,2);
        \node[above,font=\footnotesize] at (0.5,2.1) {Check};
        
        \node[font=\Large\bfseries] at (3,4) {Graph $\GH$};
        
        \node[draw,circle,fill=timecolor,inner sep=2pt] (start) at (1.5,2) {$S$};
        
        \node[draw,circle,fill=hgatecolor,inner sep=1pt,font=\small] (h1) at (2.5,3) {$H_1$};
        \node[draw,circle,fill=hgatecolor,inner sep=1pt,font=\small] (h2) at (3.5,3) {$H_2$};
        \node[draw,circle,fill=hgatecolor,inner sep=1pt,font=\small] (h3) at (3,2) {$H_3$};
        \node[draw,circle,fill=hgatecolor,inner sep=1pt,font=\small] (h4) at (3,1) {$H_4$};
        
        \node[draw,circle,fill=timecolor,inner sep=2pt] (end) at (4.5,2) {$E$};
        
        \draw[->,thick] (start) -- (h1);
        \draw[->,thick] (h1) -- (h2);
        \draw[->,thick] (start) -- (h3);
        \draw[->,thick] (start) -- (h4);
         \draw[->,thick] (h2) -- (end);
         \draw[->,thick] (h3) -- (end);
        \draw[->,thick] (h4) -- (end);
        
        \draw[->,thick,edgecolor,dashed] (start) to (h2);
        \draw[->,thick,edgecolor,dashed] (h1) to (h3);
         \draw[->,thick,edgecolor,dashed] (h1) to (end);
           \draw[->,thick,edgecolor,dashed] (h3) to (h2);
        \draw[->,thick,edgecolor,dashed] (h1) to[bend right=15] (h4);
        \draw[->,thick,edgecolor,dashed] (h3) to (h4);
        \draw[->,thick,edgecolor,dashed] (start) to[bend right=20] (end);
     
        \draw[->,thick] (1.7,-0.1) -- (2.2,-0.1);
        \node[font=\footnotesize,right] at (2.3,-0.1) {Wire ordering};
        \draw[->,thick,edgecolor,dashed] (1.7,-0.4) -- (2.2,-0.4);
        \node[font=\footnotesize,right] at (2.3, -0.4) {Gate constraints};
        
        \node[font=\footnotesize] at (3,0.3) {\textbf{Order:} $S,H_1,H_3,H_4,H_2$};
        \draw[->,thick,black] (4.9,2) -- (6.0,2);
        \node[above,font=\footnotesize] at (5.4,2.1) {Output};

        \node[font=\Large\bfseries] at (9,4) {Realized Circuit};

        \foreach \y in {3,2,1} {
            \draw[thick] (6.5,\y) -- (10.75,\y);
        }

  \node[hgate] at (6.5,3) {$H$};
  \node[hgate] at (6.5,2) {$H$};
  \node[hgate] at (6.5,1) {$H$};

  \node[hgate] at (7.5,3) {$H_1$};
  \node[hgate] at (8.25,2) {$H_3$};
  \node[hgate] at (9,1) {$H_4$};
  \node[hgate] at (9.75,3) {$H_2$};

  \node[hgate] at (10.75,3) {$H$};
  \node[hgate] at (10.75,2) {$H$};
  \node[hgate] at (10.75,1) {$H$};

        \draw[thick] (7.875,3) -- (7.875,2);
        \fill[black] (7.875,3) circle (1pt);
        \fill[black] (7.875,2) circle (1pt);

        \draw[thick] (9.325,3) -- (9.325,2);
        \fill[black] (9.325,3) circle (1pt);
        \fill[black] (9.325,2) circle (1pt);

        \draw[thick] (8.625,3) -- (8.625,1);
        \foreach \y in {3,2,1} {
            \fill[black] (8.625,\y) circle (1pt);
        }

        \node[above,font=\footnotesize] at (7,3.05) {$x_1$};
        \node[above,font=\footnotesize] at (8.625,3.05) {$x_2$};
        \node[above,font=\footnotesize] at (10.275,3.05) {$x_3$};

        \node[above,font=\footnotesize] at (7.375,2.05) {$y_1$};
        \node[above,font=\footnotesize] at (9.625,2.05) {$y_2$};

        \node[above,font=\footnotesize] at (7.875,1.05) {$z_1$};
        \node[above,font=\footnotesize] at (9.875,1.05) {$z_2$};
    \end{tikzpicture}
    \caption{\small Illustration of the check algorithm in \cref{lem:structure} for associated polynomial $f = x_1x_2 + x_2x_3 + y_1y_2 + z_1z_2 + x_2y_1 + x_2y_2 + x_2y_2z_1$. Left: given ordered lists $L_i$ with symbols on each wire. Middle: Directed graph $\GH$ is constructed, where vertices represent Hadamard gates, black solid edges enforce wire ordering, and red dashed edges represent constraints given by quadratic and cubic monomials (ignoring self-loops or multiple edges). Right: Realized quantum circuit with Hadamard, CZ, and CCZ gates arranged according to the topological order of $\GH$. In this example, $\GH$ is acyclic and the check algorithm constructs the valid quantum circuit accordingly.}
    \label{fig:structure}
    \end{figure}

\begin{proof}
    ($\Rightarrow$) Suppose a valid $k$-qubit circuit $\mathcal{C}$ realizing $f$ exists. We can partition the $n$ symbols into $k$ ordered lists $L_1, \ldots, L_k$ based on the symbols assigned to each wire in temporal order. Then clearly adjacent symbols on each list correspond to quadratic monomials in $f$, since they are connected by Hadamard gates on the same wire.

    Further, the physical realization of the circuit implies a global time-ordering of all Hadamard gates. Let $t(H)$ be the time coordinate of a Hadamard gate $H$ in $\mathcal{C}$. We show that $t$ induces a topological ordering of $\GH$ by verifying that all directed edges respect the time ordering:
    \begin{itemize}
        \item For any list $L_i$, the symbols appear sequentially, so the Hadamard gates satisfy $t(H_{i,j}) < t(H_{i,j+1})$. Thus, all list edges in $\GH$ respect the time ordering.
        \item For any quadratic term $uv$ in $f$, if it is implemented by a Hadamard gate, then $u,v$ must be adjacent symbols on the same list. Assume $u$ comes first in the list, then $R(u)=L(v)$, so $L(u)$, $R(u)$, and $R(v)$ are consecutive Hadamard gates on the same wire, hence $t(L(u)) < t(R(u)) < t(R(v))$ and $L(v)\to R(u)$ is a self-loop and can be ignored, thus respecting the time order. Otherwise, it is implemented by a CZ gate in $\mathcal{C}$, hence their active intervals, $\supp(u) = (t(L(u)), t(R(u)))$ and $\supp(v) = (t(L(v)), t(R(v)))$, must overlap. Overlap implies that the start of one cannot be later than the end of the other, i.e., $t(L(u)) < t(R(v))$ and $t(L(v)) < t(R(u))$. Thus, the edges $L(u) \to R(v)$ and $L(v) \to R(u)$ in $\GH$ respect the time ordering.
        \item For any cubic term $(u,v,w)$, the circuit must implement a CCZ gate, requiring the mutual intersection of three intervals, which implies pairwise intersection. Thus, the corresponding edges in $\GH$ respect the time ordering.
    \end{itemize}
    \noindent ($\Leftarrow$) Suppose there exists a partition into $k$ lists $L_i$ such that $\GH$ is acyclic. A directed acyclic graph (DAG) always admits a topological ordering~\cite{Kahn1962}. Let $\pi: V(\GH) \to \{1, \ldots, |V(\GH)|\}$ be such a topological ordering.
    We construct a quantum circuit with associated polynomial $f$ as follows:
    \begin{enumerate}
        \item Place the $k$ wires. For each list $L_i$, place Hadamard gates $H_{i,j}$ at time $\pi(H_{i,j})$. The wire edges $S\to H_{i,1}, H_{i,j} \to H_{i,j+1}, H_{i, n_i-1} \to E$ ensure increasing time coordinates. Then, the symbol $u$ occupies the time interval $I_u = (\pi(L(u)), \pi(R(u)))$.
        \item For every linear term $u$ in $f$, place a Z gate acting on the wire of $u$ at any time $t \in I_u$.
        \item For every quadratic term $uv$ in $f$ that does not correspond to adjacent symbols on the same list (adjacent product already constructed in $f$ by item 1), the edges $L(u) \to R(v)$ and $L(v) \to R(u)$ in $\GH$ (since $u,v$ are not adjacent, $L(u), R(u), L(v),R(v)$ must be different Hadamards) ensure $\pi(L(u)) < \pi(R(v))$ and $\pi(L(v)) < \pi(R(u))$. This implies $I_u \cap I_v \neq \emptyset$. We can thus place a CZ gate acting on wires of $u$ and $v$ at any time $t \in I_u \cap I_v$.
        \item For every cubic term $uvw$, the cleanup for redundant quadratic subterms guarantees that none of $uv, uw, vw$ appear in $f$ and hence are not adjacent in the lists. Thus, the six edges in $\GH$ ensure $\pi(L(u)) < \pi(R(v))$, $\pi(L(v)) < \pi(R(u))$, $\pi(L(u)) < \pi(R(w))$, $\pi(L(w)) < \pi(R(u))$, $\pi(L(v)) < \pi(R(w))$, and $\pi(L(w)) < \pi(R(v))$. This implies that the intervals $I_u, I_v, I_w$ pairwise intersect. By Helly's theorem in \cref{prelim:helly} for one-dimensional intervals, pairwise intersection of three intervals implies a common intersection. Thus, $I_u \cap I_v \cap I_w \neq \emptyset$, and we can place a CCZ gate at any time $t \in I_u \cap I_v \cap I_w$.
    \end{enumerate}
    Note that we have assumed all symbols appear in $f$. One can check that the constructed circuit indeed has associated polynomial $f$, hence is a valid realization.

    Finally, given a partition into $k$ ordered lists $L_i$ (using two-way linked lists), constructing $\GH$ involves constructing $O(n)$ new vertices and adding $O(m)$ directed edges; each can be done in $O(1)$ time assuming $f$ is given as a sequential list, which can be efficiently turned into an adjacency list for graph representation. Then, using a direct topological sort, which runs in time $O(m+n) = O(m)$ using Kahn's algorithm~\cite{Kahn1962}, we can validate whether $\GH$ is acyclic and even obtain a topological ordering if it is acyclic. Since the topological ordering directly gives a timeline arrangement for the circuit, we can add the $O(m)$ gates onto it and construct the valid circuit in time $O(m)$ as well.
\end{proof}

\subsection{Nondeterministic Algorithm}
In this section, we present our nondeterministic polynomial-time algorithm with few nondeterministic bits for the search version of \problem{}.

Based on the previous \cref{lem:structure}, finding a valid quantum circuit reduces to finding a good partition of symbols into $k$ ordered lists $L_i$ satisfying the conditions in \cref{lem:structure}. To this end, the algorithm guesses two sets of symbols, $S_{\textnormal{head}}$ and $S_{\textnormal{tail}}$, naturally representing the symbols at the start and end of the $k$ wires (hence witness size $2 \log_2 \binom{n}{k}$).
Given these sets, the algorithm attempts to partition all symbols into $k$ ordered lists (representing the wires) using a deterministic greedy procedure. The main idea is to iteratively extend the wires by adding symbols one by one.

The procedure works by maintaining a set of ``current head'' symbols, initialized to $S_{\textnormal{head}}$. In each step, it identifies a head symbol $v \in \mathsf{CurrentHeads}$ that has \emph{exactly} one neighbor among the unassigned symbols (symbols currently not inside any list) in the graph $G_f$ (whose vertices are symbols and edges correspond to pairs appearing in some monomial in $f$). This unique neighbor property is crucial: in the underlying valid circuit, the only interaction of $v$ with unassigned symbols is that with the immediate successor $u$ on the same wire. If such a $v$ is found, the neighbor $u$ is appended to the same $L_i$ as $v$ and the set of heads is updated by replacing $v$ with $u$. If a head symbol is in $S_{\textnormal{tail}}$, that wire is marked complete. An illustration is in \cref{fig:heuristic}.
\begin{figure}[!htbp]
\centering
\begin{tikzpicture}[xscale=1.5, yscale=1.5, baseline=(current bounding box.center)]
  \definecolor{headcolor}{RGB}{255,182,193}
  \definecolor{assignedcolor}{RGB}{144,238,144}
  \definecolor{currentcolor}{RGB}{255,255,0}
  \definecolor{tailcolor}{RGB}{173,216,230}
  
  \foreach \y in {2,1,0} {
    \draw[thick] (0,\y) -- (8,\y);
  }

  \foreach \y in {2,1,0} {
    \node[draw,fill=blue!20,inner sep=5pt,font=\small] at (0,\y) {$H$};
    \node[draw,fill=blue!20,inner sep=5pt,font=\small] at (8,\y) {$H$};
  }

  \foreach \pos/\y in {1/2, 6/2, 3/1, 5/0} {
    \node[draw,fill=blue!20,inner sep=5pt,font=\small] at (\pos,\y) {$H$};
  }

  \draw[thick] (4,2) -- (4,0);
  \foreach \y in {2,1,0} {
    \fill[black] (4,\y) circle (2pt);
  }

  \draw[thick] (2,2) -- (2,1);
  \foreach \y in {2,1} {
    \fill[black] (2,\y) circle (2pt);
  }

    \draw[thick] (5,2) -- (5,1);
  \foreach \y in {2,1} {
    \fill[black] (5,\y) circle (2pt);
  }

  \node[above,fill=headcolor,rounded corners=2pt,inner sep=2pt] at (0.5,2.15) {$x_1$};
  \node[above,fill=headcolor,rounded corners=2pt,inner sep=2pt] at (1.5,1.15) {$y_1$};
  \node[above,fill=headcolor,rounded corners=2pt,inner sep=2pt] at (2.5,0.15) {$z_1$};
  
  \node[above,fill=assignedcolor,rounded corners=2pt,inner sep=2pt] at (3.5,2.15) {$x_2$};
  
  \node[above,fill=tailcolor,rounded corners=2pt,inner sep=2pt] at (5.5,1.15) {$y_2$};
  
  \node[above,fill=tailcolor,rounded corners=2pt,inner sep=2pt] at (7.0,2.15) {$x_3$};
  \node[above,fill=tailcolor,rounded corners=2pt,inner sep=2pt] at (6.5,0.15) {$z_2$};

  \draw[->,thick,red] (0.8,2.3) -- (3.2,2.3);
\draw[->,thick,red] (1.8,1.3) -- (5.2,1.3);
  \draw[->,thick,red] (2.8,0.3) -- (6.2,0.3);
\draw[->,thick,red] (3.8,2.3) -- (6.7,2.3);
\node[below,font=\footnotesize] at (5.25,2.65) {\textbf{Step 4:} $x_2 \rightarrow x_3$};

  \node[below,font=\footnotesize] at (2,2.65) {\textbf{Step 1:} $x_1 \rightarrow x_2$};
  \node[below,font=\footnotesize] at (3.5,1.65) {\textbf{Step 2:} $y_1\rightarrow \ y_2$};
  \node[below,font=\footnotesize] at (4.5,0.65) {\textbf{Step 3:} $z_1 \rightarrow z_2$};

\end{tikzpicture}
\caption{\small Illustration of the nondeterministic algorithm's progress for $f = x_1x_2 + x_2x_3 + y_1y_2 + z_1z_2 + x_2y_1 + x_2y_2 + x_2y_2z_1 $. Given the witness $S_{\text{head}} = \{x_1, y_1, z_1\}$ (pink) and $S_{\text{tail}} = \{x_3, y_2, z_2\}$ (blue), the algorithm deterministically finds the circuit. At each step, it finds a current head symbol (green) with degree $1$ in the graph $G_f$ and extends the wire. The first step detects (for example) $x_1$ and replaces $x_1$ by $x_2$ in the current head, then $y_1$ by $y_2$. After replacement, the algorithm detects that the wire should be complete. Then, it detects $z_1$ and replaces $z_1$ by $z_2$, marks the wire complete, and finally replaces $x_2$ by $x_3$. All symbols are assigned to wires successfully, and the output lists are $L_1 = (x_1, x_2, x_3)$, $L_2 = (y_1, y_2)$, and $L_3 = (z_1, z_2)$. The algorithm then feeds the ordered lists $L_i$ to the check algorithm in \cref{lem:structure}, which verifies whether positions of $H$ gates following $L_i$ on each wire can be arranged to realize $f$.
}
\label{fig:heuristic}
\end{figure}

If all symbols are successfully partitioned into $k$ lists, the algorithm verifies the solution using the Check algorithm in \cref{lem:structure}. Recall that the Check algorithm runs in time $O(m)$ and either constructs a valid circuit or concludes no valid circuit exists for the given lists. 

The algorithm is summarized in \cref{alg:nondeterministic}.

    \begin{algorithm}[!ht]
    \caption{Nondeterministic algorithm for \problem{}}
    \label{alg:nondeterministic}
    \SetAlgoLined
    \KwIn{A degree-3 polynomial $f$ on $\mathcal{S}$, an integer $k$, and two sets $S_{\textnormal{head}}, S_{\textnormal{tail}} \subseteq \mathcal{S}$ of size $k$ each as witness (total $2 \log_2 \binom{n}{k}$ bits)}
    \KwOut{A quantum circuit on at most $k$ qubits or ``No solution''}
    \KwMaintain{The ordered lists $L_1, \ldots, L_k$ representing the wires};
    
    Construct graph $G_f$ where vertices are $\mathcal{S}$ and edges correspond to pairs appearing in some monomial in $f$\;
    Initialize $k$ lists $L_i$, each with a symbol from $S_{\textnormal{head}}$\;
    $\mathsf{CurrentHeads} \gets S_{\textnormal{head}}$;
    
    \While{$ \bigcup_{i} L_i \neq \mathcal{S} $ \label{whileloop}}{
        \While{$ \exists v\in \mathsf{CurrentHeads} \cap S_{\textnormal{tail}} $}{
            Mark $v$'s list as complete\;
            $\mathsf{CurrentHeads}\gets\mathsf{CurrentHeads} \setminus \{ v \};   $
        }
        \KwFind $v \in  \mathsf{CurrentHeads}$ that has exactly one neighbor $u$ in $G_f$ such that $u \notin \bigcup_{i} L_i$ \;\label{greedyextension}
        \eIf{no such $v$ exists}{
            \Return ``No solution''\;
        }{
            Let $u$ be such neighbor of $v$\;
            Append $u$ to $v$'s list $L_i$\;
            $\mathsf{CurrentHeads} \gets (\mathsf{CurrentHeads} \setminus \{ v \}) \cup \{ u \}$\; 
        }
    }
    \Return $\textnormal{Check}(f,  \{L_1, \ldots, L_k\})$; \hfill \Comment{Check($\cdot$) is the check algorithm defined in \cref{lem:structure}}
    \end{algorithm}

Now we prove the correctness of the greedy extension with the unique neighbor step in \cref{alg:nondeterministic}. The proof relies crucially on the time ordering of gates in a valid circuit realization.
\begin{lemma}[Correctness of Greedy Extension] \label{lem:degree1}
If the witness sets ($S_{\textnormal{head}}$ and $S_{\textnormal{tail}}$) are exactly the symbols at the start and end of each wire in some valid $k$-qubit circuit realizing $f$, then the condition in \textnormal{\cref{greedyextension}} of \textnormal{\cref{alg:nondeterministic}} is always satisfied; that is, there exists a current head symbol $v \in \mathsf{CurrentHeads}$ with degree exactly $1$ to symbols not in $\bigcup_i L_i$ in $G_f$.
\end{lemma}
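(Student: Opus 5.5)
We will prove the statement by induction on the number of iterations of the outer loop. The invariant is that the lists built so far are exactly the prefixes of the wires of a fixed valid circuit $\mathcal{C}$ whose head and tail sets are $S_{\textnormal{head}}$ and $S_{\textnormal{tail}}$. Concretely, for each wire $i$ of $\mathcal{C}$, the list $L_i$ is an initial segment of the symbols on that wire in temporal order. A wire is marked complete precisely when its last listed symbol lies in $S_{\textnormal{tail}}$, which is then the true last symbol of the wire. Initially every $L_i$ holds the head symbol of its wire, so the invariant holds.

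Assume the invariant holds and $\bigcup_i L_i \neq \mathcal{S}$. After completed wires are removed, $\mathsf{CurrentHeads}$ is nonempty, since some wire still has unlisted symbols and its current head is not its tail. Each $v\in\mathsf{CurrentHeads}$ is the last listed symbol on its wire and has a true successor $u_v$; the Hadamard gate between them gives the monomial $vu_v$ in $f$, so $u_v$ is an unassigned neighbor of $v$ in $G_f$. Among these heads, choose the $v^\ast$ whose support ends earliest, i.e.\ with $r_{v^\ast}$ minimal; all $H$ gates have distinct times, or we perturb so that they do. We must show $v^\ast$ has no other unassigned neighbor. Take any unassigned $w\neq u_{v^\ast}$. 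Then $w$ lies on some wire $j$ strictly after that wire's current head $v_j$, because unassigned symbols are exactly the suffixes after the current heads. If $j$ is the wire of $v^\ast$, then $w$ comes after $u_{v^\ast}$, so $\supp(w)$ is disjoint from $\supp(v^\ast)$ and they are not adjacent. If $j$ is another wire, then $l_w\ge r_{v_j}\ge r_{v^\ast}$, with strict inequality after perturbation, so again the supports are disjoint. Completed wires have no unassigned symbols, so they need no treatment.

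Disjoint supports and non-adjacency mean no $H$ gate, CZ, or CCZ gate of $\mathcal{C}$ produces a monomial containing both $v^\ast$ and $w$. Hence $v^\ast w$ divides no monomial of $f$, and $\{v^\ast,w\}\notin E(G_f)$. This step also uses the clean-up preceding \cref{lem:structure}: adding or removing a $uv$ term that is redundant with a cubic $uvw$ does not change the edge set of $G_f$, so $G_f$ is still the skeleton of $\mathcal{C}$'s monomials. It follows that $v^\ast$ has exactly one unassigned neighbor, $u_{v^\ast}$.

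The step we expect to be most delicate is preserving the invariant. The algorithm may select some other head $v$ with a unique unassigned neighbor rather than $v^\ast$. Since $u_v$ is always an unassigned neighbor of $v$, uniqueness forces that neighbor to be $u_v$. Appending it therefore extends the correct wire by its true successor, and the invariant is maintained. For the tail-marking loop, a head in $S_{\textnormal{tail}}$ is the true end of its wire, because each wire contains exactly one tail symbol and the lists are true prefixes. This completes the induction.
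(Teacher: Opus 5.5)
Your proof is correct and follows the paper's argument: an induction on the invariant that the lists are prefixes of the wires of $\mathcal{C}$, in which the current head whose support ends earliest has its wire successor as its only unassigned neighbor in $G_f$. You also observe explicitly that whichever head the algorithm selects must have its true successor as its unique unassigned neighbor (that successor is always one), which fills in a step the paper leaves implicit when it claims the invariant is maintained.
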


\begin{proof}
Let $\mathcal{C}$ be a valid $k$-qubit circuit realizing $f$ with the symbols at the start and end of each wire exactly $S_{\textnormal{head}}$ and $S_{\textnormal{tail}}$. We prove by induction that at each step of the algorithm, \textnormal{\cref{greedyextension}} of \textnormal{\cref{alg:nondeterministic}} is satisfied. In fact, we prove a stronger statement: before each iteration of the loop, the ordered lists $L_i$ correspond exactly to the symbols assigned to each wire in $\mathcal{C}$ up to some timestamp (which may be different for different wires). In other words, they are prefixes of the symbols on each wire in $\mathcal{C}$. Moreover, $\mathsf{CurrentHeads}$, maintained by the algorithm, corresponds exactly to the rightmost symbols of the unfinished prefixes; completed wires have already reached their tail symbols and have been removed from $\mathsf{CurrentHeads}$. Then, the set of symbols not in $\bigcup_i L_i$ coincides with all symbols that appear strictly after this cross-section in $\mathcal{C}$.

\noindent \textbf{Base Case:} Initially, $\mathsf{CurrentHeads} = S_{\textnormal{head}}$. Since $S_{\textnormal{head}}$ consists of the first symbol on each wire, each $L_i$ contains exactly the first symbol on each wire in $\mathcal{C}$. Thus, the hypothesis holds. 

\noindent \textbf{Inductive Step:} Assume the hypothesis holds. We show there exists $v \in \mathsf{CurrentHeads}$ with degree 1 to symbols not in $\bigcup_i L_i$. 
Consider the symbols in $\mathsf{CurrentHeads}$. In the valid circuit $\mathcal{C}$, each of these symbols occupies a time interval. Among these currently active symbols, let $v$ be the one that \emph{ends earliest} (i.e., its rightmost Hadamard gate appears first in the time ordering of the circuit). 

We claim $v$ has no edges in $G_f$ to any symbol $z \notin \bigcup_i L_i$ on a \emph{different} wire. Suppose for contradiction $v$ has an edge to $z \notin \bigcup_i L_i$ where $z$ is on a different wire. This edge corresponds to a quadratic or cubic term, which requires the time intervals of $v$ and $z$ to overlap. However, $z \notin \bigcup_i L_i$ means that $z$ starts after some current head $v' \in \mathsf{CurrentHeads}$ (on $z$'s wire) ends. Since $v$ corresponds to the interval ending \emph{earliest} among all current heads, $v$ ends before $v'$ ends, and thus before $z$ starts. Hence, $v$ and $z$ cannot overlap, a contradiction.

Therefore, the only possible neighbor of $v$ not in $\bigcup_i L_i$ is the symbol $u$ immediately following $v$ on the \emph{same} wire. This edge $(v, u)$ always exists unless $v$ is the last symbol on its wire, in which case $v \in S_{\textnormal{tail}}$ and would be handled by the inner while loop. Thus, $v$ has exactly degree 1 to symbols not in $\bigcup_i L_i$. The algorithm then updates $\mathsf{CurrentHeads}$ by replacing $v$ with $u$, which corresponds to advancing the timestamp on $v$'s wire past the Hadamard gate separating $v$ and $u$. The inductive hypothesis is maintained.

Thus, the algorithm always finds a valid candidate for extension until all symbols are assigned.
\end{proof}
Finally, we conclude the correctness and witness size of \cref{alg:nondeterministic} in the following theorem.
\begin{theorem}\label{thm:nondeterministic}
The nondeterministic algorithm correctly solves the search version of \textnormal{\problem{(f,k)}} in polynomial time with a witness of size $2 \log_2 \binom{n}{k}=O(k \log(en/k))$ bits. 
\end{theorem}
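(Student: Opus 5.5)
The proof splits into three parts: correctness in each direction, then running time and witness size.

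\textbf{Soundness of the output.} Whenever \cref{alg:nondeterministic} returns a circuit, that circuit is valid. The only place a circuit is produced is the final call $\textnormal{Check}(f,\{L_1,\dots,L_k\})$. By \cref{lem:structure}, Check outputs a circuit only if the lists partition $\mathcal{S}$, every adjacent pair is a quadratic monomial of $f$, and $\GH$ is acyclic. In that case the circuit it builds has associated polynomial exactly $f$ on $k$ wires. Two details need care:
\begin{itemize}
\item Check should first verify that each $L_i$ is nonempty and that the adjacency condition holds. The greedy step only appends a $G_f$-neighbour $u$ of $v$, and this edge may come from a cubic monomial. After the clean-up, such a pair $uv$ is not a quadratic term, so this check is genuinely needed.
\item After the clean-up, realizing the cleaned $f$ is equivalent to realizing the original $f$ on the same number of wires. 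We add back the removed quadratic terms as CZ gates, which is possible because the corresponding segments overlap inside the CCZ window.
\end{itemize}

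\textbf{Completeness.} Suppose some $k$-qubit circuit $\mathcal{C}$ realizes $f$ (by the remark after \cref{def:search-problem}, we may take exactly $k$ qubits). Take the witness $S_{\textnormal{head}},S_{\textnormal{tail}}$ to be the first and last symbols on the wires of $\mathcal{C}$. By \cref{lem:degree1} and its stronger inductive invariant, every greedy step succeeds and appends exactly the true successor on the correct wire. So the loop ends with $L_i$ equal to the wire contents of $\mathcal{C}$.

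One subtlety needs checking. The algorithm may pick a different degree-one head than the earliest-ending one. Even so, any head $v$ with exactly one untouched neighbour must have that neighbour equal to its true successor. By the invariant, the true successor is untouched, and it is adjacent to $v$ in $G_f$ because of the Hadamard term, which survives the clean-up since it is quadratic and realized by $H$. So uniqueness forces the correct choice, and the invariant is preserved regardless of which valid $v$ is chosen.

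There is a second point. A wire whose single symbol lies in both $S_{\textnormal{head}}$ and $S_{\textnormal{tail}}$ is handled by the inner loop before any extension. I also need that when a head enters $S_{\textnormal{tail}}$ it really is the last symbol on its wire. This holds because tails are distinct symbols and the invariant says heads are the true prefixes.

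At the end, the forward direction of \cref{lem:structure} applied to $\mathcal{C}$ shows that $\GH$ for these lists is acyclic. Hence Check succeeds and outputs a circuit. Conversely, if no $k$-qubit circuit exists, no witness can lead to an output, by the soundness part. So ``no solution'' on every branch is correct.

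\textbf{Running time and witness size.} Each outer iteration assigns one new symbol, so there are at most $n$ iterations. Each iteration can find a degree-one head in $\poly(n,m)$ time by recomputing untouched degrees. A smarter implementation maintains the untouched degree of each symbol under deletions, but polynomial time is all that is needed here. Check costs $O(m)$ by \cref{lem:structure}.

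The witness is a pair of $k$-subsets of $\mathcal{S}$, encodable in $2\lceil\log_2\binom{n}{k}\rceil$ bits. The standard bound $\binom{n}{k}\le (en/k)^k$ then gives $O(k\log(en/k))$.

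The main obstacle is the completeness argument: showing that an arbitrary degree-one head chosen by the algorithm, and not only the earliest-ending head from \cref{lem:degree1}, still extends the correct wire. The plan is to settle this with the uniqueness observation above.
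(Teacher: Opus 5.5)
Your proposal is correct and follows the same route as the paper: soundness from the Check procedure of \cref{lem:structure}, completeness from the prefix invariant of \cref{lem:degree1} together with the forward direction of \cref{lem:structure}, and then a direct count of witness bits and running time. Your two additions tighten points the paper leaves implicit: any head with exactly one untouched neighbour must extend to its true successor, so the invariant survives whichever degree-one head the algorithm picks; and Check must explicitly verify the quadratic-adjacency condition, since the greedy step may follow a cubic-only edge of $G_f$.
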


\begin{proof}
\textbf{Correctness:} If a valid $k$-qubit circuit $\mathcal{C}$ whose associated polynomial is $f$ exists, then there exist valid head and tail sets $S_{\textnormal{head}}$ and $S_{\textnormal{tail}}$, corresponding to the symbols at the start and end of each wire. Given a witness encoding these correct sets, by \cref{lem:degree1}, the condition of greedy extension in \cref{greedyextension} is always satisfied. Therefore, the while loop in \cref{whileloop} will reach its end, and all symbols will be assigned in the same order as $\mathcal{C}$ using the induction statement in \cref{lem:degree1}. By \cref{lem:structure}, the Check algorithm will verify the validity and produce one valid circuit realizing $f$ on $k$ qubits (not necessarily $\mathcal{C}$ itself).

\noindent \textbf{Soundness:} On the other hand, if the algorithm returns a circuit, then by correctness of \cref{lem:structure}, this circuit has associated polynomial $f$ and uses at most $k$ qubits. Hence soundness holds as well.

\noindent \textbf{Witness Complexity:} The witness consists of two sets $S_{\textnormal{head}}$ and $S_{\textnormal{tail}}$, each of size $k$ chosen from $n$ symbols. Each set can be encoded using $\log_2 \binom{n}{k}$ bits, giving a total witness size of $2 \log_2 \binom{n}{k} = O(\log_2 \binom{n}{k}) = O(k \log(en/k))$ bits.

\noindent \textbf{Time Complexity:} Given the witness, the greedy extension takes $\textnormal{poly}(n)$ time to process $n$ symbols, and the Check algorithm runs in $O(m) = \textnormal{poly}(n)$ time as stated in \cref{lem:structure}. Therefore, the algorithm runs in polynomial time given the witness.

\end{proof}

\begin{corollary}\label{cor:deterministic}
    There exists a deterministic algorithm that solves the search version of \textnormal{\problem{(f,k)}} in time $\binom{n}{k}^2 \cdot \textnormal{poly}(n)$, which is tight up to constant factors in the exponent under the Exponential Time Hypothesis (\textnormal{ETH}) when $k =\Theta(n)$.
\end{corollary}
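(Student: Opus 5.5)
The plan is to derandomize \cref{alg:nondeterministic} by brute force over all witnesses, and then to obtain tightness by matching the runtime against the ETH bound of \cref{thm:eth}.

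\textbf{Algorithm.} First apply the preprocessing from the start of \cref{sec:alg}:
\begin{itemize}
\item remove unused symbols;
\item build $G_f$ in $O(m)$ time;
\item remove the redundant quadratic terms that sit under cubic terms;
\item handle the trivial case $k\ge n$ with the IQP construction.
\end{itemize}
Since a $k'$-qubit circuit can be padded to exactly $k$ qubits, it then suffices to search for a circuit on exactly $k$ qubits.

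Next, enumerate every ordered pair $(S_{\textnormal{head}},S_{\textnormal{tail}})$ of $k$-subsets of $\mathcal{S}$. There are $\binom{n}{k}^2$ such pairs. For each pair, run the deterministic greedy extension followed by the Check subroutine of \cref{lem:structure}.

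By \cref{thm:nondeterministic}, each run takes $\poly(n)$ time and is sound: any circuit it outputs realizes $f$ on at most $k$ qubits. If some valid $k$-qubit circuit exists, take its true head and tail sets. By \cref{lem:degree1}, greedy extension succeeds on that pair, and Check then returns a circuit. So the algorithm outputs a circuit if one exists, and otherwise correctly reports ``No solution''. The total running time is $\binom{n}{k}^2\cdot\poly(n)$. The only care needed here is that $m=O(n^3)$, so all per-witness costs are polynomial in $n$.

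\textbf{Tightness.} When $k=\Theta(n)$, we have $\binom{n}{k}^2\le 2^{2n}$, so the runtime is $2^{O(n)}$. \cref{thm:eth} shows that, under ETH, no $2^{o(n)}$ algorithm exists even on instances with $k=\Theta(n)$; indeed the reduction there produces $k=1+6|C|\approx n/3$. Hence the exponent is optimal up to a constant factor.

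The main thing to phrase carefully is what ``tight'' means: the upper bound is $2^{cn}$ for some constant $c\le 2$, and the lower bound rules out $2^{o(n)}$, which is exactly agreement up to constant factors in the exponent. For the hard family with $k/n\to 1/3$, I would also note explicitly that $\binom{n}{k}=2^{\Theta(n)}$, using $\binom{n}{k}\ge (n/k)^k$. This shows the upper bound is genuinely exponential there and not something smaller. No further obstacle is expected, since all the substance lies in \cref{lem:degree1,lem:structure,thm:eth}.
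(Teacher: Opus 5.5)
Your proposal is correct and follows the paper's proof. Like the paper, you enumerate all $\binom{n}{k}^2$ head/tail witness pairs for \cref{alg:nondeterministic} and compare the resulting $2^{O(n)}$ runtime for $k=\Theta(n)$ with the ETH lower bound of \cref{thm:eth}. Your additional details (the preprocessing, $m=O(n^3)$, and $\binom{n}{k}=2^{\Theta(n)}$ on the hard family with $k\approx n/3$) are sound refinements of that same argument.
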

\begin{proof}
    The deterministic algorithm invokes the nondeterministic algorithm by enumerating all witnesses. This leads to a total runtime of $\binom{n}{k}^2 \cdot \textnormal{poly}(n)$. When $k = \Theta(n)$, this becomes $2^{O(n)}$. Recall the conditional lower bound from \cref{thm:eth}, which states that under the Exponential Time Hypothesis, there exists a constant $\epsilon > 0$ such that there is no algorithm that solves \textnormal{\problem{(f,k)}} in time $2^{\epsilon n}$. This establishes the tightness of our deterministic algorithm up to constant factors in the exponent.
\end{proof}

\subsection{Efficient Fixed-Parameter Tractable Algorithm}\label{sec:fpt_algorithm}
In this section, we present an \textsf{FPT} algorithm for \problem{} with runtime $k^{O(k)} n$, whose high-level overview is presented in the technical overview in \cref{sec:fpt}. In the remainder of this section, we formalize the state representation, describe the DP transitions for each node type, prove their correctness, and analyze the time complexity in detail.

Given the definition of path decomposition (see \cref{def:tree_decomposition}), one can note that the greedy extension procedure \cref{greedyextension} in the proof of \cref{lem:degree1} essentially constructs a path decomposition of the graph $G_f$ with bags of size at most $k+1$, and hence width at most $k$, where $k$ is the number of wires. We formalize this observation in the following lemma.

\begin{lemma}\label{lem:pathdecomposition}
If there exists a valid $k$-qubit circuit realizing the associated polynomial $f$, then the connection graph $G_f$ (vertices are $\mathcal{S}$ and edges correspond to pairs appearing in some monomial in $f$) has pathwidth at most $k$.
\end{lemma}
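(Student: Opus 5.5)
We will build the path decomposition by simulating the greedy extension from the proof of \cref{lem:degree1} on a fixed valid $k$-qubit circuit $\mathcal{C}$ realizing $f$. Take $S_{\textnormal{head}}$ and $S_{\textnormal{tail}}$ to be the first and last symbols on each wire of $\mathcal{C}$. Instead of the set $\mathsf{CurrentHeads}$, we track an auxiliary set $S_t$. This set contains, for every wire, the current rightmost assigned symbol. Completed wires keep their tail symbol in $S_t$ and are not removed. So $S_0=S_{\textnormal{head}}$ and $|S_t|=k$ at every step. At step $t$, \cref{lem:degree1} gives a head $v$ that is replaced by its wire successor $u$, so $S_t=(S_{t-1}\setminus\{v\})\cup\{u\}$. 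The bags are $B_0=S_0$ and $B_t=S_{t-1}\cup S_t$ for $t=1,\dots,T$, where $T=n-k$ is the number of internal Hadamard gates. Each bag has size at most $k+1$, so the width is at most $k$.

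We then check the three axioms of \cref{def:tree_decomposition}. \textbf{Vertex cover} is immediate, since every symbol is either a head or is appended at some step, at which point it lies in $S_t$. \textbf{Running intersection} also holds. A symbol $s$ enters $S_t$ at one step $t_0$: either $t_0=0$, or $t_0$ is the step where it is appended. It stays in $S_t$ until the unique step $t_1>t_0$ at which it is replaced by its successor, or forever if it is a tail. Hence the bags containing $s$ are exactly $B_{t_0},\dots,B_{t_1}$, with $B_{t_1}\ni s$ because $s\in S_{t_1-1}$. These form a contiguous interval.

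The main step is \textbf{edge cover}. Let $\{u,v\}$ be an edge of $G_f$. If $u,v$ are consecutive on a wire, the step appending $v$ after $u$ puts both in $B_t$. Otherwise $u,v$ lie on different wires (the cleanup before \cref{lem:structure} lets us assume non-adjacent same-wire symbols never share a monomial), so their supports overlap. We use the invariant from \cref{lem:degree1}: the time interval $[\tau_{t-1},\tau_t)$ between consecutive greedy steps corresponds to a ``cut'' of the circuit in which each wire's element of $S_t$ is the symbol active there. Concretely, the extension at step $t$ advances past the earliest-ending current head. So each $S_t$ is exactly the set of segments crossing some vertical time slice $\theta_t$, where $\theta_t$ lies just after the $t$-th Hadamard gate in time order and before the next one. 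A completed wire's tail stays active until time $1$, which is why we keep tails.

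With this invariant, take a time point $\theta\in\supp(u)\cap\supp(v)$ and choose it not at a Hadamard time. Let $t$ be the number of internal Hadamard gates before $\theta$. Then both $u$ and $v$ are active at the slice of $S_t$, so $\{u,v\}\subseteq S_t\subseteq B_t$ (or $B_0$ when $t=0$). This handles all CZ- and CCZ-induced edges, since for cubic terms pairwise overlap suffices.

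The expected obstacle is proving the slice invariant rigorously. The greedy step only guarantees that \emph{some} valid head is advanced, and this might not be the earliest-ending one. To handle this, I would first fix the circuit and argue that we may choose the greedy order to be exactly the time order of the internal Hadamard gates. At each moment the earliest-ending head is a legal choice by the proof of \cref{lem:degree1}. Since the lemma only asserts existence of a decomposition, we are free to use this particular run. The decomposition then follows directly from the time order of Hadamards in $\mathcal{C}$.
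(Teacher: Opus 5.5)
Your proof is correct. The construction, the vertex-cover check and the running-intersection check coincide with the paper's: completed tails are retained, $B_0=S_0$, and $B_t=S_{t-1}\cup S_t$.

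The difference is in the edge-cover step for edges between different wires.

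\textbf{The paper's route.} It works with an \emph{arbitrary} run of the greedy procedure and argues combinatorially. Take the first step at which one endpoint, say $x$, is a current head while $y$ is still untouched. Then $x$ cannot be advanced before $y$ enters, because at that moment $x$ would have two untouched neighbours in $G_f$ ($y$ and its wire successor), violating the unique-neighbour rule. Hence $x$ and $y$ co-occur in some $S_{t'}$.

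\textbf{Your route.} You fix the particular run that always advances the earliest-ending head, which is a legal choice by the proof of \cref{lem:degree1}. You identify each $S_t$ with the segments crossing a vertical time slice, and cover the edge at any non-Hadamard time $\theta\in\supp(u)\cap\supp(v)$. This is the sweep-line argument, and the greedy framing is inessential to it. The bags are simply the cross-sections of $\mathcal{C}$ taken between consecutive internal Hadamards in time order. This makes transparent why width $k$ appears: each slice meets exactly one segment per wire.

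\textbf{Trade-offs.} Your route certifies only one specific run. It also needs a little care with simultaneous Hadamard gates; choosing $\theta$ off the Hadamard times, with $t$ counting the internal Hadamards before $\theta$, handles this. The paper's argument proves the stronger fact that \emph{every} execution of the greedy procedure yields a valid path decomposition, which ties the decomposition directly to the algorithm.

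\textbf{One small correction.} The fact that non-adjacent symbols on the same wire never share a monomial does not come from the cleanup step, which only concerns quadratic subterms of cubic monomials. It holds for any circuit: CZ and CCZ act on distinct qubits, an internal $H$ links only adjacent segments, and every monomial of $f$ is produced by at least one gate.
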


\begin{proof}
    We can explicitly construct a path decomposition of $G_f$ by simulating the greedy extension procedure described in \cref{lem:degree1}. Since there exists a valid $k$-qubit circuit realizing the associated polynomial $f$, the condition in \cref{lem:degree1} is satisfied: the witness sets $S_{\textnormal{head}}$ and $S_{\textnormal{tail}}$ correspond to the symbols at the start and end of each wire in this circuit.
    
    Let $S_0, S_1, \dots, S_{n-k}$ be the sequence of the set $\mathsf{CurrentHeads}$ throughout the execution of the algorithm, where $S_0 = S_{\textnormal{head}}$. Note that when a wire is completed (i.e., its head symbol is in $S_{\textnormal{tail}}$), the corresponding symbol is removed from $\mathsf{CurrentHeads}$ and not replaced. However, we still include it in the sets $S_t$ for the purpose of constructing the path decomposition. Thus, each $S_t$ has size exactly $k$ for all $0 \leq t \leq n-k$.
    
    In each step $t \in \{1,\dots, n-k\}$, the algorithm selects a vertex $v \in S_{t-1}$ and replaces it with its neighbor $u$ on the same wire, resulting in $S_t = (S_{t-1} \setminus \{v\}) \cup \{u\}$.
    
    We define the bags of the path decomposition by setting $B_0=S_0$ and, for $1\leq t\leq n-k$, $B_t = S_{t-1} \cup S_t = S_{t-1} \cup \{u\}$, where $u$ is the symbol introduced at step $t$. Since $|S_0|=k$ and $|S_{t-1}| = k$, every bag has size at most $k+1$. Thus, the width of this decomposition is at most $k$.
    
    We now verify the properties of path decomposition:
    \begin{itemize}
        \item \textbf{Vertex Cover:} Every symbol in $\mathcal{S}$ is eventually assigned to a wire. It starts as a member of $S_{\textnormal{head}}$ or enters $\mathsf{CurrentHeads}$ at some step. Since every symbol is in some $S_t$, it belongs to at least one bag (using $B_0=S_0$ for the initial symbols).
        \item \textbf{Edge Cover:} 
        Consider any edge in $G_f$. If it connects $v$ and $u$ where $u$ follows $v$ on the same wire, then this edge is covered in the step $t$ where $v$ is replaced by $u$, as $\{v, u\} \subseteq B_t$.
        If the edge connects $x$ and $y$ on different wires, consider the first step $t$ at which at least one of $x,y$ belongs to $\mathsf{CurrentHeads}$. If both are already in $S_t$, the edge is covered by $B_t$ (with $B_0=S_0$ when $t=0$). Otherwise, assume without loss of generality that $x\in S_t$ and $y$ has not yet entered $\mathsf{CurrentHeads}$. If $x$ is already in $S_{\textnormal{tail}}$, then it remains in all subsequent $S_j$ for $j\ge t$. If not, then $x$ also remains in subsequent $S_j$ until $y$ enters $\mathsf{CurrentHeads}$; otherwise $x$ would have at least two neighbors among untouched symbols, namely $y$ and the next symbol on $x$'s wire, contradicting the greedy extension condition. Eventually all symbols enter $\mathsf{CurrentHeads}$, so there is a step $t'\ge t$ where both $x$ and $y$ are in $S_{t'}$, and the edge is covered by $B_{t'}$.
        \item \textbf{Running Intersection:} For any symbol $w$, the set of steps where $w \in S_t$ forms a contiguous interval (it enters once and then either leaves once or remains as a retained completed tail). Since $B_t = S_{t-1} \cup S_t$, the set of bags containing $w$ also forms a contiguous subsegment of the path.
    \end{itemize}
    Therefore, $(B_0, B_1, \dots, B_{n-k})$ is a valid path decomposition of $G_f$ with width at most $k$.
\end{proof}

With the pathwidth (and hence treewidth) upper bound at hand, establishing that \problem{} is \textsf{FPT} becomes a direct consequence of Courcelle's Theorem~\cite{Courcelle1990}. The existence of a valid partition into $k$ wires can be expressed in Monadic Second-Order logic ($\textnormal{MSO}_2$) over the natural edge-labelled incidence structure of the polynomial. Specifically, we check for the existence of $k$ disjoint ordered paths whose direct wire adjacencies correspond to quadratic monomials of $f$, and such that when the temporal constraints induced by all quadratic and cubic terms of $f$ are added, the resulting graph $\GH$ (as defined in \cref{lem:structure}) contains no directed cycles. Then, Courcelle's Theorem implies that this property can be checked in linear time, where the parameter is the treewidth of the graph and also the size of the $\textnormal{MSO}_2$ formula. Note that the size of the formula depends only on $k$ (the number of wires). Moreover, one can first use a fixed-parameter tractable algorithm to determine whether the treewidth of $G_f$ is at most $k$ (for example, use the exact treewidth algorithm in~\cite{Bodlaender1996}), return ``No'' if the treewidth exceeds $k$. Otherwise, apply Courcelle's Theorem to decide the problem in linear time, where the hidden constant depends on $k$. This shows that \problem{} is \textsf{FPT}.

This already answers the question affirmatively. However, since the hidden constant in Courcelle's Theorem can be multi-exponential, we would like to provide a more concrete dynamic programming algorithm that solves the problem with a more practical dependence on $k$, specifically $k^{O(k)}$. The first step is to compute a tree decomposition of $G_f$ with width promised to be bounded by $k$.

As mentioned, there exist exact algorithms for treewidth that are fixed-parameter tractable. Bodlaender~\cite{Bodlaender1996} provided an exact algorithm for computing treewidth in time $2^{O(k^3)} n$. Korhonen et al.~\cite{KorhonenLokshtanov2023} improve this runtime to $2^{O(k^2)} n^4$.

For our purposes, faster approximation algorithms suffice and are better for the dependence on $k$. Bodlaender et al.~\cite{Bodlaender2016} provided a $5$-approximation algorithm for treewidth that runs in time $2^{O(k)} n$. More recently, Korhonen~\cite{Korhonen2021} presented a single-exponential time $2$-approximation algorithm for treewidth. We state the result below.
\begin{theorem}[Theorem 1.1 in~\cite{Korhonen2021}]\label{thm:treewidth}
There is an algorithm that, given an $n$-vertex graph $G$ and an integer $k$, in time $2^{O(k)}n$ either outputs a tree decomposition of $G$ of width at most $2k + 1$ or determines that the
treewidth of $G$ is larger than $k$. 
\end{theorem}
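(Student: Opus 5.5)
Since \cref{thm:treewidth} is quoted from~\cite{Korhonen2021}, in the paper we would simply invoke it. If we had to prove it ourselves, the natural route is \emph{iterative improvement of a tree decomposition} inside a compression-style framework.

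\textbf{Outer loop (compression).} Reduce to the following task: given $G$ together with a tree decomposition of width at most $O(k)$, either output one of width at most $2k+1$ or certify $\textnormal{tw}(G)>k$.
\begin{enumerate}
  \item Use Bodlaender-style reduction: in linear time, either find a linear-size matching to contract or many simplicial-like vertices to remove, and otherwise conclude that the treewidth exceeds $k$.
  \item Recursively solve the smaller graph.
  \item Uncontract the matching. This at most doubles bag sizes, giving a decomposition of width at most $4k+3$.
  \item Call the improvement routine on this decomposition.
\end{enumerate}
A geometric decrease in graph size makes the total time $2^{O(k)}n$, provided the improvement step runs in time $2^{O(k)}n$.

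\textbf{Improvement step.} Let $W$ be a bag of size at least $2k+3$. If $\textnormal{tw}(G)\le k$, then $W$ has a balanced separator $X$ with $|X|\le k+1$: every component of $G-X$ contains at most half of $W\setminus X$. Choose $X$ to also minimize a suitable secondary potential.

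Perform a \emph{split}: for each part $C_i$ of the partition of $V\setminus X$ induced by the separator, build the decomposition $T_i$ as follows. Copy $T$, intersect every bag with $C_i$, and add to each bag the vertices of $X$ that are reachable in an appropriate ``linked'' sense. Then glue the $T_i$ along a new bag containing $X$.

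Take $X$ to be a minimum-size separator among those separating the relevant pieces. A Menger-type argument then shows that no bag grows. Meanwhile $W$ is replaced by pieces of size at most $|W|/2+k+1<|W|$.

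A potential such as $\sum_t 3^{|B_t|}$ (or the sorted vector of bag sizes) strictly decreases with each split. After polynomially many splits, hence $O(n)$ amortized work, every bag has size at most $2k+2$.

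\textbf{Efficient implementation.} To keep the improvement step at $2^{O(k)}n$, find the separator $X$ by dynamic programming over the current width-$O(k)$ decomposition, guessing for each vertex of $W$ which side it lies on. Combine this with amortization, using a data structure that supports splits locally rather than rebuilding the whole decomposition.

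\textbf{Main obstacle.} The hardest part is proving that the split operation never increases any other bag size. This is the linkedness/minimum-separator argument. The next hardest part is bounding the total number of splits and the cost of maintaining the decomposition, so that the overall time stays linear in $n$ with only a single-exponential dependence on $k$.
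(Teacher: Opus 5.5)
Citing the result is exactly what the paper does: it gives no proof of \cref{thm:treewidth} and simply invokes Theorem~1.1 of~\cite{Korhonen2021}, so your proposal matches it. Your optional outline of Korhonen's bag-splitting argument is not needed for the paper, and as written it is not a proof. In particular, ``no bag grows'' alone would not make $\sum_t 3^{|B_t|}$ or the sorted bag-size vector decrease, since every bag of $T$ is copied into up to three of the $T_i$; a stronger per-bag statement, together with deletion of empty or redundant copies, is needed to guarantee progress.
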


By using the $2$-approximation algorithm from \cref{thm:treewidth} with parameter $k$, we can obtain a tree decomposition of $G_f$ with width $w \le 2k+1 = O(k)$ in time $2^{O(k)} n$. Furthermore, by transforming the tree decomposition into a nice one efficiently via \cref{lem:nice_td}, we can assume a nice tree decomposition of $G_f$ with width $w$ is provided.

Now, we can solve \problem{} using dynamic programming on the nice tree structure, following the standard approach for problems parameterized by treewidth~\cite{Cygan2015}. The algorithm processes the tree decomposition in a bottom-up manner, computing a set of valid partial circuit configurations for the symbols inside the subtree rooted at each node $x$. A partial configuration is defined over the set of symbols $V_x$ in the subtree, but to ensure it can be consistently extended to the rest of the graph, we only need to track information exposed at the boundary $B_x$ (the bag at node $x$). To this end, we define a set of dynamic programming \emph{states} $\mathcal{D}_x$ at each node $x$, where each state in $\mathcal{D}_x$ encodes the necessary information about the partial configuration on $V_x$ that is relevant to the boundary $B_x$.

The state consists of two components: $\mathcal{W}_x$ (wire structure) and $\mathcal{O}_x$ (ordering of Hadamard gates), where:
\begin{enumerate}
    \item $\mathcal{W}_x$ maintains $k$ ordered lists $W_i$ (each containing symbols from $B_x$ along with local endpoints $\localendpoint{s}$ and $\localendpoint{e}$), with edge type annotations between consecutive entries indicating: (0) no direct connection, (1) a direct Hadamard link, or (2) connected via forgotten symbols (i.e., vertices in $V_x \backslash B_x$).
    \item $\mathcal{O}_x$ maintains a topological ordering (as defined in \cref{def:topological_ordering}) of the boundary virtual Hadamard vertices induced by $\mathcal{W}_x$ according to \cref{lem:structure}, ensuring compatibility with the partial constraint graph $\GH[V_x]$.
\end{enumerate}
We formalize each state in $\mathcal{D}_x$ as a tuple $(\mathcal{W}_x, \mathcal{O}_x)$ defined as follows. In principle, each $\mathcal{W}_x$ is a partition of symbols and ordered-pair types, while each $\mathcal{O}_x$ is a topological ordering on the corresponding virtual gate vertices. We also include their intuitive meanings, which will be used in defining valid states shortly afterwards.

\begin{enumerate}
  \item \textbf{Wire Structure} $\mathcal{W}_x = (W_i,\tau)$: Consists of a
        partition of all symbols in $B_x$, together with $\localendpoint{s}$ and
        $\localendpoint{e}$, into $k$ ordered lists
        $W_i = (\localendpoint{s}, u_{i,1}, \dots, u_{i,n_i}, \localendpoint{e})$
        ($1\leq i\leq k$) where each $u_{i,j} \in B_x$\footnote{Note that not
        necessarily the product of consecutive symbols in $W_i$ appears in $f$, nor in
        $L_i$ below.}.
        For each ordered pair of consecutive entries in $W_i$, we maintain an
        ordered pair type $\tau:\text{Consecutive entries
        in } W_i\to \{0, 1, 2\}$ whose intuitive meanings are:
        \begin{itemize}
        \item $\tau(u, v) = 1$: Entries $u$ and $v$ are directly connected by
                        a Hadamard link on the same wire (i.e., no symbols in
                        $V_x \backslash B_x$ appear between them).
                        Furthermore, we require that
                        $u\neq \localendpoint{s}$, $v\neq \localendpoint{e}$, and the quadratic monomial
                        $uv$ appears in $f$.
                \item $\tau(u, v) = 2$: Entries $u$ and $v$ are connected via a path
                        using only forgotten symbols in $V_x \backslash B_x$ on the same
                        wire, where every consecutive pair of real symbols on the path is directly connected
                        by Hadamard gates.
                \item $\tau(u, v) = 0$: Entries $u$ and $v$ are not directly connected
                        by a Hadamard gate, and there are no forgotten symbols from
                        $V_x \backslash B_x$ between them on the wire.
    \end{itemize}
    
    \item \textbf{Topological Ordering} $\mathcal{O}_x$: We first introduce virtual vertices $\{S, E\} \cup \{L(u), R(u) \mid u \in B_x\}$ ($L(u)$ and $R(u)$ can be viewed as the left and right endpoints of the wire segment corresponding to symbol $u$ as in \cref{lem:structure}). Moreover, for ordered consecutive symbols $u, v$ with $\tau(u,v) = 1$, we identify $R(u) = L(v)$. Further, we denote the virtual vertex set after identification as $B_x(\tau)$. Then, the ordering $\mathcal{O}_x$ is a topological ordering on $B_x(\tau)$ with $S$ the minimum and $E$ the maximum. Intuitively, $\mathcal{O}_x$ represents a time ordering of Hadamard gates on the two sides of intervals represented by symbols in $B_x$.
\end{enumerate}

An example state is illustrated in \cref{fig:wx-ox}.
\begin{figure}[t]
    \centering
    \definecolor{wirecolor}{RGB}{200,200,200}
    \definecolor{hgatecolor}{RGB}{173,216,230}
    \definecolor{edgecolor}{RGB}{255,99,71}
    \definecolor{timecolor}{RGB}{144,238,144}
    \tikzset{
        sym/.style={draw,fill=blue!30,inner sep=0pt,minimum width=6mm,minimum height=5mm,text width=6mm,align=center,font=\footnotesize},
        endpoint/.style={draw,circle,fill=timecolor,inner sep=1.5pt,font=\footnotesize},
        ghost/.style={draw,fill=gray!20,inner sep=1.2pt,minimum width=5mm,minimum height=4mm,font=\footnotesize},
        vtx/.style={draw,circle,fill=hgatecolor,inner sep=1.2pt,font=\scriptsize}
    }
    \renewcommand{\arraystretch}{1.1}
    \begin{tabular}{@{}p{0.3\linewidth}p{0.7\linewidth}@{}}
        \raggedleft\textbf{Bag} $B_x$ \textbf{and subtree} $V_x$ &
        \begin{tikzpicture}[scale=1.0, baseline=(current bounding box.center)]
            \draw[dashed,rounded corners=6pt,thick] (-0.4,-0.8) rectangle (9.2,2.8);
            \node[font=\footnotesize] at (4.4,-0.4) {$V_x$ (subtree rooted at $x$)};
            
            \draw[rounded corners=4pt,thick] (1.0,0.7) rectangle (8.0,2.3);
            \node[font=\footnotesize] at (4.5,2.0) {bag $B_x$};
            
            \node[sym] (a) at (1.9,1.45) {$a$};
            \node[sym] (b) at (3.7,1.45) {$b$};
            \node[sym] (c) at (5.5,1.45) {$c$};
            \node[sym] (d) at (7.3,1.45) {$d$};
            
            \node[ghost] (p) at (4.1,0.2) {$p$};
            \node[ghost] (q) at (6.1,0.2) {$q$};
            \draw[thick,gray!60] (c) -- (p) -- (q) -- (d);
            \draw[thick] (a) -- (b);
            \node[font=\footnotesize,gray!70] at (5.1,0.4) {$V_x \setminus B_x$};
        \end{tikzpicture}
        \\[1.2cm]
        \raggedleft\textbf{Wire structure} $\mathcal{W}_x$ &
        \begin{tikzpicture}[scale=1.0, baseline=(current bounding box.center)]
            \node[endpoint] (s1) at (0.0,1.2) {$\localendpoint{s}$};
            \node[sym] (a2) at (1.6,1.2) {$a$};
            \node[sym] (b2) at (3.2,1.2) {$b$};
            \node[endpoint] (e1) at (4.8,1.2) {$\localendpoint{e}$};
            \draw[thick,wirecolor,dotted] (s1) -- (a2);
            \draw[thick] (a2) -- (b2);
            \draw[thick,wirecolor,dotted] (b2) -- (e1);
            \node[font=\footnotesize] at (-0.7,1.2) {$W_1$};
            
            \node[endpoint] (s2) at (0.0,0.0) {$\localendpoint{s}$};
            \node[sym] (c2) at (1.6,0.0) {$c$};
            \node[sym] (d2) at (3.2,0.0) {$d$};
            \node[endpoint] (e2) at (4.8,0.0) {$\localendpoint{e}$};
            \draw[thick,wirecolor,dotted] (s2) -- (c2);
            \draw[thick,dashed] (c2) -- (d2);
            \draw[thick,wirecolor,dotted] (d2) -- (e2);
            \node[font=\footnotesize] at (-0.7,0.0) {$W_2$};
            
            \node[font=\scriptsize] at (2.4,1.45) {$\tau=1$};
            \node[font=\scriptsize] at (2.4,0.25) {$\tau=2$};
            
            \draw[thick] (6.0,1.45) -- (6.7,1.45);
            \node[font=\footnotesize,right] at (6.8,1.45) {$\tau=1$ (direct Hadamard link)};
            \draw[thick,dashed] (6.0,1.1) -- (6.7,1.1);
            \node[font=\footnotesize,right] at (6.8,1.1) {$\tau=2$ (via $V_x \setminus B_x$)};
            \draw[thick,wirecolor,dotted] (6.0,0.75) -- (6.7,0.75);
            \node[font=\footnotesize,right] at (6.8,0.75) {$\tau=0$ (disconnected)};
        \end{tikzpicture}
        \\[1.2cm]
        \raggedleft\textbf{Topological ordering} $\mathcal{O}_x$ &
        \begin{tikzpicture}[scale=1.0, baseline=(current bounding box.center)]
            \node[draw,circle,fill=timecolor,inner sep=1.6pt] (S) at (0.0,1.0) {$S$};
            \node[vtx] (La) at (1.4,1.6) {$L(a)$};
            \node[draw,ellipse,fill=hgatecolor,font=\scriptsize,inner xsep=-1pt,inner ysep=2pt] (Rab) at (3.2,1.6) {$R(a)\!=\!L(b)$};
            \node[vtx] (Rb) at (5.0,1.6) {$R(b)$};
            \node[vtx] (Lc) at (1.4,0.4) {$L(c)$};
            \node[vtx] (Rc) at (2.6,0.4) {$R(c)$};
            \node[vtx] (Ld) at (3.8,0.4) {$L(d)$};
            \node[vtx] (Rd) at (5.0,0.4) {$R(d)$};
            \node[draw,circle,fill=timecolor,inner sep=1.6pt] (E) at (6.4,1.0) {$E$};
            
            \draw[->,thick] (S) -- (La);
            \draw[->,thick] (La) -- (Rab);
            \draw[->,thick] (Rab) -- (Rb);
            \draw[->,thick] (Rb) -- (E);
            
            \draw[->,thick] (S) -- (Lc);
            \draw[->,thick] (Lc) -- (Rc);
            \draw[->,thick] (Rc) -- (Ld);
            \draw[->,thick] (Ld) -- (Rd);
            \draw[->,thick] (Rd) -- (E);
            
            \draw[->,thick,edgecolor,dashed] (La) to[bend left=10] (Rc);
            \draw[->,thick,edgecolor,dashed] (Lc) to[bend left=8] (Rb);
            
            \draw[->,thick] (7.2,1.35) -- (7.9,1.35);
            \node[font=\footnotesize,right] at (8.0,1.35) {wire order};
            \draw[->,thick,edgecolor,dashed] (7.2,1.0) -- (7.9,1.0);
            \node[font=\footnotesize,right] at (8.0,1.0) {constraints};
            
            \node[font=\footnotesize,align=center] at (3.2,-0.4) {\textbf{Order:} $S, L(a), L(c), R(a)\!=\!L(b), R(c), L(d), R(d), R(b), E$};
        \end{tikzpicture}
    \end{tabular}
    \caption{\small Example state $(\mathcal{W}_x, \mathcal{O}_x)$ at a node $x$ in a tree decomposition. Top: the bag $B_x$ and the vertex set $V_x$ of the subtree rooted at $x$; gray symbols $p,q \in V_x \setminus B_x$ are forgotten, and the path $c\!-\!p\!-\!q\!-\!d$ illustrates connectivity used by $\tau=2$. Middle: the wire structure $\mathcal{W}_x$ partitions $B_x$ into ordered lists with edge types $\tau \in \{0,1,2\}$, where $\tau=0$ indicates complete disconnection. Bottom: a topological ordering $\mathcal{O}_x$ on the virtual vertices $B_x(\tau)$ with the identification $R(a)=L(b)$ induced by $\tau(a,b)=1$. Squares denote symbols (vertices of $G_f$); circles denote virtual vertices (Hadamard gates) in the constraint graph.}
    \label{fig:wx-ox}
\end{figure}

Before describing the algorithm, we define the set of \textbf{realizable} states formally. A state $(\mathcal{W}_x, \mathcal{O}_x)$ is \textbf{realizable} for a node $x$ if there exists a partition of symbols (with list endpoints $\localendpoint{s}, \localendpoint{e}$) in $V_x$ to $k$ ordered lists $L_i$ and an ordered pair type $\rho:\text{Consecutive entries in } L_i\to \{0, 1\}$:
\begin{enumerate}
    \item \textbf{Wire Structure Validity}: $W_i \subseteq  L_i$, and the order of symbols in $W_i$ is preserved in $L_i$. Moreover, the ordered pair type $\rho$ extends $\tau$ as follows:
    \begin{itemize}
        \item For consecutive entries $u, v \in W_i$ with edge type $\tau(u,v) \in \{0,1\}$, $u,v$ must be consecutive in $L_i$ and $\rho(u,v) = \tau(u,v)$.
        \item For consecutive entries $u, v \in W_i$ with edge type $\tau(u,v) = 2$, there are symbols $u_1,\dots,u_r$ with $r\geq 1$ between $u=u_0$ and $v=u_{r+1}$ in $L_i$ such that $\rho(u_j,u_{j+1})=1$ for every $j$ with both $u_j$ and $u_{j+1}$ real symbols.
        \item For consecutive entries $u, v\in L_i$, if $\rho(u,v) = 1$, then $u\neq \localendpoint{s}$ and $v\neq \localendpoint{e}$ and the quadratic term $uv$ appears in $f$.
    \end{itemize}
    
    \item \textbf{Compatible Topological Ordering}: Construct the partial constraint graph $\GH[V_x(\rho)]$ as in \cref{lem:structure} from the lists $L_i$ and the ordered pair type $\rho$, again ignoring self-loops and parallel edges. Specifically:
    \begin{itemize}
        \item \emph{Vertices}: Similar to the definition of $\mathcal{O}_x$, create virtual vertices $\{S, E\} \cup \{L(u), R(u) \mid u \in V_x\}$ with identifications $R(u) = L(v)$ for consecutive symbols $u, v$ in each list $L_i$ such that the edge type $\rho(u,v) = 1$. As a slight abuse of notation, we denote the identified vertex set as $V_x(\rho)$.
        \item \emph{Interval edges}: $L(u) \to R(u)$ for all $u \in V_x$.
        \item \emph{Wire edges}: $R(u) \to L(v)$ whenever real symbols $u,v\in V_x$ are consecutive in some list $L_i$; for the first and last real symbols of a list, add $S\to L(v)$ and $R(u)\to E$, respectively.
        \item \emph{Gate constraint edges}: For every pair of distinct symbols $\{u, v\} \subseteq V_x$ appearing together in a monomial of $f$ (i.e., in a term $uv$ or $uvw$ for some $w$), add edges $L(u) \to R(v)$ and $L(v) \to R(u)$.
    \end{itemize}
    The graph $\GH[V_x(\rho)]$ must be acyclic, and $\mathcal{O}_x$ must be the restriction to $B_x(\tau)$ of some topological ordering of $\GH[V_x(\rho)]$ (equivalently, it is consistent with $\GH[V_x(\rho)]$ in the sense of \cref{def:consistent_ordering}). The identifications are compatible because $\tau(u,v) = 1$ implies $\rho(u,v) = 1$ for consecutive symbols $u,v \in W_i \subseteq L_i$.
\end{enumerate}

Before proceeding, we recall \cref{lem:bag_subtree} (the bag subtree property), together with \cref{lem:acyclic_union} (consistent topological orderings). Briefly speaking, they state that different subtrees of the decomposition interact only through the current bag, and that acyclicity of the union of two partial constraint graphs can be certified by a common boundary ordering consistent with both partial graphs. These are the key lemmas that justify the correctness of the introduce and join transitions.

We state the algorithm to compute the set of realizable states $\mathcal{D}_x$ recursively using the nice tree decomposition in \cref{def:nice_td}. The base case and the transitions for each type of node are given below. For clarity, we prove the correctness right after presenting the algorithm.

\paragraph{Leaf Node ($B_x = \emptyset$):} 
$$\mathcal{D}_x = \{ (\mathcal{W}_x, \mathcal{O}_x) \}$$
where $\mathcal{W}_x$ contains $k$ lists $W_i = (\localendpoint{s}, \localendpoint{e})$, and $\mathcal{O}_x$ is the trivial ordering $S < E$ on virtual vertices.

\paragraph{Correctness for Leaf Node:}
The unique state $(\mathcal{W}_x, \mathcal{O}_x)$ with $k$ empty lists $(\localendpoint{s}, \localendpoint{e})$, $\tau(\localendpoint{s},\localendpoint{e}) = 0$, ordering $S < E$ is realizable. Indeed, we can set $L_i = (\localendpoint{s}, \localendpoint{e})$ for all $i \in [k]$ with no symbols from $V_x = \emptyset$ to place, and $\rho(\localendpoint{s},\localendpoint{e}) = 0$. The constraint graph $\GH[V_x(\rho)]$ has only vertices $S, E$ and the trivial ordering $S < E$ is a valid topological order. This is the only possible state since $B_x = \emptyset$.

\paragraph{Introduce Node ($B_x = B_y \cup \{u\}$):}
For each state $(\mathcal{W}_y, \mathcal{O}_y) \in \mathcal{D}_y$, we generate new states by inserting $u$:
\begin{itemize}
    \item \textbf{Choose Wire and Position}: For each wire $i \in [k]$, let $W_i^y = (\localendpoint{s} = v_0, v_1, \dots, v_{m_i}, v_{m_i+1} = \localendpoint{e})$ be the ordered list in $\mathcal{W}_y$. For each position $j \in \{0, \dots, m_i\}$, we try to insert $u$ between $v_j$ and $v_{j+1}$.
    
    \item \textbf{Check Feasibility}: If $\tau_y(v_j, v_{j+1}) \in \{1, 2\}$ (already connected or via forgotten symbols), skip.
    
    \item \textbf{Set Edge Types}: For the insertion, enumerate all valid edge type assignments $\tau_x(v_j, u) \in \{0, 1\}$ and $\tau_x(u, v_{j+1}) \in \{0, 1\}$ such that: 
    \begin{itemize}
        \item If $\tau_x(v_j, u) = 1$, then $v_j \neq  \localendpoint{s}$ and the quadratic monomial $v_j u$ appears in $f$.
        \item If $\tau_x(u, v_{j+1}) = 1$, then $v_{j+1} \neq \localendpoint{e}$ and the quadratic monomial $u v_{j+1}$ appears in $f$.
    \end{itemize}

    Let $\mathcal{W}_x$ be the updated wire structure with $u$ inserted.
    
    \item \textbf{Update Topological Ordering}: Construct virtual vertices for $u$: $L(u)$ and $R(u)$. Update identifications based on the new edge types:
    \begin{itemize}
        \item If $\tau_x(v_j, u) = 1$, identify $R(v_j) = L(u)$ (note $v_j\neq \localendpoint{s}$)
        \item If $\tau_x(u, v_{j+1}) = 1$, identify $R(u) = L(v_{j+1})$ (note $v_{j+1} \neq \localendpoint{e}$)
    \end{itemize}
    This gives the identified virtual vertex set $B_x(\tau_x)$. Insert $L(u)$ and $R(u)$ into updated $\mathcal{O}_y$ to form the topological ordering $\mathcal{O}_x$ on $B_x(\tau_x)$, ensuring the following directed edges are consistent with $\mathcal{O}_x$ after insertion:
    \begin{itemize}
        \item Interval and wire order: the interval edge $L(u)\to R(u)$ must be consistent with $\mathcal{O}_x$; moreover, the immediate predecessor of $L(u)$ is $S$ if $v_j=\localendpoint{s}$ and $R(v_j)$ otherwise, while the immediate successor of $R(u)$ is $E$ if $v_{j+1}=\localendpoint{e}$ and $L(v_{j+1})$ otherwise; these order constraints must also be consistent with $\mathcal{O}_x$.
        \item Constraint edges: For each monomial in $f$ containing $u$ and another symbol $w \in B_y$, add edges $L(u) \to R(w)$ and $L(w) \to R(u)$ (as defined in realizability condition). 
    \end{itemize}
    
    \item If $(\mathcal{W}_x, \mathcal{O}_x)$ pass all checks, add it to $\mathcal{D}_x$.
\end{itemize}

\paragraph{Correctness for Introduce Node:} We show that a state $(\mathcal{W}_x, \mathcal{O}_x)$ is realizable for $x$ if and only if it is generated by the introduce procedure from some realizable $(\mathcal{W}_y, \mathcal{O}_y) \in \mathcal{D}_y$.

    \noindent ($\Rightarrow$) Suppose $(\mathcal{W}_x, \mathcal{O}_x)$ is realizable. Then there exist lists $(L_i^x)$ partitioning $V_x$ and an ordered pair type $\rho_x$ satisfying the realizability conditions. Let $u$ appear in list $W_i^x$ between symbols $u_j$ and $u_{j+1}$ in $B_y \cup \{\localendpoint{s}, \localendpoint{e}\}$. By \cref{lem:bag_subtree} applied to the child subtree, there is no edge of $G_f$ between $u$ and $V_y\setminus B_y=V_x\setminus B_x$. Therefore, $\tau_x(u_j, u)$ and $\tau_x(u, u_{j+1})$ (which are extended by $\rho_x$ from realizability) cannot be type $2$; they can only be $0$ or $1$, and hence $u_j, u, u_{j+1}$ must be consecutive in the list $L_i^x$.
    
    Then, we can define the state $(\mathcal{W}_y, \mathcal{O}_y)$ for $y$ by removing $u$ from $W_i^x$ and updating $\tau_y(u_j, u_{j+1}) = 0$. For the topological order, we note that in $\mathcal{O}_x$ on $B_x(\tau_x)$, the two boundary sides of the gap between $u_j$ and $u_{j+1}$ are not identified since $u$ is placed between them in $L_i^x$. Hence we can simply remove the labels $L(u)$ and $R(u)$ from $B_x(\tau_x)$, keeping any boundary labels identified with them, to get $B_y(\tau_y)$, giving $\mathcal{O}_y$. It is straightforward to see that $L_i^x$ and $\rho_x$ with $u$ removed are valid extensions for $(\mathcal{W}_y, \mathcal{O}_y)$; hence, $(\mathcal{W}_y, \mathcal{O}_y)$ is realizable. Thus, the introduce procedure would recover $(\mathcal{W}_x, \mathcal{O}_x)$ from $(\mathcal{W}_y, \mathcal{O}_y)$, since $(\mathcal{W}_x, \mathcal{O}_x)$ would pass all checks because it is realizable.
    
    \noindent ($\Leftarrow$) Suppose $(\mathcal{W}_x, \mathcal{O}_x)$ is generated from realizable $(\mathcal{W}_y, \mathcal{O}_y)$ by the introduce procedure. Since $(\mathcal{W}_y, \mathcal{O}_y)$ is realizable, there exist lists $(L_i^y)$ for $V_y$ and $\rho_y$ satisfying the conditions. The algorithm inserts $u$ at position $j$ in wire $i$, between symbols $v_j$ and $v_{j+1}$ in $W_i^y$. From the feasibility check, $\tau_y(v_j,v_{j+1}) = 0$; hence, $v_j$ and $v_{j+1}$ are consecutive in $L_i^y$, so $u$ can be inserted between them.
    
    We construct lists $(L_i^x)$ for $V_x$ by inserting $u$ at the same position in $L_i^y$ (between the same symbols $v_j$ and $v_{j+1}$). Define $\rho_x$ by adding $\rho_x(v_j, u) = \tau_x(v_j, u)$ and $\rho_x(u, v_{j+1}) = \tau_x(u, v_{j+1})$. All other pairs remain the same as in $\rho_y$. Clearly $(L_i^x)$ and $\rho_x$ satisfy the wire structure validity. 

    For compatible topological ordering, again using \cref{lem:bag_subtree} applied to the child subtree, all constraint edges involving the introduced symbol $u$ and the already processed symbols of $V_y$ have their other endpoint in the boundary $B_y$. Therefore, the directed graph $\GH[V_x(\rho_x)]$ is the union of $\GH[V_y(\rho_y)]$ and the part of $\GH[V_x(\rho_x)]$ supported on the boundary vertices $B_x(\tau_x)$. $\GH[V_y(\rho_y)]$ is acyclic since $(\mathcal{W}_y, \mathcal{O}_y)$ is realizable. The boundary part is also acyclic since the algorithm checks that the new wire-order and constraint edges involving $u$ are consistent with $\mathcal{O}_x$, while the old boundary reachability relations are already consistent with $\mathcal{O}_y$ and $\mathcal{O}_x$ restricts to $\mathcal{O}_y$. Hence the common boundary order is consistent with both pieces. By \cref{lem:acyclic_union}, $\GH[V_x(\rho_x)]$ is acyclic, and $\mathcal{O}_x$ is consistent with it. Therefore, $(\mathcal{W}_x, \mathcal{O}_x)$ is realizable.
    \paragraph{Forget Node ($B_x = B_y \setminus \{u\}$):}
For each state $(\mathcal{W}_y, \mathcal{O}_y) \in \mathcal{D}_y$, we remove $u$:
\begin{itemize}
    \item \textbf{Locate Symbol}: Find $u$ in some list $W_i^y = (\dots, v_j, u, v_{j+1}, \dots)$ in $\mathcal{W}_y$.
    
    \item \textbf{Check Connection}: Verify that $\tau_y(v_j, u) \neq 0$ if $v_j\neq \localendpoint{s}$, and that $\tau_y(u, v_{j+1}) \neq 0$ if $v_{j+1}\neq \localendpoint{e}$; otherwise skip (cannot forget disconnected symbols).
    
    \item \textbf{Update Wire Structure}: Remove $u$ from the list to get $W_i^x = (\dots, v_j, v_{j+1}, \dots)$. Set $\tau_x(v_j, v_{j+1}) = 2$. This marks that $v_j$ and $v_{j+1}$ are now connected through forgotten symbols.
    
    \item \textbf{Update Ordering}: Delete the virtual vertices $L(u)$ and $R(u)$ from $B_y(\tau_y)$ to obtain $B_x(\tau_x)$, keeping all other labels in the same order. If $L(u)$ or $R(u)$ is identified with a boundary endpoint such as $R(v_j)$ or $L(v_{j+1})$, the boundary endpoint label is kept; only the label belonging to the forgotten symbol $u$ is removed.
    
    \item Add $(\mathcal{W}_x, \mathcal{O}_x)$ to $\mathcal{D}_x$.
\end{itemize}
\paragraph{Correctness of Forget Node:}
    We show $(\mathcal{W}_x, \mathcal{O}_x)$ is realizable for $x$ if and only if there exists realizable $(\mathcal{W}_y, \mathcal{O}_y) \in \mathcal{D}_y$ from which it is obtained by the forget procedure.
    
    \noindent ($\Rightarrow$) Suppose $(\mathcal{W}_x, \mathcal{O}_x)$ is realizable. Then there exist lists $(L_i^x)$ for $V_x$ and $\rho$ satisfying the conditions. Since $u \in B_y \subseteq V_x$, symbol $u$ must be placed in some list $L_i^x$ between symbols $v_j, v_{j+1} \in W_i^x$ ($v_j$ and $v_{j+1}$ are consecutive in $W_i^x$ but not in $L_i^x$). Then, from wire structure validity for $(\mathcal{W}_x, \mathcal{O}_x) $, the only possibility for $\tau_x(v_j, v_{j+1})$ is $2$. Then, there exists a path from $v_j$ to $v_{j+1}$ in $L_i^x$ using only vertices in $V_x \setminus B_x$ (i.e., forgotten symbols), and consecutive real-symbol pairs on the path have $\rho_x$ value $1$, meaning their product appears in $f$.

    Hence, now we can set $\tau_y(v_j, u)$ and $\tau_y(u, v_{j+1})$ to be $0$, $1$, or $2$ appropriately: an endpoint ($\localendpoint{s}$ or $\localendpoint{e}$) side may be type $0$ or $2$, depending on whether there are other forgotten symbols on that side, while a real-symbol side is type $1$ or $2$ depending on whether it is directly adjacent to $u$. The ordering $\mathcal{O}_y$ on $B_y(\tau_y)$ is obtained by inserting $L(u), R(u)$ into $\mathcal{O}_x$ on $B_x(\tau_x)$ consistently according to a topological ordering of the same acyclic partial graph $\GH[V_x(\rho_x)] = \GH[V_y(\rho_y)]$.
    
    This gives a realizable $(\mathcal{W}_y, \mathcal{O}_y)$ that produces $(\mathcal{W}_x, \mathcal{O}_x)$ via forgetting.
    
    \noindent ($\Leftarrow$) Suppose $(\mathcal{W}_x, \mathcal{O}_x)$ is obtained from realizable $(\mathcal{W}_y, \mathcal{O}_y)$ by forgetting $u$. Since $(\mathcal{W}_y, \mathcal{O}_y)$ is realizable with lists $(L_i^y)$ and $\rho_y$ for $V_y = V_x$, hence they are already valid for $V_x$.
    
    The wire structure validity holds because the segment between $v_j$ and $v_{j+1}$ in $L_i^y$ contains $u$ and possibly other forgotten symbols, with all consecutive real-symbol pairs connected with $\rho_y = 1$. After removing $u$, setting $\tau_x(v_j, v_{j+1}) = 2$ correctly reflects that they are connected via forgotten symbols.
    
    The compatible topological ordering holds because $\mathcal{O}_x$ is the restriction of $\mathcal{O}_y$ after deleting only the labels belonging to $u$. Since $\mathcal{O}_y$ is the restriction of a topological ordering of the same partial graph $\GH[V_y(\rho_y)]=\GH[V_x(\rho_x)]$, the restricted order $\mathcal{O}_x$ is consistent with all boundary reachability relations in $B_x(\tau_x)$. Thus $(\mathcal{W}_x, \mathcal{O}_x)$ is realizable.
\paragraph{Join Node ($B_x = B_y = B_z$):}
For each pair $(\mathcal{W}_y, \mathcal{O}_y) \in \mathcal{D}_y$ and $(\mathcal{W}_z, \mathcal{O}_z) \in \mathcal{D}_z$, merge if compatible:
\begin{itemize}
    \item \textbf{Check Wire Compatibility}: For each wire $i$, the ordered lists $W_i^y$ and $W_i^z$ must have the same sequence of symbols from $B_x \cup \{ \localendpoint{s}, \localendpoint{e} \}$.
    
    \item \textbf{Merge Edge Types}: For each pair of consecutive entries $(v, w)$ in $B_x\cup\{\localendpoint{s},\localendpoint{e}\}$ on the same wire:
    \begin{itemize}
        \item If $\tau_y(v, w) = \tau_z(v, w) = 1$, set $\tau_x(v, w) = 1$. 
        \item If exactly one of $\tau_y(v, w)$ and $\tau_z(v, w)$ is $1$, skip.
        \item If $\tau_y(v, w) = \tau_z(v, w) = 2$, skip (cannot tolerate different forgotten paths).
        \item If $\tau_y(v,w)=\tau_z(v,w)=0$, set $\tau_x(v,w)=0$.
        \item Otherwise ($\{ \tau_y(v, w) , \tau_z(v, w) \} = \{ 0,2 \}$), set $\tau_x(v, w) = 2$.
    \end{itemize}
    
    \item \textbf{Check Ordering Compatibility}: The orderings $\mathcal{O}_y$ and $\mathcal{O}_z$ are orderings on the identified virtual vertex sets $B_y(\tau_y)$ and $B_z(\tau_z)$ respectively. Since the wire and edge-type compatibility checks ensure $B_y(\tau_y) = B_z(\tau_z) = B_x(\tau_x)$, check that the orderings agree: $\mathcal{O}_y = \mathcal{O}_z$ on $B_x(\tau_x)$. 
    
    \item If all checks pass, let $\mathcal{W}_x$ be the merged wire structure with edge types $\tau_x$, set $\mathcal{O}_x=\mathcal{O}_y=\mathcal{O}_z$ on the common boundary, and add $(\mathcal{W}_x, \mathcal{O}_x)$ to $\mathcal{D}_x$.
\end{itemize}
 \paragraph{Correctness of Join Node:}
    We show $(\mathcal{W}_x, \mathcal{O}_x)$ is realizable for $x$ if and only if there exist compatible realizable states $(\mathcal{W}_y, \mathcal{O}_y) \in \mathcal{D}_y$ and $(\mathcal{W}_z, \mathcal{O}_z) \in \mathcal{D}_z$ that merge to form it.
    
    \noindent ($\Rightarrow$) Suppose $(\mathcal{W}_x, \mathcal{O}_x)$ is realizable with lists $(L_i^x)$ for $V_x = V_y \cup V_z$ and $\rho$. By the separation property of tree decompositions \cref{lem:bag_subtree}, there is no edge of $G_f$ between $V_y\setminus B_x$ and $V_z\setminus B_x$; hence the two child subinstances interact only through the boundary $B_x$. 
    
    Now, we directly set the wire structure and ordering for $y$ and $z$ to be the same as $x$: $\mathcal{W}_y = \mathcal{W}_x = \mathcal{W}_z $. For edge types, if $\tau_x(v, w) = 2$ for consecutive entries in $B_x\cup\{\localendpoint{s},\localendpoint{e}\}$, the forgotten path connecting them lies entirely in either $V_y \setminus B_x$ or $V_z \setminus B_x$ by separation property \cref{lem:bag_subtree}, so we set $\tau_y(v,w)$, $\tau_z(v,w)$ to be $2,0$ accordingly. If $\tau_x(v, w) = 1$ or $0$, we set $\tau_y(v,w) = \tau_z(v,w) = \tau_x(v,w)$. This gives compatible wire structures that merge to $\mathcal{W}_x$. 

    For topological ordering, we set $\mathcal{O}_y = \mathcal{O}_x = \mathcal{O}_z$ on $B_x(\tau_x) = B_y(\tau_y) = B_z(\tau_z)$. It is easy to see that $(\mathcal{W}_y, \mathcal{O}_y)$ and $(\mathcal{W}_z, \mathcal{O}_z)$ are realizable by restricting the lists $(L_i^x)$ and $\rho$ to $V_y$ and $V_z$ respectively.
    
    Both restrictions preserve the realizability.  The orderings $\mathcal{O}_y$ and $\mathcal{O}_z$ agree on boundary vertices since they both come from $\mathcal{O}_x$. These states are compatible and merge to $(\mathcal{W}_x, \mathcal{O}_x)$.
    
    \noindent ($\Leftarrow$) Suppose compatible realizable $(\mathcal{W}_y, \mathcal{O}_y)$ and $(\mathcal{W}_z, \mathcal{O}_z)$ merge to form $(\mathcal{W}_x, \mathcal{O}_x)$. Since both are realizable, there exist lists $(L_i^y)$ for $V_y$ with $\rho_y$ and $(L_i^z)$ for $V_z$ with $\rho_z$. 
    
    Since the wire structures agree on $B_x$, we can merge the lists: for each wire $i$, take symbols from both $L_i^y$ and $L_i^z$, keeping symbols in $B_x\cup \{ \localendpoint{s}, \localendpoint{e} \}$ at their designated positions. The merging process is done by considering each consecutive pair $(v,w)$ in $B_x\cup\{\localendpoint{s},\localendpoint{e}\}$ on the same wire:
    \begin{itemize}
        \item If $\tau_x(v,w) \in \{0,1\}$, then from merging mechanism, $\tau_y(v,w)= \tau_z(v,w) = \tau_x(v,w)$. Hence, they are consecutive in both $L_i^y$ and $L_i^z$. We set them to be consecutive in the merged list and $\rho_x(v,w) = \tau_x(v,w)$.
        \item If $\tau_x(v,w) = 2$, then from merging mechanism, one of $\tau_y(v,w), \tau_z(v,w) $ is $2$, connected by a forgotten path, and the other is $0$. We take the path from the one with $\tau = 2$ to connect $v$ and $w$ in the merged list. Consecutive real-symbol pairs on the path have $\rho = 1$ in the corresponding list.
    \end{itemize}
    This gives valid lists $(L_i^x)$ for $V_x$ and $\rho_x$ satisfying wire structure validity.
    
    For compatible topological ordering, by \cref{lem:bag_subtree}, the constraint graphs $\GH[V_y(\rho_y)]$ and $\GH[V_z(\rho_z)]$ only interact through their common boundary virtual vertices $B_x(\tau_x) = B_y(\tau_y) = B_z(\tau_z)$. Since $(\mathcal{W}_y, \mathcal{O}_y)$ and $(\mathcal{W}_z, \mathcal{O}_z)$ are realizable, both $\GH[V_y(\rho_y)]$ and $\GH[V_z(\rho_z)]$ are acyclic with topological orderings extending $\mathcal{O}_y$ and $\mathcal{O}_z$ respectively on $B_y(\tau_y)$ and $B_z(\tau_z)$. Since $\mathcal{O}_y$ and $\mathcal{O}_z$ agree on $B_x(\tau_x)$, by \cref{lem:acyclic_union}, the union graph $\GH[V_x(\rho_x)] = \GH[V_y(\rho_y)] \cup \GH[V_z(\rho_z)]$ is acyclic. Moreover, $\mathcal{O}_x = \mathcal{O}_y = \mathcal{O}_z$ on $B_x(\tau_x)$ extends to a topological ordering of $\GH[V_x(\rho_x)]$, proving that $(\mathcal{W}_x, \mathcal{O}_x)$ is realizable.

Now we are ready to state and prove the main algorithm theorem.
\begin{theorem}[Main Algorithm]\label{thm:main_algorithm}
    Given a degree-3 polynomial $f \in \mathbb{F}_2[\mathcal{S}]$ with no constant term on $n$ symbols with $m$ monomials and an integer $k$, we can solve the search version of \textnormal{\problem{(f,k)}} in time $k^{6k + o(k)} \cdot n + O(m)$.
\end{theorem}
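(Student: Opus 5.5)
The proof assembles the pieces already established. First, in $O(m)$ time we build $G_f$ and apply the redundant-quadratic clean-up described before \cref{lem:structure}. We may assume $k\le n$, since otherwise the IQP circuit is a trivial solution. By \cref{lem:pathdecomposition}, if a valid $k$-qubit circuit exists then $\mathrm{tw}(G_f)\le \mathrm{pw}(G_f)\le k$.

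Next we run the algorithm of \cref{thm:treewidth} with parameter $k$. If it reports treewidth larger than $k$, we output ``no solution''. Otherwise we obtain a decomposition of width $w\le 2k+1$ in time $2^{O(k)}n$. We convert it via \cref{lem:nice_td} into a nice tree decomposition with $O(kn)$ nodes, in time $O(k^2n)$. We then root it and append forget nodes so that the root bag is empty. Correctness follows by induction over the tree: the leaf, introduce, forget and join arguments given above show that $\mathcal{D}_x$ is exactly the set of realizable states at every node. At the root, $V_r=\mathcal{S}$ and $B_r=\emptyset$. A realizable state there therefore means the lists $L_i$ partition $\mathcal{S}$, every $\rho=1$ adjacency is a quadratic monomial of $f$, and $\GH$ is acyclic. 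By \cref{lem:structure}, this is equivalent to the existence of a valid $k$-qubit circuit. For the search version we store back-pointers from each state to the child state(s) that produced it. Tracing these pointers recovers the lists $L_i$ and $\rho$ in time linear in the number of nodes. The Check routine of \cref{lem:structure} then builds the circuit in $O(m)$.

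The main obstacle is the running-time count $k^{6k+o(k)}$. A bag has at most $w+1\le 2k+2$ symbols. Placing these symbols into $k$ ordered lists gives at most $(2k+2)!\,\binom{3k+1}{k-1}\le k^{2k+o(k)}$ choices: order the symbols, then insert the $k-1$ separators between lists. The edge types contribute at most $3^{3k+O(1)}=2^{O(k)}$ choices. The ordering $\mathcal{O}_x$ is a permutation of at most $2(2k+2)$ virtual vertices other than $S$ and $E$, giving at most $(4k+4)!=k^{4k+o(k)}$ choices. Hence $|\mathcal{D}_x|\le k^{6k+o(k)}$.

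It remains to check that each transition costs at most $|\mathcal{D}_x|\cdot 2^{O(k)}\mathrm{poly}(k)$. For an introduce node, each child state yields $O(k)$ insertion positions, $4$ edge-type choices and $O(k^2)$ ways to insert $L(u),R(u)$ into the order. Each resulting candidate needs a $\mathrm{poly}(k)$ consistency check, where monomial lookups against the bag use the adjacency lists in $O(1)$ amortized time. A forget node costs $\mathrm{poly}(k)$ per state. For a join node we must avoid the quadratic product $|\mathcal{D}_y|\cdot|\mathcal{D}_z|$. To do so we index the states of $\mathcal{D}_z$ by the symbol sequence, the positions of type-$1$ edges and $\mathcal{O}$. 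For each state of $\mathcal{D}_y$ we then enumerate only the $2^{O(k)}$ compatible assignments of the $0/2$ types. This gives $k^{6k+o(k)}$ per join node. Multiplying by the $O(kn)$ nodes and absorbing $k\cdot 2^{O(k)}$ into $k^{o(k)}$ yields $k^{6k+o(k)}\cdot n$. Adding the $O(m)$ preprocessing and final construction gives the claimed total bound.
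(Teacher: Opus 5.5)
Your proposal is correct and follows essentially the same route as the paper: the $O(m)$ clean-up, Korhonen's 2-approximation with \cref{lem:pathdecomposition} justifying the ``no'' answer, conversion to a nice tree decomposition, the same $k^{2k+o(k)}\cdot 2^{O(k)}\cdot k^{4k+o(k)}$ state count, join indexing on everything except the $0/2$ edge types, and traceback followed by the Check routine of \cref{lem:structure}. Your explicit forget chain to empty the root bag and your slightly finer counts are harmless refinements. When you build the final circuit, also re-insert as CZ gates the quadratic monomials removed by the clean-up.
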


\begin{proof}
    The full algorithm proceeds in the following steps:
    \begin{enumerate}
        \item \textbf{Clean up polynomial}: Check $k\leq n$, delete all symbols not appearing in $f$, and remove all quadratic subterms $uv$ that appear in a cubic term $uvw$. This takes time $O(m)$.
        \item \textbf{Construct graph $G_f$}: From the input polynomial $f$, construct the undirected connection graph $G_f$ in adjacency list form, where vertices are symbols in $\mathcal{S}$ and an edge $(u,v)$ included if and only if the product $uv$ appears in a monomial of $f$ (i.e., as $uv$ or $uvw$ for some $w$). This takes time $O(m)$.
        
        \item \textbf{Compute and Convert to  nice tree decomposition}: Use the algorithm from \cref{thm:treewidth} to compute a tree decomposition of $G_f$ with width $w \le 2k + 1 = O(k)$ in time $2^{O(k)} n$. If the algorithm determines that the treewidth is larger than $k$, directly return ``No solution'' using \cref{lem:pathdecomposition}. Then, using \cref{lem:nice_td}, convert the tree decomposition to a nice tree decomposition with width $w$ and $O(wn) = O(kn)$ nodes in time $O(w^2 n) = O(k^2 n)$.
            
        \item \textbf{Dynamic programming}: Run the DP algorithm on the nice tree decomposition. At each node $x$:
        \begin{itemize}
            \item Let $b_x=|B_x|\le w+1$. The number of possible ordered partitions of the $b_x$ boundary symbols into $k$ ordered wire lists is at most $b_x!\binom{b_x+k}{k}=k^{2k+o(k)}$, using $b_x\le 2k+2$.
            \item The number of possible edge types $\tau$ between consecutive symbols is at most $3^{b_x+k}=2^{O(k)}$.
            \item The number of possible topological orderings $\mathcal{O}_x$ on at most $2b_x+2\le 4k+6$ virtual vertices (with global minimum $S$ and maximum $E$) is at most $(2b_x+2)! = k^{4k + o(k)}$.
            \item Therefore, $|\mathcal{D}_x| \le k^{6k + o(k)}$.
        \end{itemize}
        For each node type:
        \begin{itemize}
            \item \textbf{Leaf node}: Initialize in constant time.
            \item \textbf{Introduce node}: For each state in $\mathcal{D}_y$, try $O(k)$ positions to insert the new symbol, enumerate $O(1)$ edge type combinations, and check topological ordering consistency in time $O(k^2)$. Total: $|\mathcal{D}_y| \cdot O(k^3)$ time per node.
            \item \textbf{Forget node}: For each state in $\mathcal{D}_y$, update in time $O(k)$. Total: $|\mathcal{D}_y| \cdot O(k)$ time per node.
            \item \textbf{Join node}: Index states of each child by their ordered boundary lists and topological ordering. For each state from $\mathcal{D}_y$, the compatible choices in $\mathcal{D}_z$ differ only in the edge-type vector, so they can be checked in $2^{O(k)}\cdot k^2$ time. Total: $|\mathcal{D}_y|\cdot 2^{O(k)}\cdot k^2$ time per node.
        \end{itemize}
        Since $|\mathcal{D}_x| = k^{6k + o(k)}$ for all nodes $x$, and there are $O(kn)$ nodes, the total DP time is $k^{6k+o(k)} \cdot 2^{O(k)}\cdot O(k^2\cdot kn) =k^{6k+o(k)} \cdot n$.
        
        \item \textbf{Verify and construct circuit}: At the root node $r$, the bag $B_r$ is empty (by the nice tree decomposition property). Check if $\mathcal{D}_r$ is non-empty. If $\mathcal{D}_r = \emptyset$, return ``No solution''. 
        
        If $\mathcal{D}_r \neq \emptyset$, then there exists a realizable state $(\mathcal{W}_r, \mathcal{O}_r)$ for the entire $f$. Since $B_r = \emptyset$, the boundary wire structure $\mathcal{W}_r$ has all $k$ boundary lists in the form $(\localendpoint{s}, \localendpoint{e})$; the type on such an endpoint pair may be $2$ if that wire contains forgotten symbols. The topological ordering $\mathcal{O}_r$ only orders the virtual vertices $\{S, E\}$. The realizability of $(\mathcal{W}_r, \mathcal{O}_r)$ guarantees that there exists a topological ordering on the constraint graph $\GH[V_r(\rho_r)] = \GH[\mathcal{S}]$, which by \cref{lem:structure} implies the existence of a quantum circuit on $k$ qubits with associated polynomial $f$.
        
        To extract the circuit, we trace back through the DP table from the root to the leaves:
        \begin{itemize}
            \item At the root $r$, pick any state $(\mathcal{W}_r, \mathcal{O}_r) \in \mathcal{D}_r$.
            \item For each node $x$ with children $y$ (and possibly $z$ for join nodes), we have recorded which parent states in $\mathcal{D}_x$ were obtained from which child states in $\mathcal{D}_y$ (and $\mathcal{D}_z$). Follow these pointers backwards to obtain states at all nodes.
            \item At introduce nodes where symbol $u$ is added, the chosen state specifies the position of $u$ on a specific wire and the Hadamard gate configuration (via edge types $\tau$).
            \item The traceback reconstructs the full ordered wire lists. We then run the check/construction procedure from \cref{lem:structure}: place symbols on wires according to the reconstructed lists, use a topological ordering of the final constraint graph to place Hadamard gates, and apply Z, CZ, and CCZ gates according to the monomials in $f$.
        \end{itemize}
        This extraction takes $O(kn)$ time to trace back, plus $O(m)$ time to construct the circuit gates and adding back removed quadratic terms and symbols. The total is $O(m + kn)$.
    \end{enumerate} 
    
    The total time complexity is dominated by the DP step, giving the desired $k^{6k + o(k)} \cdot n + O(m)$ time bound.
\end{proof}

\section{Conclusion and Future directions}\label{sec:conclusion}

In this paper, we have conducted a comprehensive complexity-theoretic study of
the quantum circuit width problem.
Our results establish both strong computational hardness barriers and efficient
algorithmic solutions under an appropriate and natural parameterization.

On the hardness side, we proved that determining whether the width $w(f)$ of a
degree-3 polynomial $f$ is at most $k$ is \NP-complete, settling Montanaro's
open question in the negative~\cite{mont}.
Furthermore, we showed that even approximating the width within a constant
factor of $\tfrac{49}{48}$ is \NP-hard, ruling out efficient approximation
algorithms.
Notably, this approximation hardness persists even when restricted to degree-2
polynomials, establishing hardness for a broad class of quantum gate sets
including $\{H,\mathrm{CS}\}$ and Clifford$+T$.
Under the Exponential Time Hypothesis, we obtained a conditional lower bound of
$2^{\Omega(n)}$ for any algorithm solving \problem{}.

On the algorithmic side, we developed a nondeterministic polynomial-time algorithm with witness complexity $O(\log_2\binom{n}{k}) = O(k\log(en/k))$ bits, yielding an XP algorithm with runtime $\binom{n}{k}^2\operatorname{poly}(n)$ via enumeration. When $k = \Theta(n)$, we obtain a deterministic algorithm by enumerating witnesses that achieves runtime $2^{O(n)}$, matching the ETH lower bound up to constant factors in the exponent. More significantly, we proved that \problem{} is fixed-parameter tractable with respect to parameter $k$, presenting an $\mathsf{FPT}$ algorithm with runtime $k^{6k + o(k)} \cdot n$. This algorithm exploits the bounded pathwidth structure of the connection graph and employs dynamic programming over tree decompositions.

Together, these results situate the width problem firmly in classical complexity theory, demonstrating that width optimization is computationally intractable in the worst case, yet tractable when the target width is sufficiently small. This suggests that width behaves more like a structural parameter (akin to treewidth) rather than a measure of intrinsic quantum computational power.

\paragraph{Future directions.}
Several interesting questions remain open:

\begin{enumerate}
    \item \textbf{Approximation hardness for larger constants and restricted degree-2 instances.} Our inapproximability result establishes a hardness factor of $\tfrac{49}{48} \approx 1.02$. Can this be improved to a larger constant? A fundamental limitation of our approach is that the reduction encodes Boolean satisfiability using \emph{geometric constraints} on segment overlaps. This geometric encoding strategy is inherently limited to Boolean variables and cannot naturally extend to constraint satisfaction problems (CSPs) over larger alphabets. Consequently, standard PCP-based hardness-of-approximation techniques for CSPs~\cite{Hastard,Dinur2007} do not directly apply to our setting. Obtaining inapproximability with arbitrarily large constant factors may require fundamentally different encoding mechanisms or new connections between quantum circuit geometry and combinatorial optimization. Closely related to this is the question of better characterizing the family of degree-2 polynomials used in our hardness reduction. Our degree-2 result is obtained through a specific twin-construction that simulates cubic overlap constraints using quadratic interactions. It would be very interesting to understand whether the same hardness persists for cleaner or more natural subclasses of quadratic polynomials, or equivalently, whether one can impose meaningful restrictions on the associated family of quantum circuits while preserving hardness.
    
    \item \textbf{Optimality of the FPT runtime.} Our $\mathsf{FPT}$ algorithm achieves runtime $k^{6k + o(k)} \cdot n$. Can this be improved to single-exponential time $2^{O(k)} \cdot n$, or is there a matching lower bound? This question is particularly intriguing in light of a similar work on the \textsc{Directed Feedback Vertex Set} problem parameterized by treewidth~\cite{LokshtanovRamanujanSaurabh2017}. That work presents an algorithm with similar structure to ours---maintaining compatible topological orderings on tree decomposition bags---yet proves a lower bound showing that no algorithm with runtime $t^{o(t)}$ exists (where $t$ is the treewidth) under the Exponential Time Hypothesis, via a reduction from $k$-CNF satisfiability with $k$-permutation constraints. It remains unclear whether their reduction technique can be adapted to \problem{}, or whether our problem admits a faster single-exponential algorithm. Resolving this question would provide valuable insights into the fine-grained complexity landscape of parameterized quantum circuit problems.

    \item \textbf{Tractable special cases and the boundary of hardness.} While our results suggest that minimum circuit width is intractable in the worst case (unless $\mathsf{NP} \subseteq \mathsf{BQP}$), they do not rule out broad structured subclasses of polynomials for which the problem may remain tractable. Montanaro already identified several special cases in which the minimum circuit width can be found efficiently, even classically~\cite{mont}. It would therefore be valuable to better understand the boundary between these easy cases and the hard instances constructed here. In particular, one may ask whether the techniques developed in this paper point to additional tractable families---for example, classes of polynomials whose overlap structure is sufficiently rigid that a minimum-width realization can be recovered directly, or classes of associated circuits constrained to particularly simple architectures. Any such characterization would complement our worst-case hardness results and could be especially relevant for practical compilation problems.
\end{enumerate}

\paragraph{Acknowledgements}
We thank the anonymous reviewers for their helpful comments and suggestions, especially on the importance of better characterizing the degree-2 instances used in our hardness reduction and of clarifying the role of potentially tractable special cases.
ZJ acknowledges the support by National Natural Science Foundation of China (Grant
No.\ 12347104), National Key Research and Development Program of China
(Grant No.\ 2023YFA1009403), Beijing Science and Technology Planning Project (Grant No.\
Z25110100810000), and Beijing Natural Science Foundation (Grant No.\ Z220002).

\newpage
\newpage
\crefalias{section}{appendix}
\begin{appendices}

\section{Missing Proofs in \cref{subsec:alg_prelim}}\label{app:lemmas}

\subsection{Boundary Property of Tree Decomposition}\label{app:bag_subtree}

\begin{lemma}\label{lem:bag_subtree_proof}
Let $(T, \{ B_t \}_{t\in V(T)})$ be a rooted tree decomposition of a graph $G$ and let $x$ be a node of $T$ with children $y_1, y_2, \dots, y_m$. For each $i$, let
$
    V_{y_i}:=\bigcup_{z\in V(T_{y_i})} B_z,
$
where $T_{y_i}$ is the subtree of $T$ rooted at $y_i$. Then there is no edge of $G$ between $V_{y_i}\setminus B_x$ and $V(G)\setminus V_{y_i}$. Consequently, if $i\neq j$, then there is no edge of $G$ between $V_{y_i}\setminus B_x$ and $V_{y_j}\setminus B_x$.
\end{lemma}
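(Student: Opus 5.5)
The plan is to argue directly from the running intersection and edge cover properties of \cref{def:tree_decomposition}. Fix a child $y_i$ and suppose, for contradiction, that there is an edge $\{a,c\}$ of $G$ with $a\in V_{y_i}\setminus B_x$ and $c\in V(G)\setminus V_{y_i}$.

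First, locate where $a$ lives. Let $T_a=\{t\in V(T): a\in B_t\}$, which is a connected subtree of $T$ by running intersection. Since $a\in V_{y_i}$, $T_a$ meets $T_{y_i}$. The only tree edge leaving $T_{y_i}$ is $\{y_i,x\}$, so if $T_a$ also contained a node outside $T_{y_i}$, connectivity would force $T_a$ to contain both $y_i$ and $x$. That would give $a\in B_x$, contradicting the choice of $a$. Hence $T_a\subseteq T_{y_i}$.

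Next, use edge cover. Some bag $B_t$ contains both $a$ and $c$. Then $t\in T_a\subseteq T_{y_i}$, so $c\in B_t\subseteq V_{y_i}$, contradicting $c\notin V_{y_i}$. This proves the first claim.

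For the consequence, take $i\ne j$, $a\in V_{y_i}\setminus B_x$ and $c\in V_{y_j}\setminus B_x$. It suffices to show $c\notin V_{y_i}$, and then apply the first part. If $c$ were in $V_{y_i}$, then $T_c$ would meet both $T_{y_i}$ and $T_{y_j}$. These are disjoint subtrees joined only through $x$, so connectivity of $T_c$ would force $x\in T_c$, that is, $c\in B_x$, which contradicts the choice of $c$.

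The only delicate step is the claim that a connected subtree meeting $T_{y_i}$ and its complement must contain both $y_i$ and $x$. This is exactly where rootedness and the child relationship are used: removing the edge $\{x,y_i\}$ disconnects $T$ into $T_{y_i}$ and the rest. Everything else is routine.
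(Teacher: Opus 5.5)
Your proof is correct and follows essentially the same route as the paper: running intersection plus edge cover, with the cut edge $\{x,y_i\}$ separating $T_{y_i}$ from the rest of $T$. The only difference is order (you first show $T_a\subseteq T_{y_i}$ and then apply edge cover, while the paper first places the covering bag outside $T_{y_i}$ and then derives $a\in B_x$), and your argument for the consequence matches the paper's.
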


\begin{proof}
Fix a child $y_i$ of $x$ and write $Y=V_{y_i}$. Suppose, for contradiction, that there is an edge $uv$ with
$
    u\in Y\setminus B_x
    \quad\text{and}\quad
    v\notin Y.
$
By the edge-cover property of tree decompositions, some bag $B_z$ contains both $u$ and $v$. Since $v\notin Y$, the node $z$ is not in the subtree $T_{y_i}$. On the other hand, since $u\in Y$, there is a node $z'\in V(T_{y_i})$ such that $u\in B_{z'}$. The path in $T$ from $z$ to $z'$ must pass through $x$, because $T_{y_i}$ is connected to the rest of the rooted tree through the edge $xy_i$.

The bags containing $u$ form a connected subtree of $T$ by the running-intersection property. This connected subtree contains both $z$ and $z'$, since $u\in B_z\cap B_{z'}$. Therefore it contains every node on the path from $z$ to $z'$, in particular $x$. Hence $u\in B_x$, contradicting $u\in Y\setminus B_x$. Thus no such edge exists.

It remains to prove the consequence for two distinct children. Let $i\neq j$. First note that
\[
    (V_{y_i}\setminus B_x)\cap (V_{y_j}\setminus B_x)=\emptyset.
\]
Indeed, if a vertex $w$ belonged to both $V_{y_i}$ and $V_{y_j}$, then there would be bags containing $w$ in both child subtrees. The path between these two bags passes through $x$, so the running-intersection property would imply $w\in B_x$. Hence no vertex outside $B_x$ can belong to both child subtrees.

Therefore
\[
    V_{y_j}\setminus B_x \subseteq V(G)\setminus V_{y_i}.
\]
Applying the first part with $Y=V_{y_i}$ shows that there is no edge between $V_{y_i}\setminus B_x$ and $V_{y_j}\setminus B_x$.
\end{proof}

\subsection{Acyclicity of Graph Union with Consistent Orderings}\label{app:acyclic_union}

\begin{lemma}\label{lem:acyclic_union_proof}
Let $G_1$ and $G_2$ be directed acyclic graphs with vertex sets $V_1$ and $V_2$, and let $V_0=V_1\cap V_2$. If there is a total order $\mathcal O_0$ on $V_0$ that is consistent with both $G_1$ and $G_2$ in the sense of \cref{def:consistent_ordering}, then $G_1\cup G_2$ is acyclic and admits a topological ordering extending $\mathcal O_0$. Conversely, if $G_1\cup G_2$ is acyclic, then the restriction of any topological ordering of $G_1\cup G_2$ to $V_0$ is consistent with both $G_1$ and $G_2$.
\end{lemma}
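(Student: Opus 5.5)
The plan is to prove both directions through the reachability relation of the union graph. For any cycle, I will look at how its vertices alternate between $V_1$ and $V_2$. The key observation is that every edge of $G_1\cup G_2$ lies entirely inside $G_1$ or entirely inside $G_2$. A walk can therefore pass from a $G_1$-edge to a $G_2$-edge only at a vertex that is an endpoint of both, which lies in $V_1\cap V_2=V_0$.

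For the forward direction, suppose $G_1\cup G_2$ has a directed cycle $C$. Since $G_1$ and $G_2$ are each acyclic, $C$ uses edges from both graphs. Split $C$ into maximal consecutive runs of edges that belong to the same graph. Each switching point between runs is a vertex of $V_0$, and there are at least two of them, say $w_1,w_2,\dots,w_r$ in cyclic order with $r\ge 2$.

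Each run is a directed path inside $G_1$ or inside $G_2$ from $w_j$ to $w_{j+1}$, with $w_j\neq w_{j+1}$ when $r\ge2$. Consistency of $\mathcal O_0$ with that graph then gives $w_j<w_{j+1}$ in $\mathcal O_0$. Going around the cycle yields $w_1<w_2<\dots<w_r<w_1$, which is a contradiction.

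If some $w_j$ coincide, say $w_j=w_{j'}$, I will shortcut the cycle to a shorter closed walk. This only strengthens the contradiction, because any closed walk in a finite graph contains a simple cycle and the same argument applies to it. Hence $G_1\cup G_2$ is acyclic.

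To get a topological ordering extending $\mathcal O_0$, I will add to $G_1\cup G_2$ the chain edges $a\to b$ for every consecutive pair $a<b$ of $\mathcal O_0$. Call the result $G_0'$, which is a path on $V_0$. I will then rerun the same argument with three graphs $G_1$, $G_2$ and $G_0'$. Each of them is acyclic, and $\mathcal O_0$ is consistent with each of them, trivially so for $G_0'$. Since $G_0'$ lives on $V_0$, switching points still lie in $V_0$: a switch involving $G_0'$ happens at a vertex of $V_0$, and a switch between $G_1$ and $G_2$ happens in $V_1\cap V_2$.

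So the augmented graph is acyclic. By \cref{def:topological_ordering} (Kahn) it has a topological ordering, and that ordering restricts to $\mathcal O_0$ on $V_0$ because of the chain edges.

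For the converse, let $\pi$ be a topological ordering of $G_1\cup G_2$. A directed path in $G_1$ or in $G_2$ between $u,v\in V_0$ is a path in the union, so $\pi(u)<\pi(v)$. Thus $\pi|_{V_0}$ is consistent with both graphs.

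I expect the main obstacle to be the careful bookkeeping of the switching points: showing that they lie in $V_0$, that there are at least two, and that repeated vertices can be handled by shortcutting. Everything else is routine.
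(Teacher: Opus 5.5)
Your proposal is correct and follows essentially the same route as the paper. You split a hypothetical cycle into maximal single-graph runs whose switching vertices lie in $V_0$ and derive a cyclic strict chain in $\mathcal O_0$; you then add the chain edges of $\mathcal O_0$ (treating them as a third DAG on $V_0$ is just a tidy repackaging of the paper's argument about cycles through added boundary edges), and you prove the converse by restriction. The shortcutting step is unnecessary, since acyclicity of the graph containing each run already gives $w_j\neq w_{j+1}$, and a strict chain returning to $w_1$ is contradictory by transitivity even if some $w_j$ repeat.
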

\begin{proof}
First assume that the same total order $\mathcal O_0$ on $V_0$ is consistent with both $G_1$ and $G_2$. Suppose for contradiction that $G=G_1\cup G_2$ contains a directed cycle. Since $G_1$ and $G_2$ are individually acyclic, this cycle must use edges from both graphs. Decompose the cycle into maximal nonempty directed subpaths that lie entirely in $G_1$ or entirely in $G_2$. The endpoints of every such maximal subpath lie in the common boundary $V_0$, because a switch from one graph to the other can occur only at a vertex that belongs to both vertex sets.

If such a subpath goes from $a\in V_0$ to $b\in V_0$, then $a\neq b$ since the corresponding graph is acyclic. By consistency of $\mathcal O_0$ with that graph, we have
\[
    a<_{\mathcal O_0} b.
\]
Traversing all maximal subpaths around the directed cycle therefore gives a strict chain
\[
    a_1<_{\mathcal O_0}a_2<_{\mathcal O_0}\cdots<_{\mathcal O_0}a_r<_{\mathcal O_0}a_1,
\]
which is impossible. Hence $G$ is acyclic.

It remains to show that $G$ has a topological ordering extending $\mathcal O_0$. Write the boundary order as
\[
    v_1<_{\mathcal O_0}v_2<_{\mathcal O_0}\cdots<_{\mathcal O_0}v_s.
\]
Add the directed edges $v_i\to v_{i+1}$ for $i=1,\dots,s-1$, and call the resulting graph $G^+$. We claim that $G^+$ is acyclic. Indeed, if $G^+$ contained a directed cycle, then since $G$ is already acyclic, the cycle would use at least one added boundary edge. Between consecutive added boundary edges, the cycle follows directed paths in $G$ with endpoints in $V_0$. Decomposing these paths into maximal subpaths inside $G_1$ or $G_2$ and using the same consistency argument as above, every such path is increasing with respect to $\mathcal O_0$. The added boundary edges are also increasing with respect to $\mathcal O_0$. Thus the cycle would again give a strict increasing cycle in the total order $\mathcal O_0$, impossible. Therefore $G^+$ is acyclic.

Any topological ordering of $G^+$ is a topological ordering of $G$ and extends $\mathcal O_0$, since the added edges force
\[
    v_1<v_2<\cdots<v_s.
\]

Conversely, suppose $G_1\cup G_2$ is acyclic. Then any topological ordering of the union restricts to an order on $V_0$. Every directed path in either $G_1$ or $G_2$ is also a directed path in the union, so the restricted order is consistent with both graphs.
\end{proof}

\end{appendices}

\bibliography{main}

@inproceedings{WCYJ25,
    address   = {Zagreb, Croatia},
    title     = {{FeynmanDD: Quantum Circuit Analysis with Classical Decision Diagrams}},
    booktitle = {Proceedings of the 37th International Conference on Computer Aided Verification},
    author    = {Wang, Ziyuan and Cheng, Bin and Yuan, Longxiang and Ji, Zhengfeng},
    month     = jul,
    year      = {2025}
}

@inproceedings{Yao1993,
  author    = {Andrew Chi-Chih Yao},
  booktitle = {Proceedings of 1993 IEEE 34th Annual Foundations of Computer Science},
  title     = {Quantum circuit complexity},
  year      = {1993},
  volume    = {},
  number    = {},
  pages     = {352-361},
  doi       = {10.1109/SFCS.1993.366852}
}

@article{Feynman1982,
  author  = {Feynman, R. P.},
  title   = {Simulating Physics with Computers},
  journal = {International Journal of Theoretical Physics},
  volume  = {21},
  number  = {6--7},
  pages   = {467--488},
  year    = {1982},
  doi     = {10.1007/BF02650179}
}

@article{mont,
  doi       = {10.1088/1751-8121/aa565f},
  url       = {https://dx.doi.org/10.1088/1751-8121/aa565f},
  year      = {2017},
  month     = {jan},
  publisher = {IOP Publishing},
  volume    = {50},
  number    = {8},
  pages     = {084002},
  author    = {Montanaro, Ashley},
  title     = {Quantum circuits and low-degree polynomials over {F2}},
  journal   = {Journal of Physics A: Mathematical and Theoretical}
}

@article{Helly1923,
  author  = {Helly, Ed.},
  journal = {Jahresbericht der Deutschen Mathematiker-Vereinigung},
  pages   = {175-176},
  title   = {{\"U}ber Mengen konvexer K{\"o}rper mit gemeinschaftlichen Punkte.},
  url     = {http://eudml.org/doc/145659},
  volume  = {32},
  year    = {1923}
}

@article{ETH,
  title    = {Which Problems Have Strongly Exponential Complexity?},
  journal  = {Journal of Computer and System Sciences},
  volume   = {63},
  number   = {4},
  pages    = {512-530},
  year     = {2001},
  issn     = {0022-0000},
  doi      = {https://doi.org/10.1006/jcss.2001.1774},
  url      = {https://www.sciencedirect.com/science/article/pii/S002200000191774X},
  author   = {Russell Impagliazzo and Ramamohan Paturi and Francis Zane}
}

@article{ETH0,
  title    = {On the Complexity of k-SAT},
  journal  = {Journal of Computer and System Sciences},
  volume   = {62},
  number   = {2},
  pages    = {367-375},
  year     = {2001},
  issn     = {0022-0000},
  doi      = {https://doi.org/10.1006/jcss.2000.1727},
  url      = {https://www.sciencedirect.com/science/article/pii/S0022000000917276},
  author   = {Russell Impagliazzo and Ramamohan Paturi}
}

@article{Hastard,
  author     = {H{\aa}stad, Johan},
  title      = {Some optimal inapproximability results},
  year       = {2001},
  issue_date = {July 2001},
  publisher  = {Association for Computing Machinery},
  address    = {New York, NY, USA},
  volume     = {48},
  number     = {4},
  issn       = {0004-5411},
  url        = {https://doi.org/10.1145/502090.502098},
  doi        = {10.1145/502090.502098},
  journal    = {J. ACM},
  month      = jul,
  pages      = {798–859},
  numpages   = {62}
}

@article{dawson,
  author     = {Dawson, Christopher M. and Hines, Andrew P. and Mortimer, Duncan and Haselgrove, Henry L. and Nielsen, Michael A. and Osborne, Tobias J.},
  title      = {Quantum computing and polynomial equations over the finite field Z2},
  year       = {2005},
  issue_date = {March 2005},
  publisher  = {Rinton Press, Incorporated},
  address    = {Paramus, NJ},
  volume     = {5},
  number     = {2},
  issn       = {1533-7146},
  journal    = {Quantum Info. Comput.},
  month      = mar,
  pages      = {102–112},
  numpages   = {11}
}

@article{Shor1997,
  author   = {Shor, Peter W.},
  title    = {Polynomial-Time Algorithms for Prime Factorization and Discrete Logarithms on a Quantum Computer},
  journal  = {SIAM Journal on Computing},
  volume   = {26},
  number   = {5},
  pages    = {1484-1509},
  year     = {1997},
  doi      = {10.1137/S0097539795293172},
  url      = { 
              
              https://doi.org/10.1137/S0097539795293172
              
              
              
              },
  eprint   = { 
              
              https://doi.org/10.1137/S0097539795293172
              
              
              
              }
}

@inproceedings{Aaronson2011,
  author    = {Aaronson, Scott and Arkhipov, Alex},
  title     = {The computational complexity of linear optics},
  year      = {2011},
  isbn      = {9781450306911},
  publisher = {Association for Computing Machinery},
  address   = {New York, NY, USA},
  url       = {https://doi.org/10.1145/1993636.1993682},
  doi       = {10.1145/1993636.1993682},
  booktitle = {Proceedings of the Forty-Third Annual ACM Symposium on Theory of Computing},
  pages     = {333--342},
  numpages  = {10},
  location  = {San Jose, California, USA},
  series    = {STOC '11}
}

@article{Aaronson2011Permanent,
  author  = {Scott Aaronson},
  title   = {A Linear-Optical Proof that the Permanent is \#P-Hard},
  journal = {arXiv preprint},
  eprint  = {1109.1674},
  year    = {2011}
}

@article{Kuperberg2009,
  author    = {Kuperberg, Greg},
  title     = {How Hard Is It to Approximate the Jones Polynomial?},
  year      = {2015},
  pages     = {183--219},
  doi       = {10.4086/toc.2015.v011a006},
  journal   = {Theory of Computing},
  volume    = {11},
  number    = {6},
  url       = {https://theoryofcomputing.org/articles/v011a006}
}

@article{FujiiMorimae2017,
  doi       = {10.1088/1367-2630/aa5fdb},
  url       = {https://dx.doi.org/10.1088/1367-2630/aa5fdb},
  year      = {2017},
  month     = {mar},
  publisher = {IOP Publishing},
  volume    = {19},
  number    = {3},
  pages     = {033003},
  author    = {Fujii, Keisuke and Morimae, Tomoyuki},
  title     = {Commuting quantum circuits and complexity of Ising partition functions},
  journal   = {New Journal of Physics}
}

@article{Bremner2011,
  author  = {Michael J. Bremner and Richard Jozsa and Dan J. Shepherd},
  title   = {Classical simulation of commuting quantum computations implies collapse of the polynomial hierarchy},
  journal = {Proc. R. Soc. A},
  volume  = {467},
  number  = {2126},
  pages   = {459--472},
  year    = {2011}
}

@article{Shepherd2009,
  author  = {D. Shepherd and M. J. Bremner},
  title   = {Temporally unstructured quantum computation},
  journal = {Proc. Roy. Soc. Ser. A},
  volume  = {465},
  number  = {2105},
  pages   = {1413--1439},
  year    = {2009}
}

@article{MarkovShi,
author = {Markov, Igor L. and Shi, Yaoyun},
title = {Simulating Quantum Computation by Contracting Tensor Networks},
journal = {SIAM Journal on Computing},
volume = {38},
number = {3},
pages = {963-981},
year = {2008},
doi = {10.1137/050644756},
}

@misc{CWD+25,
author = {Bin Cheng and Ziyuan Wang and Ruixuan Deng and Jianxin Chen and Zhengfeng Ji},
title = {Breaking the Treewidth Barrier in Quantum Circuit Simulation with Decision Diagrams},
howpublished = {arXiv:2510.06775},
year = {2025}
}

@article{Adleman1997,
  author   = {Adleman, Leonard M. and DeMarrais, Jonathan and Huang, Ming-Deh A.},
  title    = {Quantum Computability},
  journal  = {SIAM Journal on Computing},
  volume   = {26},
  number   = {5},
  pages    = {1524-1540},
  year     = {1997},
  doi      = {10.1137/S0097539795293639},
  url      = {https://doi.org/10.1137/S0097539795293639}
}

@article{FortnowRogers1999,
  title    = {Complexity Limitations on Quantum Computation},
  journal  = {Journal of Computer and System Sciences},
  volume   = {59},
  number   = {2},
  pages    = {240-252},
  year     = {1999},
  issn     = {0022-0000},
  doi      = {https://doi.org/10.1006/jcss.1999.1651},
  url      = {https://www.sciencedirect.com/science/article/pii/S0022000099916513},
  author   = {Lance Fortnow and John Rogers}
}

@article{Dalzell2020,
  doi       = {10.22331/q-2020-05-11-264},
  url       = {https://doi.org/10.22331/q-2020-05-11-264},
  title     = {How many qubits are needed for quantum computational supremacy?},
  author    = {Dalzell, Alexander M. and Harrow, Aram W. and Koh, Dax Enshan and La Placa, Rolando L.},
  journal   = {{Quantum}},
  issn      = {2521-327X},
  publisher = {{Verein zur F{\"{o}}rderung des Open Access Publizierens in den Quantenwissenschaften}},
  volume    = {4},
  pages     = {264},
  month     = may,
  year      = {2020}
}

@article{Kahn1962,
  author    = {Kahn, A. B.},
  title     = {Topological sorting of large networks},
  journal   = {Communications of the ACM},
  volume    = {5},
  number    = {11},
  pages     = {558--562},
  year      = {1962},
  publisher = {ACM}
}

@article{Robertson1983,
  title   = {Graph minors. I. Excluding a forest},
  author  = {Robertson, Neil and Seymour, Paul D},
  journal = {Journal of Combinatorial Theory, Series B},
  volume  = {35},
  number  = {1},
  pages   = {39--61},
  year    = {1983}
}

@article{Bodlaender1994,
  title   = {A tourist guide through treewidth},
  author  = {Bodlaender, Hans L},
  journal = {Acta cybernetica},
  volume  = {11},
  number  = {1-2},
  pages   = {1--21},
  year    = {1993}
}

@book{Downey2013,
  title     = {Fundamentals of Parameterized Complexity},
  author    = {Downey, Rodney G and Fellows, Michael R},
  volume    = {4},
  year      = {2013},
  publisher = {Springer}
}

@book{Cygan2015,
  author    = {Marek Cygan and
               Fedor V. Fomin and
               Lukasz Kowalik and
               Daniel Lokshtanov and
               D{\'{a}}niel Marx and
               Marcin Pilipczuk and
               Michal Pilipczuk and
               Saket Saurabh},
  title     = {Parameterized Algorithms},
  publisher = {Springer},
  year      = {2015},
  url       = {https://doi.org/10.1007/978-3-319-21275-3},
  doi       = {10.1007/978-3-319-21275-3},
  isbn      = {978-3-319-21274-6},
  bibsource = {dblp computer science bibliography, https://dblp.org}
}

@article{Bodlaender2016,
  author  = {Bodlaender, Hans L. and Drange, P\r{a}l Gr{\o}n\r{a}s and Dregi, Markus S. and Fomin, Fedor V. and Lokshtanov, Daniel and Pilipczuk, Michal},
  title   = {A {$c^k n$} 5-Approximation Algorithm for Treewidth},
  journal = {SIAM Journal on Computing},
  volume  = {45},
  number  = {2},
  pages   = {317--378},
  year    = {2016},
  doi     = {10.1137/130947374},
  url     = {https://doi.org/10.1137/130947374}
}

@inproceedings{Korhonen2021,
  title        = {A Single-Exponential Time 2-Approximation Algorithm for Treewidth},
  author       = {Korhonen, Tuukka},
  booktitle    = {2021 IEEE 62nd Annual Symposium on Foundations of Computer Science (FOCS)},
  pages        = {184--192},
  year         = {2021},
  organization = {IEEE}
}

@book{Kloks1994,
  title     = {Treewidth: computations and approximations},
  author    = {Kloks, Ton},
  volume    = {842},
  year      = {1994},
  publisher = {Springer Science \& Business Media}
}

@article{Bodlaender1996,
  title     = {A linear-time algorithm for finding tree-decompositions of small treewidth},
  author    = {Bodlaender, Hans L.},
  journal   = {SIAM Journal on Computing},
  volume    = {25},
  number    = {6},
  pages     = {1305--1317},
  year      = {1996},
  publisher = {SIAM}
}

@article{Courcelle1990,
  title     = {The monadic second-order logic of graphs. I. Recognizable sets of finite graphs},
  author    = {Courcelle, Bruno},
  journal   = {Information and computation},
  volume    = {85},
  number    = {1},
  pages     = {12--75},
  year      = {1990},
  publisher = {Elsevier}
}

@article{KorhonenLokshtanov2023,
  author  = {Korhonen, Tuukka and Lokshtanov, Daniel},
  title   = {An Improved Parameterized Algorithm for Treewidth},
  journal = {SIAM Journal on Computing},
  volume  = {0},
  number  = {0},
  pages   = {STOC23-155-STOC23-214},
  year    = {2023},
  doi     = {10.1137/23M1595059},
  url     = { 
             
             https://doi.org/10.1137/23M1595059
             
             
             
             },
  eprint  = { 
             
             https://doi.org/10.1137/23M1595059
             
             
             
             }
}

@inproceedings{LokshtanovRamanujanSaurabh2017,
  author    = {Lokshtanov, Daniel and Ramanujan, M. S. and Saurabh, Saket},
  title     = {On Directed Feedback Vertex Set Parameterized by Treewidth},
  booktitle = {Graph-Theoretic Concepts in Computer Science - 44th International Workshop, WG 2018},
  series    = {Lecture Notes in Computer Science},
  volume    = {11159},
  pages     = {385--398},
  year      = {2018},
  publisher = {Springer},
  doi       = {10.1007/978-3-030-00256-5_31},
  url       = {https://doi.org/10.1007/978-3-030-00256-5_31}
}

@article{Dinur2007,
  author    = {Dinur, Irit},
  title     = {The PCP Theorem by Gap Amplification},
  journal   = {Journal of the ACM},
  volume    = {54},
  number    = {3},
  pages     = {12},
  year      = {2007},
  publisher = {ACM},
  doi       = {10.1145/1236457.1236459},
  url       = {https://doi.org/10.1145/1236457.1236459}
}

\end{document}